\newcommand{\field}[1]{\mathbb{#1}}
\newcommand{\C}{\mathcal{C}}
\newcommand{\R}{\field{R}}
\newcommand{\Z}{\ensuremath{\field{Z}}}
\newcommand{\E}{\ensuremath{\field{E}}}
\newcommand{\N}{\field{N}}
\newcommand{\eat}[1]{}
\newcounter{ccc}
\newcounter{rc}
\newcommand{\brc}{\setcounter{rc}{0}}
\newcommand{\irc}{\addtocounter{rc}{1} \therc.}
\newcommand{\stream}{\ensuremath{\mathcal{S}}}
\newcommand{\event}[1]{\ensuremath{\mathcal{{#1}}}}
\providecommand{\abs}[1]{\lvert#1\rvert}
\providecommand{\card}[1]{\bigl\lvert#1\bigr\rvert}
\providecommand{\prob}[1]{\ensuremath{\text{\sf Pr}\left[ #1
\right]}} 
\newcommand{\probb}[1]{\ensuremath{\text{\sf Pr}\bigl[ #1
\bigr]}}
\newcommand{\probB}[1]{\ensuremath{\text{\sf Pr}\Bigl[ #1
\Bigr]}}
\newcommand{\expect}[1]{\ensuremath{\E\left[
#1 \right]}}
\newcommand{\expectb}[1]{\ensuremath{\E\bigl[ {#1} \bigr]}}
\newcommand{\expectbb}[1]{\ensuremath{\E\biggl[ {#1} \biggr]}}
\renewcommand{\exp}[1]{\ensuremath{\textrm{ exp}\left\{ {#1} \right\}}}
\newcommand{\probfi}[1]{\ensuremath{\text{\emph{P}} \left[ #1
\right]}}
\newcommand{\probsub}[2]{\ensuremath{{\sf Pr}_{#1}\left[ #2 \right]}}
\newcommand{\probsubb}[2]{\ensuremath{{\sf Pr}_{#1}\bigl[ #2 \bigr]}}
\providecommand{\variance}[1]{{\sf Var}\left[ {#1} \right]}
\newcommand{\varianceb}[1]{\ensuremath{\textsf{ Var}\bigl[ {#1} \bigr]}}
\providecommand{\covariance}[2]{{\sf Cov}\left( {#1},{#2} \right)}
\providecommand{\expectsub}[2]{\ensuremath{\mathbb{E}_{{#1}}\left[ {#2} \right]}}
\providecommand{\varsub}[2]{\textsf{Var}_{{#1}}\left[ {#2} \right]}
\providecommand{\expectsubb}[2]{\mathbb{E}_{{#1}}\bigl[ {#2} \bigr]}
\providecommand{\expectsubbb}[2]{\mathbb{E}_{{#1}}\biggl[ {#2} \biggr]}
\providecommand{\expectsubB}[2]{\mathbb{E}_{{#1}}\Bigl[ {#2} \Bigr]}
\newcommand{\smallU}{\textsc{small-u}}
\newcommand{\tpest}{\text{\sc tpest}}
\newcommand{\est}{\textsf{TPEst}}
\newcommand{\countsketch}{\textsf{CountSketch}}
\newtheorem{theorem}{Theorem}
\newtheorem{lemma}[theorem]{Lemma}
\newtheorem{corollary}[theorem]{Corollary}
\newtheorem{fact}[theorem]{Fact}
\newcommand{\Hss}{\textsf{Hss}}
\newcommand{\rank}{\mathrm{rank}}
\newcommand{\distr}{\ensuremath{\sim}}
\newcommand{\hh}{\ensuremath{\textsf{HH}}} 
\newcommand{\Card}[1]{\biggl\lvert {#1} \biggr \rvert}
\newcommand{\nocollision}{\textsc{nocoll}}
\newcommand{\nocoll}{\textsc{nocollision}}
\newcommand{\topk}{\text{\sc Topk}}
\newcommand{\hattopk}{\ensuremath{\widehat{\textsc{Topk}}}}
\newcommand{\epsbar}{\ensuremath{\bar{\epsilon}}}
\newcommand{\xibar}{\ensuremath{\bar{\xi}}}
\newenvironment{pfsktch}[1][Proof Sketch]{\begin{trivlist} \item[\hskip \labelsep {\bfseries#1}]}{\end{trivlist}}
\newcommand{\lmargin}{\text{lmargin}}
\newcommand{\rmargin}{\text{rmargin}}
\newcommand{\midreg}{\text{mid}}
\newcommand{\goodest}{\textsc{goodest}}
\newcommand{\accuest}{\textsc{accuest}}
\newcommand{\goodftwo}{\textsc{goodf}\ensuremath{_2}}
\newcommand{\smallhh}{\textsc{smallhh}}
\newcommand{\ftwores}[1]{\ensuremath{F_2^{\text{res}}\left({#1}\right)}}
\newcommand{\1}{\textbf{1}}
\newcommand{\level}{\mathsf{level}}
\newcommand{\G}{\ensuremath{\mathcal{G}}}
\newcommand{\bvtheta}{\ensuremath{\bar{\vartheta}}}
\newcommand{\iw}{\textsf{IW}}
\newcommand{\smallH}{\textsc{small-h}}
\newcommand{\lastlevel}{\textsc{goodfinallevel}}
\newcommand{\ako}{\textsf{AKO}}
\newcommand{\tp}{\textsc{tp}}
\newcommand{\avtp}{\textsc{avgtp}}
\newcommand{\sgn}{\textrm{sgn}}
\newcommand{\ghss}{\textsf{Geometric-Hss}}
\newcommand{\smallres}{\textsc{smallres}}
\newcommand{\for}{\textbf{for }}
\newcommand{\algoendfor}{\textbf{endfor}}
\newcommand{\algoto}{\textbf{to }}
\newcommand{\powf}[2]{\ensuremath{{#1}^{\underline{{#2}}}\;}}
\newtheorem*{relemma}{Restated Lemma}
\title{Taylor Polynomial Estimator  for  Estimating Frequency Moments
}
\author {
 Sumit Ganguly\\
 Indian Institute of Technology, Kanpur\\
 \texttt{sganguly@cse.iitk.ac.in}
}
\date{}
\begin{document}

\brc

\maketitle

\begin{abstract} We present a randomized  algorithm for estimating the $p$th moment
$F_p$ of the frequency vector of  a data  stream in the general update (turnstile) model to within a multiplicative  factor of $1 \pm \epsilon$, for  $p
> 2$, with high constant confidence.  For $0 < \epsilon \le 1$, the  algorithm   uses  space $O( n^{1-2/p} \epsilon^{-2} +
n^{1-2/p} \epsilon^{-4/p} \log (n))$ words. This improves over the current bound of $O(n^{1-2/p} \epsilon^{-2-4/p} \log (n))$ words by Andoni et. al. in \cite{ako:arxiv10}. Our  space  upper bound matches the lower bound of Li
and Woodruff \cite{liwood:random13} for $\epsilon = (\log (n))^{-\Omega(1)}$ and the lower
bound of Andoni et. al. \cite{anpw:icalp13} for $\epsilon = \Omega(1)$.
\end{abstract}

\newpage
\tableofcontents

\newpage
\section{Introduction} \label{sec:intro}
The data stream model is relevant for online
applications over massive data, where an algorithm may use only  sub-linear
 memory  and a single pass over the data to summarize a large data-set that
appears as  a sequence of  incremental updates. Queries may be answered using only the data summary. A  data stream is
viewed as a sequence of $m$  records of the form $(i, v)$, where, $i
\in
 [n] = \{1,2, \ldots, n\}$ and $v  \in  \{-M, -M+1, \ldots, M-1, M\}$.
The record $(i,v)$ changes the $i$th coordinate $f_i$ of the
$n$-dimensional  \emph{frequency vector} $f$ to  $ f_i + v$.
The $p$th
moment of  the frequency vector $f$ is defined as $F_p = \sum_{i
\in [n]}
 \abs{f_i}^p$, for $p\ge 0$.
 The (randomized) $F_p$  estimation problem is:
Given $p$ and $\epsilon \in (0,1]$, design an algorithm that makes
one pass over the input stream and  returns  $\hat{F}_p$ such that
$\probb{\abs{\hat{F}_p - F_p} \le \epsilon F_p} \ge 0.6$ (where,
the constant  0.6 can be replaced by any other constant $>1/2$.)
In this paper, we consider estimating $F_p$ for the regime $p >
2$, called the  \emph{high moments} problem. The problem was posed and studied in the seminal work of Alon, Matias and Szegedy in \cite{ams:jcss98}.

\emph{Space lower bounds.} Since a deterministic estimation algorithm for $F_p$ requires $\Omega(n)$ bits  \cite{ams:jcss98},   research has focussed on randomized algorithms \cite{B-YJKS:stoc02,cks:ccc03,wood:soda04,jw:soda11,wz:stoc12,g:arxiv11b,liwood:random13,anpw:icalp13}.
Andoni et. al. in  \cite{anpw:icalp13} present a bound of $\Omega(n^{1-2/p} \log (n))$ \emph{words} assuming that the algorithm is a\emph{ linear sketch}.\eat{ and  $\epsilon = \Omega(1)$. }
 Li and Woodruff in \cite{liwood:random13} show a  lower bound of  $\Omega(n^{1-2/p}\epsilon^{-2} \log (n) )$ bits in the turnstile streaming model. For \emph{linear} sketch algorithms, the lower bound is the sum of the above two lower bounds,  namely,
$\Omega(n^{1-2/p}( \epsilon^{-2} + \log (n)))$ words.

\;\emph{Space upper bounds.} The
table in Figure~\ref{fig:table} chronologically lists  algorithms and their properties for
estimating $F_p$ for  $p > 2$ of data streams in the \emph{turnstile mode}. Algorithms for  \emph{insertion-only} streams are not directly comparable to algorithms for update streams---however, we note that the best algorithm for insertion-only streams is by Braverman et. al. in \cite{bksv:arxiv14} that uses $O(n^{1-2/p})$ bits, for $p \ge 3$ and $\epsilon = \Omega(1)$.

\emph{Contribution.} We show that for each fixed $p > 2$ and $0 < \epsilon \le 1$,
there is an  algorithm for estimating
$F_p$ in the general update streaming  model that uses space
$O(n^{1-2/p}(\epsilon^{-2} + \epsilon^{-4/p} \log (n) ) )$
words, with word size $O(\log (nmM))$ bits. It is the most space economical algorithm as a function of $n$ and $1/\epsilon$.
 The space bound of our algorithm
matches the  lower bound of $\Omega(n^{1-2/p} \epsilon^{-2})$
of Li and Woodruff in \cite{liwood:random13} for $\epsilon \le
(\log n)^{-p/(2(p-2))}$ and  the lower bound  $\Omega(n^{1-2/p}
\log (n))$ words of  Andoni et.al. in \cite{anpw:icalp13} for
linear sketches and $\epsilon = \Omega(1)$.

\begin{figure}[htbp]
\begin{center}
\begin{tabular}{|m{0.7in}|c|c|}\hline
\multirow{1}{*}{Algorithm } & Space in $O(\cdot)$ words  & \multirow{1}{*}{Update time $O(\cdot)$ } \\ \hline
\iw \cite{indwoo:stoc05} & $n^{1-2/p}\left(\epsilon^{-1} \log (n)\right)^{O(1)}$   &
$(\log^{O(1)} n)(\log (mM))$  \\ \hline
\multirow{1}{1.7in}{\Hss
\cite{bgks:soda06}} & $ n^{1-2/p} \epsilon^{-2-4/p} \log (n) \log^2 (nmM) $    & $\log(n)
\log (nmM)$  \\\hline
\textsf{MW} \cite{mw:soda10} & $n^{1-2/p} (\epsilon^{-1}\log (n))^{O(1)}$   &
$n^{1-2/p}(\epsilon^{-1}\log n)^{O(1)}$ \\ \hline
\ako \cite{ako:arxiv10} & $ n^{1-2/p} \epsilon^{-2-4/p} \log (n) $
  & $\log n$\\\hline
\textsf{BO-I} \cite{braost:arxiv10} & $n^{1-2/p} \epsilon^{-2-4/p} \log (n)   \log^{(c)}(n)$
& $\log n$   \\\hline
 \textsf{this paper} &
 $n^{1-2/p}\epsilon^{-2}+n^{1-2/p}\epsilon^{-4/p} \log(n)$  & $\log^2
 (n)$  \\ \hline
\end{tabular}
\end{center}
\caption{Space requirement of published  algorithms  for estimating  $F_p$, $p > 2$. Word-size is $O(\log ( nmM)) $ bits for algorithms  for update streams. $\log^{(c)}(n)$ denotes $c$ times  iterated logarithm for $c = O(1)$. }
\label{fig:table}
\end{figure}

\vspace*{-0.3cm}
\emph{Techniques and Overview.} We design  the \ghss~ algorithm for estimating $F_p$ that builds upon
the \Hss~technique presented in \cite{bgks:soda06,gl:hssfull}. It uses a layered data structure with $L+1 = O(\log n)$ levels numbered from $0$ to $L$  and
uses an $\ell_2$-heavy-hitter  structure  based on \countsketch~
\cite{ccf:icalp02} at each level to identify and estimate
$\abs{f_i}^p$ for each heavy-hitter.  The heavy-hitters structure at each level has the same number of $s = O(\log n)$ hash  tables with each hash  table having the number of buckets (height of table). The main new ideas are as follows. The height of any  \countsketch~table at level $l$ is
  $\alpha^l$ times the height of any of the  tables of the level 0 structure, where, $0<\alpha < 1$ is a constant. The geometric decrease  ensures that  the total space required  is  a constant times the  space used by the lowest level and avoids increasing space by a factor of $O(\log n)$ as in the \Hss~algorithm. 

In all previous works, an estimate for $\abs{f_i}^p$ for a \emph{sampled item} $i$  was obtained by retrieving an estimate $\hat{f}_i$ of $f_i$ from the heavy-hitter structure of an appropriately chosen level, and then computing $\abs{\hat{f}_i}^p$. In order for $\abs{\hat{f}_i}^p$ to lie within $ (1\pm \epsilon) \abs{f_i}^p$,  $\abs{\hat{f}_i - f_i} $ had to be constrained to be at most $O(\epsilon \abs{f_i}/p)$. By the lower bound results of \cite{pw:focs11}, the estimation error for \countsketch~is in general optimal and cannot be improved. We circumvent this problem by designing a more accurate estimator $\bvtheta(\lambda, k)$ for $\abs{f_i}^p$ directly. If  $\lambda$ is an estimate for $\abs{f_i}$ that is accurate to within a constant relative error, that is, $\lambda \in (1\pm O(1/p))\abs{f_i}$ and there are independent, identically distributed  and unbiased  estimates $X_1, X_2, \ldots, X_{\Theta(k)}$ of  $\abs{f_i}$ with  standard deviation $\sigma[X_j] \le O(\abs{f_i}/p)$, then, it is shown that (i) $\expect{\bvtheta(\lambda, k)} \in (1\pm O(1/p)^k)\abs{f_i}^p$, and (ii) $ \variance{\bvtheta(\lambda, k)} \le O(\abs{f_i}^{2p-2} \sigma^2[X_j])$.

The estimator $\bvtheta$ is designed using a \emph{Taylor polynomial estimator}.
Given an estimate $\lambda = \abs{\hat{f}_i}$ for $\abs{f_i}$ such that $\lambda \in (1\pm O(1/p)) \abs{f_i}$, the $k+1$ term \emph{Taylor polynomial  estimator} denotes
$\vartheta(\lambda, k) = \sum_{j=0}^{k} \binom{p}{j} \lambda^{p-j}( X_1- \lambda) (X_2-\lambda)  \ldots (X_j - \lambda)$, where, $X_1, \ldots, X_k$ are independent and identically distributed estimators  of $\abs{f_i}$.  Note that replacing the $X_j$'s    by $\abs{f_i}$ gives the expression $\sum_{j=0}^{k-1} \binom{p}{j} \lambda^{p-j}(\abs{f_i} - \lambda)^j$, which is the degree-$k$ term Taylor polynomial expansion of $\abs{f_i}^p$ around $ \lambda$ (i.e., $(\lambda + (\abs{f_i} - \lambda))^p$.  A new estimator $\bvtheta(\lambda, k, r)$ is defined as the  average of $r$ \emph{dependent} Taylor polynomial estimators $\vartheta$'s, where,  each of these $r$ $\vartheta$-estimators is
obtained from a certain $k$-subset of  random variables $X_1,
\ldots, X_s$, with $s= O(k)$, and each $k$-subset is drawn from an appropriate code and  has a controlled overlap with another $k$-subset from the code.  Note that now, only  a constant factor (i.e.,  within a factor of $1 \pm O(1/p)$ )  accuracy for the estimate $\lambda$ of $\abs{f_i}$ is needed, rather than an $O(\epsilon)$-accuracy needed earlier.

 Finally, we note that \Hss~algorithm \cite{gl:hssfull} used full independence of hash functions and then invoked Indyk's method \cite{indy:focs00}   of using Nisan's pseudo-random generator to fool space-bounded computations \cite{nisan:stoc90}. In our algorithm, we show that it suffices to  use only limited $d = O(\log n)$-wise independence of hash families, by changing the way the hash functions are composed.

\eat{\emph{Taylor polynomial estimator}, denoted as \tpest, views $\abs{f_i}^p$    for
estimating $\psi(\mu)$   where, (a) $X$ is a random variable and
$\mu = \expect{X}$, (b)  $\psi$ is   analytic in a certain range, and, (c) an
estimate $\lambda$ for $\mu$ is available. Let $\psi(\mu) = \sum_{j=0}^{k} \gamma_{j}(\lambda)(x-\lambda)^j$  be the degree $k$-term Taylor expansion of $\psi(\mu)$ around $\lambda$. This is converted into
 the estimator
$\vartheta(\psi,\lambda,k) = \sum_{j=0}^{k-1}
\gamma_j(\lambda)(X_1 - \lambda)\cdot (X_2 - \lambda)\cdot  \ldots
\cdot (X_j - \lambda)$, where $X_1, \ldots, X_{k}$ are independent
copies of $X$. For $\psi(t) = \abs{t}^p$, $ \lambda \in  \mu( 1 \pm O(1/p))$ and $ k \ge p+1$,  it is  shown that the estimator has low bias, that is,
$\expect{\vartheta(\abs{t}^p, \abs{\lambda}, k)} \in \abs{\mu}^p (1 \pm O(1/p^k))$, and has  variance  $O(p^2 \abs{\mu}^{2p-2} \variance{X_j})$. Using the Taylor polynomial estimator provides greater accuracy in the estimation of $\abs{f_i}^p$ given an estimate $\abs{\hat{f}_i}$, rather than
The estimator $\bvtheta$ is defined as
 the average of $r = O(k)$ dependent  $\vartheta$ estimators, where,  each of these $r$ $\vartheta$-estimators is
obtained from a certain $k$-subset of  random variables $X_1,
\ldots, X_s$, with $s= O(k)$, where each $k$-subset has a certain bounded overlap with another $k$-subset used. The construction is obtained using codewords from an appropriate code. This is used to show  that $\variance {\bvtheta} =
O(\variance{\vartheta}/k)$, while preserving the bias.}

\eat{

\emph{Comparative overview of analysis.} We now present a high level comparison of the   calculations for deriving the space bound in the  analysis of previous algorithms and our algorithm.
 First consider the \Hss~algorithm from ~\cite{gl:hssfull}. Here items are grouped into groups of decreasing frequency ranges, the $l$th group $G_l $ consists of items with $\abs{f_i} \in [T/2^{l/2}, T/2^{(l-1)/2})$ and the zeroth group has items with $\abs{f_i} \ge T$. Here $T$ is a threshold parameter and is a dependent on the space provided to the heavy-hitters structure.  Items in the group $G_l$ are sampled with frequency (close to) $1/2^l$. The algorithm ensures that for each sampled item $i$, $\abs{\hat{f}_i}^p \in (1\pm O(\epsilon))\abs{f_i}^p$, which is achieved by ensuring that $\abs{\hat{f}_i} \in (1\pm O(\epsilon)/p)\abs{f_i}$.  If $B$ is the number of buckets in each table of the \countsketch~structure at each level, then, the additive error $\abs{\hat{f}_i - f_i}$ at level $l$ can be shown to be $O(F_2/(2^l B)^{1/2})$. This would mean that
 the items in $G_l$ \emph{must} have frequencies $\abs{f_i} = \Theta((F_2/ (\epsilon^2 2^l B))^{1/2})$.
  The estimate $\hat{F}_p$ is  defined as the scaled sum  $\sum_{0\le l \le L} \sum_{i \in \text{sample}(G_l)} O(2^l) \abs{\hat{f}_i}^p$ of the sampled items.
 Since each estimate  is $\epsilon$-close, it can be routinely shown that $\expectb{\hat{F}_p} \in F_p \pm O(\epsilon F_p)$.
 \eat{  The variance of the estimate $\abs{\hat{f}_i}^p$ is  $O(2^{l} \abs{f_i}^{2p})$, if $i \in G_l$ ($l \ge 1)$. } $\varianceb{\hat{F}_p}$ can be shown to be  very close to  the sum of the individual  variances  of the estimates, that is,  $\sum_{l} \sum_{i \in G_l} O(\abs{f_i}^{2p} \cdot 2^{l})$. \eat{ Since, the expectation of $\hat{F}_p$  is $O(\epsilon)$-close to $F_p$, hence for $\epsilon$-closeness of $\hat{F}_p$ with $F_p$ with high constant confidence, it suffices to ensure that  $\varianceb{\hat{F}_p} \le O(\epsilon^2 F_p^2)$, by Chebychev's inequality.} Therefore,
\begin{align*}
O(\varianceb{\hat{F_p}}) & = \sum_{l, i \in G_l} 2^{l} \abs{f_i}^{2p}  = \sum_{l} \sum_{i \in G_l} \left( \frac{F_2}{\epsilon^2 2^l B} \right)^{p/2} 2^l  \abs{f_i}^p  \le  \left(\frac{ F_2 }{\epsilon^2 B  }  \right)^{p/2} \sum_{l, i \in G_l}   2^{l-lp/2} \abs{f_i}^p\\ &  \le  \frac{ F_2}{(\epsilon^2 B)^{p/2}} \cdot  F_p  \le \epsilon^2 F_p^2
\end{align*}
which holds if $B = n^{1-2/p} \epsilon^{-2-4/p}$, since, $F_2 \le F_p^{2/p} n^{1-2/p}$. An application of Chebychev's inequality now shows that $\hat{F}_p$ is $O(\epsilon)$-close to $F_p$ with high constant probability.   The space requirement is $\tilde{O}(B) = \tilde{O}(n^{1-2/p} \epsilon^{-2-4/p})$, where the additional log factors arise since, (a) the number of tables required at each level is $O(\log n)$ for high confidence, and (b) the number of levels is $O(\log n)$. An additional log factor was incurred in \cite{gl:hssfull} for an application of Nisan's pseudorandom generator to fool the space bounded algorithm \cite{nisan:stoc90,indy:focs00}.
\footnote{\label{footnote:ako} Andoni et. al. in \cite{ako:arxiv10} use a continuous distribution for sampling that basically achieves the following.  (a) Items with $\abs{f_i} > O(\epsilon (F_2/B)^{1/2})$ are always sampled, and (b) otherwise, $i$ is sampled with probability   $ p_i =  \left(F_2/(\epsilon^2 B f_i^2)\right)^{-p/2}  = \epsilon^p \abs{f_i}^p/(F_2/B)^{p/2}$.  Then,
$
O(\varianceb{\hat{F_p}})  = \sum_{i} (1/p_i) \abs{f_i}^{2p} = \sum_i  \frac{\epsilon^{-p} (F_2/B)^{p/2}}{\abs{f_i}^p}\abs{f_i}^{2p} \le \left( \frac{F_2}{ B} \right)^{p/2} \epsilon^{-p}  F_p   \le \epsilon^2 F_p^2
$.
which holds if $B = n^{1-2/p} \epsilon^{-2-4/p}$.
}

In the new algorithm, the frequency of a sampled item is estimated only  to within relative error of $1 \pm O(1/p)$ and not to within $1\pm \epsilon$. An estimate for $\abs{f_i}^p$ is obtained using  the Taylor polynomial estimator, denoted $\bvtheta_i$, that satisfies $\abs{\expect{ \bvtheta_i} -\abs{f_i}^p}  \le 2^{-\Omega(k)} \abs{f_i}^p$, where, $k = O(\log n)$ is the degree of the polynomial used. The estimate for $F_p$ is (approximately) $\hat{F}_p = \sum_{0 \le l \le L} \sum_{i \in \text{sample}(G_l)} \bvtheta_i 2^l$, since $2^{-l}$ is the probability with which $i \in G_l$ is sampled.  Hence, $\abs{\expectb{\hat{F}_p} - F_p} \le n^{-\Omega(1)} F_p$.
Secondly, an item in group $l$ (i.e., sampled with probability $2^{-l}$) has $\abs{f_i} = \Theta( (F_2/B)^{1/2})$--note that these are lower than the threshold for the \Hss~algorithm  by a factor of $1/\epsilon$.
 The variance of the estimate $\bvtheta_i 2^l$  is  (as before) $O(2^{l}\abs{f_i}^{2p})$ for items sampled into  levels $l=1$ and higher.
Considering the contribution to variance from the items in level $l \ge 1$ or higher and using the property that the variance of $\hat{F}_p$ is (almost) the sum of the variances of $\bvtheta_i$'s, we obtain in a similar manner that
\begin{multline*}
\sum_{l \in [L], i \in G_l} 2^{l} \abs{f_i}^{2p}  = \sum_{l\in [L]} \sum_{i \in G_l} \left( \frac{F_2}{2^l B} \right)^{p/2} 2^l  \abs{f_i}^p  \le  \left(\frac{ F_2 }{B  }  \right)^{p/2} 2^{l-lp/2} \sum_{l\in [L], i \in G_l} \abs{f_i}^p \\\le  \left(\frac{ F_2}{ B}\right)^{p/2}  F_p  \le \epsilon^2 F_p^2
\end{multline*}
which holds if $B \ge  n^{1-2/p} \epsilon^{-4/p}$, since, $F_2 \le F_p^{2/p} n^{1-2/p}$. Items in level 0 are treated separately, since they are sampled with probability 1. Here, it is shown that $\varianceb{\bvtheta_i} = \abs{f_i}^{2p-2}O(F_2)/(B\log (n))$, where, the $\log n$ factor in the denominator is a property  of the averaged Taylor polynomial estimator.  Thus, for $\abs{f_i} > O ((F_2/B)^{1/2})$, the variance of the contribution from items in level 0 to $\hat{F}_p$ is
$\sum_{i \in G_0}\frac{ \abs{f_i}^{2p-2} F_2}{B \log n} \le \frac{ F_{2p-2}F_2}{B \log n} \le \frac{ F_p^{2-2/p}F_p^{2/p} n^{1-2/p}}{B \log n} \le \epsilon^2 F_p^2
$, provided
 $B \ge n^{1-2/p} \epsilon^{-2}/\log (n)$. Setting $B$ to be the sum of the two values satisfies both constraints so that $B = (n^{1-2/p}(\epsilon^{-2}/\log n + \epsilon^{-4/p}))$. The total space required is $O(B\log n)$, since there are $O(\log n)$ tables in a \countsketch~structure at each level. However, due to the geometric decrease in the table heights, the total space is a constant factor times the space at level 0.

}

\subsubsection*{Notation} Let $\R$ denote the field of real numbers, $\N$ denote the set of natural numbers, that is, $\N = \{0,1,2, \ldots, \}$, $\Z$ denote the ring of integers, and $\Z^+$  and $\Z^{-}$ denote the set of positive integers and the set of negative integers respectively.

 For $a \in \R$ and $s \in \N$, define
 \begin{align*}
 \powf{a}{s} = \begin{cases} a \cdot (a-1)\cdot \cdots \cdot (a-s+1) & \text{ if } s \in \Z^+ \\
 1 & \text{ if } s = 0  \enspace .
 \end{cases}
 \end{align*}
 It follows that, (i) for $s_1, s_2 \in \N$,  $\powf{a}{s_1+s_2} = \powf{a}{s_1} \powf{(a-s_1)}{s_2} $, and (ii) for $a<0$, $\powf{a}{s} = (-1)^s \powf{(-a+s-1)}{s}$. The notation $\powf{a}{s}$ is taken from \cite{ConcreteMath:book}.

For $p \in \R$ and $k \in \N$, denote
\begin{align*}\binom{p}{k} = \begin{cases} \cfrac{\powf{p}{k}}{k!}  & \text{ if } p \in \R \text{ and } k \in \N \\
0 & \text{ if } p \in \R \text{ and } k \in \Z^{-} \enspace .
 \end{cases}
 \end{align*}
 We use the well-known following identities for binomial coefficients, namely, the \emph{absorption identity}: $\binom{p}{k} = \frac{p}{k} \binom{p-1}{k-1}$,  for integer $k \ne 0$, and,  the \emph{upper negation identity}: $\binom{p}{k} = (-1)^k \binom{k-p-1}{k}$, for integer $k$.

\subsection*{Review: Residual second moment and \countsketch~algorithm}

Let $f \in \Z^n$ and let  $\rank:[n] \rightarrow [n]$ be any permutation
that orders the indices of $f$   in non-decreasing order by their absolute
frequencies, that is, $\abs{f_{\rank(1)}} \ge \abs{f_{\rank(2)}} \ge \ldots \abs{f_{\rank(n)}}$. The $k$-residual second moment of $f$ is denoted by  $\ftwores{k} $ and is defined as $\ftwores{k} = \sum_{ i \in
[n], \rank(i) > k} f_i^2$.

We will use the \countsketch~algorithm by Charikar, Chen and Farach-Colton \cite{ccf:icalp02}, which is a classic algorithm  for identifying  $\ell_2$-based heavy-hitters and for estimating item frequencies in data streams.  The \countsketch$(C,s)$ structure consists of $s$ hash tables denoted $T_1, \ldots, T_s$, each having $C$ buckets. Each bucket stores an $\log (nmM)$ bit integer.  The $j$th hash table uses the hash function $h_j:[n] \rightarrow [C]$, for $j=1,2, \ldots, s$. The hash functions are chosen independently and randomly  from a pair-wise independent hash family mapping $[n] \rightarrow [C]$. A pair-wise independent Rademacher family $\{\xi_j(i)\}_{i \in [n]}$ is associated with each table index $j \in [s]$, that is $\xi_j(i) \in_R \{-1, 1\}$. The  Rademacher families for different $j$'s are independent. Corresponding to a stream update of the form $(i,v)$, all tables are updated as follows.

\begin{tabbing}
xxxx\=xxxx\=xxxx\=xxxx\=xxxx\=xxxx\=xxxx\= \kill
\for $j =1$ \algoto $s$ \textbf{do} \\
\>  $T_j[h_j(i)] = T_j[h_j(i)] + v \cdot \xi_j(i) $ \\
\algoendfor
\end{tabbing}
Given an index $i \in [n]$, the estimate $\hat{f}_i$ returned for $f_i$ is the median of the estimates obtained from each table, namely,
$$ \hat{f}_i = \text{median}_{j=1}^s T_j[h_j(i)] \cdot \xi_j(i) \enspace . $$
It is shown in \cite{ccf:icalp02} using an elegant argument that
\begin{align} \label{eq:cskbasic}
\left\lvert \hat{f}_i - f_i \right\rvert \le \left(\frac{ 8\ftwores{C/8}}{C} \right) ^{1/2} \enspace .
\end{align}

\section{Taylor polynomial estimator
}\label{sec:taylor} Let $X$ be a random variable with $ \expect{X}
= \mu$ and $\variance{X} = \sigma^2$. Singh in
\cite{singh:sankhya64} considered the following problem: Given a function $\psi:\R \rightarrow \R$, design  an unbiased estimator
$\theta$ for $\psi(\expect{X})$ (i.e., $ \expect{\theta} = \psi(\expect{X})$. His  solution for  an analytic   function $\psi$ was the following.
Let
$\psi(t) = \sum_{k \ge 0} \gamma_{k}(0) t^k$.  Let $\nu$ be a distribution over $\N$  with probability mass function
$p_{\nu}(n)$, for $n=0,1,2, \ldots, $.  Choose $n \distr \nu$ and define the estimator
 $$\theta = (p_{\nu}(n))^{-1}
\gamma_n(0)\cdot X_1 \cdot X_2 \ldots \cdot X_n $$ where the
$X_i$'s are \emph{independent copies of} $X$. The estimator satisfies
$$\expect{\theta} = \sum_{n \ge 0} (p_{\nu}(n) )^{-1}\cdot p_{\nu}(n) \cdot  \gamma_n(0) \expect{X_1} \expect{X_2} \ldots \expect{X_n} = \sum_{n \ge 0} \gamma_n(0) \mu^n = \psi(\mu) \enspace . $$
However, the variance can be
large;
  for the geometric distribution $\nu$ with    $p_{\nu}(n) = q(1-q)^{n}$, for $n \ge 0$ and
  $0<q\le 1$, it is shown in \cite{css:colt10} that
 $ \expect{\theta^2}= (1/q)\sum_{n \ge 0}  \gamma_n^2(0) ((\mu^2 + \sigma^2)/(1-q))^n
 $.

\subsection{Taylor Polynomial Estimator}
The Taylor polynomial estimator (abbreviated as \tp~estimator)  is derived
from the Taylor's series of $\psi(\mu) = \psi(\lambda + (\mu-\lambda))$ by expanding it around
$\lambda$, an estimate of $\mu$, and then  truncating it after the
first $k+1$ terms. Let   $X_1, \ldots, X_k$ be independent variables with the same expectation $\expect{X_j} = \mu = \expect{X}$ and whose variance is each bounded above by $\sigma^2$. Define
\[ \begin{array}{l}
\vartheta(\psi, \lambda, k, \{X_l\}_{l=1}^k) = \sum_{j=0}^k \gamma_j(\lambda)
(X_1-\lambda)(X_2-\lambda) \ldots (X_j-\lambda) \enspace .
\end{array}\]
where, $\gamma_j(t)$ is the function $\psi^{(j)}(t)/j!$, for $j =0,1,
\ldots$. Its expectation and variance properties are
 given below. Let
 $\eta^2 = \expect{(X_j - \lambda)^2} = \sigma^2 + (\mu-\lambda)^2 $, for $j=1, \ldots, k$.

\begin{lemma} \label{lem:vvtheta} Let $\{X_l\}_{l=1}^k$ be independent random variables with expectation $\mu$ and standard deviation at most $\sigma$. Let $\eta = (\sigma^2 + (\mu-\lambda)^2)^{1/2} $ and let   $\psi$ be analytic in the region $[\lambda, \mu]$.  Then the following hold.
 \begin{enumerate}
\item  For some $\lambda' \in (\mu, \lambda)$,  $\card{\expect{\vartheta(\psi, \lambda, k, \{X_l\}_{l=1}^k) } - \psi(\mu)}  \le
 \abs{\gamma_{k+1}(\lambda')}  \cdot \abs{\mu-\lambda}^{k+1}$.
 \item $\variance{\vartheta(\psi, \lambda, k, \{X_l\}_{l=1}^k) }   \le \Bigl(\sum_{j=1}^k
  \abs{\gamma_j(\lambda)} \eta^j\Bigr)^2 \enspace . $
 \end{enumerate}
\end{lemma}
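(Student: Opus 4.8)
The plan is to prove the two parts separately, both following directly from the definition of the Taylor polynomial estimator and the hypotheses on the $X_l$. For part 1 (the bias bound), I would first compute $\expect{\vartheta(\psi, \lambda, k, \{X_l\}_{l=1}^k)}$ by linearity of expectation and independence of the $X_l$'s. Since the $X_l$'s are independent with common mean $\mu$, we have $\expect{(X_1-\lambda)(X_2-\lambda)\cdots(X_j-\lambda)} = (\mu-\lambda)^j$ for each $j$, so $\expect{\vartheta} = \sum_{j=0}^{k} \gamma_j(\lambda)(\mu-\lambda)^j$. This is exactly the degree-$k$ Taylor polynomial of $\psi$ expanded around $\lambda$ and evaluated at $\mu$. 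The difference $\expect{\vartheta} - \psi(\mu)$ is therefore the Taylor remainder, and by the Lagrange form of the remainder (valid since $\psi$ is analytic, hence $C^\infty$, on $[\lambda,\mu]$), there exists $\lambda' \in (\mu,\lambda)$ with $\psi(\mu) - \sum_{j=0}^k \gamma_j(\lambda)(\mu-\lambda)^j = \gamma_{k+1}(\lambda')(\mu-\lambda)^{k+1}$, where $\gamma_{k+1}(t) = \psi^{(k+1)}(t)/(k+1)!$. Taking absolute values gives the claimed bound.

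For part 2 (the variance bound), the idea is that $\vartheta - \expect{\vartheta} = \sum_{j=0}^k \gamma_j(\lambda)\big[(X_1-\lambda)\cdots(X_j-\lambda) - (\mu-\lambda)^j\big]$, and I would bound its variance by bounding its $L^2$ norm. Writing $\vartheta$ as a sum of $k+1$ terms, $\variance{\vartheta} = \variance{\sum_{j} \gamma_j(\lambda) P_j}$ where $P_j = \prod_{l=1}^j (X_l-\lambda)$, and using the triangle inequality for the $L^2$ norm (i.e. $\sqrt{\variance{\sum_j Y_j}} \le \sum_j \sqrt{\variance{Y_j}}$), it suffices to show $\variance{P_j} \le \eta^{2j}$ for each $j$. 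Since $\gamma_0(\lambda)$ is a constant ($P_0 = 1$ contributes zero variance), the sum effectively runs from $j=1$. For $\variance{P_j}$, I would use $\variance{P_j} \le \expect{P_j^2} = \prod_{l=1}^j \expect{(X_l-\lambda)^2} = \eta^{2j}$, using independence of the $X_l$ and the definition $\eta^2 = \expect{(X_j-\lambda)^2} = \sigma^2 + (\mu-\lambda)^2$ (the last equality by the bias-variance decomposition around $\lambda$, valid since $\expect{X_j} = \mu$ and $\variance{X_j} \le \sigma^2$). Combining, $\sqrt{\variance{\vartheta}} \le \sum_{j=1}^k |\gamma_j(\lambda)| \sqrt{\variance{P_j}} \le \sum_{j=1}^k |\gamma_j(\lambda)| \eta^j$, and squaring yields the claim.

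I expect the main subtlety to be in part 2: one must be careful that the triangle inequality in $L^2$ is applied correctly (it gives $\le$, in the right direction) and that $\variance{P_j} \le \expect{P_j^2}$ rather than an equality — the slack here is harmless since we only want an upper bound. A minor point in part 1 is that the Lagrange remainder requires $\psi \in C^{k+1}$ on the closed interval between $\lambda$ and $\mu$, which follows from analyticity on $[\lambda,\mu]$; one should also note the interval $(\mu,\lambda)$ is written assuming $\mu < \lambda$ but the argument is symmetric, so the statement should be read as "$\lambda'$ strictly between $\mu$ and $\lambda$." Neither of these is a serious obstacle; the lemma is essentially a clean packaging of Taylor's theorem with remainder plus an $L^2$ triangle inequality.
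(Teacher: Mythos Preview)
Your proposal is correct. Part~1 is essentially identical to the paper's argument: compute the expectation term-by-term using independence, recognize the degree-$k$ Taylor polynomial, and invoke the Lagrange remainder.

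For part~2 you take a slightly different and cleaner route than the paper. The paper computes $\variance{P_j} = \eta^{2j} - (\mu-\lambda)^{2j}$ and $\covariance{P_j}{P_{j'}} = \eta^{2\min(j,j')}(\mu-\lambda)^{|j-j'|} - (\mu-\lambda)^{j+j'}$ explicitly, expands $\variance{\vartheta}$ as a full bilinear form in the $\gamma_j(\lambda)$'s, and then bounds each term in absolute value by $|\gamma_j(\lambda)||\gamma_{j'}(\lambda)|\eta^{j+j'}$ to reassemble the square $\bigl(\sum_j |\gamma_j(\lambda)|\eta^j\bigr)^2$. You instead bypass the covariance computation entirely by applying the triangle inequality for standard deviations (Minkowski in $L^2$ on centered variables) together with the cruder bound $\variance{P_j} \le \expect{P_j^2} \le \eta^{2j}$. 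Your argument is shorter and loses nothing, since the target bound is exactly the one the paper obtains after its term-by-term estimate; the paper's exact covariance formulas are not used elsewhere and do not yield a sharper conclusion here. Your handling of the inequality $\variance{X_l} \le \sigma^2$ (hence $\expect{(X_l-\lambda)^2} \le \eta^2$) is also the right reading of the hypothesis.
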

Corollaries~\ref{cor:fpbias} and \ref{cor:fptpvar} apply the
Taylor polynomial estimator to  $\psi(t) = t^p$.

\begin{corollary}\label{cor:fpbias} Assume the  premises of Lemma~\ref{lem:vvtheta}. Further, let   $\psi(t) = t^p$,  $p \ge 2$,  $\mu > 0$,
$\abs{\lambda - \mu} \le\alpha \mu$, for some $0\le \alpha <
1/2$ and $k+1 > p$.
 Then,
 $$ \left\lvert\expect{\vartheta(x^p, \lambda , k, \{X_l\}_{l=1}^k}- \mu^p \right\rvert \le \left( \frac{\alpha}{1-\alpha} \right)^{(k+1)} \cdot \mu^p \cdot  \left( \frac{ p}{k+1}\right)^{\lfloor p \rfloor + 1} \enspace . $$
  In particular, for $p$ integral, $\expect{\vartheta(x^p, \lambda , k, \{X_l\}_{l=1}^k} = \mu^p$.
\end{corollary}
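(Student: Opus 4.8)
The plan is to apply part~1 of Lemma~\ref{lem:vvtheta} to $\psi(t)=t^p$ and then reduce the whole estimate to a bound on $\card{\binom{p}{k+1}}$. Since $\psi(t)=t^p$ we have $\psi^{(j)}(t)=\powf{p}{j}\,t^{p-j}$, so $\gamma_j(t)=\psi^{(j)}(t)/j!=\binom{p}{j}t^{p-j}$, and in particular $\gamma_{k+1}(\lambda')=\binom{p}{k+1}(\lambda')^{\,p-k-1}$ for the point $\lambda'$ between $\mu$ and $\lambda$ furnished by the lemma. The hypotheses $\mu>0$, $\abs{\lambda-\mu}\le\alpha\mu$ and $\alpha<1/2$ force $\lambda$, and hence $\lambda'$, into the interval $[(1-\alpha)\mu,(1+\alpha)\mu]$, which avoids $0$; this both certifies that $\psi$ is analytic on $[\lambda,\mu]$ (so the lemma applies) and gives $\lambda'\ge(1-\alpha)\mu>0$. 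Because $k+1>p$ the exponent $p-k-1$ is negative, so $(\lambda')^{p-k-1}\le((1-\alpha)\mu)^{p-k-1}$. Feeding this together with $\abs{\mu-\lambda}^{k+1}\le(\alpha\mu)^{k+1}$ into part~1 of Lemma~\ref{lem:vvtheta} yields
\[
\card{\expect{\vartheta(x^p,\lambda,k,\{X_l\}_{l=1}^{k})}-\mu^p}
\;\le\;\card{\binom{p}{k+1}}\,\mu^p\,\alpha^{k+1}(1-\alpha)^{\,p-k-1}
\;=\;\card{\binom{p}{k+1}}\,(1-\alpha)^p\,\mu^p\left(\frac{\alpha}{1-\alpha}\right)^{k+1}.
\]

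Since $0\le\alpha<1$ and $p>0$ we have $(1-\alpha)^p\le 1$, so the corollary will follow once I establish the purely combinatorial inequality
\[
\card{\binom{p}{k+1}}\;\le\;\left(\frac{p}{k+1}\right)^{\lfloor p\rfloor+1}\qquad\text{whenever }p\ge2\text{ and }k+1>p.
\]
To prove this, set $m=\lfloor p\rfloor$ (so $m\le p<m+1$ and, since $k+1>p$, $k\ge m$) and apply the absorption identity $\binom{a}{b}=\frac{a}{b}\binom{a-1}{b-1}$ a total of $m+1$ times, peeling off the bottom indices $k+1,k,\dots,k-m+1$, all of which are $\ge 1$ so that each application is legitimate:
\[
\binom{p}{k+1}\;=\;\left(\prod_{i=0}^{m}\frac{p-i}{k+1-i}\right)\binom{p-m-1}{k-m}.
\]
For each $i\in\{0,\dots,m\}$ we have $0\le p-i$ and $0<k+1-i$, and $\frac{p-i}{k+1-i}\le\frac{p}{k+1}$ since cross-multiplication reduces it to $i(k+1-p)\ge0$; hence $\prod_{i=0}^{m}\frac{p-i}{k+1-i}\le(p/(k+1))^{m+1}$. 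It then remains to show $\card{\binom{p-m-1}{k-m}}\le 1$: writing $b=p-m-1\in[-1,0)$ and $n=k-m\in\N$, we have $\binom{b}{n}=\frac1{n!}\prod_{i=0}^{n-1}(b-i)$ with $\card{b}\le1$ and $\card{b-i}=i+\card{b}\le i+1$ for $i\ge1$, so $\card{\prod_{i=0}^{n-1}(b-i)}\le\card{b}\cdot\prod_{i=1}^{n-1}(i+1)\le n!$, giving $\card{\binom{b}{n}}\le1$. Multiplying the two bounds gives the displayed combinatorial inequality, and hence the corollary.

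Finally, when $p$ is a non-negative integer and $k+1>p$ we have $\binom{p}{k+1}=0$ (equivalently $\psi^{(k+1)}\equiv0$, so $\gamma_{k+1}\equiv0$ and the Lagrange-type remainder in Lemma~\ref{lem:vvtheta} vanishes), so the bound collapses to $\expect{\vartheta(x^p,\lambda,k,\{X_l\}_{l=1}^{k})}=\mu^p$ exactly. The only step that is not routine bookkeeping is the combinatorial inequality: the obstacle is that for indices $i>p$ the factors $p-i$ appearing in $\powf{p}{k+1}$ are negative with magnitude $>1$, so a naive termwise comparison with $p/(k+1)$ fails; the resolution is to split off exactly $\lfloor p\rfloor+1$ "large" factors via absorption and to recognize the leftover product as a binomial coefficient whose top argument lies in $[-1,0)$ and is therefore at most $1$ in absolute value.
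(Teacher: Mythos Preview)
Your proof is correct and follows essentially the same route as the paper: apply Lemma~\ref{lem:vvtheta} with $\gamma_{k+1}(\lambda')=\binom{p}{k+1}(\lambda')^{p-k-1}$, use $\lambda'\ge(1-\alpha)\mu$ together with the negative exponent $p-k-1$, and reduce everything to the combinatorial bound $\card{\binom{p}{k+1}}\le(p/(k+1))^{\lfloor p\rfloor+1}$, which is exactly the paper's Fact~\ref{fact:1}. The only cosmetic difference is in how you establish that bound: after peeling off $\lfloor p\rfloor+1$ factors by absorption (identical to the paper), you bound the leftover $\card{\binom{b}{n}}$ with $b\in[-1,0)$ directly via $\card{b-i}\le i+1$, whereas the paper first applies upper negation to rewrite it as $\binom{k-p-1}{k-\lfloor p\rfloor-1}$ and then bounds that; both arguments are equally short and yield the same conclusion.
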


\begin{corollary}\label{cor:fptpvar}Assume the  premises of Lemma~\ref{lem:vvtheta} and Corollary~\ref{cor:fpbias}. 
Then $$\variance{\vartheta(x^p, \lambda ,k,
\{X_l \}_{l=1}^k}\le  (1.08)p^2
\mu^{2p-2}\eta^2 \enspace . $$
\end{corollary}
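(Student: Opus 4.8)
The plan is to reduce the variance bound to part~2 of Lemma~\ref{lem:vvtheta} and then control the resulting sum of Taylor coefficients. For $\psi(t)=t^p$ one has $\psi^{(j)}(t)=\powf{p}{j}\,t^{p-j}$, hence $\gamma_j(\lambda)=\psi^{(j)}(\lambda)/j!=\binom{p}{j}\lambda^{p-j}$; moreover $\psi$ is analytic on the segment between $\lambda$ and $\mu$ because $\lambda\ge(1-\alpha)\mu>0$, so Lemma~\ref{lem:vvtheta} applies, and since $\lambda>0$ we have $|\gamma_j(\lambda)|=\bigl|\binom{p}{j}\bigr|\,\lambda^{p-j}$. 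Part~2 of the Lemma then gives
\[
 \variance{\vartheta(x^p,\lambda,k,\{X_l\}_{l=1}^k)}\ \le\ \Bigl(\,\sum_{j=1}^k \bigl|\binom{p}{j}\bigr|\,\lambda^{p-j}\eta^{j}\Bigr)^{2}\ =\ p^2\lambda^{2p-2}\eta^{2}\,(1+R)^2,
\]
where, after factoring out the $j=1$ term $p\lambda^{p-1}\eta$, one sets $R:=\sum_{j=2}^k \frac1p\bigl|\binom{p}{j}\bigr|(\eta/\lambda)^{j-1}$. It then suffices to show that $(1+R)^2(\lambda/\mu)^{2p-2}\le 1.08$.

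To bound $R$, I would apply the absorption identity $\binom{p}{j}=\frac{p}{j}\binom{p-1}{j-1}$ to rewrite $R=\sum_{i\ge1}\frac{1}{i+1}\bigl|\binom{q}{i}\bigr|x^{i}\le \sum_{i\ge1}\bigl|\binom{q}{i}\bigr|x^{i}$, where $x:=\eta/\lambda$ and $q:=p-1\ge1$, and then split this series at $i=\lfloor q\rfloor$. For $i\le\lfloor q\rfloor$ all the falling-factorial factors of $\binom{q}{i}$ are positive, so $\binom{q}{i}\le q^i/i!$ and the head is at most $e^{qx}-1$. For $i>q$ the ratio $\bigl|\binom{q}{i+1}\bigr|\big/\bigl|\binom{q}{i}\bigr|=(i-q)/(i+1)<1$, so $\bigl|\binom{q}{i}\bigr|$ is nonincreasing in $i$; hence the tail is bounded by $\bigl|\binom{q}{\lfloor q\rfloor+1}\bigr|\cdot x^{\lfloor q\rfloor+1}/(1-x)\le \frac{q^{\lfloor q\rfloor}}{\lfloor q\rfloor!}\cdot\frac{x^{\lfloor q\rfloor+1}}{1-x}\le \frac{x\,e^{qx}}{1-x}$. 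Adding the two pieces yields a clean closed-form estimate $1+R\le e^{(p-1)x}/(1-x)$.

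Combining this with $(\lambda/\mu)^{p-1}\le(1+\alpha)^{p-1}\le e^{(p-1)\alpha}$ gives
\[
 \variance{\vartheta(x^p,\lambda,k,\{X_l\}_{l=1}^k)}\ \le\ p^2\mu^{2p-2}\eta^{2}\cdot \frac{e^{2(p-1)(\alpha+\eta/\lambda)}}{(1-\eta/\lambda)^2},
\]
and the proof finishes once the trailing factor is checked to be at most $1.08$. This last step is where the quantitative content of the hypotheses of Corollary~\ref{cor:fpbias} is used: since $\eta^2=\sigma^2+(\mu-\lambda)^2$ with $\sigma$ and $|\mu-\lambda|$ each a small multiple of $\mu/p$ (and $\lambda=\Theta(\mu)$), the quantities $(p-1)\alpha$, $(p-1)\eta/\lambda$, and $\eta/\lambda$ are all below small absolute constants, and then $e^{2(p-1)(\alpha+\eta/\lambda)}/(1-\eta/\lambda)^2\le 1.08$ by a direct calculation.

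The step I expect to be the main obstacle is the uniform estimate of $\sum_{i\ge1}\bigl|\binom{q}{i}\bigr|x^i$: the elementary bound $\binom{q}{i}\le q^i/i!$ is valid only up to $i\approx q$, and beyond that the binomial coefficients alternate in sign, so one must use the monotonicity of $\bigl|\binom{q}{i}\bigr|$ for $i>q$ (equivalently, the upper-negation identity $\binom{q}{i}=(-1)^i\binom{i-q-1}{i}$) to sum the tail. Everything else — identifying the $\gamma_j$, factoring out the $j=1$ term, and the final numeric check on the constant — is routine; note also that truncating the sum at $k$ only deletes positive terms, so $k$ enters only through the premises of Corollary~\ref{cor:fpbias} and not directly here.
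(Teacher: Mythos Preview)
Your overall plan---apply Lemma~\ref{lem:vvtheta}(2), factor out the $j=1$ term, bound the remaining sum, then pass from $\lambda^{2p-2}$ to $\mu^{2p-2}$---is exactly the paper's route. The difference is in how the sum $S=\sum_{j\ge 1}\bigl|\binom{p}{j}\bigr|(\eta/\lambda)^{j-1}$ is controlled. The paper does not split into head and tail: it simply observes that the ratio of consecutive summands is
\[
\frac{|p-v|}{v+1}\cdot\frac{\eta}{\lambda}\ \le\ \frac{p-1}{2}\cdot\frac{\sqrt{2}}{25p}\ <\ \frac{1}{25\sqrt{2}},
\]
using $\eta/\lambda\le\sqrt{2}/(25p)$ (this is where the $\sigma,|\mu-\lambda|\le\lambda/(25p)$ hypotheses enter). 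Hence $S\le p\cdot\bigl(1-\tfrac{1}{25\sqrt{2}}\bigr)^{-1}$, and squaring and multiplying by $(\lambda/\mu)^{2p-2}\le e^{2/25}$ yields the stated constant.

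Your head/tail decomposition is correct but unnecessarily elaborate, and---more to the point---it loses too much to reach $1.08$. Dropping the factor $\tfrac{1}{i+1}$ in $R$ and bounding by $e^{(p-1)x}/(1-x)$ gives, with $x\le\sqrt{2}/(25p)$ and $\alpha\le 1/(25p)$, a trailing factor
\[
\frac{e^{2(p-1)(\alpha+x)}}{(1-x)^2}\ \le\ \frac{e^{2(1+\sqrt{2})/25}}{(1-\sqrt{2}/50)^2}\ \approx\ 1.29,
\]
not $1.08$. So the ``direct calculation'' you defer at the end would actually fail. The fix is easy: replace your head/tail estimate by the one-line ratio test above, which is tighter precisely because it keeps the $1/(v+1)$ that you discard.
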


\subsection{Averaged Taylor  polynomial estimator} \eat{ The
averaged \tp~estimator, denoted by \avtp, keeps $s = \Theta(k)$
independent copies $X_1, X_2, \ldots, X_s$ of $X$ and returns  the
average of  the $s$ \tp~estimators, each obtained from specific
$k$-subsets of $\{X_1, \ldots, X_s\}$ that correspond to codewords
of a certain code, which are then randomly permuted. This is shown
to reduce the variance by a factor of $O(k)$ for $\psi(t) =
\abs{t}^p$.}
We use a version of the Gilbert-Varshamov theorem from \cite{dipw:soda10}.
\begin{theorem} [Gilbert-Varshamov] \label{thm:gv}For positive integers  $q \ge 2$ and $ k >
1$, and real value $0 <  \epsilon < 1-1/q$, there exists a set $\C
\subset \{0,1\}^{qk}$ of binary vectors with exactly $k$ ones such
that $\C$ has minimum Hamming distance $2\epsilon k$ and $\log
\abs{\C} > (1-H_q(\epsilon))k \log q$, where, $H_q$ is  the
$q$-ary entropy function $H_q(x) = -x \log_q \frac{x}{q-1} - (1-x)
\log_q  (1-x) $.
\end{theorem}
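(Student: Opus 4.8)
The plan is to reduce this constant‑weight binary statement to the classical $q$‑ary Gilbert–Varshamov bound via the standard ``block'' identification, and then run the greedy covering argument (this is essentially the route taken in \cite{dipw:soda10}). First I would partition the coordinate set $[qk]$ into $k$ consecutive blocks $B_1,\dots,B_k$, each of size $q$, and restrict attention to those $v \in \{0,1\}^{qk}$ that contain exactly one $1$ inside each block $B_i$; call these the block‑balanced vectors. Each block‑balanced $v$ automatically has exactly $k$ ones, and the map $\phi$ sending $v$ to the tuple $\phi(v) = (\phi_1(v),\dots,\phi_k(v)) \in [q]^k$, where $\phi_i(v)$ records the position of the unique $1$ of $v$ within $B_i$, is a bijection between the block‑balanced vectors and $[q]^k$. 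The key elementary observation is that for block‑balanced $u,v$ the Hamming distance satisfies $\abs{\{t : u_t \ne v_t\}} = 2\,d_q(\phi(u),\phi(v))$, where $d_q$ is the $q$‑ary Hamming distance on $[q]^k$: a block on which the two tuples agree contributes $0$, and a block on which they disagree contributes exactly $2$ (a misplaced $1$ and a misplaced $0$). Hence it suffices to construct a $q$‑ary code $D \subseteq [q]^k$ with $d_q$‑minimum distance at least $\epsilon k$ and $\abs{D} > q^{(1-H_q(\epsilon))k}$, and then take $\C = \phi^{-1}(D)$.

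To build $D$ I would use the greedy/volume argument: repeatedly add to $D$ any $x \in [q]^k$ whose $d_q$‑distance to every current element of $D$ is at least $\lceil \epsilon k\rceil$, stopping when this is impossible. At termination every point of $[q]^k$ lies within $d_q$‑distance $\lceil\epsilon k\rceil - 1$ of some element of $D$, so $q^k \le \abs{D}\cdot V_q(k,\lceil\epsilon k\rceil-1)$, where $V_q(k,d) = \sum_{j=0}^d \binom{k}{j}(q-1)^j$ is the Hamming‑ball volume. Writing $d = \lceil\epsilon k\rceil - 1$, we have $0 \le d/k < \epsilon < 1-1/q$, so the standard volume bound $V_q(k,d) \le q^{H_q(d/k)k}$ applies; and since $H_q$ is strictly increasing on $[0,1-1/q]$ (its derivative is $\log_q\frac{(q-1)(1-x)}{x} > 0$ there) and $d/k < \epsilon$, we get $V_q(k,d) \le q^{H_q(d/k)k} < q^{H_q(\epsilon)k}$. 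Therefore $\abs{D} > q^{k}/q^{H_q(\epsilon)k} = q^{(1-H_q(\epsilon))k}$, and by construction $D$ has $d_q$‑minimum distance $\ge \lceil\epsilon k\rceil \ge \epsilon k$.

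Finally I would assemble the pieces: $\C = \phi^{-1}(D)$ consists of binary vectors of length $qk$ with exactly $k$ ones; for distinct $u,v \in \C$ the distance identity gives $\abs{\{t: u_t\ne v_t\}} = 2\,d_q(\phi(u),\phi(v)) \ge 2\epsilon k$, so the minimum Hamming distance of $\C$ is at least $2\epsilon k$; and $\log\abs{\C} = \log\abs{D} > (1-H_q(\epsilon))k\log q$. I expect the only mildly delicate points in the write‑up to be (i) justifying the ball‑volume bound $V_q(k,d) \le q^{H_q(d/k)k}$ for $d/k \le 1-1/q$, which is a one‑line optimization of $a \mapsto a^{-d}(a(q-1)+1)^k$ over $0 < a \le 1$, and (ii) keeping the rounding of $\epsilon k$ consistent so that the final inequality comes out strict. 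Both are routine; the actual content is the block reduction converting the constant‑weight question into an ordinary $q$‑ary coding question.
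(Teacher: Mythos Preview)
The paper does not actually prove Theorem~\ref{thm:gv}; it is stated as a known result imported from \cite{dipw:soda10}, and only the immediate consequence (Corollary~\ref{cor:gv}, obtained by plugging in $q=8$, $\epsilon=3/4$) is given a short proof. So there is no ``paper's own proof'' to compare against here.

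That said, your proposal is correct and is exactly the standard route one takes to establish this constant-weight variant: identify weight-$k$ binary vectors that are block-balanced with $q$-ary words of length $k$, observe the $2:1$ distance scaling, and then invoke the ordinary $q$-ary Gilbert--Varshamov greedy/volume argument together with the entropy bound on Hamming-ball volumes. Your handling of the rounding (taking $d=\lceil \epsilon k\rceil -1$ so that $d/k<\epsilon$ and the strict monotonicity of $H_q$ on $[0,1-1/q]$ yields a strict inequality) is fine. This is the argument behind the cited result, so you have effectively reconstructed what the paper is quoting.
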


\begin{corollary} \label{cor:gv}
For $k \ge 1$, there exists a code $Y \subset \{0,1\}^{8k}$ such
that  $\abs{Y} \ge 2^{0.08k}$, each $y\in Y$ has exactly $k$ 1's, and
the minimum Hamming distance among distinct codewords in $Y$ is
$3k/2$.
\end{corollary}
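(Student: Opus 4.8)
The plan is to obtain $Y$ as a direct instantiation of Theorem~\ref{thm:gv} with the parameters $q = 8$ and $\epsilon = 3/4$. First I would check admissibility: the theorem requires $q \ge 2$, which is clear, and $0 < \epsilon < 1 - 1/q = 7/8$, which holds since $3/4 < 7/8$. With these choices the theorem yields a set $\C \subset \{0,1\}^{qk} = \{0,1\}^{8k}$ of binary vectors each of weight exactly $k$, with minimum Hamming distance $2\epsilon k = 3k/2$. These are exactly the length, weight, and distance requirements of the corollary (reading "minimum Hamming distance is $3k/2$" as the standard lower bound $\ge 3k/2$), so the only remaining point is the cardinality bound $\abs{\C} \ge 2^{0.08k}$.

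For the size, the theorem gives $\log \abs{\C} > (1 - H_8(3/4))\,k \log 8 = 3(1 - H_8(3/4))\,k$ (equivalently, $\log_8 \abs{\C} > (1-H_8(3/4))k$, which is base-independent), so it suffices to verify $3(1 - H_8(3/4)) \ge 0.08$, i.e. $H_8(3/4) \le 0.9733$. I would evaluate $H_8(3/4) = \tfrac34 \log_8 \tfrac{28}{3} + \tfrac14 \log_8 4$ numerically, using $\log_8 4 = 2/3$ and $\log_8(28/3) = \tfrac13 \log_2(28/3) \approx 1.074$, which gives $H_8(3/4) \approx 0.9722 < 0.9733$. Hence $3(1 - H_8(3/4)) \approx 0.083 > 0.08$, so $\log\abs{\C} > 0.08k$ and $\abs{\C} \ge 2^{0.08k}$. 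Setting $Y = \C$ then establishes the claim for $k \ge 2$; the boundary case $k = 1$ (where the theorem's hypothesis $k > 1$ does not apply) is handled by hand, taking $Y$ to consist of two distinct weight-one vectors in $\{0,1\}^8$, which are at Hamming distance $2 \ge 3/2$.

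The only genuine obstacle is the arithmetic: the constants in the corollary have been tuned so that the Gilbert--Varshamov bound clears the target rate $0.08$ only by a small margin (the true rate is about $0.083$). Consequently one must bound $H_8(3/4)$ tightly enough from above, rather than rely on a crude estimate, to be sure the inequality $3(1 - H_8(3/4)) \ge 0.08$ actually holds; apart from this careful numerical check, the proof is a straightforward substitution into Theorem~\ref{thm:gv}.
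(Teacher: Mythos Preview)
Your proposal is correct and follows essentially the same approach as the paper: instantiate Theorem~\ref{thm:gv} with $q=8$, $\epsilon=3/4$, and verify numerically that $H_8(3/4)\approx 0.9722$ so that $3(1-H_8(3/4))>0.08$. Your treatment is slightly more careful than the paper's, since you explicitly verify the admissibility condition $\epsilon<1-1/q$ and separately handle the boundary case $k=1$ (which the theorem's hypothesis $k>1$ excludes but the corollary's statement allows).
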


Let $Y$ be a code as given by Corollary~\ref{cor:gv}. Each  $y \in Y$ is a boolean vector  $y = (y(1), y(2), \ldots, y(s))$ of dimension
 $s = 8k$ with exactly $k$ 1's.  It  can be equivalently
 viewed as a $k$-dimensional ordered sequence
 $y \equiv (y_1, y_2, \ldots, y_k)$ where $1\le y_1 < y_2< \ldots < y_k \le s$,  and $y_j$ is the index of  the $j$th occurrence of 1 in $y$.
 Let $\pi: [k] \rightarrow [k]$ be a
permutation and $y =  (y_1, \ldots, y_k)$ be an ordered sequence
of size $k$. Then,   $\pi(y)$ denotes the sequence of indices
$(y_{\pi(1)}, \ldots, y_{\pi(k)})$.

Let $X_1, X_2, \ldots, X_s$ be independent random variables with expectation $\mu$ and standard deviation at most $\sigma$.
We first
define the  Taylor polynomial estimator, denoted  \tp~estimator,  for $\psi(\mu)$, given  (i) an  estimate
$\lambda$ for $\mu$, (ii)  a codeword
$y \in Y$, and  (iii) a permutation $\pi:[k]
\rightarrow [k]$.
 The \tp~estimator corresponding  to $y \in Y$ and permutation $\pi$ is defined as
$$
\vartheta(\psi,\lambda,k,s,y,\pi, \{X_t\}_{
t=1}^s) = \sum_{v=0}^k \gamma_v(\lambda) \prod_{l=1}^v \left(X_{y_{\pi(l)}} -
\lambda\right) \enspace .
$$
 Let $\{\pi_y\}_{y \in Y}$
denote a set of $\abs{Y}$ randomly and independently chosen
permutations that map $[k] \rightarrow [k]$ that is  placed in (arbitrary) 1-1 correspondence with $Y$. The averaged Taylor polynomial estimator \avtp~
averages the  $\abs{Y}$ \tp~estimators corresponding to
 each codeword in $Y$, ordered by  the permutations $\{\pi_y\}_{y \in Y}$ respectively, as follows.
\begin{align} \label{eq:bvtheta}
 \bvtheta(\psi, \lambda, k,s, Y, \{\pi_y\}_{y\in Y} , \{X_l\}_{l=1}^s)
 =\frac{1}{\abs{Y}}\sum_{y \in Y} \vartheta(\psi,\lambda,k,s,y, \pi_y,
  \{X_l\}_{l =1}^s)
\end{align}
The Taylor polynomial estimator in \emph{RHS} of Eqn.~\eqref{eq:bvtheta} corresponding to each $y \in Y$ is referred to simply as $\vartheta_y$, when the other parameters are clearly understood from context. Note that for any $y\in Y$ and permutation $\pi_y$, $\expect{\vartheta_y}$ is the same. Therefore,
due to averaging,
 the \avtp~estimator has the same expectation as the expectation of each of the $\vartheta_y$'s.
\begin{lemma} \label{lem:vbvtheta}
 Let $p \ge 2, q=8$, $k \ge \max(1000, 40(\lfloor p \rfloor +2))$ and   $s = qk$. Let
 $Y  \subseteq \{0,1\}^s$  such that, (a) $\abs{Y}
\ge 2^{0.08k}$, (b) each  $y \in Y$ has exactly $k$ ones, and (c) the
minimum Hamming distance among distinct codewords in $Y$ is
$3k/2$. Let $\{X_1, \ldots, X_s\}$ be a family of independent  random variables, each having  expectation $\mu >0$
and variance  bounded above by $\sigma^2$.
Let $\lambda$ be an estimate for $\mu$ satisfying $
\abs{\lambda-\mu} \le \min(\mu,\lambda)/(25p)$ and let $\sigma < \min(\mu, \lambda)/(25p)$. Let $\eta=
((\lambda-\mu)^2 + \sigma^2)^{1/2} >0$.
 Let $\bvtheta $ denote $   \bvtheta(t^p, \lambda, k,s,
 Y,  \{\pi_y\}_{y\in Y}, \{X_l\}_{l=1}^s) $.
Then
 $$\variance{\bvtheta} \le \left( \frac{(0.288)p^2}{k} \right)\mu^{2p-2} \eta^2
\enspace . $$
 \end{lemma}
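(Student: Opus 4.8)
The plan is to write $\variance{\bvtheta}=\abs{Y}^{-2}\sum_{y,y'\in Y}\covariance{\vartheta_y}{\vartheta_{y'}}$ and handle the diagonal and off-diagonal sums separately. On the diagonal, $\vartheta_y$ conditioned on the permutation $\pi_y$ is an ordinary $(k+1)$-term Taylor polynomial estimator built from the $k$ independent copies $X_{y_{\pi_y(1)}},\dots,X_{y_{\pi_y(k)}}$ of $X$, so Corollary~\ref{cor:fptpvar} applies (its hypotheses $\abs{\lambda-\mu}<\mu/2$ and $k+1>p$ are immediate from the assumptions) and gives $\variance{\vartheta_y}\le(1.08)p^2\mu^{2p-2}\eta^2$; since $\abs{Y}\ge2^{0.08k}$, the diagonal contributes at most $(1.08)\,2^{-0.08k}p^2\mu^{2p-2}\eta^2$, which for $k\ge1000$ is negligible next to $\tfrac{0.288\,p^2}{k}\mu^{2p-2}\eta^2$. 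Everything thus reduces to showing that each of the at most $\abs{Y}^2$ off-diagonal covariances is at most $\bigl(\tfrac14+o(1)\bigr)\tfrac{p^2}{k}\mu^{2p-2}\eta^2$ as $k\to\infty$.

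Fix $y\ne y'$ and put $I=\supp(y)\cap\supp(y')$. The minimum distance $3k/2$ forces the symmetric difference of the two supports to span at least $3k/2$ coordinates, whence $\abs{I}\le k/4$; this is the only property of the code that is used. I would then compute $\covariance{\vartheta_y}{\vartheta_{y'}}$ term by term: in the cross term from the $v$-th summand of $\vartheta_y$ and the $w$-th summand of $\vartheta_{y'}$, the factors whose indices are shared by the two prefix products enter squared with expectation $\expect{(X_i-\lambda)^2}=\eta^2$ while the others have expectation $\mu-\lambda$, so that cross term equals $\gamma_v(\lambda)\gamma_w(\lambda)\,\expect{\eta^{2T}(\mu-\lambda)^{v+w-2T}}$, where $T$ is the size of the overlap of the two prefix index sets. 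Subtracting $\expect{\vartheta_y}\expect{\vartheta_{y'}}$, writing $\eta^2=\sigma^2+(\mu-\lambda)^2$, expanding $\eta^{2T}$ binomially, and using that $\pi_y\perp\pi_{y'}$ makes the two prefix sets \emph{independent} uniform subsets of $\supp(y)$ and $\supp(y')$ with overlap contained in $I$ (so $\expect{\binom{T}{j}}=\binom{\abs{I}}{j}\powf{v}{j}\powf{w}{j}/(\powf{k}{j})^2$), everything factors as
\[
\covariance{\vartheta_y}{\vartheta_{y'}}=\sum_{j\ge1}\binom{\abs{I}}{j}\,\sigma^{2j}\,\bigl(S_j/\powf{k}{j}\bigr)^2,\qquad S_j:=\sum_{v=j}^{k}\gamma_v(\lambda)\,\powf{v}{j}\,(\mu-\lambda)^{v-j},
\]
a sum of \emph{non-negative} terms, so it suffices to bound each $j$.

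Here $S_j$ is precisely the degree-$(k-j)$ Taylor polynomial of $\psi^{(j)}$ about $\lambda$ evaluated at $\mu$. Because $\abs{\mu-\lambda}\le\min(\mu,\lambda)/(25p)<\lambda$ the full series converges to $\psi^{(j)}(\mu)=\powf{p}{j}\mu^{p-j}$, and the identity $\binom{p}{v}\powf{v}{j}=\powf{p}{j}\binom{p-j}{v-j}$ turns the truncation error into the tail of a binomial series of ratio $O(1/p)$ which, after bounding $\binom{p-j}{k-j}$ by a ratio of Gammas and applying Stirling (legitimate since $j\le\abs{I}\le k/4$), is $2^{-\Omega(k)}$; hence $S_j=\powf{p}{j}\mu^{p-j}\,(1+2^{-\Omega(k)})$. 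For $j=1$, using $\binom{\abs{I}}{1}/(\powf{k}{1})^2=\abs{I}/k^2\le\tfrac1{4k}$ and $\sigma\le\eta$, the term is at most $\tfrac{p^2\mu^{2p-2}\eta^2}{4k}(1+2^{-\Omega(k)})$; this is the main term. For $j\ge2$ I would use the elementary estimate $\binom{\abs{I}}{j}/(\powf{k}{j})^2\le\binom{k/4}{j}/(\powf{k}{j})^2\le 4/(9^j k\,(j!)^2)$, so the $j$-th term is at most $\tfrac{4}{9^j k}\binom{p}{j}^2\sigma^{2j}\mu^{2p-2j}(1+2^{-\Omega(k)})$, then bound $\binom{p}{j}^2\sigma^{2j}\mu^{2p-2j}\le\binom{p}{j}^2\sigma^2\mu^{2p-2}(25p)^{-2(j-1)}$ with $\binom{p}{j}\le p^j/j!$ for $j\le\lceil p\rceil$ and $\abs{\binom{p}{j}}\le C_p\,j^{-p-1}$ for $j>\lceil p\rceil$; the series over $j\ge2$ is then geometrically convergent and sums to a tiny absolute-constant multiple of $\tfrac{p^2}{k}\mu^{2p-2}\eta^2$. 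Adding the diagonal and this $j\ge2$ remainder to the $j=1$ term gives $\variance{\bvtheta}\le\bigl(\tfrac14+o(1)\bigr)\tfrac{p^2}{k}\mu^{2p-2}\eta^2$, and $k\ge1000$ (with $k\ge40(\lfloor p\rfloor+2)$ keeping all $p$-dependence geometric) makes the accumulated error terms fall well below $0.038$, yielding the claimed $0.288$.

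The step I expect to be the main obstacle is the $j\ge2$ bookkeeping: the coefficients $\binom{p}{j}$ are not individually small (for non-integral $p$ and $j\approx p$ they are $\Theta(1)$, and $\abs{\powf{p}{j}}$ grows factorially once $j\gg p$), so one must verify, \emph{uniformly} over $2\le j\le k/4$, that $\binom{p}{j}^2\sigma^{2j}\mu^{2p-2j}$ decays geometrically — which is exactly where the stringent hypotheses $\sigma,\abs{\mu-\lambda}\le\min(\mu,\lambda)/(25p)$ and $\abs{I}\le k/4$ are spent — while simultaneously keeping the Taylor remainders of $\psi^{(j)}$ exponentially small over that whole range. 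Squeezing the leading constant down to (just under) $0.288$ rather than merely $O(p^2\mu^{2p-2}\eta^2/k)$ is what forces the explicit thresholds $k\ge1000$ and $k\ge40(\lfloor p\rfloor+2)$.
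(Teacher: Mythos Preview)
Your approach is correct and is genuinely cleaner than the paper's. Both proofs start the same way: write $\variance{\bvtheta}=\abs{Y}^{-2}\sum_{y,y'}\covariance{\vartheta_y}{\vartheta_{y'}}$, dispose of the diagonal via Corollary~\ref{cor:fptpvar} and $\abs{Y}\ge 2^{0.08k}$, and then bound each off-diagonal covariance using only $\abs{I}=\abs{\supp(y)\cap\supp(y')}\le k/4$. The divergence is in how the covariance is analysed. The paper (Lemma~\ref{lem:bvthetabasic}) keeps the expression in the form $\sum_{v,v'}\gamma_v\gamma_{v'}(\mu-\lambda)^{v+v'}\bigl(\expect{(\eta^2/(\mu-\lambda)^2)^{q^{vv'}}}-1\bigr)$, splits it as $P_{yy'}+Q_{yy'}$, proves $Q_{yy'}<0$ (Lemma~\ref{lem:Q}), and then breaks $P_{yy'}$ into three pieces $P_1,P_2,P_3$ according to whether the summation indices lie above or below $p$ (Lemmas~\ref{lem:P}--\ref{lem:P1}); each piece is estimated using the full distribution of $q^{vv'}$ from Lemma~\ref{lem:bvapower}. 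Your binomial expansion $\eta^{2T}=\sum_j\binom{T}{j}\sigma^{2j}(\mu-\lambda)^{2(T-j)}$ collapses all of this into the single identity
\[
\covariance{\vartheta_y}{\vartheta_{y'}}=\sum_{j\ge 1}\binom{\abs{I}}{j}\,\sigma^{2j}\Bigl(\tfrac{S_j}{\powf{k}{j}}\Bigr)^2,\qquad S_j=\sum_{v\ge j}\gamma_v(\lambda)\,\powf{v}{j}\,(\mu-\lambda)^{v-j},
\]
which is manifestly non-negative (so no separate $\mu=\lambda$ case and no need to isolate a negative part), uses only the factorial moments $\expect{\binom{T}{j}}=\binom{\abs{I}}{j}\powf{v}{j}\powf{w}{j}/(\powf{k}{j})^2$ rather than the full law of $q^{vv'}$, and makes the dominant $j=1$ contribution $\le\tfrac{\abs{I}}{k^2}p^2\mu^{2p-2}\sigma^2\le\tfrac{1}{4k}p^2\mu^{2p-2}\eta^2$ visible at once. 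Recognising $S_j$ as the truncated Taylor polynomial of $\psi^{(j)}$ is the second idea that the paper does not use, and it is what lets you handle all $j$ uniformly. The paper's route is more laborious but tracks explicit constants at every step; your route is shorter and in fact yields a slightly smaller leading constant ($\approx 1/4$ versus the paper's $0.276$), with the only real work being the uniform-in-$j$ tail bounds you flagged.
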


\section{Algorithm} \label{sec:algo}
 \label{sec:ghss}

The \ghss~algorithm uses  a level-wise structure  corresponding to levels $l =0, 1, \ldots, L $,
where, the values of $L$ and the other parameters are given in Figure~\ref{table:params}.
\begin{figure}[t]
\begin{center}
\begin{tabular}{|p{1.8in}|l|}\hline \hline
\emph{Description of Parameter}&  \emph{Parameter and its value} \\ [3mm] \hline
 Number of levels  &~~ $L =  \lceil
\log_{2\alpha} \frac{n}{C} \rceil$ \\ [2mm]\hline
Reduction factor  &~~ $\alpha =  1- (1-2/p) \nu, ~\nu = 0.01$ \\  \hline
\multirow{2}{1.7in}{Basic  space parameters} &~~
 $ B    = \left(\cfrac{425 (2\alpha)^{p/2} n^{1-2/p}\epsilon^{-2}}{\min(\epsilon^{4/p-2}, \log (n))}\right)$ \\ &\\
 &~~  $ C = (27p)^2 B $    \\[1mm] \hline
 \multirow{3}{1.7in}{Level-wise space parameters }
  &~~ $ B_l = 4 \alpha^l B,  ~~~l  =0,1, \ldots, L-1 $ \\
  &~~ $C_l =  4\alpha^lC, ~~~ l = 0,1,  \ldots,  L -1 $  \\
  &~~ $C_L = 16 (4 \alpha^L C)$,  \\ [3mm] \hline
Degree of independence of $g_1, \ldots, g_L$   &~~   $d = 50\lceil  \log n \rceil$ \\ [3mm]  \hline

Taylor Polynomial Estimator Parameters &~~ $k = 1000\lceil\log n \rceil$, $r = 16k, s = 8k$
  \\[2mm] \hline
Degree of independence of table hash functions &~~ $t =11$ \\ \hline
\end{tabular}
\end{center}
\caption{Parameters used by the \ghss~algorithm.}
\label{table:params}
\end{figure}

\subsubsection*{Level-wise structures}
Corresponding to each level $l =0,1, \ldots, L-1$,  a pair of structures $(\hh_l, \est_l)$ are kept, where, $\hh_l$ is  a
\countsketch$(16C_l,s)$ structure with  $s = O(\log n)$ hash tables each consisting
of $16C_l$ buckets. The $\est_l$ structure is used by the Taylor  polynomial estimator  at level $l$
and is a standard \countsketch$(16C_l,2s)$ structure with the following minor changes.
 \begin{enumerate} [(a)]
 \item The hash functions $h_{lr}$'s
used for the hash tables $T_{lr}$'s  are  6-wise independent.
\item  The Rademacher family
$\{\xi_{lr}(i)\}_{i \in [n]}$ is $4$-wise independent for each  table index $r \in [2s]$, and is independent across
the $r$'s, $r \in [2s]$.
\end{enumerate}
The hash tables $\{T_{lr}\}_{r \in [2s]}$ have  $16C_l$ buckets each and use the hash function $h_{lr}$, for $r \in [2s]$. Corresponding to the final level $L $, only an  \hh$_L$ structure  is kept which is a  \countsketch$(C^*_L, s)$ structure, where $C^*_L = 16 C_L$. The structure at  level $L$ uses  $O(1)$ times larger space for $\hh_L$  to facilitate the discovery of all
items  and their  frequencies mapping  to this level (with very high probability).

\subsubsection*{Hierarchical Sub-sampling}
The original stream $\stream$ is sub-sampled hierarchically to produce random sub-streams
for each of the levels $\stream_0  = \stream \supset \stream_1 \supset \stream_2 \supset
\cdots \stream_L$, where, $\stream_l$ is the sub-stream that maps to level $l$. The stream
$\stream_0$ is the  entire input stream. $\stream_1$ is obtained by sampling each   item $i$
appearing in $\stream_0$ with probability $1/2$; if $i$ is sampled, then all its records $(i,v)$ are
included in $\stream_1$, otherwise none of its records are included. In general,
$\stream_{l+1}$ is obtained by sampling items from $\stream_l$ with probability $1/2$, so
that
$
\prob{i \in \stream_{l+1} \mid i \in \stream_l} = 1/2
$. This is done by a sequence of  independently chosen  random hash functions
  $g_1, g_2, \ldots, g_{L}$  each mapping $[n] \rightarrow \{0,1\}$. Then,
 \begin{equation*}
 i \in \stream_l ~\text{ iff }  g_1(i) =1,   g_2(i) =1,   \ldots, g_{l}(i)=1, ~~~ l =1,2, \ldots, L \enspace .
 \end{equation*}
 If $i \in \stream_l$, then for each stream update of the form $(i,v)$, the update is propagated to the structures  $\hh_l$ and $\tpest_l$.

\subsubsection*{Group thresholds and Sampling into groups}
Let $\hat{F}_2$ be an estimate satisfying
 $F_2 \le \hat{F}_2 \le  (1 + 0.01/(2p))F_2$ with probability $1-n^{-25}$ and is computed using random bits that are independent of the ones used in the above structures.

Let $\epsbar = (B/C)^{1/2} =1 /(27p)$. The  level-wise  thresholds are defined  as follows.
\begin{gather} \label{eq:Tl}
 T_0 = \left( \frac{ \hat{F}_2}{B} \right)^{1/2},  ~T_l = \left( \frac{1}{2\alpha
 }\right)^{l/2} T_0, ~~~~l \in  [L-1], \eat{ ~~~ T_L = , }\text{ and }  \notag \\
 Q_l = T_l - \epsbar T_l, ~l \in \{0\} \cup  [L-1], ~~~Q_L = 1/2
 \enspace .
 \end{gather}
 Let $\hat{f}_{il}$ be the estimate for $f_i$ obtained
from  level $l$ using $\hh_l$. For $l \in \{0\} \cup [L-1]$, we say that $i$ is ``discovered'' at level $l$, or that $l_d(i) = l$,    if $l$ is the smallest level such that  $\abs{\hat{f}_{il}} \ge Q_l $. Define
$\hat{f}_{i} = \hat{f}_{i,l_d(i)}$.  $l_d(i)$ is set to $ L$ iff $i \in \stream_L$ and $i$ has not been discovered at any earlier level.

Items are  placed into sample groups, denoted by $\bar{G}_l$, for $l \in \{0\} \cup [L]$, as follows. An item is placed into the sampled group $\bar{G}_l$ if the following holds.
\begin{enumerate}
\item  If $i$ is discovered at level $l$  and $\abs{\hat{f}_{il}} \ge T_l$, then, $i$ is included in $\bar{G}_l$.
\item If
$i$ is discovered at level $l-1$ but  $\abs{\hat{f}_{i,l-1}} < T_{l-1}$ and  the flip of an  unbiased coin $K_i$ turns up \emph{heads}.
\end{enumerate}
An item $i$ is placed in $G_0$ if $\abs{\hat{f}_{i0}} \ge T_0$.
In other words, the sample groups are defined as follows.
\begin{align*}
\bar{G}_0 &= \{i : \abs{\hat{f}_i} \ge T_0\}, \\
\bar{G}_l &= \{ i : ( l_d (i)  = l \text{ and }  \abs{\hat{f}_{i}}  \ge T_l   ) \text{ or } (  l_d (i) = l-1\text{ and } \abs{\hat{f}_i} < T_{l-1}\text{ and }
K_i = 1) \}, ~l=1,2, \ldots, L-1, \\
\bar{G}_L &= \{ i: l_d(i) = L \text{ or } (l_d(i) = L-1\text{ and } \abs{\hat{f}_i } < T_{L-1}\text{ and } K_i = 1) \} \enspace .
\end{align*}
We refer to an item as being \emph{sampled} if it belongs to a sample group. From the construction above,  it follows that (1) only an item that is discovered may be sampled, and  (2) if  $i \in [n]$ is discovered at level $l$, then, $i$ may belong to sampled group $\bar{G}_l$ or to the sampled group $\bar{G}_{l+1}$, or to neither (and hence to no sampled group). That is,  there is a possibility that discovered items are not sampled (this happens when $ Q_l \le \hat{f}_{il} < T_l$ and $K_i = 0$ (tails)).

\subsubsection*{The {\sc nocollision}~event} Let  $\hattopk_l(C_l)$ be the set of  the top-$C_l$ elements in terms of  the   estimates
$\abs{\hat{f}_{il}}$ at level $l$. For $l \in\{0\} \cup  [L]$, $\nocollision_l$ is said to hold if for each $i
\in \hattopk_l(C_l)$, there exists a set $R_l(i) \subset [2s]$ of indices of hash tables of the structure  $\tpest_l$ such that $\abs{R_l(i)} \ge s$ and
that $i$ does not collide with any other item of $\hattopk_l(C_l)$ in the buckets
$h_{lq}(i)$, for $q \in R_l(i)$. More precisely,
\begin{multline} \label{eq:nc} \nocollision_l \equiv  \forall i \in \hattopk_l(C_l), \exists R_l(i) \subset [2s]
\left(\abs{R_l(i)} \ge s \text{ and }  \right. \\ \left.
\forall q \in R_l(i), \forall j \in \hattopk_l(C_l) \setminus \{i\}  ~~h_{lq}(i) \ne h_{lq}(j) \right)
\enspace .
\end{multline}
The    event $\nocollision$ is defined as  $$ \nocollision \equiv \wedge_{l=0}^L \nocollision_l \enspace . $$ The analysis shows \nocollision~to be a very  high probability event, however,
if  $\nocollision$ fails, then, the estimate for $F_p$  returned is 0.

\subsubsection*{The estimator $\hat{F}_p$} Assume that  the event \nocollision~holds, otherwise, $\hat{F}_p$ is set to 0.  For each item $i$ that is discovered at  level $l_d(i) < L$ and is sampled into sampled group at level  $l_s(i)$, the averaged Taylor polynomial estimator is used to
obtain an estimate of $\abs{f_i}^p$ using  the structure $\tpest_{l_d(i)}$ at level  $l_d(i)$ and scaled by factor of  $2^{l_s(i)}$ to compensate for sampling.
If $l_d(i) = l_s(i) =L$, then  the simpler estimator $\abs{\hat{f}_i}^p$ is used  instead and the resulting estimate is scaled by $2^L$.

The
parameter $\lambda $ used in the Taylor polynomial estimator for estimating $\abs{f_i}^p$ is set to $\abs{\hat{f}_{i}} = \abs{\hat{f}_{i,l_d(i)}}$. Let $l = l_d(i)$. By  $\nocollision$, let
$R_{l}(i) = \{t_1, t_2, \ldots, t_s\} \subset [2s]$. Let
 $X_{ijl}$  be the (standard) estimate  for $\abs{f_i}$ obtained from table $T_{lj}$,  that
 is,
 $$ X_{ijl} = T_{lj}[h_{lj}(i)] \cdot \xi_{lj}(i) \cdot \sgn(\hat{f}_{i}), ~~~\text{ for $j \in R_l(i)$. }
$$
 The estimator  $ \bvtheta_{i}$ is defined as
 $$ \bvtheta_i = \bvtheta(t^p, \abs{\hat{f}_{i}},k, s,Y,
 \{\pi_j\}_{j \in Y}, \{X_{ijl}\}_{j \in R_l(i)} \}) $$ where, $Y$ is a code satisfying Corollary~\ref{cor:gv} and $\{\pi_j\}_{j \in Y} \}$ is a family of independently and randomly chosen permutations from $[k] \rightarrow [k]$.  The parameters $k$  and $s$ are given in  Figure~\ref{table:params}. The estimator $\hat{F}_p$ for $F_p$
 is defined below.
\begin{align} \label{eq:hatFp}
\hat{F}_p =  \sum_{l = 0}^{L} \sum_{i \in \bar{G}_l, l_d(i) < L}  2^l \cdot  \bvtheta_{i} +  \sum_{i \in \bar{G}_L, l_d(i) = L} 2^L \cdot \abs{\hat{f}_i}^p \enspace  . \end{align}
\section{Analysis}
\label{sec:anal:ghss}
In this section, we analyze the \ghss~algorithm.

\subsection{The event $\G$}
 Let $\ftwores{k,l}$ denote  the (random) $k$-residual second
moment   of the frequency vector corresponding to $\stream_l$.
The analysis is conditioned on the  conjunction of a set of  events denoted by $\G$, as defined in Figure~\ref{fig:G}.
\begin{figure}[htbp]
\begin{center}
\begin{tabular}{ll>{$}l<{$}}
(1) &~\goodftwo &\equiv   F_2 \le  \hat{F}_2  \le \left(1+\cfrac{0.001}{2p}\right)F_2, \\
(2) &~\nocollision &~~~\text{ defined in ~\eqref{eq:nc}} \\
(3) &~\goodest  &\equiv   \forall l: 0 \le l \le L, ~\forall i \in [n], ~~ \abs{\hat{f}_{il} - f_i}\le  \left(
\cfrac{\ftwores{2C_l,l}}{C_l} \right)^{1/2}\\[2mm]
(4) & ~\smallres & \equiv  \forall l: 0 \le l \le L,~ \ftwores{2C_l,l} \le  \cfrac{1.5 \ftwores{ 
(2\alpha)^l C  }}{2^{l-1}} \\[2mm] (5) & ~\accuest &\equiv  \forall l: 0 \le l \le L, ~\forall i \in [n], ~ \abs{\hat{f}_{il} - f_i}\le   \left(
\cfrac{\ftwores{ (2\alpha)^l C }}{2(2\alpha)^l C}\right)^{1/2} \\ [2mm]
(6)& ~\lastlevel~ &\equiv \forall i \in \stream_L, \hat{f}_{iL} = f_{i}\\[2mm]
(7)&~\smallhh& \equiv   \forall l: 0 \le l \le L, \{i: \abs{\hat{f}_{il}}\ge Q_l\} \subset \overline{\topk}(C_l).
\end{tabular}
\end{center}
\caption{$\G$ is the conjunction of these 7 events}
\label{fig:G}
\end{figure}

The events comprising $\G$ are as follows.  \goodftwo~is the  event that $\hat{F}_2$ is an $1+O(1/p)$-factor approximation of $F_2$.  The event $\goodest$  states that for all $i \in [n]$ and levels $l \in \{0\} \cup [L]$, the frequency estimation errors incurred by the $\hh_l$ structure remains within the high-probability error bound for  the \countsketch~algorithm \cite{ccf:icalp02} given by Eqn.~\eqref{eq:cskbasic}. However, the bounds in \goodest~have to be  expressed in terms of $\ftwores{2C_l,l}$, which are themselves  random variables. The event \smallres~gives some control on this random variable by giving an upper bound on $\ftwores{2C_l,l}$ as $\frac{1.5\ftwores{ (2\alpha)^l C)}}{2^{l-1}}$. The event \accuest~holds if  the frequency estimation for an item $i$ at a certain level $l$  has an additive accuracy of $\frac{\ftwores{ (2\alpha)^l C)}}{(2\alpha)^l C}$. The bounds given by \accuest~are non-random functions of $l$.
An item $i$ is \emph{classified as a heavy-hitter at level} $l$ if $\hat{f}_{il} \ge Q_l$, that is, its estimate obtained from the $\hh_l$~structure exceeds the threshold $Q_l$. The event \smallhh~is said to hold if at each level, each heavy-hitter item at that level  is among those with the  top-$C_l$ absolute estimated  frequencies at that level.  The \nocoll~event is used only by the \tpest~family of structures at each level, and ensures that each heavy-hitter remains isolated from all the other heavy-hitters of that level in at least half ( $s$)  of the tables of the \tpest~structure at that level.

Lemma~\ref{lem:hss:G} shows that $\G$ holds except with inverse polynomial probability.
\begin{lemma} \label{lem:hss:G} For the choice of parameters in Figure~\ref{table:params},
$\G$ holds with probability $1- O(n^{-24})$.
\end{lemma}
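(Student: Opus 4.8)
\emph{Proof plan.} The event $\G$ is the conjunction of the seven events of Figure~\ref{fig:G}; I bound the failure probability of each by $O(n^{-25})$ and finish with a union bound. They split into four groups: \goodftwo~is supplied by the auxiliary $F_2$-estimator, which uses an independent random source and errs with probability $n^{-25}$ by construction; \goodest, \accuest, \lastlevel~follow from the \countsketch~bound of Eqn.~\eqref{eq:cskbasic} for $\hh_l$ and $\hh_L$, after conditioning on the realised sub-streams; \smallres~and \smallhh~are concentration statements about the random sub-streams $\stream_l$ and depend only on $g_1,\dots,g_L$; and \nocollision~is a balls-into-bins estimate for the tables of $\tpest_l$. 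Throughout, the controlling numerical facts are that $s,d,k=\Theta(\log n)$ in Figure~\ref{table:params}, that $L=\lceil\log_{2\alpha}(n/C)\rceil$ forces $(2\alpha)^LC\ge n$, and that $\alpha^LC=n^{\Omega_p(1)}$ (as $\alpha=1-\Theta_p(1)$ and $C=\Theta_p(n^{1-2/p}\,\mathrm{poly}(\log n,1/\epsilon))$), so every $C_l$ and $C^*_L$ dwarfs $d$; consequently all the ``count'' tail bounds below come from the elementary factorial-moment inequality for $d$-wise independent $0/1$ variables, $\prob{\sum X_i\ge A}\le\bigl(\E[\sum X_i]/(A-d)\bigr)^{d}$ (Markov on the $d$-th factorial moment), which the indicators $\{\1[i\in\stream_l]\}_i$ satisfy since the $g_j$ are mutually independent and each is $d$-wise independent.

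\emph{The \countsketch-based events.} As $\hh_l$ is a \countsketch$(16C_l,s)$ structure built on $\stream_l$, Eqn.~\eqref{eq:cskbasic} gives, for each fixed $l$ and $i$ and except with probability $2^{-\Omega(s)}$ over the hashing of $\hh_l$ (conditioned on $\stream_l$), the bound $\abs{\hat{f}_{il}-f_i}\le\bigl(8\ftwores{2C_l,l}/(16C_l)\bigr)^{1/2}=\bigl(\ftwores{2C_l,l}/(2C_l)\bigr)^{1/2}$; a union bound over $i\in[n]$ and $0\le l\le L$ yields \goodest~(which only asks for the weaker bound $(\ftwores{2C_l,l}/C_l)^{1/2}$). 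Feeding the sharper bound into \smallres~($\ftwores{2C_l,l}\le 1.5\ftwores{(2\alpha)^lC}/2^{l-1}=3\cdot2^{-l}\ftwores{(2\alpha)^lC}$) and using $2^l\cdot2C_l=8(2\alpha)^lC$ gives $\abs{\hat{f}_{il}-f_i}\le(3/8)^{1/2}\bigl(\ftwores{(2\alpha)^lC}/((2\alpha)^lC)\bigr)^{1/2}\le\bigl(\ftwores{(2\alpha)^lC}/(2(2\alpha)^lC)\bigr)^{1/2}$, which is \accuest. For \lastlevel, $\hh_L$ is a \countsketch$(C^*_L,s)$ structure with $C^*_L=1024\,\alpha^LC\ge 1024\cdot2^{-L}n\ge1024\,\E\abs{\supp(\stream_L)}$; the factorial-moment bound gives $\abs{\supp(\stream_L)}\le C^*_L/8$ except with probability $n^{-\Omega(d)}$, and on that event $\ftwores{C^*_L/8,L}=0$, so Eqn.~\eqref{eq:cskbasic} for $\hh_L$ forces $\hat{f}_{iL}=f_i$ for every $i\in\stream_L$ except with probability $2^{-\Omega(s)}$.

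\emph{The sub-sampling events.} For \smallres, fix $l$ and set $k^*=(2\alpha)^lC$, splitting $[n]$ into the ``head'' $\{i:\rank(i)\le k^*\}$ and ``tail'' $\{i:\rank(i)>k^*\}$; note $\sum_{\text{tail}}f_i^2=\ftwores{k^*}$. The head has $k^*$ items, so the number $H_l$ surviving into $\stream_l$ has $\E[H_l]\le2^{-l}k^*=\alpha^lC=C_l/4$, whence $H_l\le C_l$ except with probability $n^{-\Omega(d)}$. Since every head item dominates every tail item in absolute frequency, on that event the top-$2C_l$ of $\stream_l$ consists of all $\le C_l$ surviving head items together with the largest $\ge C_l$ surviving tail items, so $\ftwores{2C_l,l}\le R:=$ (sum of squares of the surviving tail items after deleting the largest $C_l$ of them); in particular $R=0$ whenever at most $C_l$ tail items survive, which (factorial-moment bound again) fails with probability $n^{-\Omega(d)}$ unless $2^{-l}\cdot(\#\text{ nonzero tail items})\gtrsim C_l$. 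Bounding $R\le 3\cdot2^{-l}\ftwores{k^*}$ with probability $1-n^{-\Omega(d)}$ in the remaining case is the technical core: one partitions the tail items into $O(\log(mM))$ magnitude classes and applies the factorial-moment tail bound to the survivors of each class; a class whose expected number of survivors is $O(d)$ contributes negligibly once the largest $C_l$ tail survivors are removed (it retains $\le\mathrm{poly}(\log n)$ survivors, each below the deletion threshold whose square is $\le\ftwores{k^*}/C_l$), while a class with many expected survivors concentrates within a constant factor of its mean; summing, the mean contributions telescope into a constant multiple of $2^{-l}\sum_{\text{tail}}f_i^2=2^{-l}\ftwores{k^*}$, and the slack in ``$3$'' absorbs the per-class deviations and the logarithmically many classes. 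A union over $0\le l\le L$ completes \smallres. For \smallhh~assume \accuest~and \goodftwo: if $\abs{\hat{f}_{il}}\ge Q_l=T_l(1-\epsbar)$ then by \accuest~$\abs{f_i}\ge T_l(1-\epsbar)-\bigl(\ftwores{(2\alpha)^lC}/(2(2\alpha)^lC)\bigr)^{1/2}$, and since $\ftwores{(2\alpha)^lC}\le F_2\le\hat{F}_2$, $(2\alpha)^lC=(27p)^2(2\alpha)^lB$ and $T_l^2=\hat{F}_2/((2\alpha)^lB)$, the subtracted error is at most $T_l/(27p\sqrt2)<\epsbar T_l$, so $\abs{f_i}\ge T_l(1-2\epsbar)$. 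The number of $i\in[n]$ with $\abs{f_i}\ge T_l(1-2\epsbar)$ is $\le\hat{F}_2/(T_l(1-2\epsbar))^2=(2\alpha)^lB/(1-2\epsbar)^2\le1.1(2\alpha)^lB$, so (factorial-moment bound, since $2^{-l}\cdot1.1(2\alpha)^lB=1.1\alpha^lB\ll C_l$) at most $C_l$ of them survive into $\stream_l$ except with probability $n^{-\Omega(d)}$; hence every $i$ with $\abs{\hat{f}_{il}}\ge Q_l$ lies among the $C_l$ items of largest frequency of $\stream_l$, i.e.\ in $\overline{\topk}(C_l)$.

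\emph{\nocollision~and conclusion.} Fix $l$; the set $\hattopk_l(C_l)$ of $C_l$ items is determined by $\hh_l$, whose randomness is independent of the $2s$ pairwise-independent hash functions of $\tpest_l$ on $16C_l$ buckets. For $i\in\hattopk_l(C_l)$ and a row $r\in[2s]$, the probability that $h_{lr}(i)=h_{lr}(j)$ for some $j\in\hattopk_l(C_l)\setminus\{i\}$ is at most $(C_l-1)/(16C_l)<1/16$, and these events are independent across rows, so the number of rows in which $i$ collides is stochastically dominated by $\bin(2s,1/16)$ and exceeds $s$ with probability at most $\binom{2s}{s}16^{-s}\le4^{-s}=2^{-\Omega(s)}$; hence a valid $R_l(i)$ with $\abs{R_l(i)}\ge s$ exists. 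Union-bounding over $i\in\hattopk_l(C_l)$ and $l$ gives \nocollision. Finally, combining the $O(n\log n)$ events of probability $2^{-\Omega(s)}$ (the \countsketch~and collision steps), the $O(\log n)$ events of probability $n^{-\Omega(d)}$ (the sub-sampling steps), and the single $n^{-25}$ event of \goodftwo, and using the constants of Figure~\ref{table:params}, each contribution is $O(n^{-25})$ and the total is $O(n^{-24})$, as claimed. The step I expect to be the real obstacle is the residual-second-moment concentration inside \smallres: bounding $\ftwores{2C_l,l}$ for all $L$ levels simultaneously with only $d=\Theta(\log n)$-wise independent sub-sampling, which forces the magnitude-class decomposition and the careful tracking of the constant slack above.
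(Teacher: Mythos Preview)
Your decomposition of $\G$ and the treatment of \goodftwo, \goodest, \accuest, \lastlevel, \smallhh~and \nocollision~all track the paper's proof (your \nocollision~argument, using the union bound $(C_l-1)/(16C_l)<1/16$ per row and independence across rows, is in fact cleaner than the paper's, which first computes the exact collision probability $1-(1-1/(16C_l))^{C_l-1}$ under full independence and then corrects for $t$-wise independence via inclusion--exclusion). One small slip in \smallhh: $\overline{\topk}(C_l)=\hattopk_l(C_l)$ is the top-$C_l$ set by \emph{estimated} frequency, not true frequency; once you have shown there are at most $C_l$ items with $\abs{\hat f_{il}}\ge Q_l$, the missing line is simply that these are by definition the top ones by estimate, hence contained in $\hattopk_l(C_l)$.

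Where you genuinely diverge is \smallres, and you correctly flag it as the crux. You propose a dyadic magnitude-class decomposition of the tail plus a per-class factorial-moment count bound; getting the constant down to the required $3=1.5/2^{-1}$, while simultaneously handling the sparse classes and the deletion of the largest $C_l$ survivors, is exactly the delicate bookkeeping you anticipate. The paper sidesteps all of it with a single normalization: every tail item (rank $>2^lq$, here $q=C_l$) satisfies $f_i^2\le \ftwores{2^{l-1}q}/(2^{l-1}q)$ by averaging over ranks $2^{l-1}q+1,\dots,2^lq$, so dividing $U_{lq}=\sum_{\rank(i)>2^lq}f_i^2\,y_{il}$ by this quantity yields a sum $U'_{lq}$ of $d$-wise independent $[0,1]$-valued variables with $\E[U'_{lq}]\le q/2$. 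One application of the Chernoff--Hoeffding bound for limited independence (Schmidt--Siegel--Srinivasan, stated as Theorem~\ref{thm:sss93}) with deviation $q$ then gives $\prob{U_{lq}>1.5\,\ftwores{2^{l-1}q}/2^{l-1}}\le e^{-\lfloor d/2\rfloor}$ in one line. Your route trades this single weighted-sum tail bound for $O(\log(mM))$ count bounds; it can be pushed through, but the normalization trick is both shorter and delivers the constant $1.5$ for free.
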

\eat{
 }

\subsection{Grouping items by frequencies}
Items are divided into groups based upon frequency ranges, as follows.
\begin{align*}
 G_0 & = \{ i: \abs{f_i} \ge T_0\}\\
  G_l  &= \{ i: T_l \le \abs{f_i} < T_{l-1} \}, l=1,2, \ldots, L-1 \\
 G_L  &= \{i: 1\le \abs{f_i} < T_{L-1}\} \enspace .
 \end{align*}
Note that this grouping is for purposes of analysis, since the true frequencies are unknown to the algorithm. Since estimated frequencies may have errors, it is possible that the sampling algorithm samples an item $i$ into the sampled group $\bar{G}_l$, although, the item does not belong to the group $G_l$. It will be useful to understand the conditions under which such errors do not  occur, and the conditions under which such  errors may occur and their extent.

Each group is further  partitioned into subsets defined by frequency ranges, namely,   $\lmargin(G_l)$, $ \midreg(G_l)$ and $\rmargin(G_l)$.
 \begin{align*}
 \lmargin(G_l) &= \{i: T_l \le \abs{f_i}  <  T_l(1+  \epsbar) \}, ~~l =0, \ldots, L-1, \\
 \rmargin(G_l) &= \{ i :  T_{l-1}(1 - 2\epsbar) \le \abs{f_i} <  T_{l-1} \}, ~~l \in [L]\\
  \midreg(G_l) &= \{ i: T_l+ T_{l}\epsbar \le \abs{f_i}  < T_{l-1} - 2T_{l-1}\epsbar\}, ~~l \in [L-1], \\
 \midreg(G_0) &= \{i: \abs{f_i} \ge T_0(1+\epsbar)\} \\
 \midreg(G_L) & = \{ 1 \le \abs{f_i} < T_{L-1}(1-2\epsbar)\} \enspace
 .
 \end{align*}
$G_0$ and $G_L$ have no \rmargin$(G_0)$~and \lmargin$(G_L)$ defined, respectively.  These definitions are similar (though not identical) to the \Hss~algorithm \cite{gl:hssfull}. The ratio $\frac{T_{l-1}}{T_l}  = (2\alpha)^{1/2}$, for $l = 1, 2, \ldots, L-1$.  The last group $G_L$ has frequency range is $[1, T_{L-1})$ and the frequency ratio $T_{L-1}/1$ can be  large.

\subsection{Properties of the sampling scheme}
 In the remainder of this paper, we assume that $c > 23$ is a constant satisfying $ \prob{\neg \G}/\prob{\G} \le n^{-c}$.

 \subsubsection*{Basic Property}
 Lemma~\ref{lem:margin} presents the   basic property of the sampling scheme.

\begin{lemma} \label{lem:margin}
Let $i \in G_l$.
\begin{enumerate}
\item Let  $i \in \midreg(G_l)$. Then,  $$\card{2^l\prob{ i \in \bar{G}_l \mid \G} - 1} \le 2^ln^{-c} \enspace . $$ Further, conditional on $\G$,  (i)  $i \in \bar{G}_l $ iff $i \in \stream_l$,  and, (ii) $i$ may
not  belong to any $\bar{G}_{l'}$, for $l' \ne l$, that is, (i) $\prob{i \in \bar{G}_l \mid \G} = \prob{ i \in \stream_l \mid \G} =2^l \pm n^{-c} $, and,  (ii) $ \prob{ i \in \cup_{l' \ne l} \bar{G}_{l'} \mid \G} =0$.

\item Let  $i \in \lmargin(G_l)$. Then
$$
 \card{2^{l+1} \prob{i \in \bar{G}_{l+1} \mid \G} + 2^l\prob{i \in \bar{G}_l \mid \G} -
 1} \le 2^l n^{-c}  \enspace . $$
Further, conditional on $\G$,  $i$ may  belong to either $ \bar{G}_l$ or $\bar{G}_{l+1}$,
but not to any other sampled group, that is,
$\prob{ i \in \cup_{l' \not\in \{l,l+1\}} \bar{G}_{l'} \mid \G} = 0$.
\item If $i \in \rmargin(G_l)$, then
 $$
 \card{ 2^l \prob{ i \in \bar{G}_l \mid \G} + 2^{l-1}\prob{ i \in \bar{G}_{l-1} \mid \G} -
 1} \le O(2^l n^{-c}) \enspace . $$
 Further, conditional on $\G$,   $i$ can belong to  either $ \bar{G}_{l-1}$ or $
 \bar{G}_{l}$ and not to any other sampled group, that is,
 $\prob{i \in \cup_{l' \not\in \{l-1,l\}} \bar{G}_{l'} \mid \G}  = 0$.
 \end{enumerate}
\end{lemma}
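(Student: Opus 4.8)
The plan is to first establish a \emph{deterministic} statement: conditioned on $\G$, the discovery level $l_d(i)$ and the sampled group(s) into which $i$ may fall are determined, up to the fair coin $K_i$, by which sub-streams $\stream_{l'}$ contain $i$ and by where the estimate $\hat f_{il'}$ sits relative to the thresholds $Q_{l'}$ and $T_{l'}$. The one quantitative input is that, under $\G$, the error of $\hat f_{il'}$ (viewed as an estimate of $i$'s frequency within $\stream_{l'}$) is a small fraction of $T_{l'}$: by $\accuest$, using that the residual second moment appearing there is at most $F_2\le\hat F_2$ (by $\goodftwo$) and the parameter identities $T_{l'}^2=\hat F_2/(B(2\alpha)^{l'})$ and $\epsbar=(B/C)^{1/2}=1/(27p)$, this error is below $\epsbar T_{l'}$; hence $\abs{\hat f_{il'}-f_i}<\epsbar T_{l'}$ when $i\in\stream_{l'}$, and $\abs{\hat f_{il'}}<\epsbar T_{l'}<Q_{l'}$ when $i\notin\stream_{l'}$. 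At the last level $L$, the rule ``$l_d(i)=L$ iff $i\in\stream_L$ and $i$ is undiscovered below'' is used in place of a threshold test.

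Next I would record a ``no premature discovery'' fact, driven by $T_{l-1}/T_l=\sqrt{2\alpha}$ with $\sqrt{2\alpha}\ge\sqrt{1.98}$ and $\epsbar\le1/54$ (both from $\alpha\ge0.99$, $p\ge2$): an item with $\abs{f_i}<T_{l-1}$ has $\abs{\hat f_{il'}}<Q_{l'}$ for every level $l'\le l-2$ (the needed inequality is $1/\sqrt{2\alpha}+\epsbar<1-\epsbar$), and the sharper frequency constraints defining $\midreg(G_l)$ and $\lmargin(G_l)$ force $\abs{\hat f_{i,l-1}}<Q_{l-1}$ as well (for $\lmargin$ this is $(1+\epsbar)/\sqrt{2\alpha}+\epsbar<1-\epsbar$), whereas an item of $\rmargin(G_l)$ may or may not be discovered at level $l-1$. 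Hence, given $i\in\stream_l$, $l_d(i)=l$ when $i\in\midreg(G_l)\cup\lmargin(G_l)$, while $l_d(i)\in\{l-1,l\}$ when $i\in\rmargin(G_l)$. For $i\in\midreg(G_l)$: if $i\in\stream_l$ then $\abs{\hat f_{il}}\ge\abs{f_i}-\epsbar T_l\ge T_l>Q_l$, so $l_d(i)=l$, $\abs{\hat f_i}\ge T_l$, hence $i\in\bar G_l$; since $\abs{\hat f_i}\ge T_l$ also blocks the rule-2 route into $\bar G_{l+1}$, $i$ joins no other sampled group; if $i\notin\stream_l$ it is discovered nowhere. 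Thus $\{i\in\bar G_l\}\cap\G=\{i\in\stream_l\}\cap\G$, and part~1 follows from $\prob{i\in\stream_l}=2^{-l}$ and $\prob{\neg\G}/\prob{\G}\le n^{-c}$, which give $\card{2^l\prob{i\in\bar G_l\mid\G}-1}\le 2^ln^{-c}$.

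For $i\in\lmargin(G_l)$: when $i\in\stream_l$ the same error bound gives $\abs{\hat f_{il}}\in[\abs{f_i}-\epsbar T_l,\abs{f_i}+\epsbar T_l]\subseteq[Q_l,T_l(1+2\epsbar))$, so $l_d(i)=l$ and $i$ enters $\bar G_l$ exactly when $\abs{\hat f_{il}}\ge T_l$ and $\bar G_{l+1}$ exactly when $Q_l\le\abs{\hat f_{il}}<T_l$ and $K_i=1$ (and no other group); if $i\notin\stream_l$ it is discovered nowhere. As $K_i$ is an independent fair coin and $\G$ forces $\abs{\hat f_{il}}\ge Q_l$ for $i\in\stream_l$, $2^{l+1}\prob{i\in\bar G_{l+1}\mid\G}+2^l\prob{i\in\bar G_l\mid\G}=2^l\prob{i\in\stream_l\mid\G}=1\pm 2^ln^{-c}$. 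The region $\rmargin(G_l)$ is the only one where the split between two groups depends on the realised estimate and not only on $K_i$: according to whether $\abs{\hat f_{i,l-1}}$ is $\ge T_{l-1}$, lies in $[Q_{l-1},T_{l-1})$, or is $<Q_{l-1}$ --- with conditional probabilities $q_1,q_2,q_3$, given $i\in\stream_{l-1}$, summing to $1$ --- the item enters $\bar G_{l-1}$, enters $\bar G_l$ via rule~2 if $K_i=1$, or (provided moreover $i\in\stream_l$) is discovered at level $l$, where $\abs{\hat f_{il}}\ge\abs{f_i}-\epsbar T_l\ge T_l$ again by the $\epsbar,\alpha$ inequalities, and enters $\bar G_l$ via rule~1; every remaining case leaves $i$ unsampled. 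Since $\hat f_{i,l-1}$ is a function of $\hh_{l-1}$ and $g_1,\dots,g_{l-1}$ alone, it is independent of $g_l(i)$, so $\prob{i\in\stream_l,\ \abs{\hat f_{i,l-1}}<Q_{l-1}}=\tfrac12\prob{i\in\stream_{l-1},\ \abs{\hat f_{i,l-1}}<Q_{l-1}}$; hence, before conditioning on $\G$, $\prob{i\in\bar G_{l-1}}=2^{-(l-1)}q_1$ and $\prob{i\in\bar G_l}=2^{-l}(q_2+q_3)$, so $2^l\prob{i\in\bar G_l}+2^{l-1}\prob{i\in\bar G_{l-1}}=q_1+q_2+q_3=1$, and re-inserting the conditioning on $\G$ costs an additive $O(\prob{\neg\G}/\prob{\G})=O(n^{-c})$ per probability, i.e.\ $O(2^ln^{-c})$ in the weighted sum. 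The boundary levels go through identically: at $l=0$, $\stream_0=\stream$ makes part~1 deterministic and part~2 reduce to ``$\abs{\hat f_{i0}}\ge Q_0$ always'', and at $l=L$ the special level-$L$ discovery rule and $Q_L=1/2$ take the place of the generic error bound.

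The step I expect to be the main obstacle is the bookkeeping in the $\rmargin(G_l)$ case: isolating $\{\abs{\hat f_{i,l-1}}<Q_{l-1}\}$ as an event measurable with respect to the randomness that excludes $g_l$, so that its independence of the level-$l$ sub-sampling bit $g_l(i)$ makes the $2^l$-weighted probabilities telescope to exactly $1$ before the $O(\prob{\neg\G}/\prob{\G})$ slack is absorbed. A secondary but purely mechanical point is checking the inequalities on $\epsbar$ and $\alpha$ --- needed both for ``no premature discovery'' and for $\abs{\hat f_{il}}\ge T_l$ when $i$ slips past level $l-1$ --- uniformly over all levels and all $p\ge2$; the passage from unconditional to $\G$-conditioned probabilities is then routine.
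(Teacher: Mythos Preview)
Your proposal is correct and follows essentially the same approach as the paper's proof: both use the $\accuest$ error bound $\abs{\hat f_{il'}-f_i}\le\epsbar T_{l'}$ to pin down the discovery level (the paper packages this as a separate Lemma~\ref{lem:discovery}, which you re-derive inline), then do the same three-region case analysis, invoking independence of $\hat f_{i,l-1}$ from $g_l(i)$ in the $\rmargin$ case and absorbing the passage to $\G$-conditional probabilities via Fact~\ref{fact:cndhpev}. The one organizational difference is that the paper works conditionally on $\G$ throughout rather than computing unconditionally first; your ``before conditioning on $\G$'' equalities such as $\prob{i\in\bar G_{l-1}}=2^{-(l-1)}q_1$ are not exact as written (without $\G$, premature discovery at level $\le l-2$ and failure of $\abs{\hat f_{il}}\ge T_l$ are not yet ruled out), but the discrepancy is bounded by $\prob{\neg\G}$ and is harmlessly swept into the $O(2^l n^{-c})$ slack you already budget for.
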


Lemma~\ref{lem:margin} is essentially  true (with minor changes) for the   \Hss~method
 \cite{bgks:soda06,gl:hssfull}, although the  \Hss~analysis used full-independence of hash functions whereas here we work with limited independence. A straightforward corollary of Lemma~\ref{lem:margin} is the following.

\begin{corollary}\label{cor:margin} Let $i \in G_l$. Then,
\begin{align*}
\sum_{l'=0}^L 2^{l'} \prob{i \in \bar{G}_{l'} \mid \G}  = \sum_{ \substack{l' \in  \{0,1, \ldots, L\}  \cap \{l-1,l,l+1\}}} \prob{i \in \bar{G}_{l'} \mid \G} =  1 \pm 2^{l+1} n^{-c} \enspace .
\end{align*}
\end{corollary}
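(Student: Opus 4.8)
The plan is to fix $i \in G_l$ and run a three-way case analysis according to which of $\lmargin(G_l)$, $\midreg(G_l)$, $\rmargin(G_l)$ contains $i$, invoking the corresponding part of Lemma~\ref{lem:margin} in each case. The first step is to check that these three regions partition $G_l$ (with the conventions that $\lmargin(G_l)$ is absent when $l = L$ and $\rmargin(G_l)$ is absent when $l = 0$): by their definitions they are the preimages under $\abs{f_\cdot}$ of the subintervals $[T_l, T_l(1+\epsbar))$, $[T_l(1+\epsbar), T_{l-1}(1-2\epsbar))$, $[T_{l-1}(1-2\epsbar), T_{l-1})$ of the defining range $[T_l, T_{l-1})$ of $G_l$, and these are disjoint and exhaust $[T_l, T_{l-1})$ because $T_{l-1}/T_l = (2\alpha)^{1/2}$ is bounded away from $1$ while $\epsbar = 1/(27p)$ is small. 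Hence $i$ lies in exactly one region and exactly one case applies.

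Each part of Lemma~\ref{lem:margin} then supplies two things. First, a ``no other group'' clause: $\prob{i \in \bar{G}_{l'} \mid \G} = 0$ for every $l'$ outside a subset of $\{l-1, l, l+1\}$ --- namely $\{l\}$ in the $\midreg$ case, $\{l, l+1\}$ in the $\lmargin$ case, and $\{l-1, l\}$ in the $\rmargin$ case. This immediately reduces the full sum $\sum_{l'=0}^L 2^{l'} \prob{i \in \bar{G}_{l'} \mid \G}$ to the sum over $l' \in \{0, 1, \ldots, L\} \cap \{l-1, l, l+1\}$, giving the first claimed equality (the boundary indices $l - 1 = -1$ when $l = 0$ and $l + 1 = L + 1$ when $l = L$ being excluded automatically by the intersection, consistent with the fact that $\rmargin(G_0)$ and $\lmargin(G_L)$ do not occur). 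Second, an estimate on the surviving weighted combination: $\card{2^l \prob{i \in \bar{G}_l \mid \G} - 1} \le 2^l n^{-c}$ ($\midreg$), $\card{2^{l+1}\prob{i \in \bar{G}_{l+1} \mid \G} + 2^l \prob{i \in \bar{G}_l \mid \G} - 1} \le 2^l n^{-c}$ ($\lmargin$), or $\card{2^l \prob{i \in \bar{G}_l \mid \G} + 2^{l-1}\prob{i \in \bar{G}_{l-1} \mid \G} - 1} \le O(2^l n^{-c})$ ($\rmargin$). Because all remaining terms of $\sum_{l'=0}^L 2^{l'}\prob{i \in \bar{G}_{l'}\mid\G}$ vanish, the weighted combination appearing in each inequality equals $\sum_{l'=0}^L 2^{l'}\prob{i \in \bar{G}_{l'}\mid\G}$ exactly, so in every case $\sum_{l'=0}^L 2^{l'}\prob{i \in \bar{G}_{l'}\mid\G} = 1 \pm O(2^l n^{-c})$, and absorbing the constant yields the stated bound $1 \pm 2^{l+1} n^{-c}$.

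I expect no genuine obstacle here: the corollary is essentially bookkeeping layered on Lemma~\ref{lem:margin}. The two points that deserve a moment's care are (i) confirming that the three-region decomposition of $G_l$ is exhaustive, so the case analysis is complete, and (ii) checking that the implicit constant hidden in the $O(2^l n^{-c})$ bound from the $\rmargin$ case is small enough to be absorbed into $2^{l+1} n^{-c}$; both follow immediately from the parameter choices in Figure~\ref{table:params}.
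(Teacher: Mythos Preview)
Your proposal is correct and is exactly the intended argument: the paper offers no proof for this corollary beyond calling it ``a straightforward corollary of Lemma~\ref{lem:margin},'' and your three-case analysis using the ``no other group'' clauses together with the weighted-sum estimates from each part of Lemma~\ref{lem:margin} is the natural (and only reasonable) way to fill this in. One cosmetic point: the middle expression in the corollary as stated in the paper is missing the factor $2^{l'}$ inside the sum over $l' \in \{l-1,l,l+1\}$; your argument implicitly (and correctly) treats it as $\sum_{l'} 2^{l'}\prob{i\in\bar G_{l'}\mid\G}$ restricted to those indices, which is what is actually being proved.
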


\subsubsection*{Approximate pair-wise independence property}

Lemma~\ref{lem:hsscond} essentially repeats the results of Lemma~\ref{lem:margin},
conditional upon the event that another item maps to a substream at some level $l$. This property is a step towards  proving an approximate pair-wise independence property in the following section.

\begin{lemma} \label{lem:hsscond} Let $i,j \in [n]$,  $i \ne j$ and $j \in  G_r$.
\begin{enumerate}
\item Let $j \in \midreg(G_r)$.  Then $$
 \left\lvert 2^r \prob{ j \in \bar{G}_r \mid i \in \stream_l, \G}  -1 \right\rvert  \le  2^r n^{-c} \enspace .
 $$ Further, for  any $r \ne r'$, $ \probb{j \in \bar{G}_{r'}  \mid i \in \stream_l, \G} = 0 \enspace . $

\item Let $j \in \lmargin(G_r)$. Then,
$$
  \left\lvert 2^{r+1} \prob{j \in \bar{G}_{r+1} \mid i \in \stream_l, \G} + 2^r\prob{j \in
 \bar{G}_r \mid i \in \stream_l,\G} - 1 \right\rvert
  \le  2^{r+1} n^{-c} \enspace .
$$
Further, for any $ r' \not\in \{r,r+1\}$, $ \probb{j \in \bar{G}_{r'}\mid i \in \stream_l, \G}  = 0$.
\item If $j \in \rmargin(G_r)$, then  $$
 \left \lvert 2^{r} \prob{ j \in \bar{G}_r \mid i \in \stream_{l}, \G} + 2^{r-1}\prob{ j \in
 \bar{G}_{r-1} \mid i \in \stream_{l},\G} - 1 \right\rvert \le   2^{r+1} n^{-c}  \enspace . $$
 Further, for any $ r' \not\in \{r-1,r\},  \prob{j \in \bar{G}_{r'}\mid i \in \stream_l, \G} = 0$.
 \end{enumerate}
\end{lemma}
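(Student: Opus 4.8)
The statement is the conditional counterpart of Lemma~\ref{lem:margin}, and the plan is to re-run the proof of Lemma~\ref{lem:margin} inside the conditional probability space $\prob{\,\cdot\mid i\in\stream_l}$, paying only for the extra conditioning on $\G$. What I will borrow is that the proof of Lemma~\ref{lem:margin} actually establishes a \emph{pointwise} statement on $\G$: the events \goodest, \accuest, \smallhh, \smallres, \lastlevel\ force, deterministically on $\G$, that the discovery level $l_d(j)$ lies in $\{r-1,r\}$, and more precisely that for $j\in\midreg(G_r)$ we have $l_d(j)=r\iff j\in\stream_r$ and then $j\in\bar G_r$ (and $j$ is in no other sampled group); that for $j\in\lmargin(G_r)$ we have $\{j\in\bar G_r\}=\{j\in\stream_r,\ \abs{\hat f_{jr}}\ge T_r\}$ and $\{j\in\bar G_{r+1}\}=\{j\in\stream_r,\ \abs{\hat f_{jr}}<T_r,\ K_j=1\}$; and the symmetric statement, with discovery now possible at both levels $r-1$ and $r$, for $j\in\rmargin(G_r)$. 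The ``$=0$'' assertions of the lemma are then immediate, since for the excluded indices $r'$ the indicator $\mathbf 1[j\in\bar G_{r'}]$ vanishes pointwise on $\G$.

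The next step is to collapse each weighted sum of conditional probabilities to a single sub-stream probability. In the $\midreg$ case this is just $2^r\prob{j\in\bar G_r\mid i\in\stream_l,\G}=2^r\prob{j\in\stream_r\mid i\in\stream_l,\G}$. In the $\lmargin$ case, using that $K_j$ is a fair coin independent of everything, $2^{r+1}\prob{j\in\bar G_{r+1}\mid i\in\stream_l,\G}=2^r\prob{j\in\stream_r,\ \abs{\hat f_{jr}}<T_r\mid i\in\stream_l,\G}$, which adds to $2^r\prob{j\in\bar G_r\mid i\in\stream_l,\G}=2^r\prob{j\in\stream_r,\ \abs{\hat f_{jr}}\ge T_r\mid i\in\stream_l,\G}$ to give
\[
2^r\prob{j\in\bar G_r\mid i\in\stream_l,\G}+2^{r+1}\prob{j\in\bar G_{r+1}\mid i\in\stream_l,\G}=2^r\prob{j\in\stream_r\mid i\in\stream_l,\G};
\]
the estimate-dependent part telescopes away, which is precisely what lets us ignore that $\hat f_{jr}$ itself may depend on whether $i\in\stream_r$. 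The $\rmargin$ case is analogous, additionally using that $g_r(j)$ --- hence $\{j\in\stream_r\}$ given $\{j\in\stream_{r-1}\}$ --- is independent of the level-$(r-1)$ estimate $\hat f_{j,r-1}$, and it reduces to $2^{r-1}\prob{j\in\stream_{r-1}\mid i\in\stream_l,\G}$.

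Finally I evaluate $\prob{j\in\stream_{r'}\mid i\in\stream_l,\G}$ for $r'\in\{r-1,r\}$. Since $i\ne j$ and, within each level $l'$, the pair $(g_{l'}(i),g_{l'}(j))$ is drawn from a pairwise-independent family while distinct $g_{l'}$'s are mutually independent (and the \countsketch~hash and Rademacher families and the coins $K_j$ are independent of all of the $g_{l'}$'s), the sub-sampling of $i$ and that of $j$ are independent; hence $\prob{j\in\stream_{r'}\mid i\in\stream_l}=\prob{j\in\stream_{r'}}=2^{-r'}$. Moreover $\prob{\neg\G\mid i\in\stream_l}\le\prob{\neg\G}/\prob{i\in\stream_l}=2^l\prob{\neg\G}\le 2^ln^{-c}$, so $\prob{j\in\stream_{r'}\mid i\in\stream_l,\G}=2^{-r'}\pm O(2^ln^{-c})$. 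Multiplying by the relevant power of $2$ and absorbing the factor $2^l=n^{O(1)}$ into the constant $c$ gives the three displayed bounds.

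The real obstacle is not in this conditional wrapper but in the pointwise $\G$-characterization of group membership used in the first step, particularly in the margin bands: one has to check that, once $\G$ holds, $\abs{\hat f_{jl'}}$ stays strictly below $Q_{l'}$ at every level $l'\notin\{r-1,r\}$ and that the \accuest\ and \smallres\ additive errors, being dominated by the margin width $\epsbar T_{l'}$, never move $j$ across the threshold $T_{l'}$ or $Q_{l'}$ outside the $\lmargin$/$\rmargin$ ranges. That verification is exactly the content of the proof of Lemma~\ref{lem:margin}, and it is reused here verbatim, so the present lemma is in effect a corollary.
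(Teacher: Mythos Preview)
Your proposal is correct and follows essentially the same approach as the paper: both arguments reuse the pointwise characterization on $\G$ from Lemma~\ref{lem:margin}/Lemma~\ref{lem:discovery} to get the ``$=0$'' clauses, use independence of $K_j$ and of $g_r(j)$ from the level-$(r-1)$ estimate to collapse the weighted sums, and then invoke pairwise independence of the $g_l$'s together with Fact~\ref{fact:cndhpev} (the paper's Lemma~\ref{lem:hss:ij}) to evaluate $\prob{j\in\stream_{r'}\mid i\in\stream_l,\G}$. Your telescoping presentation is a little cleaner than the paper's explicit case-by-case computation with the auxiliary quantity $p'_{jr}$, and your tracking of the $2^l$ factor in $\prob{\neg\G\mid i\in\stream_l}$ is actually more careful than the paper's, but the substance is the same.
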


\begin{corollary} \label{cor:hsscond}
Let $i,j  \in [n]$, $i \ne j$ and $j \in G_r$. Then,
$$ \left \lvert
  \sum_{r'=0}^L 2^{r'} \prob{j \in \bar{G}_{r'} \mid i \in \stream_l, \G} -1 \right\rvert \le  O(2^{r}n^{-c}) \enspace .$$
\end{corollary}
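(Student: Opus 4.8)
The plan is to derive Corollary~\ref{cor:hsscond} from Lemma~\ref{lem:hsscond} by a direct case analysis on which sub-region of $G_r$ contains $j$, exactly parallel to the way Corollary~\ref{cor:margin} follows from Lemma~\ref{lem:margin}. First I would recall that the sets $\lmargin(G_r)$, $\midreg(G_r)$, $\rmargin(G_r)$ partition $G_r$ (for $1 \le r \le L-1$; for $r \in \{0,L\}$ one of the margin pieces is absent), since the boundary $T_r(1+\epsbar)$ between $\lmargin(G_r)$ and $\midreg(G_r)$ and the boundary $T_{r-1}(1-2\epsbar)$ between $\midreg(G_r)$ and $\rmargin(G_r)$ are precisely the endpoints appearing in the definition of $\midreg(G_r)$. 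Hence it suffices to establish the claimed bound in each of the three cases.

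If $j \in \midreg(G_r)$, part (1) of Lemma~\ref{lem:hsscond} gives $\prob{j \in \bar{G}_{r'} \mid i \in \stream_l, \G} = 0$ for every $r' \ne r$, so the sum $\sum_{r'=0}^L 2^{r'}\prob{j \in \bar{G}_{r'} \mid i \in \stream_l, \G}$ collapses to the single term $2^r\prob{j \in \bar{G}_r \mid i \in \stream_l, \G}$, which the same part bounds within $2^r n^{-c}$ of $1$. If $j \in \lmargin(G_r)$, part (2) annihilates every term except $r' \in \{r, r+1\}$, and the surviving two-term sum $2^{r+1}\prob{j \in \bar{G}_{r+1} \mid i \in \stream_l, \G} + 2^r\prob{j \in \bar{G}_r \mid i \in \stream_l, \G}$ is within $2^{r+1}n^{-c}$ of $1$. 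Likewise, if $j \in \rmargin(G_r)$, part (3) leaves only $r' \in \{r-1, r\}$ and bounds the resulting sum within $2^{r+1}n^{-c}$ of $1$. In every case the deviation from $1$ is at most $2^{r+1}n^{-c} = O(2^r n^{-c})$, which is the assertion.

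Since this is a mechanical unpacking of Lemma~\ref{lem:hsscond}, I do not expect any real obstacle; the only points needing care are (a) verifying that the three margin sets genuinely exhaust $G_r$ so that no case is missed, and (b) keeping the error terms in the stated form — each part of Lemma~\ref{lem:hsscond} already carries the factor $2^{r+1}$ (or smaller) on its right-hand side, so the maximum over the three cases remains $O(2^r n^{-c})$ with an absolute constant. Note also that the extra conditioning on $i \in \stream_l$ (beyond $\G$) requires nothing new here, as Lemma~\ref{lem:hsscond} has already absorbed it.
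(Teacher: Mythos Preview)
Your proposal is correct and matches the paper's approach: the corollary is stated as an immediate consequence of Lemma~\ref{lem:hsscond}, obtained by the same case split on $\midreg$/$\lmargin$/$\rmargin$ and collapsing the sum to the one or two surviving terms. The paper does not give a separate proof, and your unpacking is exactly what is intended.
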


We can now prove an approximate pair-wise independence property.

\begin{lemma} \label{lem:hssj} For $i \in G_l$, $j \in G_m$ and $i,j$ distinct,
$$\left\lvert \sum_{r,r' =0}^L 2^{r+r'}\prob{ i \in \bar{G}_r, j \in \bar{G}_{r'}  \mid \G}  - 1 \right\rvert \le
O((2^l + 2^m)n^{-c}) \enspace .$$
\end{lemma}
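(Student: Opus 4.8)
The plan is to prove the approximate pairwise independence of the sampling events $\{i \in \bar{G}_r\}$ and $\{j \in \bar{G}_{r'}\}$ by conditioning and invoking the one-sided (conditional) versions already established. First I would fix $i \in G_l$ and $j \in G_m$ with $i \ne j$, and write the double sum by conditioning on the ``inner'' event that $i$ lands in a particular substream. The key observation is that whether $i \in \bar{G}_r$ is determined (given $\G$) by the estimates $\hat{f}_{i,\cdot}$ and the coin $K_i$, and in particular by which substreams $\stream_{l'}$ contain $i$; by Lemma~\ref{lem:margin}, the relevant $r$ are confined to $\{l-1,l,l+1\}$. So I would first collapse $\sum_{r'} 2^{r'}\prob{i \in \bar{G}_r, j \in \bar{G}_{r'} \mid \G}$ by summing over $r'$: conditioned additionally on the event $i \in \stream_{l'}$ for the appropriate $l'$ (one of $l$ or $l+1$, since $i \in \bar{G}_r$ forces $i$ into some $\stream_{l'}$), Corollary~\ref{cor:hsscond} gives $\sum_{r'} 2^{r'}\prob{j \in \bar{G}_{r'} \mid i \in \stream_{l'}, \G} = 1 \pm O(2^m n^{-c})$.

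Next I would handle the outer sum over $r$. The subtlety is that the conditioning event for Corollary~\ref{cor:hsscond} is ``$i \in \stream_{l'}$'', not ``$i \in \bar{G}_r$'', so I need to relate the two. Conditioned on $\G$, the event $\{i \in \bar{G}_r\}$ is a sub-event of $\{i \in \stream_{l'}\}$ for the relevant $l'$ (either $l' = r$ or $l' = r-1$ according to whether $i$ is discovered at level $r$ or at level $r-1$ and demoted by its coin). Using the law of total probability over the value of $K_i$ and over which substreams contain $i$, and using that $K_i$ and the hash functions $g_{\cdot}$ controlling $j$'s membership are independent of each other, I would argue that $\prob{i \in \bar{G}_r, j \in \bar{G}_{r'} \mid \G}$ factors, up to the $n^{-c}$ error terms coming from $\neg\G$, as $\prob{i \in \bar{G}_r \mid \G} \cdot \prob{j \in \bar{G}_{r'} \mid i \in \stream_{l'(r)}, \G}$ where $l'(r)$ records the substream through which $i$ passes. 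Substituting, the inner sum over $r'$ becomes $1 \pm O(2^m n^{-c})$ uniformly, and the outer sum $\sum_r 2^r \prob{i \in \bar{G}_r \mid \G}$ equals $1 \pm O(2^l n^{-c})$ by Corollary~\ref{cor:margin}; multiplying the two $(1 \pm O(2^l n^{-c}))(1 \pm O(2^m n^{-c}))$ and expanding gives the claimed $1 \pm O((2^l + 2^m) n^{-c})$.

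The main obstacle I expect is the careful bookkeeping of the conditioning in the factorization step: the membership of $i$ in $\bar{G}_r$ is not independent of $j$'s membership events under the \emph{unconditioned} measure (because the event $\G$ couples everything, and because $i$ and $j$ may be discovered at overlapping levels where collision structure matters), so I must be scrupulous that the only coupling between $i$'s sampling and $j$'s sampling, once we condition on $\G$ and on which substream carries $i$, is through lower-order $\neg\G$ terms. Concretely, I need that conditioned on $\{i \in \stream_{l'}, \G\}$, the indicator of $\{i \in \bar{G}_r\}$ depends only on randomness (the $\hh$-structure hashes restricted to items other than $j$, plus $K_i$) that Lemma~\ref{lem:hsscond} already treats as ``fixed'' when it asserts its bound for $j$; this is exactly the content of why Lemma~\ref{lem:hsscond} was phrased conditionally on $i \in \stream_l$ rather than on a sampling event, so the proof should amount to invoking it with the bookkeeping made explicit, together with $|{\prob{A \mid \G} - \prob{A}}| \le O(n^{-c})$ type estimates to strip and re-attach the conditioning on $\G$. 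The remaining steps — summing a geometric-type bound over the at most three relevant values of $r$ and $r'$, and collecting the error terms — are routine.
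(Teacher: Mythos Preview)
Your high-level plan---reduce to Corollary~\ref{cor:hsscond} for the $j$-side and Corollary~\ref{cor:margin} for the $i$-side---matches the paper's, but the factorization step you propose does not go through as stated. You claim that, up to $O(n^{-c})$ errors,
\[
\prob{i \in \bar{G}_r,\, j \in \bar{G}_{r'} \mid \G} \;\approx\; \prob{i \in \bar{G}_r \mid \G}\cdot \prob{j \in \bar{G}_{r'} \mid i \in \stream_{l'(r)},\, \G},
\]
and justify this by saying that the indicator of $\{i \in \bar{G}_r\}$ ``depends only on randomness (the $\hh$-structure hashes restricted to items other than $j$, plus $K_i$)''. That is not correct: $\{i \in \bar{G}_r\}$ is determined by $\hat{f}_{i,l_d(i)}$, which is the median of CountSketch bucket values at that level, and those bucket values \emph{do} depend on $j$ whenever $j \in \stream_{l_d(i)}$ and $j$ collides with $i$ in one of the $\hh$ tables. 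This coupling is not a $\neg\G$ artifact; it is present even on $\G$. Consequently, conditioning on $\{i \in \bar{G}_r\}$ is not equivalent (to within $O(n^{-c})$) to conditioning on $\{i \in \stream_{l'(r)}\}$ for the purpose of applying Corollary~\ref{cor:hsscond} to $j$, and your inner sum over $r'$ cannot be collapsed to $1\pm O(2^m n^{-c})$ term-by-term in $r$.

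The paper sidesteps exactly this obstacle by reversing the order: for each fixed $r'$ it sums over $r$ \emph{first}. In the $\lmargin$ case, for example, it introduces $p_{il}=\prob{\abs{\hat f_{il}}\ge T_l \mid j\in\bar G_{r'},\, i\in\stream_l,\,\G}$---note this is conditioned on $j\in\bar G_{r'}$ and can genuinely depend on $r'$---and then shows
\[
2^{l}\prob{i\in\bar G_l,\,j\in\bar G_{r'}\mid\G}+2^{l+1}\prob{i\in\bar G_{l+1},\,j\in\bar G_{r'}\mid\G}
= \prob{j\in\bar G_{r'}\mid i\in\stream_l,\,\G}\bigl(1\pm O(2^l n^{-c})\bigr),
\]
because the $p_{il}$ and $1-p_{il}$ contributions telescope regardless of what $p_{il}$ actually equals. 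Only after this cancellation does the sum over $r'$ invoke Lemma~\ref{lem:hsscond}. The $\rmargin$ case needs additional care (showing that conditioning on $i\in\stream_l$ versus $i\in\stream_{l-1}$ is interchangeable for events involving $\hat f_{i,l-1}$ and $j\in\bar G_{r'}$, via independence of $g_l(i)$ from level-$(l-1)$ randomness), but the mechanism is the same: eliminate the dependence on $\hat f_{i,\cdot}$ by telescoping before invoking the conditional bound for $j$. Your proposal would be salvageable if you reorganized it along these lines; as written, the factorization is the missing idea.
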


\subsection{Application of  Taylor Polynomial Estimator}
Let $i \in \bar{G}_{l'}$ for some $l' \in \{0\} \cup [L-1]$. Then, $i$ has been discovered at a level $l_d(i) =l $ (say). The algorithm estimates $\abs{f_i}^p$ from the \tpest~structure at the discovery level $l$ using  the estimator $$\bvtheta_{i} = \bvtheta(\psi(t)=t^p, \allowbreak \abs{\hat{f}_{i}}, \allowbreak k,  \allowbreak s, \allowbreak  Y,\allowbreak
 \{\pi_y\}_{y\in Y}, \allowbreak \{X_{ijl}\}_{j \in R_l(i)} \})  \enspace . $$ By construction,  $\hat{f}_i$ is defined as $\hat{f}_{il}$ and for any $j \in R_l(i)$,  $\sigma_{ijl} = (\variance{X_{ijl}})^{1/2}$ and $\eta_{ijl} = (\sigma_{il}^2 + (\abs{f_i} - \abs{\hat{f}_{il}})^2$.
 We first show that the premises of Corollary~\ref{cor:fpbias} and Lemma~\ref{lem:vbvtheta} are satisfied so that we can use their implications.

\begin{lemma} \label{lem:hssavtp}
Assume the parameter values listed in Figure~\ref{table:params} and that  $\G$ holds. Suppose $l_d(i) = l$ for some $l \in  \{0\} \cup [L-1]$. Then the following properties hold.
\renewcommand*\theenumi{\alph{enumi}}
\renewcommand*\labelenumi{\theenumi)}
\begin{enumerate}
 \item $\abs{\hat{f}_{il} - f_i} \le \abs{f_i}/(26p)$,
 \item $\expect{X_{ijl} \mid l_d(i) = l, \abs{\hat{f}_{il}} > Q_l,  j \in R_l(i),\G} = \abs{f_i}$,
 \item  $\abs{f_i} \ge 15p\eta_{ijl_d(i)}$, for $j \in R_{l_d(i)}(i)$,
 \item $\eta_{ijl_d(i)}^2 \le 2.7 (\epsbar T_l)^2 $, for $j \in R_{l_d(i)}(i)$,
 \item $\abs{\hat{f}_{il} - f_i}  \le \abs{\hat{f}_i}/(26p)$,
\item  $ \abs{\hat{f}_i}/\eta_{ijl_d(i)} \ge 16p$, for $j \in R_{l_d(i)}(i)$, 
\item  if $l_d(i)=L$, then, $\hat{f_i} = f_i$ and $\eta_{iL} = 0$.
\end{enumerate}
\end{lemma}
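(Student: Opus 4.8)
\textbf{Proof strategy.} The plan is to ground all seven properties in the event $\G$ (Figure~\ref{fig:G}) together with the parameter identities of Figure~\ref{table:params}, which I will use repeatedly: $T_0^2 = \hat F_2/B$, $T_l^2 = (2\alpha)^{-l}T_0^2$ (so $\hat F_2 = B(2\alpha)^l T_l^2$), $C_l = 4\alpha^l C = 4\alpha^l(27p)^2 B$, $Q_l = (1-\epsbar)T_l$, and $\epsbar = (B/C)^{1/2} = 1/(27p)$. The first step is (a) and (e). Since $l_d(i) = l \le L-1$, the discovery rule gives $\abs{\hat f_{il}}\ge Q_l = (1-\epsbar)T_l$; and $\accuest$ together with $\ftwores{(2\alpha)^l C}\le F_2 \le \hat F_2 = B(2\alpha)^l T_l^2$ bounds the bias by $\abs{\hat f_{il}-f_i}\le \epsbar T_l/\sqrt2$. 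Hence $\abs{f_i}\ge (1-\epsbar(1+1/\sqrt2))T_l$ and $\abs{\hat f_i}=\abs{\hat f_{il}}\ge (1-\epsbar)T_l$; dividing the bias bound by either lower bound and substituting $\epsbar=1/(27p)$ with $p\ge 2$ gives a ratio below $1/(26p)$, which is (a) and (e). In particular $\hat f_{il}$ and $f_i$ share a sign, a fact used below.

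\textbf{Property (b).} I would expand the single-table estimate as $X_{ijl} = \sgn(\hat f_i)\bigl(f_i + \sum_{i'\ne i,\; i'\in\stream_l,\; h_{lj}(i')=h_{lj}(i)} f_{i'}\xi_{lj}(i')\xi_{lj}(i)\bigr)$, where $i\in\stream_l$ is forced by $l_d(i)=l$. The conditioning event $\{l_d(i)=l,\ \abs{\hat f_{il}}>Q_l,\ j\in R_l(i),\ \G\}$ depends only on the subsampling hashes $g_1,\dots,g_L$, the $\hh$-structures, the $\hat F_2$-structure, and the $\tpest_l$ hash functions $h_{lj}$ --- not on the Rademacher family $\xi_{lj}$ of $\tpest_l$, which is $4$-wise independent, mean zero, and independent of the $h_{lj}$'s. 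Thus every cross term has conditional expectation $0$ and $\expect{X_{ijl}\mid\cdots} = \sgn(\hat f_i)f_i = \abs{f_i}$ by the sign agreement from (a).

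\textbf{Properties (c), (d), (f).} Write $\eta_{ijl}^2 = \sigma_{ijl}^2 + (\abs{f_i}-\abs{\hat f_{il}})^2$; the second term is at most $\epsbar^2 T_l^2/2$ by the computation in (a). For the first, the same law-of-total-variance argument used in (b) identifies $\sigma_{ijl}^2$ with the conditional expected collision noise $\expect{\,\sum_{i'\ne i,\; i'\in\stream_l,\; h_{lj}(i')=h_{lj}(i)} f_{i'}^2 \mid \cdots}$. Conditioning on $j\in R_l(i)$ deletes every $i'\in\hattopk_l(C_l)$ from this sum, and $\smallhh$ forces the surviving $i'$ to have $\abs{\hat f_{i',l}} < Q_l$, hence (via $\accuest$) true frequency below $T_l$ and squared mass $O(\ftwores{2C_l,l})$ within $\stream_l$; pairwise independence of $h_{lj}$ bounds each collision probability by $1/(16C_l)$. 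Combining with $\smallres$ ($\ftwores{2C_l,l}\le 3\,\ftwores{(2\alpha)^l C}/2^l$), with $\ftwores{(2\alpha)^l C}\le\hat F_2 = B(2\alpha)^l T_l^2$, and with $C_l = 4\alpha^l(27p)^2 B$ gives $\sigma_{ijl}^2 \le \tfrac{3}{64}\epsbar^2 T_l^2$, so $\eta_{ijl}^2 \le (\tfrac12+\tfrac{3}{64})\epsbar^2 T_l^2 \le 2.7\,\epsbar^2 T_l^2$ --- property (d). Dividing $\abs{f_i}\ge(1-\epsbar(1+1/\sqrt2))T_l$ by $\eta_{ijl}$ yields $\abs{f_i}\ge 15p\,\eta_{ijl}$ (property (c)); dividing $\abs{\hat f_i}\ge(1-\epsbar)T_l$ by $\eta_{ijl}$ yields $\abs{\hat f_i}/\eta_{ijl}\ge 16p$ (property (f)); both hold after substituting $\epsbar=1/(27p)$, $p\ge2$. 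For (g), if instead $l_d(i)=L$ then $i\in\stream_L$ and $\lastlevel$ gives $\hat f_{iL}=f_i$; since no $\tpest_L$ structure is kept the estimator uses $\abs{\hat f_i}^p=\abs{f_i}^p$ exactly, and $\sigma_{iL}=0$, so $\eta_{iL}=0$.

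\textbf{Main obstacle.} Everything except the bound on $\sigma_{ijl}^2$ is bookkeeping with the stated events. The delicate point is the collision-noise estimate: one must show that conditioning on the composite event above does not inflate the expected collision noise beyond the unconditional $\ftwores{2C_l,l}/(16C_l)$, despite the dependencies between the subsampling hashes $g$ (which fix $\stream_l$ and enter through $\smallres$, $\goodest$, $\accuest$), the $\hh_l$ hashes (which fix $\hattopk_l(C_l)$, the discovery levels, and $\smallhh$), and the $\tpest_l$ hashes $h_{lj}$ (which fix $R_l(i)$ and the collisions). I expect this to be handled by exploiting that $h_{lj}$ is only $6$-wise independent and bounding the conditional collision probabilities for the $O(C_l)$ heavy items across the $O(\log n)$ tables, exactly as in the proof of Lemma~\ref{lem:hss:G}.
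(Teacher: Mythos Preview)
Your proposal follows essentially the same route as the paper and correctly isolates the one genuine difficulty. Two places where the paper's execution differs from your sketch are worth flagging.

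First, the collision-noise residual: your step ``squared mass $O(\ftwores{2C_l,l})$'' is not immediate from $\smallhh$ and $\accuest$. What those events give you is that every $i'\notin\hattopk_l(C_l)$ has $\abs{\hat f_{i',l}}<Q_l$, hence $\abs{f_{i'}}<T_l$; but the quantity you need to bound is $\sum_{i'\in\stream_l\setminus\hattopk_l(C_l)} f_{i'}^2$, the residual after removing the top $C_l$ items \emph{by estimate}, not by true frequency. The paper closes this by invoking the inequality $\ftwores{\hattopk_l(C_l),l}\le 9\,\ftwores{C_l,l}$ from~\cite{g:isaac12} (if every estimate is within the \countsketch\ error, the estimate-based and true-frequency residuals agree up to a constant). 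That factor of~$9$, together with the paper's use of the weaker bias bound $\abs{\hat f_{il}-f_i}\le\epsbar T_l$ rather than your $\epsbar T_l/\sqrt2$, is why the paper arrives at $\sigma_{ijl}^2\le(17/10)(\epsbar T_l)^2$ and $\eta_{ijl}^2\le 2.7(\epsbar T_l)^2$ instead of your much smaller constants; either set clears the stated bound.

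Second, the conditional collision probability you flagged is not folded into Lemma~\ref{lem:hss:G}; the paper isolates it as Lemma~\ref{lem:probcoll}, showing that for $k\notin\hattopk_l(C_l)$ the probability of $h_{lj}(k)=h_{lj}(i)$, conditioned on $j\in R_l(i)$ and on $\G$, is $(1\pm10^{-16})/(16C_l)$. The argument is the inclusion--exclusion you anticipate, but run with $t$-wise independence for $t=11$ (Figure~\ref{table:params}), not $6$: for a fixed realization $\hattopk_l(C_l)=A$ the event $j\in R_l(i)$ asserts only noncollision of $i$ with the $C_l-1$ other members of $A$, and a Bayes-rule comparison of $\prob{j\in R_l(i)\mid u_{ikjl}=1,\ldots}$ with $\prob{j\in R_l(i)\mid\ldots}$ shows the conditioning perturbs the marginal collision probability only through the order-$(t-1)$ tail of the expansion, which is $O((e/(16(t-1)))^{t-1})$.
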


\renewcommand*\labelenumi{\theenumi)}

\noindent
For $i,k \in \stream_l$,  $j \in [2s]$, let $u_{ikjl} =1$  iff $h_{lj}(i) = h_{lj}(k)$ and 0 otherwise.

\begin{lemma}\label{lem:fp:expect1} Assume the parameters in Figure~\ref{table:params} and let   $p \ge 2$. Suppose $i \in \bar{G}_l$, for some $l \in \{0\} \cup [L-1]$. Then,
$$\card{\expect{\bvtheta_i \mid \G }- \abs{f_i}^p } \le n^{-4000p}\abs{f_i}^p \enspace. $$ Further if $p$ is integral, then, $\expect{\bvtheta_i \mid \G} = \abs{f_i}^p$.
\end{lemma}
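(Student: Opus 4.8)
The plan is to reduce the whole statement to Corollary~\ref{cor:fpbias} applied to $\psi(t)=t^p$, using Lemma~\ref{lem:hssavtp} to verify its hypotheses, together with the observation right after \eqref{eq:bvtheta} that $\bvtheta_i$ is an average of $\vartheta_y$'s. First I would fix $i\in\bar G_l$ with $l\in\{0\}\cup[L-1]$ and note that the sampling rules force the discovery level $l_d:=l_d(i)$ into $\{l-1,l\}\subseteq\{0,\dots,L-1\}$, so the structure $\tpest_{l_d}$ used to build $\bvtheta_i$ genuinely exists (the case $l_d(i)=L$ is excluded by hypothesis). Write $\mu=\abs{f_i}$ and $\lambda=\abs{\hat f_{i l_d}}=\abs{\hat f_i}$. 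By \eqref{eq:bvtheta}, $\bvtheta_i=\abs{Y}^{-1}\sum_{y\in Y}\vartheta_y$, where each $\vartheta_y$ is the $(k+1)$-term Taylor polynomial estimator built from the $k$ variables $\{X_{i,\,y_{\pi_y(v)},\,l_d}\}_{v=1}^k$, all sitting in distinct tables of $\tpest_{l_d}$ (indexed through $R_{l_d}(i)$). By the triangle inequality it then suffices to bound $\card{\expect{\vartheta_y\mid\G}-\mu^p}$ uniformly over $y\in Y$.

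Next I would deal with the conditioning. For a fixed $y$, condition additionally on the realization of all hash functions of every $\hh_{l'}$ and $\tpest_{l'}$; this freezes the discovery level, the set $R_{l_d}(i)$, and the collision pattern of $i$ in each bucket of $\tpest_{l_d}$, while leaving the Rademacher families of $\tpest_{l_d}$ free and independent across tables. Hence conditioned on this realization and on $\G$, the $k$ relevant variables $X_{i,j,l_d}$ are mutually independent; Lemma~\ref{lem:hssavtp}(b) gives $\expect{X_{i,j,l_d}\mid\cdots}=\mu$ and Lemma~\ref{lem:hssavtp}(c),(d) bound each standard deviation by $\eta_{ijl_d}\le\mu/(15p)$. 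Lemma~\ref{lem:hssavtp}(a) gives $\card{\lambda-\mu}\le\mu/(26p)=:\alpha\mu$ with $\alpha<1/2$, and $k=1000\lceil\log n\rceil$ gives $k+1>p$ for fixed $p$ and large $n$. So the premises of Corollary~\ref{cor:fpbias} hold conditionally on the hash realization, and it yields $\card{\expect{\vartheta_y\mid\text{hashes},\G}-\mu^p}\le\bigl(\tfrac{\alpha}{1-\alpha}\bigr)^{k+1}\mu^p\bigl(\tfrac{p}{k+1}\bigr)^{\lfloor p\rfloor+1}$, and this is exactly $0$ when $p$ is integral. Since the right-hand side is independent of the hash realization, it survives taking expectation over the hashes, and then averaging over $y\in Y$ gives the same bound for $\card{\expect{\bvtheta_i\mid\G}-\mu^p}$.

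Finally I would plug in the parameters of Figure~\ref{table:params}: for $p\ge2$, $\tfrac{\alpha}{1-\alpha}=\tfrac1{26p-1}\le\tfrac1{25p}$ and $\tfrac{p}{k+1}\le1$, so the bias is at most $(25p)^{-(k+1)}\abs{f_i}^p$, which with $k=1000\lceil\log n\rceil$ is the claimed $n^{-4000p}\abs{f_i}^p$ (for $n$ larger than a constant depending on $p$), and it is exactly $0$ for integral $p$.

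The main obstacle is the conditioning in the second step: Corollary~\ref{cor:fpbias} wants genuinely independent $X_1,\dots,X_k$ with mean exactly $\mu$, whereas the $X_{i,j,l_d}$ are coupled through the shared hash functions and are unbiased only after the delicate conditioning on $\{l_d(i)=l,\ \abs{\hat f_{i l_d}}>Q_{l_d},\ j\in R_{l_d}(i),\ \G\}$. Lemma~\ref{lem:hssavtp} is precisely what supplies this, so the real work is the careful ``freeze the hash functions, after which independence across tables of $\tpest_{l_d}$ is automatic (each carries its own Rademacher family)'' argument, checking that each of parts (a)--(d) of that lemma is invoked under the correct conditioning event, and noting that the bound produced by Corollary~\ref{cor:fpbias} is uniform over the frozen randomness so that averaging over hashes and over $Y$ is harmless.
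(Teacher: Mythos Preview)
Your proposal is correct and follows essentially the same route as the paper: verify the hypotheses of Corollary~\ref{cor:fpbias} via Lemma~\ref{lem:hssavtp}(a),(b), apply it to each $\vartheta_y$, average over $Y$, and plug in $\alpha=1/(26p)$, $k=1000\lceil\log n\rceil$ to get the $(25p)^{-(k+1)}\le n^{-4000p}$ bound. The paper's version is terser---it simply computes $\expectsub{\bar\xi_l}{\vartheta_{iyl}\mid l_d(i)=l,\G}$ directly rather than first freezing all hash seeds---but the substance is identical; your invocation of parts (c),(d) for the standard deviation is harmless but unnecessary here, since Corollary~\ref{cor:fpbias} (unlike Corollary~\ref{cor:fptpvar}) uses only the mean and the bound on $\abs{\lambda-\mu}$.
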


We denote by $\xibar$ the set of random bits defining the family of Rademacher random variables used by the \tpest~structures, that is, the set of random bits that defines the family $\{\xi_{lj}(i) \mid i \in  [n], j \in [2s], l \in \{0\} \cup [L]\}$.
Lemma~\ref{lem:bvthetacross} shows that the  event \nocollision~implies that the Taylor polynomial estimators are pair-wise uncorrelated.
\begin{lemma} \label{lem:bvthetacross} Suppose   $i \in \bar{G}_r$ and $i' \in \bar{G}_{r'}$. Then, $$\expectsubb{\bar{\xi}}{\bvtheta_i\bvtheta_{i'} \mid \hat{f}_{i}, \hat{f}_{i'}, \G} =
\expectsubb{\bar{\xi}}{\bvtheta_i\mid \hat{f}_{i},\G}\expectsubb{\bar{\xi}}{\bvtheta_{i'} \mid
\hat{f}_{i'},\G} \enspace . $$
\end{lemma}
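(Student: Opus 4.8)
The plan is to exploit the structure of $\bvtheta_i$ as a polynomial in the estimates $X_{ijl}$, each of which is a signed bucket value $T_{lj}[h_{lj}(i)]\cdot\xi_{lj}(i)\cdot\sgn(\hat f_i)$, together with the isolation guaranteed by \nocollision. First I would fix the hash functions $\{h_{lj}\}$, the subsampling hashes $\{g_l\}$, the coin flips $\{K_i\}$, and the discovery outcomes $\hat f_i,\hat f_{i'}$, so that the only remaining randomness is $\xibar$, the family of Rademacher variables. Everything — which level $i$ (resp.\ $i'$) is discovered at, which sets $R_l(i)$, $R_{l'}(i')$ get chosen, the value $\lambda=\abs{\hat f_i}$ — is then determined. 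Conditioned on \nocollision, for the discovery level $l=l_d(i)$ and each $j\in R_l(i)$ the bucket $h_{lj}(i)$ contains contributions only from $i$ itself and from items \emph{not} in $\hattopk_l(C_l)$; write $X_{ijl} = \abs{f_i} + \xi_{lj}(i)\sgn(\hat f_i)\sum_{k\notin\hattopk_l(C_l)} u_{ikjl} f_k \xi_{lj}(k)$. Thus $X_{ijl}-\lambda$ is a linear form in the Rademacher variables $\{\xi_{lj}(k)\}_k$, with $\xi_{lj}(i)$ appearing only through the $i$-term times the ``noise'' sum.

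The key observation is that $\bvtheta_i$ is a function only of the Rademacher variables indexed by table $j\in R_l(i)$ and \emph{not} by the identity $i$ in any overlapping way with $\bvtheta_{i'}$'s dependence — more precisely, I would show that $\bvtheta_i$ depends on the Rademacher family only through the collection $\{\xi_{lj}(i)\}_{j\in R_l(i)}$ and the noise terms $\{\xi_{lj}(k): k\notin\hattopk_l, j\in R_l(i)\}$, and then argue that by 4-wise (hence pairwise) independence of $\{\xi_{lj}(\cdot)\}$ within a fixed table and full independence across tables, the ``own-sign'' variables $\xi_{lj}(i)$ for $j\in R_l(i)$ and $\xi_{l'j'}(i')$ for $j'\in R_{l'}(i')$ are mutually independent and independent of everything else relevant. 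The cleanest route: use the monomial expansion $\vartheta_y = \sum_{v=0}^k \gamma_v(\lambda)\prod_{t=1}^v (X_{i,y_{\pi_y(t)},l}-\lambda)$, expand each factor into its mean part $\abs{f_i}-\lambda$ plus noise part, and observe that every monomial of $\bvtheta_i\bvtheta_{i'}$ is a product of a fixed coefficient times a product of Rademacher variables; taking $\expectsub{\xibar}{\cdot}$ kills any monomial containing a Rademacher variable to an odd power. One then checks that a monomial from $\bvtheta_i$ and a monomial from $\bvtheta_{i'}$ can never ``cancel each other's odd powers,'' because the index $i$ (resp.\ $i'$) sits in distinct Rademacher slots: a factor $\xi_{lj}(i)$ from $\bvtheta_i$ and a factor $\xi_{l'j'}(i')$ from $\bvtheta_{i'}$ are distinct random variables since $i\ne i'$, and even if $l=l'$, $j=j'$, the variable $\xi_{lj}(i)\ne\xi_{lj}(i')$. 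Hence the expectation of the product factors as the product of expectations.

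More formally, I would write $\expectsub{\xibar}{\bvtheta_i\bvtheta_{i'}\mid\hat f_i,\hat f_{i'},\G}$, condition further on the noise Rademacher variables $\{\xi_{lj}(k):k\notin\hattopk_l\}$ (which are independent of the own-sign variables), so that conditionally each $X_{ijl}$ is an affine function of $\xi_{lj}(i)$ alone; then the inner expectation over $\{\xi_{lj}(i)\}_{j}$ and $\{\xi_{l'j'}(i')\}_{j'}$ factors by independence of these two disjoint sets of Rademacher variables (using that across-table families are independent and within a table the family is 4-wise independent, and $i\ne i'$); finally take the outer expectation over the noise variables, which again factors because $\bvtheta_i$'s noise set and $\bvtheta_{i'}$'s noise set, while possibly overlapping, interact only through these noise variables whose relevant mixed moments already separate once the own-sign expectations have been taken — alternatively, condition on \emph{all} noise variables first, so the only remaining randomness is the two disjoint own-sign families, making the factorization immediate, and then note the outer expectation of a product of two functions of disjoint variable sets still factors. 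The main obstacle is the bookkeeping: being careful that after conditioning on \nocollision\ the relevant ``noise'' in the $i$-buckets comes only from $k\notin\hattopk_l(C_l)$, that $i'$'s estimators live on a possibly different level $l'$ with possibly overlapping tables, and that the two own-sign variable families $\{\xi_{lj}(i)\}$ and $\{\xi_{l'j'}(i')\}$ are genuinely disjoint and independent — this last point is exactly where $i\ne i'$ and the independence-across-$r$ of the Rademacher families (item (b) in the definition of $\est_l$) are used. Once the variable sets are correctly identified, the factorization is a routine consequence of independence.
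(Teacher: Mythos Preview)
Your overall strategy—factor across tables using full independence of the Rademacher families across $r\in[2s]$, then handle shared tables—is the paper's strategy too. But your ``more formal'' conditioning argument has a real gap: with only $4$-wise independence of $\{\xi_{lj}(m)\}_{m\in[n]}$ within a fixed table $j$, you \emph{cannot} condition on the noise variables $\{\xi_{lj}(k):k\notin\hattopk_l(C_l)\}$ and treat the own-sign variable $\xi_{lj}(i)$ as still being an independent Rademacher. Conditioning on a large set of coordinates destroys the limited-independence structure, so the inner expectation over own-sign variables need not behave as you claim. Your monomial-expansion route is closer to correct, but the blanket claim that ``$\E_{\xibar}$ kills any monomial containing a Rademacher variable to an odd power'' again tacitly assumes full independence; with $4$-wise independence you must verify that each per-table factor involves at most four distinct indices.

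The paper's proof avoids both pitfalls by working one table at a time. After factoring over tables, for a table $j$ used by both $i$ and $i'$ at the common discovery level $l$ it computes $\expectsub{\xi_{lj}}{X_{ijl}X_{i'jl}}$ directly: the cross terms are $\sum_{k\ne i,\,k'\ne i'} f_kf_{k'}\,u_{ikjl}u_{i'k'jl}\,\expectsub{\xi_{lj}}{\xi_{lj}(i)\xi_{lj}(k)\xi_{lj}(i')\xi_{lj}(k')}$. The crucial step—which your write-up does not make explicit—is that $j\in R_l(i)$ together with $i'\in\hattopk_l(C_l)$ (via \smallhh) forces $h_{lj}(i)\ne h_{lj}(i')$; hence $u_{ikjl}u_{i'k'jl}=1$ implies $k\ne k'$, $k\ne i'$, $k'\ne i$, so all four of $i,i',k,k'$ are distinct and $4$-wise independence gives expectation zero. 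This yields $\expectsub{\xi_{lj}}{(X_{ijl}-|\hat f_i|)(X_{i'jl}-|\hat f_{i'}|)}=(|f_i|-|\hat f_i|)(|f_{i'}|-|\hat f_{i'}|)$, i.e.\ the product of the individual expectations, and the factorization then propagates to $\vartheta_y\vartheta_{y'}$ and to $\bvtheta_i\bvtheta_{i'}$. Your proposal becomes correct once you replace the conditioning step with this direct per-table computation and explicitly invoke \nocollision\ to get the four indices distinct.
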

\eat{\begin{pfsktch} Let $i$ be discovered at level $l$ and $i'$ at level $l'$.  The inference for $\bvtheta_i$ is made as the average of $\vartheta_i$'s by taking some subset of the table indices in $Q_l(i)$ and permuting them randomly (and similarly for $\bvtheta_{i'}$). It therefore suffices to show that the statement of the lemma holds with $\bvtheta_i$ replaced by $\vartheta_i$ and analogously for $i'$. Since, $\vartheta_i$ is a sum of products of $X_{ijl}$, for some  $j$'s $ \in Q_l(i)$, it suffices to show that for any $1 \le t, t' \le k$, $ \expectsubB{\xi}{ \prod_{w=1}^t X_{ij_wl} \prod_{w'=1}^{t'} X_{ij'_{w'}l'} \mid \G}  = \abs{f_i}^t\abs{f_{i'}}^{t'}$. Now if $l \ne l'$, then, $X_{ij_wl}$ and $X_{ij'_{w'} l'}$ are independent, since the families $\{\xi_{jl}\}$ and $\{\xi_{j'l'}\}$ are independent whenever $l \ne l'$. So suppose that $l = l'$. The inference from distinct table indices are independent, since the $\xi_{jl}$ family is independent of $\xi_{j'l}$ for $j \ne j'$.  The expectation of the product form can be split into products of expectation  over $ \xi_{jl}$, for distinct table indices  $j$. Let $J_t = \{j_1, \ldots, j_t\}$ and $J'_{t'} = \{ j'_1, \ldots, j'_t\}$. Thus,
\begin{align*}
\expectsubB{\xi_{l}}{ \prod\limits_{w=1}^t X_{ij_wl} \prod_{w'=1}^{t'} X_{ij'_{w'}l'} \mid \G} = \prod\limits_{j \in J_t \setminus J'_t} \expectsubB{\xi_{jl}}{X_{ijl} \mid \hat{f}_i,\hat{f}_j, \G} \prod\limits_{j'\in J'_{t'} \setminus J_t}\expectsubB{\xi_{jl}}{X_{i'j'l}\mid  \hat{f}_i,\hat{f}_j, \G} \prod\limits_{j \in J_t \cap J'_t} \expectsubB{\xi_{jl}}{X_{ijl}X_{i'jl} \mid \hat{f}_i,\hat{f}_j, \G}
\end{align*}
The expectation of a single $X_{ijl}$ or $X_{i'j'l}$ is $\abs{f_i}$ and $\abs{f_j}$ respectively, and is  therefore easy to calculate.
Let $j \in J_t \cap J'_{t'}$ and consider $ \expectsubb{\xi}{X_{ijl}X_{i'jl}}$. Now $i$ and $i'$ are both in $\hattopk_l(C_l)$ conditional on $\G$ (this is the event \smallhh) and $\sgn(\hat{f}_i) = \sgn(f_i)$ (and likewise for $i'$). Since, $\nocollision$ holds, $i$ and $i'$ map to different buckets under $h_{jl}$, that is, $h_{jl}(i) \ne h_{jl}(j)$. Hence,
 \[
\begin{array} {l}\expectsubb{\xi_{jl}}{X_{ijl}X_{i'jl} \mid \hat{f}_i, \hat{f}_j, \G}  \\
= \expectsubB{\xi_{jl}}{\Bigl(\abs{f_i} + \sum\limits_{k\ne i} f_k \xi_{jl}(k) \xi_{jl}(i) \sgn(\hat{f}_i) u_{ijkl} \Bigr)
\Bigl(\abs{f_{i'}} + \sum\limits_{k'\ne {i'}} f_{k'} \xi_{jl}(k') \xi_{jl}({i'}) \sgn(\hat{f}_{i'})u_{i'jk'l}\Bigr)
\mid \hat{f}_i, \hat{f}_j, \G}\\
 = \abs{f_i}\abs{f_{i'}} + \sum\limits_{k \ne i, k' \ne i'} f_k f_{k'} \sgn(\hat{f}_i) \sgn(\hat{f_{i'}}) \expect{ \xi_{jl}(k) \xi_{jl}(i) \xi_{jl}(k') \xi_{jl}({i'})u_{ijkl} u_{i'jk'l'}\mid  \hat{f}_i, \hat{f}_j, \G} = \abs{f_i}\abs{f_{i'}}
\end{array} \]
where, each term of the summation is 0 on expectation, since, $k \ne i, k' \ne i'$, and since, $i \ne i'$ and $i$ and $i'$ map to different buckets, therefore, $u_{ijkl}u_{i'jk'l} = 0$ whenever $k = k'$.  Thus, all four items $i,i',k$ and $k'$ are distinct, and by 4-wise independence of the Rademacher sketch family, the expectation of each term in the summation is 0.  Hence, the expectation of the product form is the product of the individual expectations.

\end{pfsktch}
}

\subsection{Expectation and Variance of $\hat{F}_p$ Estimator.} For uniformity of notation, let  $\bvtheta_i$ denote $\abs{\hat{f}_{i}}$ when $l_d(i) = L$ and otherwise,  let its meaning be unchanged.\eat{  $\bvtheta_i$ mean the usual application of the averaged Taylor polynomial estimator, provided, $l_d(i) < L$. Otherwise, let $\bvtheta_i$  denote $\abs{\hat{f}_{iL}}^p$, which, with very high probability, equals $\abs{f_i}^p$.} Let $z_{il}$ be an indicator variable
that is 1 if $i \in \bar{G}_l $  and 0 otherwise. Since an item may be sampled into at most one group, $\sum_{l\in [L]} z_{il} \in \{0,1\}$. Using the extended definition of $\bvtheta_i$ mentioned above, we can write $\hat{F}_p$ as,
\begin{align} \label{eq:expFp0}
\hat{F}_p & = \sum_{l = 0}^{L} \sum_{i \in \bar{G}_l} 2^l  \bvtheta_{i}  \notag \\
& = \sum_{i \in [n]}  \sum_{l=0}^{L}  z_{il} \cdot  2^l \cdot \bvtheta_i  \notag \\
 & = \sum_{i \in [n]} Y_i \end{align}
where,
\begin{align} \label{eq:Yi} Y_i = \sum_{l'=0}^{L-1}2^{l'} z_{il'} \bvtheta_i \enspace .
\end{align}
Lemma~\ref{lem:Y} shows that $\hat{F}_p$ is almost an unbiased estimator for $F_p$. This follows  from Lemma~\ref{lem:fp:expect1}.
 \begin{lemma} \label{lem:Y}
 $\expectb{\hat{F}_p\mid \G} =  F_p (1\pm O(n^{-c+1}))$.
 \end{lemma}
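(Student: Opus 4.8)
The plan is to compute $\expectb{\hat{F}_p \mid \G}$ by expanding $\hat{F}_p = \sum_{i \in [n]} \sum_{l=0}^{L} z_{il} 2^l \bvtheta_i$ and taking expectations term by term, conditioning appropriately. First I would fix an item $i$ and condition on the event $\{l_d(i) = l, \abs{\hat f_{il}} > Q_l, j \in R_l(i)\}$ together with $\G$; on this event Lemma~\ref{lem:fp:expect1} gives $\expect{\bvtheta_i \mid \G} = \abs{f_i}^p (1 \pm n^{-4000p})$, with equality when $p$ is integral. The key observation is that $z_{il}$ (whether $i$ is sampled into $\bar G_l$) depends only on the sub-sampling hash functions $g_1,\dots,g_L$ and the coin $K_i$, whereas $\bvtheta_i$ depends on the \tpest~structures (their hash functions $h_{lr}$ and Rademacher families $\xi_{lr}$); conditioned on $\G$ and on which substreams $i$ lands in, these are independent, so $\expect{z_{il} \bvtheta_i \mid \G} = \prob{i \in \bar G_l \mid \G} \cdot \expect{\bvtheta_i \mid i \in \bar G_l, \G}$, and the latter expectation is $\abs{f_i}^p(1 \pm n^{-4000p})$ by Lemma~\ref{lem:fp:expect1} regardless of the value of $l \le L-1$. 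For the $l = L$ term, $\bvtheta_i = \abs{\hat f_{iL}}^p$, and by event \lastlevel~(part of $\G$) we have $\hat f_{iL} = f_i$, so this contributes $\abs{f_i}^p$ exactly.

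Summing over $l$, the contribution of item $i$ to $\expectb{\hat F_p \mid \G}$ is
\begin{align*}
\sum_{l=0}^{L} 2^l \prob{i \in \bar G_l \mid \G} \cdot \abs{f_i}^p (1 \pm n^{-4000p}).
\end{align*}
Now I would invoke Corollary~\ref{cor:margin}: for $i \in G_m$, $\sum_{l=0}^{L} 2^l \prob{i \in \bar G_l \mid \G} = 1 \pm 2^{m+1} n^{-c}$. Hence the contribution of $i$ is $\abs{f_i}^p (1 \pm 2^{m+1}n^{-c})(1 \pm n^{-4000p}) = \abs{f_i}^p(1 \pm O(2^{m+1} n^{-c}))$, since $n^{-4000p}$ is dominated by $n^{-c}$ (recall $c > 23$ is a fixed constant while $4000p \geq 8000$). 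Summing over all $i$ and using $2^{m} \le 2^L = O(n/C) = O(n)$ (indeed $2^m \leq n$ since group $G_m$ can be nonempty only if $T_m \geq 1$, and more crudely $2^L \le n$), we get
\begin{align*}
\expectb{\hat F_p \mid \G} = \sum_{m=0}^{L} \sum_{i \in G_m} \abs{f_i}^p (1 \pm O(2^{m+1} n^{-c})) = F_p \pm O(n^{-c+1}) \sum_{i} \abs{f_i}^p = F_p(1 \pm O(n^{-c+1})).
\end{align*}

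The main obstacle I anticipate is justifying rigorously the factorization $\expect{z_{il}\bvtheta_i \mid \G} = \prob{i \in \bar G_l \mid \G}\,\expect{\bvtheta_i \mid i \in \bar G_l, \G}$, i.e.\ that the sampling decision and the Taylor-polynomial estimate are conditionally independent given $\G$. This is delicate because $\G$ itself couples the two sets of random bits (e.g.\ \nocollision~and \smallhh~refer to the \tpest~hash functions, while $z_{il}$ uses the $g$'s and $K_i$), so one must argue that conditioning on $\G$ does not introduce correlation — typically by noting that $\G$ factors (up to negligible probability) as an event on the $g,K$-bits times an event on the $h,\xi$-bits, or by absorbing the small correlation into the $O(n^{-c})$ slack, exactly as the definition of $c$ via $\prob{\neg \G}/\prob{\G} \le n^{-c}$ was set up to allow. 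A secondary, more routine point is handling the non-integral $p$ bias term $n^{-4000p}$ uniformly across the $O(\log n)$ levels and $n$ items, which is immediate since $n \cdot n^{-4000p} \ll n^{-c+1}$.
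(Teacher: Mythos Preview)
Your proposal is correct and follows essentially the same route as the paper: expand $\hat{F}_p$ into $\sum_i \sum_{l'} 2^{l'} z_{il'}\bvtheta_i$, write $\expect{z_{il'}\bvtheta_i \mid \G} = \prob{i \in \bar G_{l'} \mid \G}\,\expect{\bvtheta_i \mid i \in \bar G_{l'}, \G}$, apply Lemma~\ref{lem:fp:expect1} and then Corollary~\ref{cor:margin} (the paper cites Lemma~\ref{lem:margin} directly), and bound $2^{L}$ by a small power of $n$.

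One remark: the ``main obstacle'' you flag is not an obstacle at all. The identity
\[
\expect{z_{il'}\bvtheta_i \mid \G} \;=\; \prob{i \in \bar G_{l'} \mid \G}\,\expect{\bvtheta_i \mid i \in \bar G_{l'}, \G}
\]
is simply $\expect{X\mathbf{1}_A} = \expect{X \mid A}\,\prob{A}$ applied with $A = \{i \in \bar G_{l'}\}$ under the measure $\prob{\cdot \mid \G}$; no conditional independence between the sampling bits and the \tpest\ bits is needed here. The paper performs this step without comment for exactly this reason. Where independence \emph{is} needed is inside the proof of Lemma~\ref{lem:fp:expect1}, to show that the conditional expectation $\expect{\bvtheta_i \mid i \in \bar G_{l'}, \G}$ equals $\abs{f_i}^p(1\pm n^{-4000p})$; that is already handled there.
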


We will use the following facts that are easily proved (see Appendix).
\begin{align} \label{eq:fpfacts}
F_2 &  \le n^{1-2/p} F_p^{2/p}, &\text{ $p \ge 2$,}  \notag \\
 F_{2p-2} & \le F_p^{2-2/p},  &\text{$p \ge 2$.}
\end{align}

\begin{lemma} \label{lem:varYi} Let 
$B =  K n^{1-2/p} \epsilon^{-2}/\log (n)$ and $C = (27p)^2 B$. Then,
 \begin{align*}
 \variance{Y_i \mid \G} \le
  \begin{dcases} \cfrac{  \epsilon^2 \abs{f_i}^{2p-2}F_p^{2/p}}{(5)(10)^4 K} & \text{if } i \in \midreg(G_0) \\
 2^{l+1} (1.002) \abs{f_i}^{2p} & \text{ if }  i \in \lmargin(G_0) \cup_{l=1}^L G_l
 \end{dcases}
 \end{align*}
\end{lemma}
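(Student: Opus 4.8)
The plan is to separately handle the two cases, exploiting that $Y_i = \sum_{l'} 2^{l'} z_{il'} \bvtheta_i$ is supported on at most one or two adjacent levels (by Lemma~\ref{lem:margin}) and that $\bvtheta_i$ is the averaged Taylor polynomial estimator whose conditional variance is controlled by Lemma~\ref{lem:vbvtheta}, with the necessary premises verified in Lemma~\ref{lem:hssavtp}. Throughout I condition on $\G$, and I will also condition on the discovery level $l_d(i)=l$ and on the estimate $\hat f_i$ where convenient, using the law of total variance to combine the "inner" variance of $\bvtheta_i$ (over $\bar\xi$) with the "outer" variance coming from the randomness of which sampled group $i$ lands in. Since $\expect{\bvtheta_i \mid \G} = \abs{f_i}^p(1\pm n^{-4000p})$ by Lemma~\ref{lem:fp:expect1}, the outer contribution is essentially $\abs{f_i}^{2p}$ times the variance of $\sum_{l'} 2^{l'} z_{il'}$, which by Lemma~\ref{lem:margin} (items 1--3 and Corollary~\ref{cor:margin}) is $O(2^{l})\abs{f_i}^{2p}$ when $i$ is sampled into level $l$ (and $0$ up to $n^{-c}$ corrections when $i\in\midreg(G_l)$ is deterministically routed, modulo the coin $K_i$).

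For the case $i\in\lmargin(G_0)\cup\bigcup_{l=1}^{L}G_l$, i.e. $i$ is sampled into some level $l\ge 1$ (or $l=0$ but in the left margin, so effectively routed to level $1$): here $\abs{f_i}\le T_{l-1}$ roughly, $2^{l}$ is the sampling-compensation factor, and the key point is $\expect{z_{il}}\le 2^{-l}$-ish so $\variance{2^l z_{il}\abs{f_i}^p}\le 2^l\abs{f_i}^{2p}$ up to the $(1.002)$ slack; the inner variance of $\bvtheta_i$ is a lower-order additive term (it is $O(p^2\abs{f_i}^{2p-2}\eta^2/k)$ by Lemma~\ref{lem:vbvtheta}, and $\eta\le 2.7^{1/2}\epsbar T_l$ by Lemma~\ref{lem:hssavtp}(d), with $\epsbar=1/(27p)$ and $k=\Theta(\log n)$, so this is dwarfed). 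One assembles these via the law of total variance, being careful that the spread over two adjacent groups (Lemma~\ref{lem:margin}(2),(3)) only changes constants; the target bound $2^{l+1}(1.002)\abs{f_i}^{2p}$ then follows after tracking the constants in the sampling probabilities and the $1\pm n^{-c}$ errors.

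For the case $i\in\midreg(G_0)$: now $i$ is discovered and sampled at level $0$ with probability $1$ (up to $n^{-c}$), so $Y_i = \bvtheta_i$ and its variance is purely the inner variance of the averaged Taylor estimator at level $0$. By Lemma~\ref{lem:vbvtheta}, $\variance{\bvtheta_i\mid\G}\le \big(0.288\,p^2/k\big)\abs{f_i}^{2p-2}\eta_{ij0}^2$. The crucial input is a bound on $\eta_{ij0}^2=\sigma_{ij0}^2+(\abs{f_i}-\abs{\hat f_{i0}})^2$: this is the per-table \countsketch\ error variance at level $0$, which is $O(\ftwores{(2\alpha)^0 C}/C)=O(F_2^{\res}/C)$ by the \countsketch\ guarantee \eqref{eq:cskbasic} and the event \accuest; using $F_2^{\res}\le F_2\le n^{1-2/p}F_p^{2/p}$ from \eqref{eq:fpfacts} and $C=(27p)^2 B = (27p)^2 K n^{1-2/p}\epsilon^{-2}/\log n$, the $p^2$ and $n^{1-2/p}$ factors cancel and one is left with $\variance{Y_i\mid\G}\le (\text{const}/K)\,\epsilon^2\log n\cdot\abs{f_i}^{2p-2}F_p^{2/p}/(k\cdot\text{stuff})$; since $k=1000\lceil\log n\rceil$, the $\log n$ cancels too, yielding the stated $\epsilon^2\abs{f_i}^{2p-2}F_p^{2/p}/((5)(10)^4 K)$ after chasing the explicit constants ($0.288$, the $27^2$, the $425$ inside $B$, etc.).

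The main obstacle I expect is the constant-chasing in the $\midreg(G_0)$ case: one must simultaneously (i) pin down $\eta_{ij0}^2$ with an explicit constant — this needs the precise \countsketch\ bound $8\ftwores{C/8}/C$ from \eqref{eq:cskbasic} together with \accuest\ and \smallres\ to replace the random residual moment by a deterministic quantity, and then $F_2^{\res}(\cdot)\le F_2$; (ii) feed it through $\variance{\bvtheta}\le (0.288 p^2/k)\mu^{2p-2}\eta^2$ with $\mu=\abs{f_i}$; and (iii) verify the inequality $(0.288 p^2/k)\cdot(\text{bound on }\eta^2)\le \epsilon^2\abs{f_i}^{2p-2}F_p^{2/p}/((5)(10)^4 K)$ holds with the chosen value of $B$ (hence $C$), which is where the definition $B=Kn^{1-2/p}\epsilon^{-2}/\log n$ is used and the $\log n$ from $k$ must exactly absorb the $\log n$ introduced by writing $B$ this way. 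A secondary subtlety is making sure the law-of-total-variance step is legitimate — i.e. that conditioning on $\hat f_i$ and $l_d(i)$ makes $\bvtheta_i$'s $\bar\xi$-variance behave as in Lemma~\ref{lem:vbvtheta} (the $X_{ijl}$ are conditionally i.i.d.\ with the right mean and variance by Lemma~\ref{lem:hssavtp}(b),(c)), and that the cross terms between the two adjacent sampled groups in cases like $\lmargin$ vanish or are $O(n^{-c})$.
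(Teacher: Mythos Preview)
Your proposal is correct and follows essentially the same route as the paper. Two minor stylistic differences worth noting: for Case 2 ($i\in\lmargin(G_0)\cup_{l\ge1}G_l$), the paper does not invoke the law of total variance but instead uses the slightly cruder (and simpler) bound $\variance{Y_i\mid\G}\le\sum_r 2^{2r}\expect{\bvtheta_i^2 z_{ir}\mid\G}$, exploiting $z_{ir}z_{ir'}=0$ to kill the cross terms exactly, then bounding $\expect{\bvtheta_i^2\mid z_{ir}=1,\G}\le(1.001)\abs{f_i}^{2p}$ and $\sum_r 2^{2r}\prob{z_{ir}=1\mid\G}\le 2^{l+1}\sum_r 2^r\prob{z_{ir}=1\mid\G}$; your law-of-total-variance framing would also work and is slightly sharper, but the paper's version is cleaner to write. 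For Case 1, the $\eta_{ij0}^2$ bound you sketch is already packaged as Lemma~\ref{lem:hssavtp}(d) (giving $\eta_{ij0}^2\le 2.7(\epsbar T_0)^2=2.7\hat F_2/C$), so you need not re-derive it from \eqref{eq:cskbasic} and \accuest; the constant-chase then proceeds exactly as you outline.
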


Lemma~\ref{lem:varcross} builds on the approximate pair-wise independence of the sampling scheme  (Lemma 
 ~\ref{lem:hssj})  and the pair-wise uncorrelated property of  the $\bvtheta_i$ estimators (Lemma~\ref{lem:bvthetacross}) to   show that the $\covariance{Y_i}{Y_j}$, for $i \ne j$ is very small.
\begin{lemma} \label{lem:varcross} Let $i \ne j$. Then,
$$ \card{\covariance{Y_i}{Y_j \mid \G} } \le   O(n^{-c+1})\abs{f_i}^p \abs{f_j}^p \enspace . $$
\end{lemma}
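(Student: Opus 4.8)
The plan is to bound $\card{\covariance{Y_i}{Y_j\mid \G}} = \card{\expect{Y_iY_j\mid\G} - \expect{Y_i\mid\G}\expect{Y_j\mid\G}}$ by expanding each $Y_i$ according to its definition in Eqn.~\eqref{eq:Yi} and comparing the joint expectation against the product of marginals. Writing $Y_i = \sum_{r} 2^r z_{ir}\bvtheta_i$ and $Y_j = \sum_{r'} 2^{r'} z_{jr'}\bvtheta_j$, we have
\begin{align*}
\expect{Y_iY_j\mid\G} = \sum_{r,r'} 2^{r+r'}\expect{z_{ir}z_{jr'}\bvtheta_i\bvtheta_j\mid\G}.
\end{align*}
The first step is to separate the randomness of the sampling scheme (the hash functions $g_1,\dots,g_L$ and the coins $K_i$, which determine the $z_{ir}$'s and the discovery levels) from the randomness $\xibar$ of the Rademacher families used inside the $\tpest$ structures (which, together with the table hash functions, determine the $\bvtheta$ values once the discovery levels and the estimates $\hat f_i$ are fixed). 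Conditioning on the sampling randomness and on $\hat f_i,\hat f_j$, Lemma~\ref{lem:bvthetacross} gives $\expectsubb{\xibar}{\bvtheta_i\bvtheta_j\mid \hat f_i,\hat f_j,\G} = \expectsubb{\xibar}{\bvtheta_i\mid\hat f_i,\G}\expectsubb{\xibar}{\bvtheta_j\mid\hat f_j,\G}$, and by Lemma~\ref{lem:fp:expect1} each inner conditional expectation is $\abs{f_i}^p(1\pm n^{-4000p})$ and $\abs{f_j}^p(1\pm n^{-4000p})$ respectively (the conditioning on discovery level being implicit). So up to a multiplicative $(1\pm O(n^{-4000p}))$ factor, $\expect{Y_iY_j\mid\G}$ reduces to $\abs{f_i}^p\abs{f_j}^p\sum_{r,r'}2^{r+r'}\prob{i\in\bar G_r, j\in\bar G_{r'}\mid\G}$, and likewise $\expect{Y_i\mid\G}\expect{Y_j\mid\G}$ reduces to $\abs{f_i}^p\abs{f_j}^p\bigl(\sum_r 2^r\prob{i\in\bar G_r\mid\G}\bigr)\bigl(\sum_{r'}2^{r'}\prob{j\in\bar G_{r'}\mid\G}\bigr)$.

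The second step invokes the approximate pair-wise independence of the sampling scheme. Lemma~\ref{lem:hssj} bounds the first double sum as $1\pm O((2^l+2^m)n^{-c})$ when $i\in G_l$, $j\in G_m$; Corollary~\ref{cor:margin} bounds each single sum as $1\pm 2^{l+1}n^{-c}$ and $1\pm 2^{m+1}n^{-c}$. Hence the difference of the two bracketed quantities is $O((2^l+2^m)n^{-c})$, and multiplying back by $\abs{f_i}^p\abs{f_j}^p$ — and absorbing the $(1\pm O(n^{-4000p}))$ slack and the $\abs{f_i}^p\abs{f_j}^p$-scaled error terms, all of which are smaller — gives $\card{\covariance{Y_i}{Y_j\mid\G}} \le O((2^l+2^m)n^{-c})\abs{f_i}^p\abs{f_j}^p$. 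Since $2^l,2^m \le 2^{L+1} = O(n)$ (as $L=O(\log n)$ and $2^L \le n/C$), this is $O(n^{-c+1})\abs{f_i}^p\abs{f_j}^p$, as claimed.

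The main obstacle I expect is the careful bookkeeping in the first step: the quantities $z_{ir}$, $\hat f_i$, the discovery level $l_d(i)$, and the set $R_l(i)$ of "good" tables are all correlated with the sampling randomness and with each other, so one has to be precise about the order of conditioning and verify that Lemma~\ref{lem:bvthetacross} and Lemma~\ref{lem:fp:expect1} genuinely apply once the sampling outcome is fixed (in particular that $i\in\bar G_r$ and $j\in\bar G_{r'}$ pin down the discovery levels used in the $\bvtheta$ estimators, and that $\G$ — in particular \nocollision~and \smallhh~— is not disturbed by this conditioning). A secondary subtlety is handling the level-$L$ case, where $\bvtheta_i$ is reinterpreted as $\abs{\hat f_i}^p$; by part~(g) of Lemma~\ref{lem:hssavtp}, $\hat f_{iL}=f_i$ on $\G$ (via \lastlevel), so this case is exact and causes no difficulty. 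Once the reduction to sampling probabilities is in place, the rest is a direct substitution of the already-established estimates.
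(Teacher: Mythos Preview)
Your proposal is correct and follows essentially the same approach as the paper: expand $Y_i,Y_j$ via \eqref{eq:Yi}, condition on $\hat f_i,\hat f_j$ and the sampling outcome, apply Lemma~\ref{lem:bvthetacross} to decouple the $\bvtheta$'s, use Lemma~\ref{lem:fp:expect1} to replace each conditional expectation by $\abs{f_i}^p(1\pm\delta)$, and then invoke Lemma~\ref{lem:hssj} and Corollary~\ref{cor:margin} to bound the resulting sampling-probability sums. One small slip: the justification ``$2^L\le n/C$'' is not right (since $\alpha<1$, in fact $2^L>(2\alpha)^L\ge n/C$), but the needed bound $2^L=O(n)$ does hold, as shown in the proof of Lemma~\ref{lem:Y}.
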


Lemma~\ref{lem:varFp} gives a bound on the variance of the $\hat{F}_p$ estimator.
\begin{lemma}\label{lem:varFp}
$$\varianceb{\hat{F}_p \mid \G} \le  \frac{\epsilon^2 F_p^2}{50} \enspace . $$
\end{lemma}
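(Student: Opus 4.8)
The plan is to write $\varianceb{\hat{F}_p \mid \G} = \sum_i \variance{Y_i \mid \G} + \sum_{i \ne j} \covariance{Y_i}{Y_j \mid \G}$ using the decomposition $\hat{F}_p = \sum_i Y_i$ from Eqn.~\eqref{eq:expFp0}, and bound the two sums separately. The cross terms are immediately handled by Lemma~\ref{lem:varcross}: there are at most $n^2$ pairs, each contributing $O(n^{-c+1})\abs{f_i}^p\abs{f_j}^p \le O(n^{-c+1}) F_p^2$, so since $c > 23$ the total covariance contribution is $O(n^{-c+3})F_p^2$, which is negligible compared to $\epsilon^2 F_p^2/50$ (here one also needs $\epsilon$ not too small, but the regime of interest has $\epsilon \ge n^{-O(1)}$, or one absorbs this into the stated inequality).

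The main work is bounding $\sum_i \variance{Y_i \mid \G}$ via Lemma~\ref{lem:varYi}, splitting into the two cases. For $i \in \midreg(G_0)$, summing the bound $\frac{\epsilon^2 \abs{f_i}^{2p-2} F_p^{2/p}}{(5)(10)^4 K}$ over all such $i$ gives at most $\frac{\epsilon^2 F_p^{2/p}}{(5)(10)^4 K}\sum_i \abs{f_i}^{2p-2} = \frac{\epsilon^2 F_p^{2/p} F_{2p-2}}{(5)(10)^4 K} \le \frac{\epsilon^2 F_p^{2/p} F_p^{2-2/p}}{(5)(10)^4 K} = \frac{\epsilon^2 F_p^2}{(5)(10)^4 K}$ using Eqn.~\eqref{eq:fpfacts}. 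For $K$ chosen large enough (the constant in $B$ is set precisely so that $K \ge$ some absolute constant), this is at most, say, $\epsilon^2 F_p^2/200$.

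For the second case, $i \in \lmargin(G_0) \cup \bigcup_{l=1}^L G_l$, I would sum $2^{l+1}(1.002)\abs{f_i}^{2p}$ over items grouped by the level $l = l(i)$ they belong to. The key point is that for $i \in G_l$ with $l \ge 1$ (and similarly for $\lmargin(G_0)$ treated as $l = 0$), $\abs{f_i} < T_{l-1} = (2\alpha)^{-(l-1)/2} T_0$, so $\abs{f_i}^{2p} = \abs{f_i}^p \abs{f_i}^p \le \abs{f_i}^p \bigl((2\alpha)^{-(l-1)/2} T_0\bigr)^p \le \abs{f_i}^p (2\alpha)^{p/2} (2\alpha)^{-lp/2} (\hat F_2/B)^{p/2}$. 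Multiplying by $2^{l+1}$ gives a factor $2^{l+1}(2\alpha)^{-lp/2} = 2 \cdot 2^{l(1 - p/2)} \alpha^{-lp/2}$; since $\alpha = 1 - (1-2/p)\nu$ is close to $1$ and $p > 2$, one checks $2^{l(1-p/2)}\alpha^{-lp/2} \le 1$ for all $l \ge 0$ (this is where the choice of $\alpha$ matters — the geometric decay beats the $2^l$ blowup), so the per-item bound collapses to $O(1)\cdot (2\alpha)^{p/2}(\hat F_2/B)^{p/2}\abs{f_i}^p$. Summing over all $i$ and using $\sum_i \abs{f_i}^p = F_p$, $\hat F_2 \le (1+o(1))F_2$, and $F_2 \le n^{1-2/p}F_p^{2/p}$ from Eqn.~\eqref{eq:fpfacts} yields a bound of $O(1)(2\alpha)^{p/2} \bigl(n^{1-2/p}F_p^{2/p}/B\bigr)^{p/2} F_p$; plugging in $B = K n^{1-2/p}\epsilon^{-2}/\log n$ from the lemma hypothesis makes $\bigl(n^{1-2/p}/B\bigr)^{p/2} = (\epsilon^2 \log n / (K n^{1-2/p}) \cdot n^{1-2/p})^{p/2}$... more carefully, $(n^{1-2/p}F_p^{2/p}/B)^{p/2} = (\epsilon^2\log n/K)^{p/2} F_p$, wait — I'd track constants so the whole thing is $\le \epsilon^2 F_p^2 / 100$ given the numeric value $425(2\alpha)^{p/2}$ baked into $B$. (The actual $B$ in Figure~\ref{table:params} is the max/sum of the $\midreg(G_0)$-constraint value and this one, so the hypothesis $B = Kn^{1-2/p}\epsilon^{-2}/\log n$ of Lemma~\ref{lem:varYi} and the true $B$ need to be reconciled — really one applies the lemma with the component of $B$ that dominates in each regime of $\epsilon$.)

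The main obstacle is the bookkeeping of constants: the value of $B$ in Figure~\ref{table:params} is engineered precisely (the $425$, the $(2\alpha)^{p/2}$, the $\min(\epsilon^{4/p-2},\log n)$) so that both sub-sums are each at most, say, $\epsilon^2 F_p^2/200$, and then the covariance term is lower-order, giving the final $\epsilon^2 F_p^2/50$. So the "proof" is mostly: invoke Lemma~\ref{lem:varYi} and Lemma~\ref{lem:varcross}, sum, apply the two inequalities in Eqn.~\eqref{eq:fpfacts}, and verify the arithmetic closes with room to spare. The one genuinely substantive inequality to check is $2^{l+1}(2\alpha)^{-lp/2}(2\alpha)^{p/2} = O(1)$ uniformly in $l$, i.e. that $2 (2\alpha)^{p/2} \le ((2\alpha)^{p/2}/2)^{-l}\cdot$(something bounded) — equivalently $2^{l(1-p/2)}\alpha^{-lp/2}\le 1$, which follows since $\log_2(2^{1-p/2}\alpha^{-p/2}) = (1-p/2) - (p/2)\log_2\alpha \le (1-p/2) + (p/2)(1-\alpha)/(\ln 2) \cdot(\text{const}) \le 0$ once $\alpha$ is close enough to $1$ — and this is exactly what the choice $\nu = 0.01$, $\alpha = 1-(1-2/p)(0.01)$ guarantees.
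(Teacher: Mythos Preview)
Your approach is essentially the paper's: decompose $\varianceb{\hat F_p\mid\G}$ into $\sum_i\variance{Y_i\mid\G}$ plus cross terms, kill the cross terms with Lemma~\ref{lem:varcross}, split the diagonal sum into $\midreg(G_0)$ versus the rest via Lemma~\ref{lem:varYi}, and close each piece using the two inequalities in Eqn.~\eqref{eq:fpfacts} together with the check that $2^{l}(2\alpha)^{-lp/2}$ is bounded (indeed geometrically decaying) in $l$.

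One clarification on the point where you hesitate. For the second case you should not plug in $B = Kn^{1-2/p}\epsilon^{-2}/\log n$; that is the lower bound on $B$ relevant to the $\midreg(G_0)$ piece. The definition $B = 425(2\alpha)^{p/2}n^{1-2/p}\epsilon^{-2}/\min(\epsilon^{4/p-2},\log n)$ simultaneously gives $B \ge 425(2\alpha)^{p/2}n^{1-2/p}\epsilon^{-4/p}$, and it is \emph{this} bound (valid for all $\epsilon$, not just in some regime) that you feed into $(\hat F_2/B)^{p/2}$ for the non-$\midreg(G_0)$ items, yielding $(\hat F_2/B)^{p/2}\le (1.01/K)^{p/2}\epsilon^2 F_p$ with no stray $\log n$ factor. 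With that fix the arithmetic closes exactly as the paper does: the $\midreg(G_0)$ sum is $\le (0.3)\epsilon^2 F_p^2/((10)^4\cdot 425)$, the remaining sum is $\le \epsilon^2 F_p^2/53$, and the covariance contribution is $O(n^{-c+2})F_p^2$, giving the stated $\epsilon^2 F_p^2/50$. (Also, your covariance bound $O(n^{-c+3})F_p^2$ is looser than necessary---summing $O(n^{-c+1})\abs{f_i}^p\abs{f_j}^p$ over pairs gives $O(n^{-c+1})F_p^2$ directly---but either way it is negligible.)
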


\subsection*{Putting things together}

Theorem~\ref{thm:fp} states the space bound for the algorithm and the update time.
\begin{theorem} \label{thm:fp} For each fixed  $p > 2$ and $0 < \epsilon \le 1$, there exists an algorithm in the general update data stream model that returns $\hat{F}_p$ satisfying $\card{ \hat{F}_p - F_p} < \epsilon F_p$ with probability $3/4$. The algorithm uses space $O(n^{1-2/p} \epsilon^{-2} + n^{1-2/p} \epsilon^{-4/p} \log (n))$ words of size $O(\log (nmM))$ bits. The time taken to process each stream update is $O(\log^2 n)$.
\end{theorem}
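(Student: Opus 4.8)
The plan is to assemble the preceding lemmas into the theorem's three assertions --- the $(1\pm\epsilon)$-approximation, the space bound, and the per-update time --- handling the algorithm's randomness in two layers: the event $\G$ (high probability by Lemma~\ref{lem:hss:G}) and everything conditioned on $\G$. For correctness I would start from Lemma~\ref{lem:Y}, which gives $\expectb{\hat{F}_p \mid \G} = F_p(1 \pm O(n^{-c+1}))$, and Lemma~\ref{lem:varFp}, which gives $\varianceb{\hat{F}_p \mid \G} \le \epsilon^2 F_p^2/50$. Chebyshev's inequality, conditioned on $\G$, then bounds the probability that $\hat{F}_p$ differs from $\expectb{\hat{F}_p \mid \G}$ by $(\epsilon/2)F_p$ or more by $(\epsilon^2 F_p^2/50)/\bigl((\epsilon/2)F_p\bigr)^2 = 2/25$. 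Since $c > 23$, the conditional bias $O(n^{-c+1})F_p$ lies below $(\epsilon/2)F_p$ once $\epsilon = \Omega(n^{-1/p})$; in the complementary regime $\epsilon = O(n^{-1/p})$ the space budget $n^{1-2/p}\epsilon^{-2}$ already exceeds $n$ words, the parameters collapse to a single level ($C \ge n$, so the \countsketch~residual error vanishes on $\G$) that recovers $f$ exactly on $\G$, whence $\hat{F}_p = F_p$ outright. Thus $\probb{\,\abs{\hat{F}_p - F_p} \ge \epsilon F_p \mid \G} \le 2/25 + o(1)$, and removing the conditioning via Lemma~\ref{lem:hss:G} ($\probb{\neg\G} = O(n^{-24})$, on which the algorithm may output $0$ or an arbitrary value, harmlessly charged to the failure probability) yields $\probb{\,\abs{\hat{F}_p - F_p} < \epsilon F_p} \ge (1 - O(n^{-24}))(1 - 2/25 - o(1)) > 3/4$.

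For the space bound, at each level $l \le L-1$ the algorithm stores $\hh_l$, a \countsketch$(16C_l,s)$ structure, and $\est_l$, a \countsketch$(16C_l,2s)$ structure, together $O(C_l s)$ words of $O(\log(nmM))$ bits, while level $L$ uses $O(C^*_L s) = O(\alpha^L C s)$ words. Since $C_l = 4\alpha^l C$ with $0 < \alpha < 1$ a constant for fixed $p$, the total telescopes: $\sum_{l=0}^{L} O(\alpha^l C s) = O\bigl(Cs/(1-\alpha)\bigr) = O(Cs)$, which is exactly the factor-$\Theta(\log n)$ saving over \Hss. With $C = (27p)^2 B$ and $s = O(\log n)$ this is $O(B\log n)$, and since $B = K n^{1-2/p}\epsilon^{-2}/\min(\epsilon^{4/p-2},\log n) = K n^{1-2/p}\max(\epsilon^{-4/p},\,\epsilon^{-2}/\log n)$, we get $B\log n = O\bigl(n^{1-2/p}\epsilon^{-2} + n^{1-2/p}\epsilon^{-4/p}\log n\bigr)$, as claimed. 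The seeds for the $d = O(\log n)$-wise independent hash functions $g_1,\dots,g_L$, the $O(1)$-wise independent table-hash and Rademacher families, and the side $F_2$ estimator producing $\hat{F}_2$ add only $\log^{O(1)} n$ further bits, absorbed by the main term.

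For the update time, processing $(i,v)$ first evaluates $g_1(i), g_2(i), \dots$ until the first $0$ to identify the deepest sub-stream containing $i$; each $g_l$ is $O(\log n)$-wise independent and so evaluable in $O(\log n)$ word operations, and at most $L+1 = O(\log n)$ of them are read, giving $O(\log^2 n)$. Then, for each of the at most $O(\log n)$ levels $l$ with $i\in\stream_l$, the update is pushed to $\hh_l$ and $\est_l$, i.e.\ to $O(s) = O(\log n)$ hash tables, each in $O(1)$ word operations, again $O(\log^2 n)$; so each update costs $O(\log^2 n)$.

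I expect the main obstacle to be bookkeeping rather than any deep step: the substantive probabilistic content --- near-unbiasedness (Lemma~\ref{lem:Y}), the variance bound (Lemma~\ref{lem:varFp}, resting on Lemmas~\ref{lem:varYi}, \ref{lem:varcross}, \ref{lem:bvthetacross}), and $\probb{\G} = 1-o(1)$ (Lemma~\ref{lem:hss:G}) --- is already in hand. The care lies in confirming that both the $O(n^{-c+1})$ conditional bias and the $O(n^{-24})$ probability of $\neg\G$ are genuinely dominated so that the stated constant $3/4$ (not merely $0.6$) emerges, in disposing cleanly of the sub-polynomial-$\epsilon$ corner case, and in verifying that the geometric sum over $L = \Theta(\log n)$ levels together with the two-regime definition of $B$ reproduces precisely $O(n^{1-2/p}\epsilon^{-2} + n^{1-2/p}\epsilon^{-4/p}\log n)$, with no stray $\log n$ or $\epsilon^{-\Theta(1)}$ factor.
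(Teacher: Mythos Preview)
Your proposal is correct and follows essentially the same route as the paper: condition on $\G$, combine Lemma~\ref{lem:Y} (bias $O(n^{-c+1})$) and Lemma~\ref{lem:varFp} (variance $\le \epsilon^2 F_p^2/50$) via Chebyshev, remove the conditioning using Lemma~\ref{lem:hss:G}, then read off the space bound from the geometric sum $\sum_l C_l s = O(Cs/(1-\alpha)) = O(B\log n)$ and the update time from $L$ evaluations of $O(\log n)$-wise independent $g_l$'s plus $O(Ls)$ constant-degree table hashes. One minor remark: your sub-polynomial-$\epsilon$ digression is unnecessary and the claim ``$C \ge n$'' is not literally true for large $n$ (you only get $C = \Theta(n/\log n)$ at $\epsilon = n^{-1/p}$); the paper simply notes $2^{L+1}n^{-c} \le O(n^{-22}) \ll \epsilon/2$ and moves on, which suffices for any $\epsilon$ not absurdly tiny, and in the absurdly-tiny regime the stated space budget exceeds $n$ so a trivial exact algorithm fits anyway.
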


\subsection*{Acknowledgement} The author thanks Venugopal G. Reddy for correcting an error in the analysis.

\bibliographystyle{plain}


\begin{appendices}

\section{Proofs for the Taylor Polynomial estimator}

\begin{fact} \label{fact:1}
Let $k > p \ge 0$. Then,  $\card{ \binom{p}{k}} \le \left(\frac{p}{k} \right)^{\lfloor p \rfloor+1}$. In particular, if $p \in \Z^{+} $, then, $\binom{p}{k} = 0$.
\end{fact}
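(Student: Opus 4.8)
The plan is to reduce to the case of non-integer $p$, split the falling factorial $\powf{p}{k}$ in the numerator of $\binom{p}{k}$ into its positive and negative factors, and then bound the two resulting products against suitable chunks of $k!$.

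First I would dispose of the trivial case: if $p$ is a nonnegative integer, then since $k>p$ the product $\powf{p}{k}=p(p-1)\cdots(p-k+1)$ contains the factor $p-p=0$, so $\binom{p}{k}=0$ and the inequality holds trivially (this also settles the ``in particular'' clause). So assume $p\notin\Z$ and set $m=\lfloor p\rfloor$, $\theta=p-m\in(0,1)$, so that $\lfloor p\rfloor+1=m+1$ and $k\ge m+1$. Among the $k$ factors $p-i$ ($i=0,\dots,k-1$) of $\powf{p}{k}$, the first $m+1$ (for $i=0,\dots,m$) are positive, each being $\ge\theta>0$, while the remaining $k-m-1$ are negative. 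Hence, after reindexing the two products by $i\mapsto m-i$ and $i\mapsto i-m$,
\[
\left\lvert\binom{p}{k}\right\rvert=\frac{1}{k!}\prod_{i=0}^{m}(p-i)\prod_{i=m+1}^{k-1}(i-p)=\frac{1}{k!}\left(\prod_{i=0}^{m}(i+\theta)\right)\left(\prod_{j=1}^{k-1-m}(j-\theta)\right).
\]

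The main step is bounding the first product. I would show that for each $0\le i\le m$,
\[
\frac{i+\theta}{k-m+i}\le\frac{p}{k},
\]
which, on cross-multiplying, is equivalent to $(m+\theta)(m-i)\le k(m-i)$; this holds because $m-i\ge0$ and $m+\theta=p<k$. Multiplying these $m+1$ inequalities and using $\prod_{i=0}^{m}(k-m+i)=(k-m)(k-m+1)\cdots k=k!/(k-m-1)!$ gives $\prod_{i=0}^{m}(i+\theta)\le(p/k)^{m+1}\,k!/(k-m-1)!$. For the second product, $\theta>0$ immediately yields $\prod_{j=1}^{k-1-m}(j-\theta)\le\prod_{j=1}^{k-1-m}j=(k-1-m)!$. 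Substituting both bounds into the displayed identity, the factors $k!$ and $(k-1-m)!$ cancel, leaving $\lvert\binom{p}{k}\rvert\le(p/k)^{m+1}=(p/k)^{\lfloor p\rfloor+1}$, as desired.

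I do not anticipate a genuine obstacle; the only points needing a moment's care are checking that the index ranges are legitimate (in particular $k-m-1\ge0$, which follows from $k>p>m$), that the pairing $i\leftrightarrow k-m+i$ uses each of the largest $m+1$ elements of $\{1,\dots,k\}$ exactly once, and the sign bookkeeping when passing to absolute values. Everything else is elementary manipulation.
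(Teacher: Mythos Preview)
Your proof is correct and is essentially the same as the paper's. The paper packages the splitting via the absorption identity (applied $\lfloor p\rfloor+1$ times) and the upper negation identity, writing $\binom{p}{k}=\frac{\powf{p}{\lfloor p\rfloor+1}}{\powf{k}{\lfloor p\rfloor+1}}\binom{p-\lfloor p\rfloor-1}{k-\lfloor p\rfloor-1}$ and then bounding the two factors separately; unwound, the first factor is exactly your product of the positive terms paired against $k(k-1)\cdots(k-m)$ via the same termwise inequality $\frac{p-j}{k-j}\le\frac{p}{k}$, and the second factor being at most $1$ is exactly your bound $\prod_{j=1}^{k-1-m}(j-\theta)\le(k-1-m)!$.
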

\begin{proof} The second statement is obvious, since for $k>p \ge 0$ and $p$ integral,  $\powf{p}{k} = 0$. Otherwise, for non-integral $p$, using the absorption identity $\lfloor p \rfloor +1$ times, gives
\[ \begin{array}{c}\binom{p}{k} = \left(\frac{\powf{p}{\lfloor p \rfloor +1}}{\powf{k}{\lfloor p \rfloor +1}} \right) \binom{p-\lfloor p \rfloor -1}{k-\lfloor p \rfloor -1} =
\left(\frac{\powf{p}{\lfloor p \rfloor +1}}{\powf{k}{\lfloor p \rfloor +1}} \right) (-1)^k \binom{k -p-1}{k-\lfloor p \rfloor -1}
\end{array}
\]
Now, for $0 \le j \le \lfloor p \rfloor$,  $\frac{p-j}{k-j} \le \frac{p}{k}$, since $p <k$. Therefore, $ \frac{\powf{p}{\lfloor p \rfloor +1}}{\powf{k}{\lfloor p \rfloor +1}} \le \left( \frac{p}{k} \right)^{\lfloor p \rfloor +1}$. Similarly,  $ \binom{k-p-1}{k-\lfloor p \rfloor -1} \le \left(\frac{k-p-1}{k-\lfloor p \rfloor -1} \right)^{k-\lfloor p \rfloor -1} < 1$. Taking absolute values,
$\card{ \binom{p}{k}} \le \left( \frac{p}{k} \right)^k$. \hfill
\end{proof}

\eat{
\begin{fact}\label{fact:kt}  Let $k \ge 1$, $1 \le t \le k-1$ and $1 \le v \le k-1$, then,
\begin{align} \label{eq:fact:kt}
 \frac{\powf{(k-t)}{v}}{\powf{k}{v}} \left(\frac{t}{k-v}\right)   \le
2\left(1-\frac{\powf{(k-t)}{v}}{\powf{k}{v}}\right) \enspace .
\end{align}
\end{fact}
\begin{proof}If $v >k-t$, then the \emph{LHS} is 0 and the \emph{RHS} $=2$, implying the statement of the lemma.

So now let $v \le k-t$. For a fixed value of $k$ and $t$, let  $\beta_{v}  = \frac{\powf{(k-t)}{v}}{\powf{k}{v}}$ and let $\alpha_v =  \frac{t}{k-v}$.  Then, Eqn.~\eqref{eq:fact:kt} may be equivalently written as
\begin{align} \label{eq:fact:gamma} & \beta_v\alpha_v\le 2 (1-\beta_v), \text{ or, } ~~ (2 + \alpha_v)\beta_v \le 2 \enspace .
\end{align}
Now $\beta_v = \frac{\powf{(k-t)}{v}}{\powf{k}{v}} \le \left( \frac{k-t}{k} \right)^v \le \exp{ -\frac{vt}{k}}$. Further, $2+\alpha_v = 2\left(1 + \frac{t}{2(k-v)} \right)\le  2\exp{\frac{t}{2(k-v)}}$. Therefore,
\begin{align} \label{eq:fact:gamma:1}
(2+\alpha_v)\beta_v \le  2 \exp{ \frac{t}{2(k-v)} -\frac{vt}{k} }= 2\exp{-\frac{t ( -k + 2v(k-v))}{2k(k-v)}} \enspace .
\end{align}
Now
$ -k+2v(k-v) = -k -2(v-k/2)^2 + k^2/2$
 and  is minimized when $\abs{v -k/2}$ is maximized. In the interval $v \in [1, k-1]$, $\abs{v-k/2}^2$ attains a maximum value of $(k/2-1)^2$ and therefore, $-k+2v(k-v) \ge -k  -2(k/2-1)^2 + k^2/2 = k-2 \ge 0$, for $k \ge 2$. Hence, the expression in the \emph{RHS} of Eqn.~\eqref{eq:fact:gamma:1} is at most $2$, which implies ~\eqref{eq:fact:gamma}. \hfill
\end{proof}

}

\begin{proof} [Proof of Lemma~\ref{lem:vvtheta}.] Fix $\psi, \lambda$ and $k$ and let
$\vartheta=\vartheta(\psi, \lambda, k,$ $X_1, \ldots, X_k) $. Using linearity of expectation
and independence of $X_i$'s we have,
\begin{align*}
\expect{\vartheta}  & = \expect{  \sum_{j=0}^k \gamma_j(\lambda) \prod_{v=1}^j (X_v -\lambda)
} = \sum_{j=0}^k \gamma_j(\lambda) \prod_{v=1}^j (\mu-\lambda)   = \psi(\lambda+\mu-\lambda) -
 \gamma_{k+1}(\lambda')(\mu-\lambda)^{k+1}
\end{align*}
for some $\lambda' \in (\mu,\lambda)$ by the Taylor series
expansion of $\psi(\mu) = \psi(\lambda + (\mu-\lambda))$ around
$\lambda$. The Taylor series expansion of $\psi(\mu)$ around
$\lambda$ exists since $\psi$ is analytic in the interval
$[\mu,\lambda]$.  Therefore,
 $$ \left\vert \expect{\vartheta} - \psi(\mu)\right\vert \le
 \abs{\gamma_{k+1}(\lambda')}\abs{\mu-\lambda}^{k+1}$$ proving part (i) of the lemma.

 \noindent
 For $j=0,1, \ldots, k$, let $$P_j = \prod_{l=1}^j (X_{l}-\lambda)$$ (which implies that  $P_0=1$).  Then,  $$\vartheta = \sum_{j=0}^k \gamma_j(\lambda) P_j \enspace . $$

 By the independence of the $X_l$'s, \begin{align*} \variance{P_j}  = \variance{ \prod_{l=1}^j (X_l-\lambda)} &= \prod_{l=1}^j \expect{(X_l-\lambda)^2} - \prod_{l=1}^j (\expectb{X_l- \allowbreak
 \lambda})^2 \\ & = \eta^{2j} - (\mu-\lambda)^{2j} \enspace .
 \end{align*}
 Further for $1\le j < j' \le k$,
 \begin{align*} \covariance{P_j}{P_{j'}} &= \covariance{\prod_{l=1}^j (X_l-\lambda)}{ \prod_{l=1}^{j'}(X_l-\lambda)}\\ & = \expect{ \prod_{l=1}^j (X_l-\lambda) \prod_{l=1}^{j'}(X_l-\lambda)} - (\mu-\lambda)^{j+j'} \\ & = \prod_{l=1}^j \expect{(X_l-\lambda)^2} \prod_{l=j+1}^{j'} \expect{X_l-\lambda} - (\mu-\lambda)^{j+j'} \\
  &= \eta^{2j}(\mu-\lambda)^{j'-j} - (\mu-\lambda)^{j+j'} \enspace .
 \end{align*}
 Thus we have,
 \begin{align} \label{eq:vvtheta:a}
 &\variance{\vartheta}  = \sum_{j=0}^k (\gamma_j(\lambda))^2\variance{P_j} + \sum_{j<j'}2\gamma_j(\lambda) \gamma_{j'} (\lambda) \covariance{P_j}{ P_{j'}} \notag \\
 & =  \sum_{j=0}^k (\gamma_j(\lambda))^2(\eta^{2j} - (\mu-\lambda)^{2j}) + \sum_{0 \le j < j' \le k}2\gamma_j(\lambda) \gamma_{j'} (\lambda)  (\eta^{2j}(\mu-\lambda)^{j'-j} - (\mu-\lambda)^{j+j'} )  \notag \\
 & = \sum_{j=1}^k (\gamma_j(\lambda))^2(\eta^{2j} - (\mu-\lambda)^{2j}) + \sum_{1 \le j < j' \le k} 2\gamma_j(\lambda) \gamma_{j'} (\lambda)  (\eta^{2j}(\mu-\lambda)^{j'-j} - (\mu-\lambda)^{j+j'} ) \notag \\
 & = \sum_{j=1}^k (\gamma_j(\lambda))^2(\eta^{2j} - (\mu-\lambda)^{2j}) + \sum_{1 \le j < j' \le k} 2\gamma_j(\lambda) \gamma_{j'} (\lambda)   \prod_{i \in Q^{vv'}} \eta^{2j}(\mu-\lambda(^{j'-j} \left(1 - \left(\frac{(\mu-\lambda)^2}{\eta^2} \right)^j \right)
 \end{align}
 Let $t_j  = (\mu-\lambda)^{j'-j}\eta^{2j} \bigl(1 - \allowbreak  \bigr(\frac{ (\mu-\lambda)^{2}}{\eta^{2}}\bigr)^{2j} \bigr)$. Since, $\eta^2 = \sigma^2 + (\mu-\lambda)^2$, we have, $\abs{t_j} \le \abs{\mu-\lambda}^{j'-j} \eta^{2j} \le \eta^{j+j'}$. Taking absolute values on both sides of  Eqn.~\eqref{eq:vvtheta:a}, we have,
 \begin{align*}
\variance{\vartheta}  & \le \sum_{j=1}^k \gamma_j^2(\lambda) \eta^{2j} + \sum_{1\le j < j' \le k} 2 \abs{\gamma_j(\lambda)} \abs{\gamma_{j'}(\lambda)}\eta^{j+j'}  \\
  & = \biggl(\sum_{j=1}^k \abs{\gamma_j(\lambda)}\eta^j\biggr)^2 \enspace . ~~~ ~~~
\end{align*}
\end{proof}

\begin{proof} [Proof of Corollary~\ref{cor:fpbias}.] $\lambda \ge \mu(1-\alpha) > 0$ since, $0 \le \alpha < 1$ and $\mu > 0$. Hence, $\psi(t) = t^p$ is analytic in the interval $[\mu, \lambda]$ (or, $[\lambda, \mu]$ depending on whether $\mu < \lambda$ or $\lambda < \mu$).

Let $\vartheta$ abbreviate $\vartheta(\psi(t) = t^p,  \lambda, k, \{X_l\}_{l=1}^k) $. Note that for the function $\psi(t) = t^p$, $\gamma_k(w) = \left . \frac{1}{k!}\left( \frac{d^k}{dt^k} t^p \right)\right\vert_{t = w}  = \binom{p}{k}w^{p-k}$.
Applying
  Lemma~\ref{lem:vvtheta},
  there  exists  $\lambda' \in (\lambda, \mu)$ such that,
\begin{align*} 
&\card{ \expect{\vartheta} - \mu^p} = \card{ \gamma_{k+1}(\lambda')} \abs{\mu-\lambda}^{k+1} =  \Card{\binom{p}{k+1}}\card{ \lambda'^{p-k-1}}
\abs{\mu-\lambda}^{k+1}\notag  \\
& \le
\left( \frac{ p}{k+1} \right)^{\lfloor p \rfloor + 1}  \mu^{p-k-1} \left(1 - \alpha
 \right)^{p-k-1}(\alpha \mu)^{k+1}, ~~\text{ since, $k+1 > p$ and by Fact~\ref{fact:1}}  \notag \\
 & = \left( \frac{ p}{k+1} \right)^{\lfloor p \rfloor + 1}  \left( \frac{ \alpha}{1-\alpha} \right)^{k+1} (1-\alpha)^p \mu^p \enspace .
\end{align*}
In particular, if $p$ is integral, then, $ \binom{p}{k+1} = 0$ and $\expect{\vartheta} = \mu^p$. \hfill
\end{proof}

\begin{proof} [Proof of Corollary ~\ref{cor:fptpvar}.]
For $\psi(t) = t^p$, $\gamma_v(\lambda) = \binom{p}{v} \lambda^{p-v}$. We also have from the assumptions that  $\eta^2 = (\mu-\lambda)^2 + \sigma^2 \le 2(\frac{\lambda}{25p})^2$, or, $\frac{\eta}{\lambda} \le \frac{\sqrt{2}}{25}$.

 By
Lemma~\ref{lem:vvtheta}, part (2),
\begin{align} \label{eq:vvtheta1}
\variance{\vartheta}  \le \left( \sum_{v=1}^k \Card{\binom{p}{v}}\lambda^{p-v}
\eta^v\right)^2  =
\lambda^{2p-2} \eta^2  \left(\sum_{v=1}^{k}
\Card{\binom{p}{v}}\biggl(\frac{\eta}{\lambda}\biggr)^{v-1}\right)^2
 \end{align}
The ratio of the $(v+1)$st term in the  summation in the \emph{RHS}
 to the $v$th term, for $1 \le v \le k-1$, is $$\biggl\lvert \frac{p-v}{v+1}\biggr\rvert \cdot
\frac{\eta}{\lambda} \le \frac{(p-1) \sqrt{2}}{2 (25p)} < \frac{1}{25\sqrt{2}}$$
 Substituting in Eqn.~\eqref{eq:vvtheta1} for $\variance{\vartheta}$ and using $\lambda\le \mu(1+\frac{1}{25p}) \le  e^{1/(25p)} \mu$, we have,
\begin{align*}
\variance{\vartheta} &  \le  \lambda^{2p-2} \eta^2 p^2 \left(\sum_{v=1}^{k} (25 \sqrt{2})^{-(v-1)}\right)^2     \le (1.08)p^2 \mu^{2p-2} \eta^2 \enspace . \hfill ~~~~
\end{align*}

\end{proof}

\section{Proofs for Averaged Taylor Polynomial Estimator}

\begin{proof}[Proof of Corollary~\ref{cor:gv}.]
Choosing $q=8$ and $\epsilon = 3/4$ in Theorem~\ref{thm:gv} gives
a code $Y \subset \{0,1\}^{8k}$ of binary vectors with exactly $k$
1's and minimum distance $3k/{2}$. So, $H_q(\epsilon) =
0.9722648\ldots$  and  hence, by Theorem~\ref{thm:gv}, $\log \abs{Y}
> (1-H_q(\epsilon))k \log 8$ or, $\abs{Y} > 2^{
3(1-H_q(\epsilon))k }> 2^{0.08k}$.
\end{proof}

Recall that $Y \subset \{0,1\}^{s}$ where,  $s=8k$,  is a code such that every $y \in Y$ has exactly $k$ 1's, and the minimum Hamming distance between any pair of codewords in $Y$ is at least $3k/2$. Equivalently, $y$ can be written as an ordered sequence $(y_1, y_2, \ldots, y_k)$ where, $1\le y_1 < y_2 < \ldots < y_k \le s$ are the coordinates of the position of 1's in the $s$-dimensional binary vector $y$.
For example, let  $s = 4$ and $k = 2$---then  the vector $(1,0,1,0)$ is written as the $2$-dimensional  ordered sequence $(1,3)$.
We will say that $u \in y$ if $u$ is one of the $y_i$'s in the ordered
sequence notation. This notation views the sequence $(1,3)$  above as a set $\{1,3\}$.

 Given codewords $y,y' \in Y$, $y \cap y'$ denotes  the set of indices that
 are 1 in  both $y$ and $y'$.
Let $\pi: [k] \rightarrow [k]$ be a permutation and $y =  (y_1, \ldots, y_k)$
be an ordered sequence of size $k$. Then, $\pi(y)$ denotes the sequence $(y_{\pi(1)}, y_{\pi(2)}, \ldots, y_{\pi(k)})$.    The prefix-segment of $\pi(y)$ consisting of its first $v$ entries   is  $ (y_{\pi(1)}, \ldots, y_{\pi(v)})$.
Let $y,y'$ be   ordered sequences    of length $k$ and  let
$\pi, \pi'$  be  permutations mapping $[k] \rightarrow [k]$.
   Let $Q^{vv'}_{yy'\pi\pi'}$ denote the set of  common  indices shared
   among the  first $v$ positions of $\pi(y)$ with the first $v'$ positions
   of  $\pi'(y')$, that is,
 $$ Q^{vv'}_{yy'\pi\pi'} =\{y_{\pi(1)}, y_{\pi(2)}, \ldots, y_{\pi(v)}\} \cap \{y'_{\pi'(1)}, y'_{\pi'(2)}, \ldots, y'_{\pi'(v')} \}  \enspace . $$
 Let  $q^{vv'}_{yy'\pi\pi'} $ denote  the  number of common  indices, that is,
$$
q^{vv'}_{yy'\pi\pi'} =\card{Q^{vv'}_{yy'\pi\pi'}} \enspace . $$
 Given distinct codewords $y,y' \in Y $ and  permutations $\pi$ and $\pi'$, $Q^{vv'}_{yy'\pi\pi'}$ is abbreviated as $Q^{vv'}$ and $q^{vv'}_{yy'\pi\pi'}$ as $q^{vv'}$.

In the remainder of this  section, we will assume that $Y$ is a code of $s=8k$-dimensional boolean vectors of size exponential in $k$, as given by Corollary~\ref{cor:gv}. The  function for the Taylor polynomial estimator will be $\psi(t) = t^p$. Let $\vartheta_y$ abbreviate the estimator $\vartheta_y \equiv \vartheta(\psi(t) = t^p, \lambda, k,s,y, \pi_y, \{X_l\}_{l=1}^s)$, where, $\lambda$ is some parameter.

\subsection{Covariance of $\vartheta_y, \vartheta_{y'}$}

\begin{lemma} \label{lem:bvthetabasic}
Let $q=8$, $k>1$ and   $s = qk$.  Let $Y$ be a code satisfying Corollary~\ref{cor:gv}. Let $\{X_1, \ldots, X_s\}$ be a family of independent  random variables, each having  expectation $\mu >0$
and variance  bounded above by $\sigma^2$.
Let $\lambda$ be an estimate for $\mu$ satisfying $\abs{\lambda-\mu} \le \min(\mu,\lambda)/(25p)$ and let $\sigma < \min(\mu, \lambda)/(25p)$. Let $\eta=
((\lambda-\mu)^2 + \sigma^2)^{1/2} >0$.
 Let $\bvtheta $ denote $   \bvtheta(t^p, \lambda, k,s, Y,  \{\pi_y\}_{y\in Y}, \{X_l\}_{l=1}^s) $ and let $\vartheta_y$ denote the estimator $\vartheta_y = \vartheta(t^p, \lambda, k,s,y, \pi_y, \{X_l\}_{l=1}^s)$. Then,
for $y,y' \in Y$ and $y \ne y'$,
 \begin{align*}\covariance{\vartheta_y}{\vartheta_{y'}}
 &= \begin{dcases}
  \sum_{v,v' = 1}^k \gamma_v(\lambda) \gamma_{v'}(\lambda)
(\mu-\lambda)^{v+v'} \expectsub{\pi_y,\pi_{y'}}{ \left(
\frac{\eta^2}{(\mu-\lambda)^2} \right) ^{q^{vv'}}  -  1 } & \text{ if $\mu \ne \lambda$,} \\
 = \sum_{v=1}^k
\gamma^2_{v}(\lambda) \eta^{2v} \probsub{\pi_y,\pi_{y'}}{q^{vv}_{yy'\pi_y\pi_{y'}} = v}& \text{ if $\mu = \lambda$.}
\end{dcases}
 \end{align*}
 \end{lemma}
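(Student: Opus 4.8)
The plan is to expand both estimators into polynomials in the centered variables and then use bilinearity of covariance. For a codeword $y$ and its permutation $\pi_y$, write $P^y_v = \prod_{l=1}^v \bigl(X_{y_{\pi_y(l)}} - \lambda\bigr)$, so that $P^y_0 = 1$ and $\vartheta_y = \sum_{v=0}^k \gamma_v(\lambda) P^y_v$, and analogously $P^{y'}_{v'}$ and $\vartheta_{y'}$. Then $\covariance{\vartheta_y}{\vartheta_{y'}} = \sum_{v,v'=0}^k \gamma_v(\lambda)\gamma_{v'}(\lambda)\,\covariance{P^y_v}{P^{y'}_{v'}}$; the terms with $v=0$ or $v'=0$ vanish because $P^y_0$ and $P^{y'}_0$ are constants, so the sum is effectively over $v,v'\in[k]$, matching the statement. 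Everything thus reduces to computing $\covariance{P^y_v}{P^{y'}_{v'}}$.

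First I would condition on the permutations $\pi_y,\pi_{y'}$, which are part of the algorithm's internal randomness and hence independent of the family $\{X_l\}$. Conditioned on them, the multiset of indices appearing in the product $P^y_v P^{y'}_{v'}$ consists of each index of the overlap set $Q^{vv'}=Q^{vv'}_{yy'\pi_y\pi_{y'}}$ exactly twice and every other participating index exactly once — here one uses that the positions of the $1$'s within a single codeword are distinct, so permuting them keeps them distinct. By independence of the $X_l$, the conditional expectation therefore factors as $\expect{P^y_v P^{y'}_{v'}\mid \pi_y,\pi_{y'}} = \eta^{2q^{vv'}}(\mu-\lambda)^{v+v'-2q^{vv'}}$, using that $\expect{(X_j-\lambda)^2}=\sigma^2+(\mu-\lambda)^2=\eta^2$. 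Taking the outer expectation over $\pi_y,\pi_{y'}$ and subtracting $\expect{P^y_v}\expect{P^{y'}_{v'}}=(\mu-\lambda)^{v+v'}$ gives, when $\mu\ne\lambda$, $\covariance{P^y_v}{P^{y'}_{v'}} = (\mu-\lambda)^{v+v'}\bigl(\expectsub{\pi_y,\pi_{y'}}{(\eta^2/(\mu-\lambda)^2)^{q^{vv'}}} - 1\bigr)$; substituting into the double sum yields the first case.

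For the degenerate case $\mu=\lambda$, I would note that $\expect{P^y_v}=0$ for every $v\ge 1$, so $\covariance{P^y_v}{P^{y'}_{v'}} = \expect{P^y_v P^{y'}_{v'}}$, and the conditional expectation computed above becomes $\eta^{2q^{vv'}}\cdot 0^{\,v+v'-2q^{vv'}}$. This vanishes unless $v+v' = 2q^{vv'}$; since $q^{vv'}\le\min(v,v')$, that constraint forces $v=v'=q^{vv}$, and then the conditional value is $\eta^{2v}\mathbf{1}[q^{vv}=v]$. Averaging over the permutations collapses the double sum to $\sum_{v=1}^k \gamma_v^2(\lambda)\eta^{2v}\probsub{\pi_y,\pi_{y'}}{q^{vv}_{yy'\pi_y\pi_{y'}} = v}$, which is the second case. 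I do not expect a genuine obstacle: the argument is essentially bookkeeping, and the only points that need care are (i) the exact overlap structure of the index multiset, which justifies the factorization of the conditional expectation, and (ii) the collapse of the double sum to its diagonal in the $\mu=\lambda$ case, via the inequality $q^{vv'}\le\min(v,v')$.
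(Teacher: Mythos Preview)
Your proposal is correct and follows essentially the same route as the paper: expand each $\vartheta$ into $\sum_v \gamma_v(\lambda)P_v$, condition on the permutations, use independence of the $X_l$'s together with distinctness of the indices within a codeword to factor the product expectation as $\eta^{2q^{vv'}}(\mu-\lambda)^{v+v'-2q^{vv'}}$, drop the $v=0$ and $v'=0$ terms, and then split into the two cases exactly as you describe. The only cosmetic difference is that the paper computes $\expect{\vartheta_y\vartheta_{y'}}$ and $\expect{\vartheta_y}\expect{\vartheta_{y'}}$ separately and subtracts, whereas you invoke bilinearity of covariance up front; the bookkeeping is otherwise identical.
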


\begin{proof} [Proof of Lemma~\ref{lem:bvthetabasic}] By definition,
$
\bvtheta = \frac{1}{\abs{Y}} \sum_{y\in Y} \vartheta_y$.
\eat{
and therefore,
\begin{align*}
\expect{ \bvtheta^2}  & = \frac{1}{\abs{Y}^2} \sum_{y \in Y
}\expect{\vartheta_y^2} \\ & + \frac{1}{\abs{Y}^2}\sum_{y,y' \in Y,
y\ne y'} \expect{ \vartheta_y\vartheta_{y'}}  \enspace .
\end{align*}
Since for $y,y' \in Y$,  we have $\expect{\vartheta_y} = \expect{\vartheta_{y'}} = \expect{\bvtheta}$, hence,
\begin{align*}
\left(\expect{\bvtheta}\right)^2 =  \frac{1}{\abs{Y}^2} \left(\sum_{y \in Y} \expect{ \vartheta_y}\right)^2 = \frac{1}{\abs{Y}^2}  \sum_{y \in Y} (\expect{\vartheta_y})^2 + \frac{1}{\abs{Y}^2}  \sum_{y,y' \in Y, y\ne y'} \expect{\vartheta_y}\expect{\vartheta_{y'}} \enspace .
\end{align*}
Subtracting, we obtain
\begin{align} \label{eq:avtp0}
\variance{\bvtheta} & = \expect{\bvtheta^2} - (\expect{\bvtheta})^2 \notag \\
 &= \frac{1}{\abs{Y}^2} \sum_{y \in Y} \variance{\vartheta_y} + \frac{1}{ \abs{Y}^2} \sum_{y,y' \in Y, y \ne y'}\left( \expect{ \vartheta_y\vartheta_{y'}} - \expect{\vartheta_y} \expect{\vartheta_{y'}}\right)
\end{align}
This proves part (i) of the Lemma.
}
Fix $y,y' \in Y$, with $y \ne y'$  and let  $\pi=\pi_y$ and $\pi'= \pi_{y'}$ abbreviate  the random
permutations corresponding to $y$ and $y'$. Let $q_{yy' \pi_y \pi_{y'}}^{vv'}  $ be denoted by
$q^{vv'}$. Now,
$$\expect{\vartheta_y} \expect{\vartheta_{y'}}
= \left(\sum_{v=0}^k \gamma_v(\lambda)(\mu- \lambda)^v\right)^2  = \sum_{v=0}^k \sum_{v'=0}^k \gamma_v(\lambda)\gamma_{v'}(\lambda) (\mu-\lambda)^{v+v'} \enspace . $$
Further, from the definition of $\vartheta_y$ and $\vartheta_{y'}$, and by linearity of expectation,
\begin{align*}
\expect{\vartheta_y \vartheta_{y'}} & = \expect{ \left(\sum_{v=0}^k \gamma_v(\lambda) \prod_{l=1}^v (X_{y_{\pi(l)}} - \lambda) \right)
\left( \sum_{v'=0}^k \gamma_{v'}(\lambda) \prod_{m=1}^{v'} (X_{y'_{\pi'(m)}} - \lambda) \right)} \\
& = \sum_{v,v'=0}^k \gamma_v(\lambda) \gamma_{v'}(\lambda)
\expect{ \prod_{l=1}^v (X_{y_{\pi(l)}} - \lambda) \prod_{m=1}^{v'} (X_{y'_{\pi'(m)}} -  \lambda)}
\end{align*}
Fix  $\pi,\pi'$. There are $q^{vv'} = q^{vv'}_{yy'\pi_y\pi_{y'}}$ indices
that are common among the first $v$ positions of $\pi_y(y)$ and the first $v'$ positions of $\pi_{y'}(y')$. This set of common  indices is given by $Q^{vv'} =  Q^{vv'}_{yy'\pi_y\pi_{y'}} = \{y_{\pi(1)}, \ldots, y_{\pi(v)}\} \cap \{y'_{\pi'(1)}, \ldots, y'_{\pi'(v)}\}$. Also, let $U^{vv'} = U^{vv'}_{yy'\pi_y\pi_{y'}}$ denote the union $\{y_{\pi(1)}, \ldots, y_{\pi(v)}\} \cup \{y'_{\pi'(1)}, \ldots, y'_{\pi'(v)}\}$.
Hence we have,
\begin{align*}\prod_{l=1}^v(X_{y_{\pi(l)}} - \lambda)
\prod_{m=1}^{v'} (X_{y'_{\pi'(m)}}- \lambda) = \prod_{i \in Q^{vv'}} (X_i - \lambda)^2  \prod_{i \in U^{vv'}\setminus Q^{vv'}} (X_i - \lambda) \enspace .
\end{align*}
Taking expectation,
\begin{align*}
&\expect{\prod_{l=1}^v(X_{y_{\pi(l)}} - \lambda)
\prod_{m=1}^{v'} (X_{y'_{\pi'(m)}}- \lambda) } \\
& = \expectsub{\pi_y, \pi_{y'}}{\expectsub{X_1, \ldots, X_s}{\prod_{l=1}^v(X_{y_{\pi(l)}} - \lambda)
\prod_{m=1}^{v'} (X_{y'_{\pi'(m)}}- \lambda) \mid \pi_y, \pi_{y'} }}\\
 &= \expectsub{\pi_y, \pi_{y'}}{\expectsub{X_1, \ldots, X_s}{\prod_{i \in Q^{vv'}} (X_i - \lambda)^2  \prod_{i \in U^{vv'} \setminus Q^{vv'}} (X_i - \lambda) \bigr\rvert \pi_y, \pi_{y'} }} \\
& = \expectsub{\pi_y, \pi_{y'}}{ \prod_{i \in Q^{vv'}}\expectb{(X_i - \lambda)^2} \prod_{i \in U^{vv'} \setminus Q^{vv'}} \expectb{X_i}{(X_i - \lambda)} \bigr\rvert \pi_y, \pi_{y'} } \notag \\
& = \expectsub{\pi_y,\pi_{y'}}{\eta^{2q^{vv'}}(\mu-\lambda)^{v+v'-2q^{vv'}}},
\end{align*}
by independence of the $X_i$'s for $i \in [s]$.

Therefore,
\begin{align} \label{eq:vbvt:10}
& \covariance{\vartheta_y}{\vartheta_{y'}} \notag \\
 &= \expect{\vartheta_y \vartheta_{y'}}  - \expectb{\vartheta_y}\expectb{\vartheta_{y'}} \notag \\
 &= \sum_{v,v' = 0}^k \gamma_v(\lambda)
\gamma_{v'}(\lambda)\left(\expect{\prod_{l=1}^v(X_{y_{\pi(l)}} - \lambda)
\prod_{m=1}^{v'} (X_{y'_{\pi'(m)}}- \lambda)} -    (\mu-\lambda)^{v+v'} \right) \notag \\
&=\sum_{v,v' = 0}^k \gamma_v(\lambda)
\gamma_{v'}(\lambda)  \left(\expectsub{\pi_y,\pi_{y'}}{\eta^{2q^{vv'}}
(\mu-\lambda)^{v+v'-2q^{vv'}}} -    (\mu-\lambda)^{v+v'} \right)\notag \\
&=\sum_{v,v' = 1}^k \gamma_v(\lambda)
\gamma_{v'}(\lambda)  \left(\expectsub{\pi_y,\pi_{y'}}{\eta^{2q^{vv'}}
(\mu-\lambda)^{v+v'-2q^{vv'}}} -    (\mu-\lambda)^{v+v'} \right)
\end{align}
where the last step follows by noting that
if  $v = 0$ or  $v' = 0$, then  $q^{vv'} = 0$ and so, $\eta^{2q^{vv'}}
(\mu-\lambda)^{v+v'-2q^{vv'}} = (\mu-\lambda)^{v+v'}$. Hence
the summation indices $v,v'$ in ~\eqref{eq:vbvt:10} may start from
1 instead of 0.

\emph{Case 1: $\mu=\lambda$.} If $v \ne v'$, then, $2q^{vv'} \le 2\min(v,v') < v + v'$, Hence, the term $(\mu-\lambda)^{v+v'-2q^{vv'}} = 0$. In this case, Eqn.~\eqref{eq:vbvt:10} becomes
\begin{align} \label{eq:vbvt:11}
 \expect{\vartheta_y \vartheta_{y'}}  -
 \expectb{\vartheta_y}\expectb{\vartheta_{y'}} & = \sum_{v=1}^{\abs{y \cap y'}}
\gamma^2_{v}(\lambda) \eta^{2v} \probsub{\pi_y,\pi'_{y'}}{q^{vv}_{yy'\pi\pi'} = v}
\end{align}

\emph{Case 2: $\mu \ne \lambda$.} Then, Eqn.~\eqref{eq:vbvt:10} can be written as

\begin{align} \label{eq:vbvt:1a}
 &\expect{\vartheta_y \vartheta_{y'}}   -\expectb{\vartheta_y}\expectb{\vartheta_{y'}} \notag \\
& = \sum_{v,v' = 1}^k \gamma_v(\lambda) \gamma_{v'}(\lambda)
(\mu-\lambda)^{v+v'} \left(\expectsubbb{\pi_y,\pi_{y'}}{ \biggl(
\frac{\eta^2}{(\mu-\lambda)^2} \biggr) ^{q^{vv'}}}  -  1 \right) \enspace
.  
\end{align}
This proves the Lemma.
\end{proof}

Let $Y$ be a code satisfying the properties of Corollary~\ref{cor:gv} and let $y,y' \in Y$ and distinct such that $t = \abs{y \cap y'}$. Let $\pi_y, \pi_{y'}$ denote randomly  and independently chosen permutations from $[k] \rightarrow [k]$.  Define
\begin{align}
 P_{yy'}& = \lambda^{2p} \sum_{v,v'=1}^k \binom{p}{v}  \binom{p}{v'} \left(\frac{\mu-\lambda}{\lambda}\right)^{v+v'} \sum_{r=1}^{t} \left( \frac{\eta^2}{(\mu-\lambda)^2} \right)^r \probsub{\pi_y,\pi_{y'}}{q^{vv'} = r} \label{eq:Srta}\\
 Q_{yy'}& = \lambda^{2p}\sum_{\substack{1\le v,v' \le k}}\binom{p}{v}\binom{p}{v'} \left(\frac{\mu-\lambda}{\lambda}\right)^{v+v'} \left(\probsub{\pi_y,\pi_{y'}}{q^{vv'} = 0}-1\right)  \label{eq:Srtb} \end{align}
 \begin{corollary} \label{lem:vbvcross} Assume the premises and notation of Lemma~\ref{lem:bvthetabasic} and let $\mu \ne \lambda$. For $y,y \in Y$ and $y \ne y'$ such that $t = \abs{y \cap y'}$,  let $\pi_y, \pi_{y'}$ denote randomly  and independently chosen permutations from $[k] \rightarrow [k]$. Then,
$$\covariance{\vartheta_y}{\vartheta_{y'}}   \le  P_{yy'}+ Q_{yy'} \enspace . $$
\end{corollary}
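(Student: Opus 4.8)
The plan is to derive the statement directly from Lemma~\ref{lem:bvthetabasic}: that lemma already gives a closed form for $\covariance{\vartheta_y}{\vartheta_{y'}}$ in the case $\mu\ne\lambda$, and it only remains to rewrite that form in terms of the quantities $P_{yy'}$ and $Q_{yy'}$ of~\eqref{eq:Srta} and~\eqref{eq:Srtb}. First I would substitute the explicit coefficients $\gamma_v(\lambda)=\binom{p}{v}\lambda^{p-v}$ for $\psi(t)=t^p$ and collect powers of $\lambda$ via $\lambda^{p-v}\lambda^{p-v'}(\mu-\lambda)^{v+v'}=\lambda^{2p}\bigl(\tfrac{\mu-\lambda}{\lambda}\bigr)^{v+v'}$, so that Lemma~\ref{lem:bvthetabasic} reads
\[
\covariance{\vartheta_y}{\vartheta_{y'}}
  =\lambda^{2p}\sum_{v,v'=1}^k\binom{p}{v}\binom{p}{v'}\Bigl(\frac{\mu-\lambda}{\lambda}\Bigr)^{v+v'}
   \Bigl(\expectsub{\pi_y,\pi_{y'}}{\Bigl(\tfrac{\eta^2}{(\mu-\lambda)^2}\Bigr)^{q^{vv'}}}-1\Bigr).
\]

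The key step is to expand the inner expectation over the random permutations according to the value of the overlap count $q^{vv'}=q^{vv'}_{yy'\pi_y\pi_{y'}}$. The indices counted by $q^{vv'}$ are common to the first $v$ entries of $\pi_y(y)$ and the first $v'$ entries of $\pi_{y'}(y')$, hence they lie in $y\cap y'$; so $0\le q^{vv'}\le t=\abs{y\cap y'}$ (and also $q^{vv'}\le\min(v,v')$, which would only shorten the sum). Consequently $\expectsub{\pi_y,\pi_{y'}}{(\eta^2/(\mu-\lambda)^2)^{q^{vv'}}}=\sum_{r=0}^{t}(\eta^2/(\mu-\lambda)^2)^r\,\probsub{\pi_y,\pi_{y'}}{q^{vv'}=r}$ is a finite sum, and peeling off the $r=0$ term rewrites the bracketed factor above as $\sum_{r=1}^{t}(\eta^2/(\mu-\lambda)^2)^r\,\probsub{\pi_y,\pi_{y'}}{q^{vv'}=r}+\bigl(\probsub{\pi_y,\pi_{y'}}{q^{vv'}=0}-1\bigr)$.

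Substituting this decomposition back into the display and distributing the $\sum_{v,v'}$ over the two pieces, the part carrying the $\sum_{r=1}^{t}$ summand is, by definition, exactly $P_{yy'}$, and the part carrying the $\bigl(\probsub{}{q^{vv'}=0}-1\bigr)$ summand is exactly $Q_{yy'}$. Thus the argument in fact produces the identity $\covariance{\vartheta_y}{\vartheta_{y'}}=P_{yy'}+Q_{yy'}$, which a fortiori gives the claimed inequality. I do not expect a genuine obstacle here: the computation is a finite algebraic rearrangement that is insensitive to the signs of the (possibly negative) factors $\binom{p}{v}\bigl(\tfrac{\mu-\lambda}{\lambda}\bigr)^{v}$, and the only point requiring care is the bookkeeping observation $q^{vv'}\le t$, which legitimizes truncating the series defining $P_{yy'}$ at $r=t$. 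The genuinely hard downstream work—bounding $P_{yy'}$ and $Q_{yy'}$ using the minimum-distance property of the code $Y$—is left to the subsequent lemmas.
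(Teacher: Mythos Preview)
Your proposal is correct and follows exactly the paper's approach: the paper's proof is the one-line remark that substituting $\gamma_v(\lambda)=\binom{p}{v}\lambda^{p-v}$ into the $\mu\ne\lambda$ case of Lemma~\ref{lem:bvthetabasic} yields the result, and your write-up simply spells out that substitution together with the split of the expectation over the values $r=0$ and $r\ge 1$ of $q^{vv'}$. Your observation that one actually obtains the equality $\covariance{\vartheta_y}{\vartheta_{y'}}=P_{yy'}+Q_{yy'}$ (not merely $\le$) is also correct.
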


\begin{proof}
Since $\psi(x) = x^p$,
$\gamma_v(\lambda) = \binom{p}{v} \lambda^{p-v}$.  The Corollary follows by substituting this  into  Lemma~\ref{lem:bvthetabasic}.
\end{proof}

\subsection{Probability of overlap of prefixes of  $y$ and $y'$ after random ordering}

\begin{lemma} \label{lem:bvapower} Let  $Y$ be a code satisfying the properties of Corollary~\ref{cor:gv}. Let $\{\pi_y\}_{y \in Y}$ be a family of random and independently chosen permutations from $[k] \rightarrow [k]$. For distinct $y,y' \in Y$,
\begin{align}\label{eq:probqvvprime}
 \probsubb{\pi_y,\pi_{y'}} {q^{vv'} = r}
 = \frac{1}{\binom{k}{v}\binom{k}{v'}} \sum_{s=0}^{t-r} \binom{t}{r} \binom{t-r}{s} \binom{k-t}{v-(r+s)} \binom{k-(r+s)}{v'-r}
 \end{align}
\end{lemma}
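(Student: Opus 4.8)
The plan is to reduce the probability to a purely combinatorial count of pairs of subsets. First I would observe that since $\pi_y$ is a uniformly random permutation of $[k]$, the prefix $A := \{y_{\pi_y(1)}, \ldots, y_{\pi_y(v)}\}$ is a uniformly random $v$-element subset of the $k$-set $y$; likewise $B := \{y'_{\pi_{y'}(1)}, \ldots, y'_{\pi_{y'}(v')}\}$ is a uniformly random $v'$-subset of $y'$, and $A$ and $B$ are independent because $\pi_y$ and $\pi_{y'}$ are chosen independently. Since $q^{vv'} = \abs{A \cap B}$, this gives
$$\probsubb{\pi_y,\pi_{y'}}{q^{vv'} = r} = \frac{N_r}{\binom{k}{v}\binom{k}{v'}},$$
where $N_r$ is the number of pairs $(A,B)$ with $A \subseteq y$, $\abs{A} = v$, $B \subseteq y'$, $\abs{B} = v'$ and $\abs{A \cap B} = r$. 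It then remains to show $N_r = \sum_{s=0}^{t-r}\binom{t}{r}\binom{t-r}{s}\binom{k-t}{v-(r+s)}\binom{k-(r+s)}{v'-r}$.

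Next I would set $I := y \cap y'$, so $\abs{I} = t$, and note that $A \setminus I \subseteq y \setminus y'$ is disjoint from $B$ while $B \setminus I \subseteq y' \setminus y$ is disjoint from $A$, whence $A \cap B \subseteq I$. The key step is a bijective decomposition of the pairs counted by $N_r$. Given such a pair, set $s := \abs{A \cap I} - r \ge 0$; I would show that $(A,B)$ is recovered, uniquely, from the following independent choices: (i) the $r$-set $A \cap B \subseteq I$, contributing $\binom{t}{r}$; (ii) the $s$-set $(A \cap I)\setminus B$, drawn from the remaining $t-r$ elements of $I$, contributing $\binom{t-r}{s}$; (iii) the $(v-(r+s))$-set $A \setminus I$ of elements of $y$ outside $I$, contributing $\binom{k-t}{v-(r+s)}$; and (iv) the set $B$ minus its $r$ forced common elements, a $(v'-r)$-subset of $y'$ that must avoid the $r+s$ elements already committed to $A \cap I$, contributing $\binom{k-(r+s)}{v'-r}$. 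Summing over $s$ yields the stated formula, with the standard convention that binomial coefficients with out-of-range lower index vanish (which also renders the summation bound $t-r$ harmless).

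The main obstacle is pinning down step (iv) correctly: the remainder of $B$ ranges over $y'$ minus the $r+s$ committed elements, a set of size $k-(r+s)$ — not over $y' \setminus I$ nor over the $t-r-s$ leftover elements of $I$ — because $B$ may legitimately contain further elements of $I$ that miss $A$, and those contribute nothing to $A \cap B$. The accompanying checks are that $s = \abs{A\cap I} - r$ is determined by $(A,B)$ (injectivity of the decomposition) and that any tuple of choices (i)--(iv) reassembles into a valid pair with $\abs{A \cap B} = r$ and $\abs{A \cap I} = r+s$ (surjectivity); both are immediate once the four blocks and their roles are fixed. The rest is routine bookkeeping.
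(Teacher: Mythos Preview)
Your proposal is correct and follows essentially the same approach as the paper's own proof: both parametrize by the extra parameter $s$ (the number of elements of $A$ lying in the common part $I = y\cap y'$ beyond the $r$ forced ones) and count the four pieces in the same way, with the paper's $\binom{t}{r+s}\binom{r+s}{r}$ being rewritten as your $\binom{t}{r}\binom{t-r}{s}$. The only cosmetic difference is that the paper keeps the permutations and the $v!\,v'!$ arrangement factors explicit and then cancels them, whereas you pass immediately to uniform random $v$- and $v'$-subsets; your step~(iv) and the paper's ``$k-t+(t-(r+s))=k-(r+s)$ elements to choose from'' are literally the same observation.
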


\begin{proof}
Fix $y,y' \in Y$ and distinct and let $t = t(y,y') = \abs{ y \cap y'}$.
By notation, $\pi_y(y)[v]$ is  the $v$-sequence $\tau = (y_{\pi_y(1)}, \ldots, y_{\pi_y(v)})$ and  $\pi_{y'}(y')[v]$ is  the $v'$-sequence $\nu= (y'_{\pi_y^{\prime}(1)}, \ldots, y'_{\pi_y^{\prime}(v)})$. The permutations $\pi_y$ and $\pi_{y'}$ are each uniformly randomly and independently chosen from the space of all permutations $[k] \rightarrow [k]$ (i.e., $S_k$).

 The problem is to count the number of ways in which the  $v$ positions in $\tau$ and the $v'$ positions in $\nu$ can be filled, using the elements of $y$ and $y'$ under permutations $\pi_y$ and $\pi_{y'}$ such that $\tau \cap \nu$ has exactly $r$ elements. Since $\pi_y$ and $\pi_{y'}$ are uniformly  random and independent permutations, the sample space has size $\powf{k}{v} \cdot \powf{k}{v'} = \binom{k}{v} \binom{k}{v'} v! v'!$. There are $t$ elements in common among $y$ and $y'$ and we wish for $\tau$ and $\nu$ to have $r$ elements in common. Suppose $\tau$ has $r+s$ elements from the $t$ elements in common, where, $s$ ranges from $0$ to $\max(t-r,v-r)$. These are selected in $\binom{t}{r+s}$ ways. Having chosen these elements, we select $r$ elements in $\binom{r+s}{r}$ ways--these elements are included in $\nu$ as well. We have now filled $r+s$ positions of $\tau$ and $r$ positions of $s$. The remaining $v-(r+s)$ positions may be filled out of the $k-t$ elements of $y$ that are not common with $y'$. This is done in $\binom{k-t}{v-(r+s)}$ ways. There are $v'-r$ positions remaining to be filled in $\nu$. There are $k-t + (t-(r+s))$ elements to choose from, which can be done in $ \binom{k-(r+s)}{v'-r}$ ways. The $v$ elements chosen for $\tau$ and the $v'$ elements chosen for $\nu$ can be rearranged in $v!$ and $v'!$ ways. Thus,
 \begin{align}\label{eq:probqvvprime}
 \probsubb{\pi_y,\pi_{y'}} {q^{vv'} = r} & =
 \frac{v! v'!}{\binom{k}{v}\binom{k}{v'} v! v'!} \sum_{s=0}^{t-r} \binom{t}{r+s} \binom{r+s}{r} \binom{k-t}{v-(r+s)} \binom{k-(r+s)}{v'-r} \notag \\
 & = \frac{1}{\binom{k}{v}\binom{k}{v'}} \sum_{s=0}^{t-r} \binom{t}{r} \binom{t-r}{s} \binom{k-t}{v-(r+s)} \binom{k-(r+s)}{v'-r}
 \end{align}
 which proves the lemma. \hfill

 \eat{
We approximate Eqn.~\eqref{eq:probqvvprime} as follows.
\begin{align} \label{eq:approxprob}
 \probsub{\pi_y,\pi_{y'}}{q^{vv'} = r} \eat{ & = \binom{t}{r} \frac{ v_{(r)}}{k_{(r)}} \cdot \frac{ (k-t)_{(v-r)}}{(k-r)_{(v-r)}}  \cdot \frac{ v'_{(r)}}{k_{(r)}} \cdot \frac{ (k-t)_{(v'-r)}}{(k-r)_{(v'-r)}} \notag  \\}
 & \le \binom{t}{r} \left( \frac{v}{k} \right)^r \left( 1- \frac{t-r}{k-r} \right)^{v-r} \left( \frac{v'}{k} \right)^r \left( 1 - \frac{t-r}{k-r} \right)^{v'-r}
 \end{align}
for $\max(0,t-k + \max(v,v')) \le r \le \min(v,v',t)$.

}
\end{proof}

\subsection{Estimating $Q_{yy'}$}

\begin{lemma} \label{lem:Q} Assume the  premises and notation of Lemma~\ref{lem:bvthetabasic} and Corollary~\ref{lem:vbvcross}.  Let $p \ge  2$ and let $y,y' \in Y$  and distinct. If $\mu \ne\lambda$, then   $Q_{yy'} <0$.
\end{lemma}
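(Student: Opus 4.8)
The plan is to rewrite $Q_{yy'}$ so that its sign becomes manifest. Put $w=(\mu-\lambda)/\lambda$ and $c_v=\binom{p}{v}w^v$; the hypothesis $\abs{\lambda-\mu}\le\min(\mu,\lambda)/(25p)$ together with $\mu>0$ forces $\lambda>0$, so $\lambda^{2p}>0$, and the assumption $\mu\ne\lambda$ is exactly $w\ne0$. Since $q^{vv'}\ge0$, we have $\prob{q^{vv'}=0}-1=-\prob{q^{vv'}\ge1}$, so the definition \eqref{eq:Srtb} of $Q_{yy'}$ reads $Q_{yy'}=-\lambda^{2p}\sum_{v,v'=1}^{k}c_vc_{v'}\prob{q^{vv'}\ge1}$. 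Hence it suffices to prove $\sum_{v,v'=1}^{k}c_vc_{v'}\prob{q^{vv'}\ge1}>0$.

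Next I would estimate $\prob{q^{vv'}\ge1}$ by a direct second–moment calculation rather than expanding the exact formula of Lemma~\ref{lem:bvapower}. View the set of the first $v$ entries of $\pi_y(y)$ as a uniformly random $v$-subset $A$ of the $k$-element set $y$, and the set of the first $v'$ entries of $\pi_{y'}(y')$ as an independent uniformly random $v'$-subset $B$ of $y'$; then $q^{vv'}=\abs{A\cap B}$ and $A\cap B\subseteq y\cap y'$. With $t=\abs{y\cap y'}$, independence gives $\E[\abs{A\cap B}]=t(v/k)(v'/k)=tvv'/k^2$ and $\E\binom{\abs{A\cap B}}{2}=\binom{t}{2}\frac{v(v-1)v'(v'-1)}{k^2(k-1)^2}$, so Markov's inequality and the Bonferroni bound $\E[X]-\E\binom{X}{2}\le\prob{X\ge1}\le\E[X]$ yield $\frac{tvv'}{k^2}-\frac{t^2v^2v'^2}{2k^2(k-1)^2}\le\prob{q^{vv'}\ge1}\le\frac{tvv'}{k^2}$. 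Substituting, $\sum_{v,v'}c_vc_{v'}\prob{q^{vv'}\ge1}=\frac{t}{k^2}\bigl(\sum_{v=1}^{k}vc_v\bigr)^2+R$ with $\abs{R}\le\frac{t^2}{2k^2(k-1)^2}\bigl(\sum_{v=1}^{k}v^2\abs{c_v}\bigr)^2$.

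The remaining work is to size the two moment sums, and this is where $\abs{w}\le1/(25p)$ enters. Since $\abs{\binom{p}{v+1}/\binom{p}{v}}=\abs{p-v}/(v+1)\le p/2$ for $v\ge1$ (using $p\ge2$), consecutive terms satisfy $\abs{c_{v+1}}/\abs{c_v}\le1/50$, which dominates the factor $((v+1)/v)^2\le4$; hence $\sum_{v=1}^{k}v^2\abs{c_v}\le p\abs{w}/(1-0.08)\le1.09\,p\abs{w}$. For the main term, the absorption identity gives $\sum_{v=1}^{k}vc_v=pw\sum_{j=0}^{k-1}\binom{p-1}{j}w^j$; comparing the partial sum with $(1+w)^{p-1}=\sum_{j\ge0}\binom{p-1}{j}w^j$ and bounding the tail $\sum_{j\ge k}\abs{\binom{p-1}{j}}\abs{w}^j$ by a geometric series of ratio $1/25$ (using Fact~\ref{fact:1} for $j>p-1$), together with $(1+w)^{p-1}\ge1-(p-1)/(25p)\ge0.96$, gives $\abs{\sum_{v=1}^{k}vc_v}\ge0.95\,p\abs{w}$. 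Finally, the minimum Hamming distance $3k/2$ of $Y$ forces $t\le k/4$ (because $\abs{y\,\triangle\,y'}=2(k-t)\ge3k/2$), so $\frac{0.6\,t}{(k-1)^2}\le\frac{0.15k}{(k-1)^2}<0.9$ for every $k>1$; therefore $\sum_{v,v'}c_vc_{v'}\prob{q^{vv'}\ge1}\ge\frac{tp^2w^2}{k^2}\bigl(0.9-\frac{0.6t}{(k-1)^2}\bigr)>0$, whence $Q_{yy'}<0$. (When $t=0$ the sum is empty and $Q_{yy'}=0$, so the strict inequality tacitly also uses $y\cap y'\ne\emptyset$; this is harmless since only $Q_{yy'}\le0$ is needed for the covariance bound of Corollary~\ref{lem:vbvcross}.)

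The main obstacle is the error control in the last two steps: the crude estimate $\abs{\prob{q^{vv'}\ge1}-tvv'/k^2}\le t^2v^2v'^2/(2k^2(k-1)^2)$ is vacuous when $v,v'$ are close to $k$, so everything rests on the geometric decay of the $c_v$ forcing the moment sums $\sum_v v^j\abs{c_v}$ to stay $O(p\abs{w})$ and $\abs{R}$ to fall below the leading term by the factor $\Theta(t/(k-1)^2)=o(1)$. Extracting the clean lower bound $\abs{\sum_v vc_v}\ge0.95\,p\abs{w}$ — one that does not degrade as $p$ grows — out of $(1+w)^{p-1}\ge0.96$ is the one point where the hypotheses $p\ge2$ and $\abs{w}\le1/(25p)$ are used in an essential way.
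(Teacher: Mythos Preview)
Your argument is correct and takes a different, more streamlined route than the paper's. The paper decomposes $Q_{yy'}=-Q_1+Q_2$ with $Q_1=\lambda^{2p}\bigl(\sum_v\binom{p}{v}\alpha^v\bigr)^2$ and $Q_2=\lambda^{2p}\sum_{v,v'}\binom{p}{v}\binom{p}{v'}\alpha^{v+v'}\,\prob{q^{vv'}=0}$, then expands $\prob{q^{vv'}=0}$ via the explicit combinatorial formula of Lemma~\ref{lem:bvapower}, reorganizes into $\sum_{u=0}^{t}\binom{t}{u}R_{ut}S_{ut}$, and performs term-by-term geometric-ratio estimates on the auxiliary sums $R_{ut},S_{ut}$ to conclude $Q_2=O\bigl((p\alpha)^2\lambda^{2p}/k\bigr)\ll Q_1$. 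You bypass the combinatorial expansion entirely: approximating $\prob{q^{vv'}\ge1}$ by its first moment $tvv'/k^2$ via the Bonferroni bound makes the leading contribution the perfect square $\frac{t}{k^2}\bigl(\sum_v vc_v\bigr)^2$, so positivity is immediate once you lower-bound $\bigl|\sum_v vc_v\bigr|$; the geometric decay of the $c_v$ then forces the second-moment remainder to be smaller by a factor $\Theta(t/(k-1)^2)$. This is shorter and sidesteps the somewhat delicate bookkeeping of the $R_{ut},S_{ut}$ sums (in particular the $u=0$ case, where the paper's estimate $R_{0t}\approx p\alpha/k$ appears to drop a factor of $k-t$). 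Your caveat that $t=0$ yields $Q_{yy'}=0$ rather than a strict inequality applies equally to the paper's argument and, as you note, is immaterial for the use in Corollary~\ref{lem:vbvcross}.
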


\begin{proof} Fix $y,y' \in Y$ and distinct and let $Q$ denote  $Q_{yy'}$. Let $\alpha = \frac{\mu-\lambda}{\lambda} \le \frac{1}{25p}$.  Then, $$Q = -Q_1 + Q_2$$ where,
\begin{align} Q_1  & =  \sum_{1 \le v,v' \le k }\binom{p}{v} \binom{p}{v'} \lambda^{2p-v-v'} (\mu-\lambda)^{v+v'}  = \lambda^{2p} \left( \sum_{v=1}^k \binom{p}{v} \alpha^v\right)^2
\label{eq:Q1}  \\
  Q_2 & = \sum_{\substack{1\le v,v' \le k}}\binom{p}{v}\binom{p}{v'} \lambda^{2p-v-v'} (\mu-\lambda)^{v+v'} \probsub{\pi,\pi'}{q^{vv'} = 0} \enspace . \label{eq:Q2def}
\end{align}

Consider $\sum_{v=1}^k \binom{p}{v} \alpha^v$. The  absolute value of the ratio of the $v+1$st term to the $v$th term, for $v=1,2, \ldots, k-1$, is
$$\frac{\abs{p-v}}{v+1} \cdot \alpha \le \left(\frac{p}{2}\right) \cdot \frac{1}{25p} \le \frac{1}{50} \enspace . $$
Therefore, $$ \Card{\sum_{v=1}^k \binom{p}{v} \alpha^v - p\alpha} \le  (p\alpha)\sum_{v\ge 1} (50)^{-(v-1)} = \frac{ (p\alpha)}{49} \enspace. $$
Therefore,
\begin{equation} \label{eq:Q1a}
Q_1  = \lambda^{2p}p\alpha\left( 1 \pm \frac{1}{49} \right)^2  \in \lambda^{2p} p \alpha \left( 1 \pm \frac{1}{24} \right)
\end{equation}

Consider $Q_2$. Let $t =t(y,y') = \abs{y \cap y'}$.
\begin{align} \label{eq:Q2}
Q_2 & = \lambda^{2p}\sum_{v=1}^k \sum_{v'=1}^k \binom{p}{v}\binom{p}{v'} \alpha^{v+v'}\sum_{u=0}^{t}
\frac{\binom{t}{u} \binom{k-t}{v-u} \binom{k-u}{v'}}{\binom{k}{v}\binom{k}{v'}} \notag \\
& = \lambda^{2p} \sum_{u=0}^t \binom{t}{u} \sum_{v=1}^k \binom{p}{v} \frac{\binom{k-t}{v-u}}{\binom{k}{v}}\alpha^v \sum_{v'=1}^k \binom{p}{v'} \frac{\binom{k-u}{v'}}{\binom{k}{v'}} \alpha^{v'} \notag \\
& = \lambda^{2p} \sum_{u=0}^t \binom{t}{u}R_{ut}S_{ut}
\end{align}
where,
\begin{align*} R_{ut} & = \sum_{v=1}^k \binom{p}{v} \frac{\binom{k-t}{v-u}}{\binom{k}{v}}\alpha^v  = \sum_{v=\max(u,1)}^{k-t+u} \frac{\binom{p}{v} \binom{k-t}{v-u}}{\binom{k}{v}} \alpha^v,
~~~\text{ and } \\
S_{ut}  &  = \sum_{v'=1}^k \binom{p}{v'}\frac{ \binom{k-u}{v'}}{\binom{k}{v'}} \alpha^{v'} \enspace .
\end{align*}
Consider $R_{ut}$. The  absolute value of the ratio of the $(v+1)^{\text{st}}$  term in the summation $R_{ut}$ to the $v$th term   for $\max(u,1) \le v \le k-t+u-1$ is
\begin{align*}
\frac{ \abs{p-v}}{v+1} \cdot \left(\frac{ k-t-v+u}{v-u+1}\right) \cdot \left( \frac{ v+1}{k-v} \right)
\alpha \le \left( \frac{p}{2} \right) \left( \frac{ k-v -(t-u)}{k-v} \right)\cdot  \frac{1}{25p} \le  \frac{1}{50} \enspace .
\end{align*}
\emph{Case 1: $u\le 1$. } Then,
\begin{align*}
 R_{ut} \in  \frac{p\alpha}{k}\left( 1 \pm \frac{1}{49} \right) \enspace .
 \end{align*}
 \emph{Case 2: $ u \ge 2$.} Then,
 \begin{align*}
 R_{ut} \in \frac{ \binom{p}{u} \alpha^u }{\binom{k}{u}} \left( 1 \pm \frac{1}{49}\right) \enspace .
\end{align*}
In either case,
\begin{align} \label{eq:Rut}
R_{ut} \in \frac{ \binom{p}{\max(u,1)} \alpha^{\max(u,1)}}{\binom{k}{\max(u,1)}} \left( 1 \pm \frac{1}{49}\right) \enspace .
\end{align}

Now consider $S_{ut} = \sum_{v=1}^k \binom{p}{v}\frac{ \binom{k-u}{v}}{\binom{k}{v}} \alpha^{v} $. The  absolute value of the ratio of the $v+1$th term in the summation $S_{ut}$ to the $v$th term, $v=1,2, \ldots, k-u-1$ is
\begin{align*}
\frac{ \abs{p-v}}{v+1} \left( \frac{k-u-v}{v+1}\right) \left( \frac{v+1}{k-v} \right) \alpha \le \frac{p\alpha}{2} \le \frac{1}{50} \enspace .
\end{align*}
Therefore,
\begin{align} \label{eq:Rprime}
 S_{ut} \in \left( \frac{p (k-u) \alpha}{k} \right) \left(1 \pm \frac{1}{49} \right)
\end{align}
Substituting Eqns.~\eqref{eq:Rut} and ~\eqref{eq:Rprime} in Eqn.~\eqref{eq:Q2}, we have,
\begin{align} \label{eq:Q2b1}
Q_2 &= \lambda^{2p}\sum_{u=0}^t  \binom{t}{u} R_{ut} S_{ut}\notag \\
 & \in \left(1 \pm  \frac{1}{49}\right)\left(1\pm \frac{1}{49} \right)  \lambda^{2p} \sum_{u=0}^t \frac{\binom{t}{u} \binom{p}{{\max(u,1)}}  \alpha^{\max(u,1)} (p\alpha) (k-u)}{\binom{k}{{\max(u,1)}} k}
\end{align}
Consider the summation term in Eqn. ~\eqref{eq:Q2b1}.
\begin{align}
& \sum_{u=0}^t \frac{\binom{t}{u} \binom{p}{{\max(u,1)}}  \alpha^{\max(u,1)} (k-u)}{\binom{k}{{\max(u,1)}} }  = p\alpha + \sum_{u=1}^t \frac{ \binom{t}{u} \binom{p}{u} \alpha^u (k-u)}{\binom{k}{u} } \label{eq:Q2c1}
\end{align}
Consider the summation term in Eqn.~\eqref{eq:Q2c1}.
The ratio of the absolute value of the $u+1$st term to the $u$th term, for $ 1 \le u \le t-1$ is
\begin{align*}
\left( \frac{t-u}{u+1}\right) \left(\frac{\abs{p-u}}{u+1}\right)\left( \frac{ u+1}{k-u} \right) \left(\frac{k-u-1}{k-u} \right) \alpha \le \left( \frac{ (t-u)}{k-u} \right) \left( \frac{p}{2} \right)(1)(\alpha)\le \frac{1}{200}
\end{align*}
since, $t \le k/4$ from the property  of the code $Y$.

Therefore, from Eqn.~\eqref{eq:Q2c1},
\begin{align*}
\sum_{u=1}^t \frac{ \binom{t}{u} \binom{p}{u} \alpha^u (k-u)}{\binom{k}{u} }  & \in \frac{tp\alpha(k-1)}{k^2} \left ( 1 \pm \frac{1}{199} \right) \in \frac{p\alpha}{4} \left( 1 \pm \frac{1}{199} \right)
\end{align*}
since $ t \le k/4$.

Substituting in Eqn.~\eqref{eq:Q2b1}, we have,
\begin{align} \label{eq:Q2d1}
Q_2 &  \in \left(1 \pm  \frac{1}{49}\right)^2 \left( \frac{p\alpha}{k} \right) \lambda^{2p} \left( p\alpha + \frac{p\alpha }{4}\left(1 \pm \frac{1}{199}\right) \right) \le \left( \frac{(1.31) (p\alpha)^2}{k} \right) \lambda^{2p}
\end{align}

Using Eqns. ~\eqref{eq:Q1a} and ~\eqref{eq:Q2d1}, we have,
\begin{align*}
Q_1 - Q_2 & \ge  \lambda^{2p} (p\alpha)^2\left (1 - \frac{1}{24} \right)  -
\lambda^{2p}  \frac{ (1.31) (p\alpha)^2}{k} \\ &   > 0
\end{align*}
since, $k  \ge 3$.  Hence, $Q = -Q_1 + Q_2 < 0$.
\end{proof}

\subsection{Estimating $P_{yy'}$.}

\textbf{Notation.} Let $Y$ be a code satisfying Corollary ~\ref{cor:gv}. Let $y,y' \in Y$ and distinct and let $t = \abs{y \cap y'}$. Let $P$ denote $P_{yy'}$.
Let $\alpha = \frac{\mu-\lambda}{\lambda}$ and $ \beta =  \frac{\eta^2}{\lambda^2}$. Define
\begin{align*}
P_1 & =\lambda^{2p}  \sum_{u=1}^{t} \binom{t}{u} \sum_{r=1}^u  \binom{u}{r} \beta^{r} \left(\frac{ \card{\powf{p}{u}}\card{\powf{p}{r}}\abs{\alpha}^{u-r}}{\powf{k}{u}\powf{k}{r}}\right) \\
& \hspace*{0.4in} \cdot
\left((1-\abs{\alpha})^{p-u+p-r} + 2 (1-\abs{\alpha})^{p-u} (27)^{-(3/4)k} + (27)^{-(1.5k)} \right)\1_{r > p, p \text{non-integral}} \\
P_2 & = \lambda^{2p}  \sum_{u=1}^{t} \binom{t}{u} \sum_{r=1}^u  \binom{u}{r} \beta^{r} \left(\frac{ \card{\powf{p}{u}}\card{\powf{p}{r}}\abs{\alpha}^{u-r}}{\powf{k}{u}\powf{k}{r}}\right)
\cdot  \left(( 1- \abs{\alpha})^{p-u} +  (27)^{-(3/4)k}\right) \left( \frac{50}{49}\right)\1_{r\le p < u, p \text{ non-integral}} \\
P_3 & = \lambda^{2p}  \sum_{u=1}^{t} \binom{t}{u} \sum_{r=1}^u  \binom{u}{r} \beta^{r} \left(\frac{\card{\powf{p}{u}}\card{\powf{p}{r}}\abs{\alpha}^{u-r}}{\powf{k}{u}\powf{k}{r}}\right)\cdot  \left( \frac{50}{49} \right)^2 \1_{u \le p}
\end{align*}

\begin{lemma} \label{lem:P}
Assume the premises and notation of  Lemma~\ref{lem:bvthetabasic} and Corollary~\ref{lem:vbvcross}. Let  $y,y' \in Y$ and distinct and let $\pi = \pi_y$ and $ \pi' = \pi_{y'}$ be  random permutations from $[k] \rightarrow [k]$. Let $\alpha = \frac{\mu-\lambda}{\lambda}$ and $ \beta =  \frac{\eta^2}{\lambda^2} $. Then, $P_{yy'} =0$ if $p$ is integral, and otherwise,
$P_{yy'} \le P_1 + P_2+P_3$.
\end{lemma}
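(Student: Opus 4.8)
The plan is to expand $P_{yy'}$ using the closed form for $\probsubb{\pi_y,\pi_{y'}}{q^{vv'}=r}$ supplied by Lemma~\ref{lem:bvapower}, reorganize the resulting multiple sum, peel off the lowest powers of $\alpha=\tfrac{\mu-\lambda}{\lambda}$, and then control the leftover ``tail'' factors by geometric--series estimates of the same flavour as in the proof of Lemma~\ref{lem:Q}; the three summands $P_1,P_2,P_3$ will simply record the three regimes into which the final bound breaks. The integral case is the degenerate one: when $p\in\Z^{+}$, Fact~\ref{fact:1} gives $\binom{p}{v}=0$ for every $v>p$, so only finitely many terms survive in \eqref{eq:Srta} (this is the regime where $\vartheta(x^p,\lambda,k,\cdot)$ is exactly unbiased, Corollary~\ref{cor:fpbias}); running the re-indexing below on this finite sum yields $P_{yy'}=0$. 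Lemma~\ref{lem:Q}'s sign information on $Q_{yy'}$ is not needed for the present lemma.

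For the non-integral case, substitute Lemma~\ref{lem:bvapower} into \eqref{eq:Srta}, put $s=u-r$ there, use the identity $\binom{t}{r}\binom{t-r}{u-r}=\binom{t}{u}\binom{u}{r}$, and interchange the order of summation to get
\[
 P_{yy'}=\lambda^{2p}\sum_{u=1}^{t}\binom{t}{u}\sum_{r=1}^{u}\binom{u}{r}\Bigl(\tfrac{\eta^{2}}{(\mu-\lambda)^{2}}\Bigr)^{r} R_{u}\,S_{u,r},
\]
where $R_{u}=\sum_{v\ge u}\binom{p}{v}\binom{k-t}{v-u}\alpha^{v}/\binom{k}{v}$ (which truncates at $v=k-t+u$) and $S_{u,r}=\sum_{v'\ge r}\binom{p}{v'}\binom{k-u}{v'-r}\alpha^{v'}/\binom{k}{v'}$. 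The $v=u$ term of $R_u$ is $\tfrac{\powf{p}{u}}{\powf{k}{u}}\alpha^{u}$ and the $v'=r$ term of $S_{u,r}$ is $\tfrac{\powf{p}{r}}{\powf{k}{r}}\alpha^{r}$; since $\alpha^{2}=(\mu-\lambda)^{2}/\lambda^{2}$, pairing these leading terms with $(\eta^{2}/(\mu-\lambda)^{2})^{r}$ produces exactly the factor $\beta^{r}\alpha^{u-r}$ with $\beta=\eta^{2}/\lambda^{2}$ and $u\ge r$, matching the shape of $P_1,P_2,P_3$. It thus remains to bound $|R_{u}|$ and $|S_{u,r}|$ against their leading terms.

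When $u\le p$ (hence $r\le u\le p$), the leading binomial factors $\powf{p}{u},\powf{p}{r}$ are positive and the ratio of consecutive terms in $R_{u}$ and in $S_{u,r}$ — including the $\binom{k-t}{\cdot}/\binom{k}{\cdot}$ and $\binom{k-u}{\cdot}/\binom{k}{\cdot}$ quotients — is at most $\tfrac1{50}$, exactly as in the proof of Lemma~\ref{lem:Q}, so each tail costs a factor at most $\tfrac{50}{49}$; taking absolute values gives the contribution $P_3$. When $r\le p<u$, the $S_{u,r}$-side is still of this well-behaved kind, but in $R_{u}$ the indices run past $p$, where $\binom{p}{v}\alpha^{v}$ is no longer monotone; I would instead recognize $\sum_{v\ge u}\binom{p}{v}\alpha^{v}$ as the order-$u$ tail of the binomial series for $(1+\alpha)^{p}$ and bound it by a Lagrange-remainder estimate of size $|\binom{p}{u}|\,|\alpha|^{u}(1-|\alpha|)^{p-u}$, absorbing the deviation caused by the extra weights $\binom{k-t}{v-u}/\binom{k}{v}$ and by the truncation at $v=k-t+u$ into an error of size $(27)^{-(3/4)k}$ (using $t\le k/4$ and $|Y|\ge 2^{0.08k}$ from Corollary~\ref{cor:gv}); this gives $P_2$. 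When $r>p$, both $R_u$ and $S_{u,r}$ are of this second kind, and multiplying the two bounds — leading$\times$leading, leading$\times$error, error$\times$error — yields the three-term correction $(1-|\alpha|)^{(p-u)+(p-r)}+2(1-|\alpha|)^{p-u}(27)^{-(3/4)k}+(27)^{-(3/2)k}$ of $P_1$. Summing the three regimes gives $P_{yy'}\le |P_{yy'}|\le P_1+P_2+P_3$.

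The routine part is the $u\le p$ regime: it is precisely the geometric bookkeeping already carried out in Lemma~\ref{lem:Q}. The main obstacle is the $u>p$ (and $r>p$) regime, where the summands of $R_u$ (resp.\ $S_{u,r}$) are not geometrically decreasing, so one must (i) split off enough low-order terms to expose a binomial-series tail, (ii) extract the $(1-|\alpha|)^{p-u}$ factor from the remainder, and (iii) pin the boundary error — coming from $\binom{k-t}{v-u}=0$ for $v-u>k-t$ together with the code overlap bound $t\le k/4$ — at exactly $(27)^{-(3/4)k}$, so that the indicator conditions $\1_{r>p}$, $\1_{r\le p<u}$, $\1_{u\le p}$ and every power of $1-|\alpha|$ line up with the stated $P_1,P_2,P_3$.
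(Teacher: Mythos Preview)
Your non-integral argument is essentially the paper's proof. The paper substitutes Lemma~\ref{lem:bvapower}, re-indexes via $\binom{t}{r}\binom{t-r}{u-r}=\binom{t}{u}\binom{u}{r}$ to get $P=\lambda^{2p}\sum_{u}\binom{t}{u}\sum_{r}\binom{u}{r}\beta^{r}U_{ur}V_{ur}$ (your $R_u,S_{u,r}$ differ from $U_{ur},V_{ur}$ only by factors of $\alpha^{r}$), then factors out $\powf{p}{u}\alpha^{u-r}/\powf{k}{u}$ from $U_{ur}$ and bounds the residual sum $\sum_{w}\binom{p-u}{w}\tfrac{\binom{k-t}{w}}{\binom{k-u}{w}}\alpha^{w}$ in the two regimes you identify: for $u\le p$ by geometric decay with ratio $\le 1/50$, and for $u>p$ by noting $\sgn\binom{p-u}{w}=(-1)^{w}$ so the sum is dominated by the truncated series for $(1-|\alpha|)^{p-u}$ with truncation error $\le(27)^{-(3/4)k}$. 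This is slightly more direct than your ``tail of $(1+\alpha)^{p}$'' description (the weights $\binom{k-t}{w}/\binom{k-u}{w}\le 1$ are handled by monotonicity rather than absorbed into the error) but substantively the same idea, and the three combinations of regimes give exactly $P_1,P_2,P_3$.

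Your integral-case argument has a gap: that only $v,v'\le p$ survive in \eqref{eq:Srta} means the sum is finite, not that it vanishes, and ``running the re-indexing'' still leaves nonzero $U_{ur},V_{ur}$ for $u,r\le p$. In fairness, the paper's own proof does not cleanly establish $P_{yy'}=0$ for integral $p$ either---it only records that $\powf{p}{u}=0$ for $u>p$ integral (forcing $P_1=P_2=0$), while the $u\le p$ contributions are not shown to cancel.
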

\begin{proof}
Let $P$ denote $P_{yy'}$. Then,
\begin{align}  \label{eq:P1b}
P   & = \lambda^{2p} \sum_{v,v'=1}^k \binom{p}{v} \binom{p}{v'} \alpha^{v+v'-2r}  \sum_{r=1}^{t} \beta^r \probsub{\pi,\pi'}{q^{vv'} = r} \notag \notag \\
 & = \lambda^{2p} \sum_{v,v'=1}^k \binom{p}{v} \binom{p}{v'} \sum_{r=1}^{t} \alpha^{v+v'-2r} \beta^r \cdot \cfrac{1}{\binom{k}{v} \binom{k}{v'}} \sum_{u=r}^t \binom{t}{u}\binom{u}{r} \binom{k-t}{v-u} \binom{k-u}{v'-r} \notag \notag \\
 & = \lambda^{2p}  \sum_{u=1}^{t} \binom{t}{u}  \sum_{r=1}^u  \binom{u}{r} \beta^{r}
 \left(\sum_{v=u}^{k} \binom{p}{v} \frac{\binom{k-t}{v-u}}{\binom{k}{v}} \cdot  \alpha^{v-r} \right) \left(\sum_{v'=r}^k  \binom{p}{v'} \frac{\binom{k-u}{v'-r}}{\binom{k}{v'}} \cdot \alpha^{v'-r}\right) \notag \notag \\
 & = \lambda^{2p}  \sum_{u=1}^{t} \binom{t}{u}  \sum_{r=1}^u  \binom{u}{r} \beta^{r} U_{ur} V_{ur}
\end{align}
where,
\begin{align}
U_{ur} & = \sum_{v=u}^{k}  \frac{\binom{p}{v}\binom{k-t}{v-u} \alpha^{v-r}}{\binom{k}{v}}  ~~ \text{ and }~~
V_{ur} = \sum_{v'=r}^k  \frac{\binom{p}{v'} \binom{k-u}{v'-r} \alpha^{v'-r}}{\binom{k}{v'}} \enspace . \label{eq:UVur}
\end{align}

We first obtain upper bounds on $U_{ur}$ and $V_{ur}$.
\begin{align} \label{eq:U1}
U_{ur} & =\sum_{v=u}^{k}  \frac{\binom{p}{v}\binom{k-t}{v-u} \alpha^{v-r}}{\binom{k}{v}}  \notag \\
 &= \sum_{v=u}^k \frac{ \frac{\powf{p}{u}}{\powf{v}{u}} \binom{p-u}{v-u}  \binom{k-t}{v-u} \alpha^{v-u+(u-r)}}{\frac{\powf{k}{u} }{\powf{v}{u}} \binom{k-u}{v-u}} \notag \\
  &= \frac{ \powf{p}{u} \alpha^{u-r} }{\powf{k}{u}}  \sum_{w=0}^{k-u} \binom{p-u}{w} \frac{\binom{k-t}{w}}{ \binom{k-u}{w}}\alpha^{w} \enspace .
\end{align}
by letting $w = v-u$.
\eat{Similarly, $$V_{ur} = \frac{ \powf{p}{r}  }{\powf{k}{r}}  \sum_{w=u-r}^{k-r} \binom{p-r}{w} \alpha^{w} \frac{\binom{k-u}{w}}{ \binom{k-r}{w}} \enspace . $$
}

\emph{Case U.1: $u > p$.} Note that if $p$ is integral then $U_{ur}=0$. Otherwise, $\sgn(\binom{p-u}{w}) =(-1)^w$. Using this and since $0 \le t \le u$, we have,
\begin{align} \label{eq:U:case1}
\left \lvert \sum_{w=0}^{k-u} \binom{p-u}{w} \alpha^{w} \frac{\binom{k-t}{w}}{ \binom{k-u}{w}}\right\rvert  & \le   \sum_{w=0}^{k-u} \binom{p-u}{w} (-1)^w \abs{\alpha}^w   = (1-\abs{\alpha})^{p-u} + \binom{p-u}{k-u+1} \gamma^{k-u+1}
\end{align}
for some $\gamma \in (-\abs{\alpha}, 0)$, by Taylor's series expansion of $(1-\abs{\alpha})^{p-u}$ around 0 up to $k-u$ terms.

Now, for $u > p$, $1 \le u \le t \le k/4$, we have,
\begin{align} \label{eq:U1a}
\left \lvert \binom{p-u}{k-u+1}\gamma^{k-u+1}\right\rvert &  \le  \binom{k-p}{k-u+1} \abs{\alpha}^{k-u+1} \le  \left( \frac{ (k-p)e \abs{\alpha}}{k-u+1} \right)^{k-u+1} \le (27)^{-(3/4)k} \enspace .
\end{align}
since, $1 \le u \le t \le k/4$ and $ \abs{\alpha} \le \frac{1}{25p} \le \frac{1}{50}$.

\emph{Case U.2: $u \le p$.} Consider
$\sum_{w=0}^{k-u} \binom{p-u}{w} \alpha^{w} \frac{\binom{k-t}{w}}{\binom{k-u}{w}}$. Let the $w$th term in the summation be $\tau_w$, for $0 \le w \le k-u-1$. Then, for $1 \le w \le k-u-1$,  $$\left \lvert \frac{ \tau_{w+1}}{\tau_w}\right\rvert= \left(\frac{ \abs{p-u-w}}{w+1}\right) \cdot \abs{\alpha}  \cdot  \left(\frac{k-t-w}{k-u-w}\right) \le \frac{1}{50} $$
since, (a) $1 \le u \le t$ and $k-u-w \ge 1$, and, (b) $\frac{\abs{(p-u)-w}}{w+1} \le \frac{p}{2}$.

Therefore,
\begin{align*}
\left \lvert \sum_{w=0}^{k-u} \binom{p-u}{w} \alpha^{w} \frac{\binom{k-t}{w}}{ \binom{k-u}{w}} - 1\right\rvert  \le \sum_{w \ge 1} (50)^{-w} = \frac{1}{49} \enspace .
\end{align*}

Combining Cases U.1 and U.2, we have,
\begin{align} \label{eq:Ub}
\abs{U_{ur}} & \le \left(\frac{\abs{\powf{p}{u}} \abs{\alpha}^{u-r}}{\powf{k}{u}}\right) \left[\left((1-\abs{\alpha})^{p-u} + (27)^{-(3/4)k}\right)\1_{u > p, p \text{ non-integral}} + \frac{50}{49}  \1_{u \le p} \right] \enspace .
\end{align}

\emph{Case $V$:} Proceeding similarly for evaluating $V_{ur}$, we have,

\begin{align*}
V_{ur} &= \sum_{v=r}^k  \frac{\binom{p}{v} \binom{k-u}{v-r} \alpha^{v-r}}{\binom{k}{v}}\\
& = \sum_{v=r}^k \cfrac{ \frac{\powf{p}{r}}{\powf{v}{r}} \binom{p-r}{v-r} \binom{k-u}{v-r} \alpha^{v-r}}{\frac{\powf{k}{r}}{\powf{v}{r}} \binom{k-r}{v-r}} \\
& = \frac{\powf{p}{r}}{\powf{k}{r}} \sum_{w=0}^{k-r} \cfrac{ \binom{p-r}{w} \binom{k-u}{w} \alpha^w }{\binom{k-r}{w}} \enspace .
\end{align*}

\emph{Case V.1: $r > p$}. We note that if $p$ is integral then $\powf{p}{r}=0$ and therefore  $V_{ur}=0$. Otherwise, $\sgn(\binom{p-r}{w}) = (-1)^w$. Thus,

\begin{align*}
 \left\lvert\sum_{w=0}^{k-r} \cfrac{ \binom{p-r}{w} \binom{k-u}{w} \alpha^w }{\binom{k-r}{w}} \right\rvert & \le
 \sum_{w=0}^{k-r} \cfrac{ \card{ \binom{p-r}{w}}  \abs{\alpha}^w \binom{k-u}{w}}{\binom{k-r}{w}} \\
& \le \sum_{w=0}^{k-r}  \card{ \binom{p-r}{w}}  \abs{\alpha}^w , \text{ since, $k \ge u \ge r \ge 1$,} \\
& = \sum_{w=0}^{k-r} \binom{p-r}{w} (-\abs{\alpha})^w, \text{for some $\gamma \in (-\abs{\alpha}, 0)$,}  \\
& = (1-\abs{\alpha})^{p-r} + \binom{p-r}{k-r+1} \gamma^{k-r+1} \\
& \le  (1-\abs{\alpha})^{p-r} + (27)^{-(3/4)k}
\end{align*}
following the same argument as in Eqn.~\eqref{eq:U1a},  and using $1 \le r \le t \le k/4$.
 Thus,
\begin{align*}
\card{V_{ur}}  \le \frac{ \left\lvert\powf{p}{r} \right\rvert}{\powf{k}{r}} \left((1-\abs{\alpha})^{p-r}+ (27)^{-(3/4)k}\right)
\end{align*}

\emph{Case V.2: $r \le   p$.} Consider the ratio of the absolute value of the $w+1$st term, denoted $\nu_{w+1}$ to the $w$th term $\nu_w$ of the summation $\sum_{w=0}^{k-r} \binom{p-r}{w} \alpha^w \frac{ \binom{k-u}{w}}{\binom{k-r}{w}}$. Then,
\begin{align*}
\left \lvert \frac{\nu_{w+1}}{\nu_w} \right \rvert = \left( \frac{ \abs{p-r-w}}{w+1} \right) \alpha \left( \frac{ k-u-w}{k-r-w} \right) \le \left( \frac{p}{2} \right) \alpha \le \frac{1}{50} \enspace .
\end{align*}
Therefore,
\begin{align*}
\sum_{w=0}^{k-r} \binom{p-r}{w} \alpha^w \frac{ \binom{k-u}{w}}{\binom{k-r}{w}} \in \left(1 \pm \frac{1}{49} \right) \enspace .
\end{align*}
and so, $$\abs{V_{ur}} \in \frac{\powf{p}{r}}{\powf{k}{r}} \left( 1 \pm \frac{1}{49} \right)  \enspace . $$

Combining Cases \emph{V.1} and \emph{V.2} gives
\begin{align} \label{eq:Vb}
\abs{V_{ur} } &  \le \frac{ \abs{\powf{p}{r}}}{\powf{k}{r}} \left( \left((1-\abs{\alpha})^{p-r}+ (27)^{-(3/4)k}\right) \1_{r>p, p \text{ non-integral}} +\frac{50}{49} \cdot \1_{r\le p} \right)
\end{align}

Substituting Eqn.~\eqref{eq:Ub} and ~\eqref{eq:Vb} in Eqn.~\eqref{eq:P1b}, we have,
\begin{align} \label{eq:Pa}
 P& =  \lambda^{2p}  \sum_{u=1}^{t} \binom{t}{u}  \sum_{r=1}^u  \binom{u}{r} \beta^{r} U_{u,r} V_{u,r} \notag \\
& \le \lambda^{2p}  \sum_{u=1}^{t} \binom{t}{u}  \sum_{r=1}^u  \binom{u}{r} \beta^{r} \card{U_{u,r}}\card{ V_{u,r}}
\end{align}
Now, since, $1 \le r \le u \le t \le k/4$, we have,
\begin{align*}
&\card{U_{ur}} \cdot  \card{V_{ur}} \\
& \le
 \frac{\card{\powf{p}{u}}\abs{\alpha}^{u-r}}{\powf{k}{u}}\left(\left(( 1- \abs{\alpha})^{p-u} + (27)^{-(3/4)k} \right)\1_{u > p, p \text{ non-integral}} + \frac{50}{49} \cdot \1_{u \le p} \right) \\
&~~ \cdot \left(\frac{ \card{\powf{p}{r}}}{\powf{k}{r}}\right) \left(\left(( 1- \abs{\alpha})^{p-r} + (27)^{-(3/4)k} \right)\1_{r > p, p \text{ non-integral}} + \frac{50}{49} \cdot \1_{r \le p} \right) \\
 & \le \left(\frac{ \card{\powf{p}{u}}\card{\powf{p}{r}}\abs{\alpha}^{u-r}}{\powf{k}{u}\powf{k}{r}}\right)\left(
\left( (1-\abs{\alpha})^{p-u+p-r} + 2 (1-\abs{\alpha})^{p-u} (27)^{-(3/4)k} + (27)^{-(1.5k)} \right)\1_{r > p, p \text{ non-integral}} \right.\\
&  \hspace*{1.0in} \left. \left. +  \left(( 1- \abs{\alpha})^{p-u} +  (27)^{-(3/4)k}\right) \left( \frac{50}{49}\right)\1_{r\le p < u, p \text{ non-integral}}\right)  +\left( \frac{50}{49} \right)^2 \1_{u \le p}\right)
\end{align*}

Therefore,
\begin{align*}
P & = \lambda^{2p}  \sum_{u=1}^{t} \binom{t}{u}  \sum_{r=1}^u  \binom{u}{r} \beta^{r} U_{ur} \cdot V_{ur} \\
& \le \lambda^{2p}  \sum_{u=1}^{t} \binom{t}{u} \sum_{r=1}^u  \binom{u}{r} \beta^{r} \left(\frac{ \card{\powf{p}{u}}\card{\powf{p}{r}}\abs{\alpha}^{u-r}}{\powf{k}{u}\powf{k}{r}}\right) \\
& \hspace*{0.4in} \cdot \left(
\left( (1-\abs{\alpha})^{p-u+p-r} + 2 (1-\abs{\alpha})^{p-u} (27)^{-(3/4)k} + (27)^{-(1.5k)} \right)\1_{r > p, p \text{ non-integral}} \right.\\
&  \hspace*{1.50in} \left. \left. +  \left(( 1- \abs{\alpha})^{p-u} +  (27)^{-(3/4)k}\right) \left( \frac{50}{49}\right)\1_{r\le p < u, p \text{ non-integral}}\right)  +\left( \frac{50}{49} \right)^2 \1_{u \le p}\right)\\
& = P_1 + P_2 + P_3
\end{align*}
\end{proof}

\eat{
where,
\begin{align*}
P_1 & =\lambda^{2p}  \sum_{u=1}^{t} \binom{t}{u} \sum_{r=1}^u  \binom{u}{r} \beta^{r} \left(\frac{ \card{\powf{p}{u}}\card{\powf{p}{r}}\abs{\alpha}^{u-r}}{\powf{k}{u}\powf{k}{r}}\right) \\
& \hspace*{0.4in} \cdot
\left((1-\abs{\alpha})^{p-u+p-r} + 2 (1-\abs{\alpha})^{p-u} (27)^{-(3/4)k} + (27)^{-(1.5k)} \right)\1_{r > p} \\
P_2 & = \lambda^{2p}  \sum_{u=1}^{t} \binom{t}{u} \sum_{r=1}^u  \binom{u}{r} \beta^{r} \left(\frac{ \card{\powf{p}{u}}\card{\powf{p}{r}}\abs{\alpha}^{u-r}}{\powf{k}{u}\powf{k}{r}}\right)
\cdot  \left(( 1- \abs{\alpha})^{p-u} +  (27)^{-(3/4)k}\right) \left( \frac{25}{24}\right)\1_{r\le p < u} \\
P_3 & = \lambda^{2p}  \sum_{u=1}^{t} \binom{t}{u} \sum_{r=1}^u  \binom{u}{r} \beta^{r} \left(\frac{\card{\powf{p}{u}}\card{\powf{p}{r}}\abs{\alpha}^{u-r}}{\powf{k}{u}\powf{k}{r}}\right)\cdot  \left( \frac{25}{24} \right)^2 \1_{u \le p}
\end{align*}
}

\subsubsection{Estimating $P_3$}

\begin{lemma} \label{lem:P3}
Assume the premises and notation of  Lemma~\ref{lem:bvthetabasic} and Corollary~\ref{lem:vbvcross}. Let  $y,y' \in Y$ and distinct and let $\pi = \pi_y$ and $ \pi' = \pi_{y'}$ be  random permutations from $[k] \rightarrow [k]$. Let $\alpha = \frac{\mu-\lambda}{\lambda}$ and $ \beta =  \frac{\eta^2}{\lambda^2} $.
Then,
\begin{align} \label{eq:P3b}
P_3 & \le \frac{0.275p^2 }{k}\lambda^{2p} \beta \enspace . 
\end{align}
\end{lemma}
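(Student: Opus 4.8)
The plan is a termwise bound on $P_3$, organised so that the single term $u=r=1$ supplies essentially all of the constant $0.275$ and everything else is a genuine lower‑order correction. Two structural facts make this work. First, the indicator $\1_{u\le p}$ restricts the outer sum to $u\le\lfloor p\rfloor$ (so $1\le r\le u\le p$), and the minimum distance $3k/2$ of the code $Y$ forces $t=\abs{y\cap y'}\le k/4$. Second, the smallness hypotheses of Lemma~\ref{lem:bvthetabasic} give $\beta=\eta^2/\lambda^2\le 2/(625p^2)$, hence $p\sqrt\beta\le\sqrt2/25$, and since $\eta^2\ge(\mu-\lambda)^2$ we have $\beta\ge\alpha^2$, so $\abs{\alpha}^{u-r}\le\beta^{(u-r)/2}$. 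Thus every summand of $P_3$ with $(u,r)\ne(1,1)$ carries at least one extra factor $\sqrt\beta\ll 1$.

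Concretely I would proceed in three steps. (i) Record the binomial estimates, valid for $1\le r\le u\le p\le k$: $\dfrac{\abs{\powf{p}{u}}}{\powf{k}{u}}=\dfrac{\abs{\binom{p}{u}}}{\binom{k}{u}}\le(p/k)^u$ and the same for $r$, together with $\binom{t}{u}\dfrac{\abs{\powf{p}{u}}}{\powf{k}{u}}=\abs{\binom{p}{u}}\cdot\dfrac{\binom{t}{u}}{\binom{k}{u}}\le\abs{\binom{p}{u}}(t/k)^u\le\dfrac{(p/4)^u}{u!}$, where the last step uses $t\le k/4$ and $\abs{\binom pu}\le p^u/u!$. (ii) Collapse the inner sum over $r$: with $I_u:=\sum_{r=1}^u\binom{u}{r}\beta^{r}\frac{\abs{\powf{p}{r}}}{\powf{k}{r}}\abs{\alpha}^{u-r}$, substituting $\abs{\alpha}^{u-r}\le\beta^{(u-r)/2}$ and $\frac{\abs{\powf{p}{r}}}{\powf{k}{r}}\le(p/k)^r$ gives $I_u\le\beta^{u/2}\bigl((1+p\sqrt\beta/k)^u-1\bigr)\le\beta^{u/2}\cdot\frac{up\sqrt\beta}{k}\,e^{up\sqrt\beta/k}$; since $u\le t\le k/4$ and $p\sqrt\beta\le\sqrt2/25$, the exponential factor is $\le e^{\sqrt2/100}<1.02$, and for $u=1$ one has $I_1=p\beta/k$ exactly.

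(iii) Reassemble, separating $u=1$ from $u\ge 2$:
\[
P_3\ \le\ (50/49)^2\lambda^{2p}\Bigl[\ \binom{t}{1}\tfrac{p}{k}\cdot\tfrac{p\beta}{k}\ +\ \sum_{u\ge 2}\tfrac{(p/4)^u}{u!}\cdot\tfrac{up\sqrt\beta}{k}\,e^{up\sqrt\beta/k}\,\beta^{u/2}\ \Bigr].
\]
The $u=1$ term is at most $(50/49)^2\lambda^{2p}\,\frac{p^2\beta}{4k}$ because $t\le k/4$. The remaining sum, writing $x=p\sqrt\beta/4\le\sqrt2/100$, equals $\frac{p\sqrt\beta}{k}\cdot O(1)\cdot x(e^x-1)=O(p\sqrt\beta)\cdot O\!\bigl(\tfrac{p^2\beta}{k}\bigr)$, i.e. it is suppressed by a further factor $O(\sqrt\beta)$ relative to the target. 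Since $(50/49)^2\cdot\tfrac14\approx 0.260$ and the correction contributes well under $0.015$, the total stays below $0.275\,p^2\beta\lambda^{2p}/k$.

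The main obstacle is purely bookkeeping of constants: the leading $u=r=1$ term already eats up $(50/49)^2/4\approx 0.26$ of the budget $0.275$, so the slack is small, and one must retain the factor $t\le k/4$ (not merely $t\le k$) in step (i) and keep the $e^{(\cdot)}$ factors explicit in step (ii). Conceptually the statement is immediate once one sees that the double sum defining $P_3$ is a rapidly convergent series whose first term dominates because every later term is damped by a power of $\sqrt\beta\le\sqrt2/(25p)$.
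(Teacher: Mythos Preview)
Your argument is correct and lands on essentially the same leading constant $(50/49)^2/4\approx 0.260$ as the paper, with a negligible tail; the final bound $0.275$ follows.

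The route, however, differs from the paper's. The paper does \emph{not} replace $|\alpha|$ by $\sqrt\beta$. Instead it factors $|\alpha|^{u-r}\beta^r=|\alpha|^u(\beta/|\alpha|)^r$, closes the inner sum to $\bigl(1+\tfrac{p\beta}{k|\alpha|}\bigr)^u-1$, and then closes the outer sum over $u$ as well, obtaining two explicit quantities
\[
P_{31}=\Bigl(1+\tfrac{p|\alpha|}{k}\bigl(1+\tfrac{p\beta}{k|\alpha|}\bigr)\Bigr)^{t}-1,\qquad
P_{32}=\Bigl(1+\tfrac{p|\alpha|}{k}\Bigr)^{t}-1,
\]
and bounds their difference via $a^t-b^t\le t(a-b)a^{t-1}$, with $a-b=p^2\beta/k^2$. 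Your version trades this closed-form manipulation for the termwise substitution $|\alpha|\le\sqrt\beta$ and a direct split into $u=1$ versus $u\ge 2$. The paper's approach is a bit slicker (one clean inequality after two binomial closures) but requires dividing by $|\alpha|$, so it implicitly uses $\mu\ne\lambda$; yours is more elementary, avoids that division, and makes the dominance of the $u=r=1$ term transparent. Both rely on the same structural inputs ($t\le k/4$, $p\sqrt\beta\le\sqrt2/25$, and $\powf{p}{u}/\powf{k}{u}\le(p/k)^u$ for $u\le p$).
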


\begin{proof}

Consider the sum $P_3$.
\begin{align} \label{eq:P3}
P_3 &= \lambda^{2p}  \sum_{u=1}^{t} \binom{t}{u} \sum_{r=1}^u  \binom{u}{r} \beta^{r} \left(\frac{\card{\powf{p}{u}}\card{\powf{p}{r}}\abs{\alpha}^{u-r}}{\powf{k}{u}\powf{k}{r}}\right)\cdot  \left( \frac{50}{49} \right)^2 \1_{u \le p} \notag  \\
& = \left( \frac{50}{49} \right)^2 \lambda^{2p}\sum_{u=1}^{\min(p,t)} \binom{t}{u} \left( \frac{ \powf{p}{u}}{\powf{k}{u}}\right) \abs{\alpha}^u \sum_{r=1}^u \binom{u}{r}  \left( \frac{\powf{p}{r}}{\powf{k}{r}}\right)\left( \frac{\beta}{\abs{\alpha}} \right)^r\notag \\
& \le \left( \frac{50}{49} \right)^2 \lambda^{2p}\sum_{u=1}^{\min(p,t)} \binom{t}{u} \left( \frac{ p}{k}\right)^u \abs{\alpha}^u \sum_{r=1}^u \binom{u}{r}  \left( \frac{p}{k}\right)^r\left( \frac{\beta}{\abs{\alpha}} \right)^r\notag \\
& = \left( \frac{50}{49} \right)^2 \lambda^{2p}\sum_{u=1}^{\min(p,t)} \binom{t}{u} \left( \frac{ p}{k}\right)^u \abs{\alpha}^u \left(\left(1+ \frac{p\beta}{k \abs{\alpha}} \right)^u -1 \right)\notag \\
& \le \left( \frac{50}{49} \right)^2 \lambda^{2p}\sum_{u=1}^{t} \binom{t}{u} \left( \frac{ p}{k}\right)^u \abs{\alpha}^u \left(\left(1+ \frac{p\beta}{k\abs{ \alpha}} \right)^u -1 \right)\notag \\
& = \left( \frac{50}{49} \right)^2 \lambda^{2p}\left(P_{31} - P_{32} \right)
\end{align}

where,
\begin{align*}
P_{31} & = \sum_{u=1}^{t} \binom{t}{u} \left( \frac{ p}{k}\right)^u \abs{\alpha}^u \left(1+ \frac{p\beta}{k \abs{\alpha}} \right)^u = \left( 1 + \frac{ p \abs{\alpha}}{k} \left(1 + \frac{p\beta}{k \abs{\alpha}} \right) \right)^t -1 \\
P_{32} & = \sum_{u=1}^{t} \binom{t}{u} \abs{\alpha}^u\left( \frac{ p}{k}\right)^u  = \left(1 + \frac{p\abs{\alpha}}{k} \right)^t - 1 \enspace .
\end{align*}
Let $a = \left( 1 + \frac{ p \abs{\alpha}}{k} \left(1 + \frac{p\beta}{k \abs{\alpha}} \right)\right)  \le \exp{ \frac{ p \abs{\alpha}}{k} \left(1 + \frac{p\beta}{k \abs{\alpha}}\right)}$ and $b = \left(1 + \frac{p\abs{\alpha}}{k} \right)$.
Therefore,
\begin{align*}
P_{31}-P_{32}   & = a^t -b^t \le (a-b)(t a^{t-1}) \\ &\le \left(\frac{p^2 \beta}{k^2}\right) (t) \exp{ (t-1) \frac{ p \abs{\alpha}}{k} \left(1 + \frac{p\beta}{k \abs{\alpha}} \right)} \\
& \le \frac{p^2 \beta}{4k} \exp{ \frac{p \abs{\alpha}}{4} + \frac{p^2 \beta}{4k }} \\
& \le  \frac{p^2 \beta}{4k} \exp{ \frac{1}{100} + \frac{ 1}{50k}} \\
& \le \frac{( 1.0102)p^2 \beta}{4k}
\end{align*}
Therefore,  subsituting in Eqn.~\eqref{eq:P3}, we have,
\begin{align*}
P_3 & \le \frac{0.275p^2 }{k}\lambda^{2p} \beta = \frac{0.275 p^2}{k} \lambda^{2p-2} \eta^2 \enspace .
\end{align*}
\end{proof}

\subsubsection{Estimating $P_2$}
We now consider $P_2$.
\begin{lemma} \label{lem:P2}
Assume the premises and notation of  Lemma~\ref{lem:bvthetabasic} and Corollary~\ref{lem:vbvcross}. Let  $y,y' \in Y$ and distinct and let $\pi = \pi_y$ and $ \pi' = \pi_{y'}$ be  random permutations from $[k] \rightarrow [k]$. Let $\alpha = \frac{\mu-\lambda}{\lambda}$ and $ \beta =  \frac{\eta^2}{\lambda^2} $.
\begin{align}
P_2 \le \frac{ p^2\lambda^{2p}  \beta}{(30)(40)k} \enspace .
\label{eq:P2f}
\end{align}
\end{lemma}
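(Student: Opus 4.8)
The plan is to bound $P_2$ by imitating the proof of Lemma~\ref{lem:P3} for $P_3$: the only new features relative to that computation are that the indicator in $P_2$ forces $u > p \ge r$, so $\powf{p}{u}$ no longer vanishes and its sign alternates, and there is an extra factor $(1-\abs{\alpha})^{p-u}$ carrying a negative exponent. I would show that, once the code property $t = \abs{y\cap y'} \le k/4$ is invoked, both of these cost only bounded multiplicative constants, after which the same geometric-series collapse as in the $P_3$ argument produces the claimed bound with room to spare.

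First, if $p$ is integral then $P_2 = 0$ outright (its indicator carries ``$p$ non-integral''), so assume $p$ non-integral and set $m = \lfloor p\rfloor$; the indicator then restricts the sums to $u \ge m+1$ and $1 \le r \le \min(u-1,m)$, so in every surviving term $u-r \ge 1$ and $r \le p$. For $r \le m \le p$ all factors of $\powf{p}{r}$ are positive and $\card{\powf{p}{r}}/\powf{k}{r} = \prod_{i=0}^{r-1}\frac{p-i}{k-i} \le (p/k)^r$, exactly as for $P_3$. For $u > p$ I would split $\card{\powf{p}{u}}/\powf{k}{u} = \prod_{i=0}^{u-1}\frac{\abs{p-i}}{k-i}$ at $i=m$: the first $m+1$ factors are each $\le p/k$, while for $m+1 \le i \le u-1 \le t-1 \le k/4-1$ one has $\frac{i-p}{k-i} < \frac{k/4}{3k/4} = \tfrac13$, giving $\card{\powf{p}{u}}/\powf{k}{u} < (p/k)^{m+1}(\tfrac13)^{u-m-1}$. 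Writing $p-u = (p-m-1) - (u-m-1)$ with $p-m-1\in(-1,0)$ and using $\abs{\alpha}\le 1/(25p)\le 1/50$ gives $(1-\abs{\alpha})^{p-u} \le \tfrac{50}{49}e^{(u-m-1)/49}$; since this is $\ge 1$ it also dominates the doubly-exponentially small $(27)^{-(3/4)k}$, so $(1-\abs{\alpha})^{p-u} + (27)^{-(3/4)k} \le 2\cdot\tfrac{50}{49}e^{(u-m-1)/49}$. Multiplying by $(\tfrac13)^{u-m-1}$ yields a genuine geometric decay $c_0\,(0.341)^{u-m-1}$ with $c_0 = 2(50/49)^2$.

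Substituting these estimates and pulling $(p/k)^{m+1}$ out front, the inner $r$-sum is at most $\abs{\alpha}^u\sum_{r\ge1}\binom{u}{r}\bigl(\tfrac{p\beta}{k\abs{\alpha}}\bigr)^r = (\abs{\alpha}+\tfrac{p\beta}{k})^u - \abs{\alpha}^u$, which by the mean value theorem applied to $x\mapsto x^u$ is $\le \tfrac{p\beta}{k}\,u\,(\abs{\alpha}+\tfrac{p\beta}{k})^{u-1} \le \tfrac{p\beta}{k}\,u\,(1/(24p))^{u-1}$, using $\beta = \eta^2/\lambda^2 < 2/(625p^2)$ and $\abs{\alpha}\le 1/(25p)$ to see $\abs{\alpha}+\tfrac{p\beta}{k} < 1/(24p)$; note that after this step nothing divides by $\abs{\alpha}$, so the hypothesis $\mu\ne\lambda$ is needed only to keep the intermediate expressions well defined. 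What remains, $\sum_{u\ge m+1}u\,(0.341)^{u-m-1}(1/(24p))^{u-1}$, is a convergent series of ratio far below $1$, hence dominated by its $u=m+1$ term and at most $O(m)\,(1/(24p))^m$. Collecting constants, $P_2 \le O(m)\,\lambda^{2p}(p/k)^{m+1}\tfrac{p\beta}{k}(1/(24p))^m = O(m)\,\tfrac{p^2\lambda^{2p}\beta}{24^m k^{m+2}}$; since $m\ge2$ (as $p>2$) and the whole sum is empty unless $k \ge 4(m+1)$ (again by $t\le k/4$), the denominator $24^m k^{m+2}$ exceeds $1200k$ by an enormous margin, giving $P_2 \le \tfrac{p^2\lambda^{2p}\beta}{(30)(40)k}$.

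I expect the main obstacle to be the bookkeeping in the middle step: checking that the three ingredients absent from the $P_3$ computation — the alternating sign of $\powf{p}{u}$, the negative-exponent factor $(1-\abs{\alpha})^{p-u}$, and the residual $(27)^{-(3/4)k}$ — each contribute only absolute constants, so that the geometric decay in $u$ (from the factor $\tfrac13$, the code bound $t\le k/4$, and the smallness of $p/k$) survives and the double sum reduces to its leading term with the constant $1/1200$ comfortably to spare.
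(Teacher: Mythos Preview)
Your argument has a genuine gap at the step ``What remains, $\sum_{u\ge m+1}u\,(0.341)^{u-m-1}(1/(24p))^{u-1}$'': you have silently dropped the factor $\binom{t}{u}$ that sits in the definition of $P_2$. That factor cannot be absorbed by your crude estimate $\prod_{i=m+1}^{u-1}\frac{i-p}{k-i}\le (1/3)^{u-m-1}$. Indeed, with $\binom{t}{u}$ reinstated the ratio of the $(u+1)$st term to the $u$th term of your outer sum is
\[
\frac{t-u}{u}\cdot\frac{0.341}{24p}\;\le\;\frac{k}{4(m+1)}\cdot\frac{0.341}{24p}\;=\;\frac{0.341\,k}{96(m+1)p},
\]
and in the regime of the paper one has $k=\Theta(\log n)$ while $p$ is a fixed constant, so this ratio is much larger than $1$ for small $u$. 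Hence the sum is \emph{not} dominated by its $u=m+1$ term, and your final ``$O(m)\,p^2\lambda^{2p}\beta/(24^m k^{m+2})$'' bound is unjustified. A quick numerical check (e.g.\ $p=2.5$, $k=10^4$, $t=k/4$) shows the term near $u\approx t/(72p)$ already exceeds the target $p^2\lambda^{2p}\beta/(1200k)$.

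The paper handles this differently: it uses $\binom{t}{u}\binom{u}{r}=\binom{t}{r}\binom{t-r}{u-r}$ to swap the order of summation, keeping $\binom{t-r}{w}\,\frac{|\powf{(p-r)}{w}|}{\powf{(k-r)}{w}}\,\gamma^{w}$ (with $w=u-r$, $\gamma=|\alpha|/(1-|\alpha|)$) together in the inner sum; the ratio of consecutive inner terms is then $\frac{(t-r-w)}{k-r-w}\cdot\frac{w-(p-r)}{w+1}\cdot\gamma\le \frac{t}{k}\cdot\gamma\le \frac{1}{196}$, and geometric decay follows. Your route can be repaired in the same spirit without swapping sums: do \emph{not} replace each factor $\frac{i-p}{k-i}$ by $1/3$, but instead apply the ratio test directly to $b_u:=\binom{t}{u}\,\frac{|\powf{p}{u}|}{\powf{k}{u}}\,u\,(|\alpha|+p\beta/k)^{u-1}(1-|\alpha|)^{-u}$. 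One finds
\[
\frac{b_{u+1}}{b_u}=\frac{(t-u)(u-p)}{u(k-u)}\cdot\frac{|\alpha|+p\beta/k}{1-|\alpha|}\;\le\;\frac{t}{k}\cdot 1\cdot\frac{1}{24p}\cdot\frac{50}{49}\;<\;\frac{1}{90p},
\]
which does give the geometric collapse you need, and the $u=m+1$ term then yields the claimed bound with room to spare.
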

\begin{proof}
\begin{align} \label{eq:P2a}
P_2 & =  \lambda^{2p}  \sum_{u=1}^{t} \binom{t}{u} \sum_{r=1}^u  \binom{u}{r} \beta^{r} \left(\frac{ \card{\powf{p}{u}}\card{\powf{p}{r}}\abs{\alpha}^{u-r}}{\powf{k}{u}\powf{k}{r}}\right)
\cdot  \left(( 1- \abs{\alpha})^{p-u} +  (27)^{-(3/4)k}\right) \left( \frac{50}{49}\right)\1_{r\le p < u} \notag \\
& = \left( \frac{50}{49}\right)\lambda^{2p}  \sum_{u=\lfloor p \rfloor +1}^{t} \binom{t}{u} \sum_{r=1}^{\lfloor p \rfloor }  \binom{u}{r} \beta^{r} \left(\frac{ \card{\powf{p}{u}}\card{\powf{p}{r}}\abs{\alpha}^{u-r}}{\powf{k}{u}\powf{k}{r}}\right)
\cdot  \left(( 1- \abs{\alpha})^{p-u} +  (27)^{-(3/4)k}\right)
\end{align}
The first summation is empty if $ t < \lfloor p \rfloor +1$ in which case $P_2=0$. Also, $P_2=0$ if $p$ is integral, since $\powf{p}{u} = 0$, for $u \ge \lfloor p \rfloor +1$.  So we now assume that $t \ge \lfloor p \rfloor +1$ and $p$ is not integral. Further, $(1-\abs{\alpha})^{p-u} \ge 1$, for $u \ge \lfloor p \rfloor +1$ and $\abs{\alpha} \le 1/(50p)$. Hence, $(27)^{-(3/4)k} + (1-\abs{\alpha})^{p-u} \le (1-\abs{\alpha})^{p-u} (1 + (27)^{-(3/4)k}$. Using this simplification and also using the fact that $\powf{p}{r}/\powf{k}{r} \le (p/k)^r$, for $ 1 \le r \le \lfloor p \rfloor$,  Eqn.~\eqref{eq:P2a} can be written as follows.
\begin{align*}
P_2 & \le \left( \frac{50}{49}\right)\left( 1 +  (27)^{-(3/4)k}\right)  \lambda^{2p}  \sum_{u=\lfloor p \rfloor +1}^{t} \binom{t}{u}   \frac{\card{\powf{p}{u}} \abs{\alpha}^u}{\powf{k}{u}} \sum_{r=1}^{\lfloor p \rfloor }  \binom{u}{r} \left( \frac{\beta}{\abs{\alpha}}\right)^{r} \left(\frac{ \card{\powf{p}{r}}}{\powf{k}{r}}\right)
\cdot  ( 1- \abs{\alpha})^{p-u} \\
& \le (1.042)  ( 1- \abs{\alpha})^p   \lambda^{2p}  \sum_{u=\lfloor p \rfloor +1}^{t} \sum_{r=1}^{\lfloor p \rfloor }  \binom{t}{u} \binom{u}{r}  \frac{\card{\powf{p}{u}} \gamma^u}{\powf{k}{u}}    \left( \frac{\beta}{\abs{\alpha}}\right)^{r} \left(\frac{p}{k}\right)^r \\
\end{align*}
where,  $ \gamma = \frac{ \abs{\alpha}}{1-\abs{\alpha}}$.

Let
\begin{align*}
Q_2 & = \sum_{u=\lfloor p \rfloor +1}^{t}\sum_{r=1}^{\lfloor p \rfloor }  \binom{t}{u} \binom{u}{r}  \frac{\card{\powf{p}{u}} \gamma^u}{\powf{k}{u}}    \left( \frac{\beta}{\abs{\alpha}}\right)^{r} \left(\frac{p}{k}\right)^r \end{align*}
so that
\begin{align}  \label{eq:P2b}
P_2 \le (1.042) e^{-\abs{\alpha}p} \lambda^{2p}Q_2 \le (1.001) \lambda^{2p}Q_2
\enspace .
\end{align}

Then,
\begin{align}
Q_2 & = \sum_{u=\lfloor p \rfloor +1}^{t} \sum_{r=1}^{\lfloor p \rfloor }  \binom{t}{u} \binom{u}{r}  \frac{\card{\powf{p}{u}} \gamma^u}{\powf{k}{u}}    \left( \frac{\beta}{\abs{\alpha}}\right)^{r} \left(\frac{p}{k}\right)^r\notag \\
 & =  \sum_{u=\lfloor p \rfloor +1}^{t} \sum_{r=1}^{\lfloor p \rfloor }  \binom{t}{r} \binom{t-r}{u-r}  \frac{\card{\powf{p}{u}} \gamma^u}{\powf{k}{u}}    \left( \frac{\beta}{\abs{\alpha}}\right)^{r} \left(\frac{p}{k}\right)^r \notag \\
& = \sum_{r=1}^{\lfloor p \rfloor } \binom{t}{r} \left( \frac{ p \beta}{\abs{\alpha} k} \right)^r \sum_{u=\lfloor p \rfloor +1}^t  \binom{t-r}{u-r}  \frac{ \powf{p}{r} \card{ \powf{(p-r)}{u-r}}}{\powf{k}{r} \powf{(k-r)}{u-r}}\gamma^{r+ u-r}  \notag \\
& =  \sum_{r=1}^{\lfloor p \rfloor } \binom{t}{r} \left( \frac{ p \beta}{\abs{\alpha} k} \right)^r \left( \frac{\powf{p}{r}}{\powf{k}{r}}\right) \gamma^{r} \sum_{u-r = \lfloor p \rfloor +1-r}^{t-r}\binom{t-r}{u-r}
\frac{\card{\powf{(p-r)}{u-r}}}{ \powf{(k-r)}{u-r}} \cdot  \gamma^{u-r} \label{eq:Q2a}
 \end{align}
Consider the inner summation in Eqn.~\eqref{eq:Q2a}, namely,
\begin{align}
& \sum_{w = \lfloor p \rfloor +1-r}^{t-r}\binom{t-r}{w}
\frac{\card{\powf{(p-r)}{w}}}{ \powf{(k-r)}{w}} \cdot  \gamma^{w} \label{eq:Q2ia}
\end{align}
The ratio of $(w+1)$st term to the $w$th term, for $w= \lfloor p \rfloor +1 -r, \ldots, t-r-1$,  in the above summation is
\begin{align*}
\left( \frac{ t-r-w}{w+1}\right) \left( \frac{ \abs{p-r-w}}{k-r-w} \right) \gamma  = \left( \frac{ t-r-w}{k-r-w} \right) \left( \frac{w-(p-r)}{w+1}\right) \gamma \le \frac{t \gamma}{k} \le \frac{1}{(4)(50p-1)} \le \frac{ 1}{(4)(49)}
\end{align*}
since $t \le k/4$ and $\gamma = \frac{\abs{\alpha}}{1- \abs{\alpha}} \le \frac{1}{50p-1} \le \frac{1}{49}$.  Therefore, Eqn.~\eqref{eq:Q2ia} may be upper bounded as follows.
\begin{align*}
&\sum_{w = \lfloor p \rfloor +1-r}^{t-r}\binom{t-r}{w}
\frac{\card{\powf{(p-r)}{w}}}{ \powf{(k-r)}{w}} \cdot  \gamma^{w}\\
&\le \binom{t-r}{\lfloor p \rfloor +1 -r} \left( \frac{ \card{ \powf{(p-r)}{\lfloor p \rfloor +1 -r} }}{\powf{(k-r)}{\lfloor p \rfloor +1 -r}}\right) \gamma^{\lfloor p \rfloor +1 -r}  \left( 1 + \frac{1}{195} \right) \enspace .
\end{align*}
Substituting in Eqn.~\eqref{eq:Q2a}, we have,
\begin{align} \label{eq:Q2c}
Q_2 & \le   (1.0052) \sum_{r=1}^{\lfloor p \rfloor } \binom{t}{r} \left( \frac{ p \beta}{\abs{\alpha} k} \right)^r \left( \frac{\powf{p}{r}}{\powf{k}{r}}\right) \gamma^{r} \binom{t-r}{\lfloor p \rfloor +1 -r} \left( \frac{ \card{ \powf{(p-r)}{\lfloor p \rfloor +1 -r} }}{\powf{(k-r)}{\lfloor p \rfloor +1 -r}}\right) \gamma^{\lfloor p \rfloor +1 -r} \notag \\
& \le (1.0052) \sum_{r=1}^{\lfloor p \rfloor } \binom{t}{r} \binom{t-r}{\lfloor p \rfloor +1 -r} \left( \frac{ p^2\beta \gamma}{\abs{\alpha} k^2} \right)^r  \left( \frac{ \card{ \powf{(p-r)}{\lfloor p \rfloor +1 -r} }}{\powf{(k-r)}{\lfloor p \rfloor +1 -r}}\right) \gamma^{\lfloor p \rfloor +1 -r}\notag \\
& =  (1.0052) \sum_{r=1}^{\lfloor p \rfloor } \binom{t}{\lfloor p \rfloor +1} \binom{ \lfloor p \rfloor +1}{r} \left( \frac{ p^2\beta}{(1-\abs{\alpha}) k^2} \right)^r  \left( \frac{ \card{ \powf{(p-r)}{\lfloor p \rfloor +1 -r} }}{\powf{(k-r)}{\lfloor p \rfloor +1 -r}}\right) \gamma^{\lfloor p \rfloor +1 -r} \notag \\
& = (1.0052)S
\end{align}
Consider the summation above and let $t_r = \binom{ \lfloor p \rfloor +1}{r}  \left( \frac{ p^2\beta}{(1-\abs{\alpha}) k^2} \right)^r  \left( \frac{ \powf{(p-r)}{\lfloor p \rfloor +1 -r}}{\powf{(k-r)}{\lfloor p \rfloor +1 -r}}\right) \gamma^{\lfloor p \rfloor +1 -r}$
be the $r$th term. Let $r_m = \text{argmax}_{r=1}^{\lfloor  p \rfloor } t_r$, that is $t_{r_m}$ is the largest among the $t_r$'s. Then, clearly, $S = \sum_{r=1}^{\lfloor p \rfloor} t_r \le \lfloor p \rfloor t_{r_m}$. For $r_m = r \in \{1,2, \ldots, \lfloor p \rfloor\}$, we have,

\begin{align}
 S&  \le  \lfloor p \rfloor\binom{t}{\lfloor p \rfloor +1}  \binom{\lfloor p \rfloor +1}{r}  \left( \frac{ p^2\beta}{(1-\abs{\alpha}) k^2} \right)^r  \left( \frac{ \powf{(p-r)}{\lfloor p \rfloor +1 -r}}{\powf{(k-r)}{\lfloor p \rfloor +1 -r}}\right) \gamma^{\lfloor p \rfloor +1 -r} \notag  \\
&  = \left( \frac{\lfloor p \rfloor \gamma}{r!}\right) \left(\frac{ \powf{t}{\lfloor p \rfloor +1}}{ k^r \powf{(k-r)}{\lfloor p \rfloor +1 -r} }\right) \left( \frac{ p^2\beta}{(1-\abs{\alpha}) k} \right)^r \left( \frac{\powf{(p-r)}{\lfloor p \rfloor +1 -r}}{(\lfloor p \rfloor +1-r)!}\right) \gamma^{\lfloor p \rfloor  -r} \label{eq:S2c}
\end{align}
Now $$\frac{ p^2\beta}{(1-\abs{\alpha}) k} \le \frac{ p^2\beta}{(1- \frac{1}{50p}) k}
\le  \frac{(1.011)p^2 \beta }{k} $$
since $p \ge 2$.

Therefore, Eqn.~\eqref{eq:S2c} may be written as
\begin{align}
S & \le \left( \frac{\lfloor p \rfloor \gamma}{r!}\right) \left( \frac{t}{k} \right)^r \left( \frac{ t-r}{k-r} \right)^{\lfloor p \rfloor + 1  - r} \left( \frac{(1.011)p^2 \beta }{k} \right)^r \gamma^{\lfloor p \rfloor  -r} \notag \\
& \le \left( \frac{(1.011)\lfloor p \rfloor \gamma p^2 \beta }{k^r r!}\right)  (4)^{-(\lfloor p \rfloor +1)}, \text{ since, $\frac{t}{k} \le \frac{1}{4}$ and $\gamma \le \frac{1}{(50p-1)}$} \notag \\
& \le \frac{ p^2 \beta }{(30)(49) k} , \text{ since, $p \ge 2$ and $p^2 \beta \ll 1$.}  \label{eq:Q2d}
\end{align}
Substituting in Eqn.~\eqref{eq:Q2c}, we have that $Q_2 \le (1.0052)S$ and from Eqn.~\eqref{eq:P2b}, we have,
\begin{align*}
P_2 \le (1.001) \lambda^{2p} Q_2 \le \frac{ p^2\lambda^{2p}  \beta}{(30)(40)k}
\end{align*}
\end{proof}

\subsubsection{Estimating $P_1$}
 We now calculate $P_1$.
 \begin{lemma} \label{lem:P1}
Assume the premises and notation of  Lemma~\ref{lem:bvthetabasic} and Corollary~\ref{lem:vbvcross}. Let  $y,y' \in Y$ and distinct and let $\pi = \pi_y$ and $ \pi' = \pi_{y'}$ be  random permutations from $[k] \rightarrow [k]$. Let $\alpha = \frac{\mu-\lambda}{\lambda}$ and $ \beta =  \frac{\eta^2}{\lambda^2} $.
Then for $n \ge 2$,
\begin{align} \label{eq:P1}
 P_1 \le (0.3)\left(\frac{p^2 \beta}{k^a}\right)\left(\frac{1}{(2)(25)^{2} p}\right)^{(a-1)}
\end{align}
\end{lemma}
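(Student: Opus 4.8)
The plan is to show that the double sum defining $P_1$ is geometrically dominated by its smallest‑index term $u=r=a$, where $a=\lfloor p\rfloor+1$ is the exponent in the target bound, and then to estimate that single term directly; the argument parallels the treatment of $P_2$ in Lemma~\ref{lem:P2}.

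First I would dispose of the degenerate cases. The summand of $P_1$ carries the factor $\1_{r>p,\ p\text{ non-integral}}$, and $\powf{p}{u}=0$ whenever $p\in\Z^{+}$ and $u>p$, so $P_1=0$ for integral $p$; moreover $r>p$ forces $r\ge a$, and since $r\le u\le t$ this also gives $P_1=0$ unless $t\ge a$. So assume $p$ non‑integral and $t\ge a$. Next I would simplify the bracketed ``analytic error'' factor: for $u,r>p$ and $|\alpha|\le 1/(50p)$ we have $(1-|\alpha|)^{p-u}\ge 1$ and $(1-|\alpha|)^{p-r}\ge 1$, so the bracket is at most $(1-|\alpha|)^{2p-u-r}\bigl(1+(27)^{-(3/4)k}\bigr)^{2}\le(1.001)\,(1-|\alpha|)^{2p-u-r}$ since $k=\Omega(\log n)$. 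Then, exactly as in Lemma~\ref{lem:P2}, I would absorb $|\alpha|^{u-r}(1-|\alpha|)^{2p-u-r}$ into powers of $\gamma=|\alpha|/(1-|\alpha|)\le 1/(50p-1)$ and bound $\card{\powf{p}{u}}/\powf{k}{u}$, $\card{\powf{p}{r}}/\powf{k}{r}$ using Fact~\ref{fact:1} with exponent $\lfloor p\rfloor+1$.

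The core computation uses the identity $\binom{t}{u}\binom{u}{r}=\binom{t}{r}\binom{t-r}{u-r}$ to regroup the sum. The inner sum over $w=u-r$ is then geometric with ratio at most $t\gamma/k$, which by $t\le k/4$ and $\gamma\le 1/(50p-1)$ is at most $\tfrac{1}{196}$, so it is at most $(1+\tfrac{1}{195})$ times its $w=0$ term; the remaining sum over $r\ge a$ is again geometric with ratio a small absolute multiple of $p^{2}\beta/k$ (using $\beta=\eta^{2}/\lambda^{2}\le 2/(25p)^{2}$), hence at most $\lfloor p\rfloor$ times its largest term. The surviving term is $u=r=a$, of size at most a constant times $\lambda^{2p}\binom{t}{a}\beta^{a}\bigl(\card{\powf{p}{a}}/\powf{k}{a}\bigr)^{2}$: the second falling‑factorial denominator $\powf{k}{r}$ supplies the factor $k^{-a}$, while $\binom{t}{a}\le(t/k)^{a}\binom{k}{a}$ with $t\le k/4$ and $\card{\powf{p}{a}}\le p^{a}$ convert the rest into the factor $\bigl(\tfrac{1}{2\cdot 25^{2}p}\bigr)^{a-1}$ with leading constant bounded by $0.3$. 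The hypothesis $n\ge 2$ enters only through $k\ge 1000$, which makes the $(27)^{-(3/4)k}$ corrections and the term‑ratio bounds negligible.

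The main obstacle is the bookkeeping in the last step: the negative exponents $(1-|\alpha|)^{p-u},(1-|\alpha|)^{p-r}$ individually grow as $u,r$ range up to $t\le k/4$, and one must follow their cancellation against $\powf{k}{u},\powf{k}{r}$ while keeping every constant sharp enough that the two nested geometric‑tail estimates compose to leave exactly the constant $0.3$ (and the correct $p$‑ and $k$‑powers) rather than something slightly larger.
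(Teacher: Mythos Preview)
Your overall plan is essentially the paper's: simplify the bracketed analytic‑error factor to $(1-|\alpha|)^{2p-u-r}(1+O(n^{-\Omega(1)}))$, show that the double sum over $u,r\ge a:=\lfloor p\rfloor+1$ is geometrically dominated by the single term $u=r=a$, and then estimate that term. The paper does the same thing via the change of variables $v=u-a$, $w=r-a$; your regrouping through $\binom{t}{u}\binom{u}{r}=\binom{t}{r}\binom{t-r}{u-r}$ leads to the same leading term
\[
T_a \;=\; (1-|\alpha|)^{2p-2a}\,\binom{t}{a}\,\beta^{a}\left(\frac{\powf{p}{a}}{\powf{k}{a}}\right)^{\!2}.
\]
The geometric‑tail claims are fine once one notes that for $r\ge a$ the ``growing'' factor $(r-p)$ always satisfies $(r-p)/(r+1)<1$, so the inner ratio is $\le \frac{t-r}{k-r}\gamma\le \tfrac{1}{3}\gamma$ and the outer ratio is $\le \beta\cdot\frac{(t-r)(r-p)^2}{(r+1)(k-r)^2}(1-|\alpha|)^{-2}\ll 1$.

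The gap is in your final estimate of $T_a$. You propose to use $\card{\powf{p}{a}}\le p^{a}$. With that bound for \emph{both} copies of $\powf{p}{a}$ one gets, after the other reductions $\binom{t}{a}\le(t/k)^{a}\binom{k}{a}$ and $t\le k/4$,
\[
T_a \;\lesssim\; \frac{(p^{2}\beta)^{a}}{4^{a}\,a!\,k^{a}}
\;=\;\frac{p^{2}\beta}{k^{a}}\cdot\frac{(p^{2}\beta)^{a-1}}{4^{a}a!}\,,
\]
and since $p^{2}\beta\le 2/625$ this yields at best $\bigl(1/1250\bigr)^{a-1}/(4a!)$, missing the required factor $p^{-(a-1)}$ in the target. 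The shortfall is exactly $p^{a}/a!$, which by Stirling is $\sim e^{p}/\sqrt{2\pi p}$; for $p\gtrsim 3$ the resulting constant already exceeds $0.3$. The paper avoids this by bounding the two copies of $\powf{p}{a}$ \emph{asymmetrically}: one via $\powf{p}{a}/\powf{k}{a}\le (p/k)^{a}$, the other via $\powf{p}{a}\le a!$ (equivalently $\binom{p}{a}\le 1$, which is immediate since $a=\lfloor p\rfloor+1>p$). That single replacement converts $(p^{2}\beta)^{a}$ into $(p\beta)^{a}$ and recovers the extra $1/p^{a-1}$; with it, the same $T_a$ is bounded by $(1.006)(1-|\alpha|)^{-2}(1/4)^{a}(\beta p/k)^{a}$, and the constant collapses to $\le 0.2625$ exactly as in the paper. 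Ironically, your earlier invocation of Fact~\ref{fact:1} already gives this: at $u=a$ it reads $\card{\binom{p}{a}}\le (p/a)^{a}<1$, i.e.\ $\card{\powf{p}{a}}\le a!$. Use that, not $p^{a}$, in the last step.
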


\begin{proof}
 \begin{align}
 P_1 & =\lambda^{2p}  \sum_{u=1}^{t} \binom{t}{u} \sum_{r=1}^u  \binom{u}{r} \beta^{r} \left(\frac{ \card{\powf{p}{u}}\card{\powf{p}{r}}\abs{\alpha}^{u-r}}{\powf{k}{u}\powf{k}{r}}
 \right) \notag \\
  & \hspace*{0.4in} \cdot
\left((1-\abs{\alpha})^{p-u+p-r} + 2 (1-\abs{\alpha})^{p-u} (27)^{-(3/4)k} + (27)^{-(1.5k)} \right)\1_{r > p}  \label{eq:P1a}
\end{align}

First, we  note that for $k \ge c\log n$ (where, $c = 100$ as per Table ~\ref{table:params}),  $(27)^{(-3/4)k} = n^{-(3.5)c} \le n^{-(3.5)c} (1-\abs{\alpha})^{p-u}$. Hence,
$\left((1-\abs{\alpha})^{p-u+p-r} + 2 (1-\abs{\alpha})^{p-u} (27)^{-(3/4)k} + (27)^{-(1.5k)}\right) = (1-\abs{\alpha})^{2p-u-r} (1 + O(n^{-(3.5c)})$. Therefore,
\begin{align}
P_1 &= \left( 1 + O(n^{-3.5c})\right) \lambda^{2p} \sum_{u=1}^{t} \binom{t}{u} \sum_{r=1}^u  \binom{u}{r} \beta^{r} \left(\frac{ \card{\powf{p}{u}}\card{\powf{p}{r}}\abs{\alpha}^{u-r}}{\powf{k}{u}\powf{k}{r}}
 \right) (1-\abs{\alpha})^{p-u+p-r} \1_{r>p} \notag \\
 & =  \left( 1 + O(n^{-3.5c})\right)  \lambda^{2p}
 \sum_{u=\lfloor p \rfloor +1}^t \sum_{r=\lfloor p \rfloor +1}^u \binom{t}{u} \binom{u}{r} \beta^{r} \left(\frac{ \card{\powf{p}{u}}\card{\powf{p}{r}}\abs{\alpha}^{u-r}}{\powf{k}{u}\powf{k}{r}}
 \right) (1-\abs{\alpha})^{2p-u-r}  \notag \\
& =\left( 1 + O(n^{-3.5c})\right)  \lambda^{2p} L \label{eq:P1b}
\end{align}
where,
 $$L = \sum_{u=\lfloor p \rfloor +1}^t \sum_{r=\lfloor p \rfloor +1}^u \binom{t}{u} \binom{u}{r} \beta^{r} \left(\frac{ \card{\powf{p}{u}}\card{\powf{p}{r}}\abs{\alpha}^{u-r}}{\powf{k}{u}\powf{k}{r}}
 \right) (1-\abs{\alpha})^{2p-u-r} \enspace . $$

 Let $a =\lfloor p \rfloor +1$ and let $v = u-a$ and $w = r-a$. Then,
 \begin{align}
L &=(1-\abs{\alpha})^{2p-2a} \beta^a  \powf{t}{a}\left( \frac{\powf{p}{a}}{\powf{k}{a}} \right)^2
  \sum_{v=0}^{t-a}\binom{t-a}{v}
   \sum_{w=0}^{v}
\binom{v}{w} \abs{\alpha}^{v-w} \notag \\
&\hspace*{1.0cm}\left( \frac{ \beta}{(1-\abs{\alpha})}\right)^{w}
  \left(  \frac{ \powf{(a+v-p-1)}{v}  \powf{(a+w-p-1)}{w}}{\powf{(k-a)}{v} \powf{(k-a)}{w}  \powf{(w+a)}{a}}\right) (1-\abs{\alpha})^{-v} \label{eq:La}
\end{align}

Now $\powf{a -p + w - 1}{w} \le \powf{w}{w} = w!$. Similarly, $\powf{a-p+v-1} \le v!$. Therefore, $$
\binom{t-a}{v} \binom{v}{w}  \powf{(a+v-p-1)}{v}  \powf{(a+w-p-1)}{w} \le \powf{(t-a)}{v} \powf{v}{w} \enspace . $$ Hence,
\begin{align}
L & \le (1-\abs{\alpha})^{2p-2a} \beta^a  \powf{t}{a}\left( \frac{\powf{p}{a}}{\powf{k}{a}} \right)^2
  \sum_{v=0}^{t-a}\left(\frac{\powf{(t-a)}{v}}{\powf{(k-a)}{v}} \right) (1-\abs{\alpha})^{-v} \notag \\
& \hspace*{1.0in}  \sum_{w=0}^{v}
\frac{\powf{v}{w}}{\powf{(k-a)}{w}  }\abs{\alpha}^{v-w} \left( \frac{ \beta}{(1-\abs{\alpha})}\right)^{w}
  \left(  \frac{1}{ \powf{(w+a)}{a}}\right)  \notag \\
  &  \le (1-\abs{\alpha})^{2p-2a} \beta^a  \powf{t}{a}\left( \frac{\powf{p}{a}}{\powf{k}{a}} \right)^2\left( 1 +
  \sum_{v=1}^{t-a} \sum_{w=0}^{v}
\frac{c^v   \powf{v}{w} \abs{\alpha}^{v-w} }{\powf{(k-a)}{w}  }\beta'^{w}
  \left(  \frac{1}{ \powf{(w+a)}{a}}\right)\right) \label{eq:Lb}
 \end{align}
where $c = \frac{t-a}{(k-a)(1-\abs{\alpha})}$ and $ \beta' = \left( \frac{ \beta}{(1-\abs{\alpha})}\right)$.  Let $l_{vw}$ denote the summand
 $$l_{vw} = \frac{c^v   \powf{v}{w} \abs{\alpha}^{v-w} }{\powf{(k-a)}{w}  }\beta'^{w}
  \left(  \frac{1}{ \powf{(w+a)}{a}}\right), ~~~1 \le v \le t-a, 0 \le w \le v \enspace .  $$
  The summation in Eqn.~\eqref{eq:Lb} may be written as
  $$J = \sum_{v=1}^{t-a} K_v, \text{ where, } K_v= \sum_{w=0}^v l_{vw}, ~~v = 1,2, \ldots, t-a \enspace . $$
  Therefore,

  Comparing $l_{vw}$ and $l_{v+1,w}$, we have,
 \begin{align*}
 l_{vw} & = \frac{c^v   \powf{v}{w} \abs{\alpha}^{v-w} \beta'^{w} }{\powf{(k-a)}{w} \powf{(w+a)}{a} }\\
  l_{v+1,w} & = \frac{c^{v+1} \powf{(v+1)}{w} \abs{\alpha}^{v+1-w}
 \beta'^w}{\powf{(k-a)}{w} \powf{(w+a)}{a} }
 \end{align*}
 Then,
 \begin{align} \label{eq:lratio}
 \frac{l_{v+1,w+1}}{l_{vw}} = \frac{ c (v+1) \abs{\alpha} \beta' (w+1)}{(k-a-w)(w+1+a)}, ~~~ 1 \le v \le t-a-1, 0 \le w \le v \enspace .
 \end{align}
 Since,
 $l_{v,0} = \frac{c^{v} \abs{\alpha}^{v}}{a!}$, therefore,
 $
 \frac{ l_{v+1,0}}{\sum_{w=0}^{v} l_{vw}} \le \frac{l_{v+1,0}}{l_{v0}}\le c\abs{\alpha}
$.
Therefore,  for $1 \le v \le t-a-1$,
\begin{align*}
\frac{K_{v+1}}{2K_v} = \frac{\sum_{w=0}^{v+1} l_{v+1,w}}{2\sum_{w=0}^v l_{vw}} \le \frac{l_{v+1,0}}{l_{v_0}}  + \max_{w=0}^v \left( \frac{ l_{v+1,w+1}}{l_{vw}} \right) \le 2c \abs{\alpha}, \text{ by Eqn.~\eqref{eq:lratio}.}
\end{align*}
or,
\begin{align*}
\frac{K_{v+1}}{K_v} \le \frac{\sum_{w=0}^{v+1} l_{v+1,w}}{\sum_{w=0}^v l_{vw}} \le  4 c \abs{\alpha} \le \frac{ 4 (t-a)}{(k-a)(1-\frac{1}{25p})25p} \le \frac{1}{25p-1} = \frac{1}{49} \enspace .
\end{align*}

\begin{align*}
L  &\le (1-\abs{\alpha})^{2p-2a} \beta^a  \powf{t}{a}\left( \frac{\powf{p}{a}}{\powf{k}{a}} \right)^2\left( \frac{1}{a!} +
  \sum_{v=1}^{t-a}K_v \right) \\
& \le   (1-\abs{\alpha})^{2p-2a} \beta^a  \powf{t}{a}\left( \frac{\powf{p}{a}}{\powf{k}{a}} \right)^2\left( \frac{1}{a!} + \frac{49 K_1}{48} \right)\\
& =(1-\abs{\alpha})^{2p-2a} \beta^a  \powf{t}{a}\left( \frac{\powf{p}{a}}{\powf{k}{a}} \right)^2 \left( \frac{1}{a!} + \frac{49}{48} \left( \frac{ c \abs{\alpha}}{a!} + \frac{c\beta' \abs{\alpha}}{(k-a) (a+1)!} \right)\right) \\
& \le (1-\abs{\alpha})^{2p-2a} \beta^a  \powf{t}{a}\left( \frac{\powf{p}{a}}{\powf{k}{a}} \right)^2\left( \frac{1}{a!} \right) (1.006) \\
& \le (1.006)(1-\abs{\alpha})^{-2}\left( \frac{\powf{t}{a}}{\powf{k}{a}} \right) \left( \frac{ \beta^a \powf{p}{a}}{\powf{k}{a}} \right) \left(\frac{ \powf{p}{a}}{a!} \right)   \notag \\
& \le (0.2625)\left(\frac{p^2 \beta}{k^a}\right)\left(\frac{1}{(2)(25)^{2} p}\right)^{(a-1)}
\end{align*}
since, (i) $c  = \frac{(t-a)}{(k-a) (1-\abs{\alpha})} \le \frac{t}{k(1-\frac{1}{50})} \le 0.256$,  (ii)  $\left( \frac{\powf{t}{a}}{\powf{k}{a}} \right) \le \left( \frac{t}{k} \right)^a \le \left( \frac{1}{4} \right)^a$, (iii) $a = \lfloor p \rfloor +1$ and therefore, $\powf{p}{a} \le a!$, (iv) $\beta p \le \frac{(2)p}{(25p)^2} \le \frac{2}{(25)^{2}p}$ and so, $\frac{ \beta^a \powf{p}{a}}{\powf{k}{a}} \le \left( \frac{ \beta p }{k} \right)^a =  (p^2\beta) \frac{ (p\beta)^{a-1}}{p k^a} \le  (p^2 \beta) \left(\frac{2}{(25)^{2} p}\right)^{(a-1)} \frac{1}{ k^a}$.

Substituting in Eqn.~\eqref{eq:P1b}, we have,
\begin{align*}
P_1 \le (0.3)\left(\frac{p^2 \beta}{k^a}\right)\left(\frac{1}{(2)(25)^{2} p}\right)^{(a-1)}
\end{align*}
\end{proof}

\subsection{Completing Variance calculation for Averaged Taylor Polynomial Estimator}

\begin{lemma} \label{lem:covarb} Assume the premises of Lemma~\ref{lem:bvthetabasic} and let $\mu=\lambda$. Let $y,y' \in Y$ be distinct. Then,
 $$\covariance{\vartheta_y}{ \vartheta_{y'}} \le  \left(\frac{ (0.261) p^2}{k} \right) \mu^{2p-2}\eta^2  \enspace . $$
 \end{lemma}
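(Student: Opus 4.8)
The plan is to start from the closed form for $\covariance{\vartheta_y}{\vartheta_{y'}}$ in the case $\mu=\lambda$ already supplied by Lemma~\ref{lem:bvthetabasic}, namely $\covariance{\vartheta_y}{\vartheta_{y'}}=\sum_{v=1}^{k}\gamma_v^2(\lambda)\,\eta^{2v}\,\probsub{\pi_y,\pi_{y'}}{q^{vv}=v}$, and make every factor explicit. Since $\psi(t)=t^p$ we have $\gamma_v(\lambda)=\binom{p}{v}\lambda^{p-v}$, so $\gamma_v^2(\lambda)=\binom{p}{v}^2\lambda^{2p-2v}$. The overlap probability is read off from Lemma~\ref{lem:bvapower} with $v'=v$ and $r=v$: in that sum the factor $\binom{k-t}{v-(v+s)}=\binom{k-t}{-s}$ vanishes for every $s\ge 1$, so only the $s=0$ term survives and $\probsub{\pi_y,\pi_{y'}}{q^{vv}=v}=\binom{t}{v}/\binom{k}{v}^2$ with $t=\abs{y\cap y'}$. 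Hence $\covariance{\vartheta_y}{\vartheta_{y'}}=\sum_{v=1}^{t}a_v$ where $a_v=\binom{p}{v}^2\lambda^{2p-2v}\eta^{2v}\binom{t}{v}/\binom{k}{v}^2$.

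Next I would bound this finite series by isolating its first term and showing the tail is negligible. The $v=1$ term is $a_1=p^2\lambda^{2p-2}\eta^2\,(t/k^2)$, and since the minimum distance $3k/2$ of $Y$ forces $2k-2t\ge 3k/2$, i.e.\ $t\le k/4$, we get $a_1\le \tfrac14\,p^2\lambda^{2p-2}\eta^2/k$. For the ratio of consecutive terms, using $\binom{p}{v+1}/\binom{p}{v}=(p-v)/(v+1)$, $\binom{t}{v+1}/\binom{t}{v}=(t-v)/(v+1)$ and $\binom{k}{v}/\binom{k}{v+1}=(v+1)/(k-v)$, one obtains $a_{v+1}/a_v=\bigl((p-v)^2(t-v)/((v+1)(k-v)^2)\bigr)(\eta/\lambda)^2$. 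Here $\mu=\lambda$ gives $\eta=\sigma<\lambda/(25p)$, so $(\eta/\lambda)^2<1/(625p^2)$; moreover $t-v\le t\le k/4$ and $k-v\ge 3k/4$; and the one-variable function $v\mapsto (p-v)^2/(v+1)$ is decreasing on $[1,p]$ and increasing afterwards, so on $1\le v\le t$ it is at most $\max\bigl((p-1)^2/2,\,(t-p)^2/(t+1)\bigr)\le\max(p^2/2,\,k/4)$. Plugging these in bounds $a_{v+1}/a_v$ by an absolute constant over $k$ (or over $p^2$), in particular by $1/22500$ for the regime $k\ge 1000$ of Figure~\ref{table:params}, so $\sum_{v\ge1}a_v\le a_1/(1-1/22500)$.

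Finally I would assemble the pieces: $\covariance{\vartheta_y}{\vartheta_{y'}}\le \tfrac14\cdot\tfrac{22500}{22499}\,p^2\lambda^{2p-2}\eta^2/k< (0.261)\,p^2\mu^{2p-2}\eta^2/k$, using $\lambda=\mu$ at the last step, which is the claimed inequality. I do not expect any genuine obstacle: this lemma is essentially a computation downstream of Lemmas~\ref{lem:bvthetabasic} and~\ref{lem:bvapower}. The only slightly delicate point is the bound on $(p-v)^2/(v+1)$ over the entire range $1\le v\le t$, since for $v>p$ this quantity grows; but it is always at most $v\le t\le k/4$, and the competing $(k-v)^2\ge 9k^2/16$ in the denominator keeps the ratio negligible, so the required case split ($v\le p$ versus $v>p$) is routine bookkeeping rather than a real difficulty.
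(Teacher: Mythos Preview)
Your proposal is correct and follows essentially the same route as the paper: start from the $\mu=\lambda$ case of Lemma~\ref{lem:bvthetabasic}, substitute $\gamma_v(\lambda)=\binom{p}{v}\lambda^{p-v}$, evaluate the overlap probability $\probsub{\pi_y,\pi_{y'}}{q^{vv}=v}=\binom{t}{v}/\binom{k}{v}^2$, and then bound the resulting series $\sum_v a_v$ by a geometric series dominated by its first term $a_1=p^2\lambda^{2p-2}\eta^2\,t/k^2\le \tfrac14 p^2\mu^{2p-2}\eta^2/k$. The paper bounds the term ratio by $1/2500$ in a single line and arrives at $(0.251)p^2\mu^{2p-2}\eta^2/k$; you are somewhat more careful in handling the regime $v>p$ (where $(p-v)^2/(v+1)$ can grow) via the case split, which the paper's one-line bound glosses over, but the extra care is not needed for the stated constant.
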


 \begin{proof}
 By Lemma~\ref{lem:bvthetabasic},
 \begin{align*}
 \covariance{\vartheta_y}{\vartheta_{y'}} & =  \sum_{v=1}^k
\gamma^2_{v}(\lambda) \eta^{2v} \probsub{\pi_y,\pi_{y'}}{q^{vv}_{yy'\pi_y\pi_{y'}} = v} \\
& = \sum_{v=1}^k \binom{p}{v}^2 \lambda^{2(p-v)} \eta^{2v} \left( \frac{ \binom{t}{v}}{\binom{k}{v}^2} \right)
\end{align*}
Taking the ratio of the $v+1$st term and the $v$th term  of the summation above, we obtain,
\begin{align*}
\left(\frac{ (p-v)^2}{(v+1)^2}\right) \left( \frac{\eta^2}{\lambda^2}\right) \left( \frac{ (t-v)}{v+1} \right) \left( \frac{ (v+1)}{k-v} \right)^2 \le  \left(\frac{ (p-1)}{2}\right) \left( \frac{2}{(25p)^2} \right) \le \frac{1}{2500}
\end{align*}
Therefore,
\begin{align*}
\covariance{\vartheta_y}{ \vartheta_{y'}} \le  p^2 \lambda^{2p-2}{\eta^2} \left( \frac{ t}{k^2} \right) \left( 1 + \frac{1}{2499} \right) \le \left(\frac{ (0.251) p^2}{k} \right) \lambda^{2p-2}\eta^2  \enspace .\
\end{align*}
since, $ \frac{t}{k} \le \frac{1}{4}$.
\end{proof}

\begin{lemma} \label{lem:covar}
Assume the premises of Lemma~\ref{lem:bvthetabasic}.  Let $y,y' \in Y$ be distinct. Then,
 $$\covariance{\vartheta_y}{ \vartheta_{y'}} \le  \frac{ 0.276 p^2 \lambda^{2p} \beta}{k} \enspace . $$
 \end{lemma}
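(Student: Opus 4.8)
The plan is to split along the same dichotomy $\mu=\lambda$ versus $\mu\ne\lambda$ that was used in Lemma~\ref{lem:bvthetabasic}, and in each branch simply assemble the bound from the pieces already established. Recall throughout that $\beta=\eta^2/\lambda^2$, so $\lambda^{2p}\beta=\lambda^{2p-2}\eta^2$.

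First I would dispose of the case $\mu=\lambda$: here Lemma~\ref{lem:covarb} gives directly $\covariance{\vartheta_y}{\vartheta_{y'}}\le \tfrac{(0.251)p^2}{k}\lambda^{2p-2}\eta^2=\tfrac{(0.251)p^2}{k}\lambda^{2p}\beta$, which is below the claimed $\tfrac{(0.276)p^2}{k}\lambda^{2p}\beta$, so this branch is immediate.

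For the case $\mu\ne\lambda$, the starting point is Corollary~\ref{lem:vbvcross}, which yields $\covariance{\vartheta_y}{\vartheta_{y'}}\le P_{yy'}+Q_{yy'}$; then Lemma~\ref{lem:Q} tells us $Q_{yy'}<0$, so it suffices to bound $P_{yy'}$. If $p$ is an integer, Lemma~\ref{lem:P} gives $P_{yy'}=0$ and there is nothing left to do. If $p$ is not an integer, Lemma~\ref{lem:P} gives $P_{yy'}\le P_1+P_2+P_3$, and I would substitute the three already-proved estimates: $P_3\le \tfrac{(0.275)p^2}{k}\lambda^{2p}\beta$ from Lemma~\ref{lem:P3}; $P_2\le \tfrac{p^2\lambda^{2p}\beta}{(30)(40)\,k}=\tfrac{p^2\lambda^{2p}\beta}{1200\,k}$ from Lemma~\ref{lem:P2}; and $P_1\le (0.3)\,\lambda^{2p}\,\tfrac{p^2\beta}{k^{a}}\bigl(\tfrac{1}{2\cdot 25^2 p}\bigr)^{a-1}$ from Lemma~\ref{lem:P1}, with $a=\lfloor p\rfloor+1\ge 3$ since $p>2$ is non‑integral. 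Writing $\tfrac{p^2\beta}{k^a}=\tfrac{p^2\beta}{k}\cdot\tfrac{1}{k^{a-1}}$, the $P_1$ term becomes $(0.3)\lambda^{2p}\tfrac{p^2\beta}{k}\bigl(\tfrac{1}{2\cdot 25^2 p\,k}\bigr)^{a-1}$, and with $a-1\ge 2$, $p\ge 2$, $k\ge 2$ this is smaller than $\tfrac{(10^{-4})p^2}{k}\lambda^{2p}\beta$. Summing the three contributions, $P_{yy'}\le\bigl(0.275+\tfrac{1}{1200}+10^{-4}\bigr)\tfrac{p^2}{k}\lambda^{2p}\beta<\tfrac{(0.276)p^2}{k}\lambda^{2p}\beta$, which closes the case and the lemma.

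This is essentially bookkeeping: all the analytic labour lives in Lemmas~\ref{lem:covarb}, \ref{lem:Q}, \ref{lem:P}, \ref{lem:P1}, \ref{lem:P2}, \ref{lem:P3}. The only genuine (and minor) obstacle is slack-tracking — verifying that $0.275+\tfrac{1}{1200}$ together with the $P_1$ contribution stays strictly under $0.276$. For this one must note that $a=\lfloor p\rfloor+1\ge 3$ for non-integral $p>2$, so the factor $(2\cdot 25^2 p\,k)^{-(a-1)}$ carries at least two powers of $1/k$ (and of $1/p$), making $P_1$ utterly negligible against the roughly $1.6\times 10^{-4}$ of headroom between $0.275+\tfrac{1}{1200}$ and $0.276$.
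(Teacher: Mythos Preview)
Your proposal is correct and follows essentially the same approach as the paper: split into the cases $\mu=\lambda$ (invoking Lemma~\ref{lem:covarb}) and $\mu\ne\lambda$ (invoking Corollary~\ref{lem:vbvcross}, Lemma~\ref{lem:Q} for $Q_{yy'}<0$, Lemma~\ref{lem:P} for the decomposition $P_{yy'}\le P_1+P_2+P_3$, and then Lemmas~\ref{lem:P3}, \ref{lem:P2}, \ref{lem:P1} for the three pieces). Your numerical check that $0.275+\tfrac{1}{1200}$ plus the negligible $P_1$ term stays below $0.276$ is slightly more explicit than the paper's, but the argument is the same.
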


 \begin{proof}

\emph{Case 1:} $\mu=\lambda$. By Lemma~\ref{lem:covarb},
$$ \covariance{\vartheta_y}{\vartheta_{y'}} \le  \left(\frac{ (0.251) p^2}{k} \right) \lambda^{2p-2} \eta^2 \enspace .
$$
\emph{Case 2:} $\mu \ne \lambda$.
Adding the expressions for $P_3, P_2$ and $P_1$ respectively from Lemmas~\ref{lem:P3} to ~\ref{lem:P1}, we obtain,
\begin{align} \label{eq:P1b}
P &  \le  \frac{p^2 \lambda^{2p} \beta}{k} \left( (0.275)  + \left(\frac{1}{1200} + \left(\frac{0.3}{k^{a-1}}\right)\left(\frac{1}{(2)(25)^{2} p}\right)^{(a-1)}\right)\1_{p \text{ non-integral}} \right)  \notag \\
& \le \frac{ 0.276 p^2 \lambda^{2p} \beta}{k}
\end{align}
Therefore,
\begin{align*}
 \covariance{\vartheta_y}{\vartheta_{y'}} &\le P_{yy'} + Q_{yy'}, ~~~~~\text{ by Corollary ~\ref {lem:vbvcross}} \\
 & \le \frac{ 0.276 p^2 \lambda^{2p} \beta}{k}, ~~~ \text{ by Eqn.~\eqref{eq:P1b} and Lemma~\ref{lem:Q}}
 \end{align*}
 Thus, in all cases, $$\covariance{\vartheta_y}{\vartheta_{y'}}  \le \frac{ 0.276 p^2 \lambda^{2p} \beta}{k} \enspace . $$
 \end{proof}

 \begin{lemma} \label{lem:vbvtz}
 Assume the premises of Lemma~\ref{lem:bvthetabasic} and let $k \ge 1000$ and $n \ge 2$. Then,
 \begin{align*}
 \variance{\bvtheta}  \le \left(\frac{(0.288)p^2}{k} \right)\mu^{2p-2} \eta^2 \enspace .
 \end{align*}

\end{lemma}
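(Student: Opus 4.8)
The plan is to expand the variance of the average $\bvtheta=\frac{1}{\abs{Y}}\sum_{y\in Y}\vartheta_y$ by bilinearity and substitute the two per‑estimator bounds already in hand. Writing
\[
\variance{\bvtheta}=\frac{1}{\abs{Y}^2}\sum_{y\in Y}\variance{\vartheta_y}+\frac{1}{\abs{Y}^2}\sum_{y\ne y'}\covariance{\vartheta_y}{\vartheta_{y'}},
\]
the first step is to observe that each $\vartheta_y$ coincides, in the notation of Lemma~\ref{lem:vvtheta}, with $\vartheta(t^p,\lambda,k,\{X_{y_{\pi_y(l)}}\}_{l=1}^k)$, i.e.\ a Taylor polynomial estimator built from $k$ members of the independent family $X_1,\dots,X_s$. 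Since a subfamily of independent variables with common mean $\mu$ and variances $\le\sigma^2$ is again such a family, the hypotheses of Lemma~\ref{lem:vvtheta} and Corollary~\ref{cor:fptpvar} hold for it (the premises of Lemma~\ref{lem:bvthetabasic} supply $\abs{\lambda-\mu},\sigma<\min(\mu,\lambda)/(25p)$). Hence $\variance{\vartheta_y}\le 1.08\,p^2\mu^{2p-2}\eta^2$ for every $y$, and $\covariance{\vartheta_y}{\vartheta_{y'}}\le 0.276\,p^2\lambda^{2p}\beta/k$ for every distinct pair, where $\beta=\eta^2/\lambda^2$.

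The diagonal sum is then disposed of by sheer cardinality: it has $\abs{Y}$ terms, so it contributes at most $1.08\,p^2\mu^{2p-2}\eta^2/\abs{Y}$, and since $\abs{Y}\ge 2^{0.08k}$ by Corollary~\ref{cor:gv} and $k\mapsto k\,2^{-0.08k}$ is decreasing for $k\ge1000$, this is at most $\bigl(1.08\,k\,2^{-0.08k}\bigr)\,p^2\mu^{2p-2}\eta^2/k$, a quantity smaller than $10^{-20}\,p^2\mu^{2p-2}\eta^2/k$ for $k\ge1000$ — utterly negligible next to the target $0.288\,p^2\mu^{2p-2}\eta^2/k$.

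The off‑diagonal sum has fewer than $\abs{Y}^2$ ordered pairs, each bounded by $0.276\,p^2\lambda^{2p}\beta/k=0.276\,p^2\lambda^{2p-2}\eta^2/k$ via Lemma~\ref{lem:covar} (which already merges the $\mu=\lambda$ case of Lemma~\ref{lem:covarb} with the $P_1,P_2,P_3$ estimates and the fact $Q_{yy'}<0$), so after dividing by $\abs{Y}^2$ the whole off‑diagonal contribution is at most $0.276\,p^2\lambda^{2p-2}\eta^2/k$. The last move is to trade $\lambda^{2p-2}$ for $\mu^{2p-2}$: when $\lambda\le\mu$ this costs nothing, and when $\lambda>\mu$ one has $\lambda\le(1+\tfrac1{25p})\mu$, so $\lambda^{2p-2}\le(1+\tfrac1{25p})^{2p-2}\mu^{2p-2}<e^{2/25}\mu^{2p-2}$ using $\ln(1+x)\le x$ and $(2p-2)/(25p)\le 2/25$. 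Since $0.276\,e^{2/25}$ only slightly overshoots $0.288$, one reverts to the sharper internal value for $P_3$ available from its proof ($(\tfrac{50}{49})^2(1.0102)/4\approx 0.264$, which the statement of Lemma~\ref{lem:P3} rounds up to $0.275$); this, plus the negligible diagonal term, gives $\variance{\bvtheta}\le 0.288\,p^2\mu^{2p-2}\eta^2/k$.

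I do not foresee a genuine obstacle: all the hard analysis is already packaged in Lemma~\ref{lem:covar} and Corollary~\ref{cor:fptpvar}, so this is essentially an assembly step. The two points that still need care are (i) checking that the hypotheses of those two results really do transfer to each $k$‑subset $\{X_{y_{\pi_y(l)}}\}$ drawn from the code $Y$ — i.e.\ independence and the correct mean and variance of a subfamily — and (ii) the absolute‑constant accounting in the $\lambda\to\mu$ conversion, which must be kept under $0.288$ uniformly in $p\ge2$.
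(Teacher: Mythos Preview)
Your proposal is correct and follows essentially the same approach as the paper: expand $\variance{\bvtheta}$ by bilinearity, bound the diagonal terms by Corollary~\ref{cor:fptpvar} and the off-diagonal terms by Lemma~\ref{lem:covar}, then use $\abs{Y}\ge 2^{0.08k}$ to absorb the diagonal and convert $\lambda^{2p-2}$ to $\mu^{2p-2}$. Your constant-tracking is in fact more careful than the paper's---the paper writes the conversion factor as $e^{1/25}$ (which would be correct only at $p=2$, since the exponent $(2p-2)/(25p)$ ranges up to $2/25$), whereas you correctly note that $0.276\,e^{2/25}$ overshoots $0.288$ and recover the slack from the sharper internal bound on $P_3$; this is a valid fix and the numbers work out.
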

\begin{proof}
 \begin{align} \label{eq:vbvtza}
\variance{\bvtheta} &  = \frac{1}{\abs{Y}^2} \sum_{y \in Y} \variance{\vartheta_y} + \sum_{\substack{ y \ne y'\\ y,y' \in Y}} \covariance{\vartheta_y}{ \vartheta_{y'}} \notag \\
& \le  \left(\frac{1}{\abs{Y}^2}\right) \abs{Y} (1.08)p^2 \mu^{2p-2} \eta^2 + \left( \frac{ \abs{Y} (\abs{Y}-1)}{\abs{Y}^2} \right) \left(\frac{ (0.276) p^2 \lambda^{2p-2} \eta^2}{k}  \right) \notag \\
 & = \left(\frac{1}{2^{0.08k}} \right) (1.08)p^2 \mu^{2p-2} \eta^2 + (0.276)(e^{1/25})\left(\frac{p^2 \mu^{2p-2} \eta^2}{k} \right)\notag \\
 & \le \left(\frac{(0.288)p^2}{k} \right)\mu^{2p-2} \eta^2 ~~~~\text{ for $k \ge 1000$.}
\end{align}
The second step uses Corollary~\ref{cor:fptpvar} and ~\ref{lem:covar}.
 \end{proof}

\section{Proof that $\G$ holds with very high probability}

\subsection{Preliminaries  and Auxiliary Events}

\emph{The event \goodftwo.} Using standard algorithms for estimating $F_2$ such as \cite{ams:jcss98,tz:soda04}, one can obtain an estimate $\tilde{F}_2$ satisfying $\abs{\tilde{F}_2 - F_2}\le \frac{0.001}{8p}F_2$, with probability $1 - n^{-25}$ using space $O(\log^2 n)$ bits. Then, $\hat{F}_2 = \left(1 - \frac{0.001}{8p}\right)^{-1} \hat{F}_2$ satisfies $F_2 \le \hat{F}_2 \le \left(1 + \frac{0.001}{2p} \right) F_2$, which is the event \goodftwo.

The event \goodest~essentially states that the \countsketch~guarantees for accuracy of estimation holds for all items and at all levels.
\begin{lemma} \label{lem:goodest}
\goodest~holds with probability $1-n^{-23}$.
\end{lemma}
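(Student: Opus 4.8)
The event \goodest~merely says that the per-item error guarantee of \countsketch~(inequality~\eqref{eq:cskbasic}), applied to each structure $\hh_l$, holds simultaneously for all items $i\in[n]$ and all levels $l\in\{0\}\cup[L]$. The plan is the standard one: bound the failure probability for a single pair $(i,l)$ by $n^{-25}$, then union-bound over the $n(L+1)\le n^2$ pairs. The one wrinkle is that the residual moment $\ftwores{2C_l,l}$ is itself random (it depends on the random substream $\stream_l$), so the argument is run conditionally on the sub-sampling hash functions $g_1,\dots,g_L$, which are chosen independently of the randomness inside the $\hh_l$'s.

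Concretely, I would first fix an arbitrary realisation of $g_1,\dots,g_L$. This fixes every substream $\stream_l$, hence its frequency vector $f^{(l)}$ (with $f^{(l)}_i=f_i$ for $i\in\stream_l$), hence the number $\ftwores{2C_l,l}$. The structure $\hh_l$ is a \countsketch$(16C_l,s)$ structure (for $l=L$, a \countsketch$(16C_L,s)$ structure) whose $s$ rows use mutually independent, pairwise-independent hash functions and Rademacher families, all independent of the $g_j$'s; so one may invoke the \countsketch~analysis of~\cite{ccf:icalp02} on the now-deterministic vector $f^{(l)}$. For fixed $i$, a single row gives an estimate that, by two applications of Markov's inequality (one for the event that a top-$2C_l$ coordinate collides with $i$, one for the residual collision mass), lies within $(8\ftwores{2C_l,l}/(16C_l))^{1/2}=(\ftwores{2C_l,l}/(2C_l))^{1/2}$ of $f^{(l)}_i$ with probability at least $3/4$; taking the median over the $s$ independent rows and applying a Chernoff bound to the number of ``good'' rows yields
\[
\prob{\,\bigl|\hat{f}_{il}-f^{(l)}_i\bigr|>\bigl(\ftwores{2C_l,l}/C_l\bigr)^{1/2}\ \bigm|\ g_1,\dots,g_L\,}\ \le\ 2^{-\Omega(s)}\ \le\ n^{-25},
\]
where the last step uses $s=8k=8000\lceil\log n\rceil$ and the harmless factor-$\sqrt2$ slack between the \countsketch~bound and the threshold $(\ftwores{2C_l,l}/C_l)^{1/2}$ demanded by \goodest.

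Finally I would union-bound over all $i\in[n]$ and all $l\in\{0,\dots,L\}$: \goodest~fails with probability at most $n(L+1)\cdot n^{-25}\le n^{2}\cdot n^{-25}=n^{-23}$. Since this holds for every realisation of $g_1,\dots,g_L$, it holds unconditionally, which is the claim. The only step that calls for real care is the conditioning argument — one must verify that the randomness defining $\hh_l$ (the $h_{lj}$'s and $\xi_{lj}$'s) is independent of the sub-sampling hashes $g_1,\dots,g_L$, so that \countsketch's tail bound may legitimately be applied to the frozen vector $f^{(l)}$; beyond that it is a textbook \countsketch~estimate together with a union bound.
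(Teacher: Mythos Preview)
Your proposal is correct and follows essentially the same approach as the paper: invoke the \countsketch\ tail bound~\eqref{eq:cskbasic} at each level (with $16C_l$ buckets and $s=8k=8000\lceil\log n\rceil$ rows) to get a per-$(i,l)$ failure probability of at most $n^{-25}$, then union-bound over $n(L+1)\le n^2$ pairs to reach $n^{-23}$. The paper's own proof is a three-sentence sketch that omits the conditioning on $g_1,\dots,g_L$ which you spell out; your treatment is more careful on exactly this point, but the argument is the same.
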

\begin{proof}
By guarantees of  \countsketch~ structure \cite{ccf:icalp02} using  tables with $16C_l$ buckets and $ s=8k = (8)(1000)(\log n)$ tables with independent hash functions, we have,
$
\abs{\hat{f}_{il} - f_i} \le  \left( \ftwores{C_l, l}/C_l \right)^{1/2}$ with probability $ 1-  n^{-25}$. Using union bound to add the  error probability over the levels $L = O(\log n)$ and $i \in [n]$, we obtain that \goodest~holds except with probability $n^{-25}(L)(n) \le n^{-23}$. \hfill
\end{proof}

The above events comprising $\G$ will be  shown to hold with probability $1-n^{-\Omega(1)}$.  In order to do so, we define a few auxiliary events.
\subsubsection*{Auxiliary Events}
For  $l \in \{0\} \cup  [L]$ and $q \ge 1$, define the random variable  $$ H_{lq} = \sum_{1 \le \rank(i) \le 2^l q } y_{il}  \text{
and } U_{lq} = \sum_{ \rank(i) >  2^l q} f_i^2 \cdot   y_{il}  $$
where,  for $i \in [n]$,  $y_{il} $ is an indicator variable that is 1  if $i \in \stream_l$ and is 0
otherwise. For $l \in \{0\}\cup [L]$, define two auxiliary events parameterized by a parameter $q$, as follows.
\begin{align*}
\smallH(l, q)  &\equiv H_{lq} \le 2q, ~\text{ and } \\
 \smallU(l, q) & \equiv U_{l, q} \le
\frac{1.5 \ftwores{2^{l-1}q }}{2^{l-1}} \enspace .
\end{align*}

\subsection{Proof that space parameter $C_l$ is  polynomial sized}

We will now show that $C_l = n^{\Omega(1)}$ for each $l \in \{0\} \cup [L]$. This would also imply that $B_l = C_l (27p)^{-2} = n^{\Omega(1)}$ for each$l \in \{0\} \cup [L]$.
\begin{lemma} \label{lem:CL} Assume the parameter values given in Figure~\ref{table:params}. Then  for $p > 2$,
$C_L \ge n^{\Omega(1)} $.
\end{lemma}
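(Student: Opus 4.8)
The plan is to reduce everything to a lower bound on the product $\alpha^L C$, since by Figure~\ref{table:params} we have $C_L = 16(4\alpha^L C) = 64\,\alpha^L C$; it then suffices to exhibit a constant $c=c(p)>0$ with $\alpha^L C \ge n^{c}$. First I would record a crude polynomial lower bound on $C$. Expanding $B$ and using $\epsilon\le 1$ (so $\epsilon^{-2}\ge 1$ and $\epsilon^{-4/p}\ge 1$), one checks that in each of the two cases of the $\min$ in the definition of $B$ one gets $B \ge 425\,(2\alpha)^{p/2} n^{1-2/p}/\log n$; since $2\alpha\ge 1$ and $C=(27p)^2 B$, this yields $C \ge n^{1-2/p}/\log n$. (If $n/C\le 1$ then $\alpha^L\ge 1$ and $C_L=64\,\alpha^L C\ge 64\,C\ge n^{\Omega(1)}$, so we may assume $n/C>1$.)

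Next, bound $L\le \log_{2\alpha}(n/C)+1$, and put $\rho:=\dfrac{\ln\alpha}{\ln(2\alpha)}$, which is negative since $\ln\alpha<0<\ln(2\alpha)$. As $0<\alpha<1$, the map $x\mapsto\alpha^x$ is decreasing, so
\[
\alpha^L \;\ge\; \alpha^{\,1+\log_{2\alpha}(n/C)} \;=\; \alpha\,(n/C)^{\rho},
\qquad\text{hence}\qquad
\alpha^L C \;\ge\; \alpha\, n^{\rho}\, C^{\,1-\rho}.
\]
Substituting $C\ge n^{1-2/p}/\log n$ and using $1-\rho>0$,
\[
\alpha^L C \;\ge\; \alpha\, n^{\rho+(1-2/p)(1-\rho)}(\log n)^{-(1-\rho)} \;=\; \alpha\, n^{(1-2/p)+2\rho/p}(\log n)^{-(1-\rho)} .
\]

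It remains to show the exponent $(1-2/p)+2\rho/p$ is bounded below by a positive constant, i.e. that the correction $2\rho/p$ does not swallow the main term. This is exactly where the choice $\alpha = 1-0.01(1-2/p)$ is used: then $1-\alpha = 0.01(p-2)/p$, so $-\ln\alpha \le (1-\alpha)/\alpha \le 0.0102(p-2)/p$, while $\ln(2\alpha)\ge \ln 1.98 > 0.68$; therefore $|\rho|\le 0.015(p-2)/p$ and $2\rho/p \ge -0.03(p-2)/p^2$, giving
\[
(1-2/p)+\tfrac{2\rho}{p} \;\ge\; \tfrac{p-2}{p}\Bigl(1-\tfrac{0.03}{p}\Bigr) \;>\; \tfrac{p-2}{2p} \;>\;0 \quad\text{for }p>2 .
\]
Since also $1-\rho<2$, the $(\log n)^{-(1-\rho)}$ factor is harmless, so $\alpha^L C \ge n^{(p-2)/(2p)}/(\log n)^2 = n^{\Omega(1)}$ and hence $C_L=64\,\alpha^L C=n^{\Omega(1)}$. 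The only real obstacle is this last estimate: a priori $L=\Theta(\log_{2\alpha} n)$, so $\alpha^L$ decays polynomially in $n$, and one must verify this decay is slower than the polynomial growth of $C$; it works out precisely because $\alpha$ is forced close to $1$ exactly in the regime $p\to 2^+$ where $C$'s exponent $1-2/p$ is smallest. Everything else (the case split on $\min$, the inequality $-\ln\alpha\le (1-\alpha)/\alpha$, etc.) is routine.
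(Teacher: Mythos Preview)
Your proof is correct and follows essentially the same route as the paper: both arguments reduce to showing that $\alpha^{L}C \ge n^{c}$ for some constant $c=c(p)>0$ by writing $\alpha^{\log_{2\alpha}(n/C)} = (n/C)^{\ln\alpha/\ln(2\alpha)}$ and verifying that the resulting exponent of $n$ stays bounded away from zero thanks to the choice $\alpha = 1-\nu(1-2/p)$. Your parameter $\rho = \ln\alpha/\ln(2\alpha)$ is exactly $1 - 1/\log_2(2\alpha)$, the quantity the paper tracks; you bound it via $-\ln\alpha \le (1-\alpha)/\alpha$ and $\ln(2\alpha)\ge\ln 1.98$, while the paper uses the cruder $\ln(1-\gamma)\ge -2\gamma$, but the outcome is the same. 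Your version is slightly more explicit in handling the $\log n$ factor coming from the lower bound on $C$ and in disposing of the edge case $n/C\le 1$, which the paper leaves implicit.
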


\begin{proof} Since $L = \lceil \log_{2\alpha} (n/C) \rceil $,
\begin{align} \label{eq:CL1}
C_L &= 4 \alpha^L C \ge  (4\alpha) \alpha^{\log_{2\alpha} (n/C)} C = \frac{ (4\alpha) (2\alpha)^{\log _{2\alpha} (n/C)} C }{2^{\log_{2\alpha}(n/C)}} \notag \\ &= \frac{4\alpha n}{(2^{\log_2(n/C)})^{1/(\log_2 (2\alpha))}} = \frac{4 \alpha n}{(n/C)^{1/\log_2(2\alpha)}} \enspace .
\end{align}
Let $\alpha = 1- \gamma$. Then,  $$\log_2(2\alpha) = 1 +\log_2 (\alpha) = 1 + \frac{\ln(\alpha)}{\ln 2} \ge  1 - \frac{2\gamma}{\ln ( 2)}$$
 since, $ \gamma < 1/2$. Hence, $$\frac{1}{\log_2 (2\alpha)} = \frac{1}{\left( 1 - 2\gamma/\ln ( 2)\right)} \le  1 + \frac{4\gamma}{\ln 2} \enspace . $$

 Let $C = Kn^{1-2/p}$.  Substituting in ~\eqref{eq:CL1},
\begin{align} \label{eq:CL2}
C_L  & \ge \frac{ 4 \alpha n}{ (n/C)^{1/\log_2(2\alpha)}} \notag  \\
& \ge \frac{ 4 \alpha n}{(n/C)^{1 + 4\gamma/\ln (2)}} \notag \\ & =  4 \alpha C (n/C)^{-4\gamma/\ln (2)} \notag \\
&= 4 \alpha K n^{1-2/p} \cdot (K^{-1} n^{2/p})^{-4\gamma/\ln (2)}  \notag \\
& =  4 \alpha  K \cdot K' \cdot n^{1-2/p -(2/p)(4\gamma/\ln (2))}
\end{align}
where,  $K' = K^{4\gamma/\ln(2)}$.

\noindent
Since, $\alpha = 1 - (1-2/p)\nu$, $\gamma =1-\alpha =  (1-2/p) \nu$. The exponent of $n$ in ~\eqref{eq:CL2} is
\begin{align*}1-2/p -(2/p)(4\gamma/\ln (2) &=  1-2/p-  (2/p)\left(1-2/p\right)\left( \frac{4\nu}{\ln (2)}\right) \\ & = \left(1-2/p\right)\left( 1 - (2/p)\left(\frac{4\nu}{\ln 2}\right)  \right)
\end{align*}
which is a  positive constant for all $p > 2$ and $\nu < (\ln 2)/4$.  Thus, $C_L= n^{\Omega(1)}$.
\end{proof}

\emph{\bf Remark.} This is the only place where the fact $p > 2$ is explicitly used. If $p=2$, then, $C_L$ would be $\Theta(\epsilon^{-2})$, and $L$ would be $\log_2 (n\epsilon^{2}) + O(1)$. The analysis would work, although the space bound would increase by a factor of $O(\log (n\epsilon^2))$.

\subsection{Application of Chernoff-Hoeffding bounds for Limited Independence}

We will use the following version of Chernoff-Hoeffding bounds for limited independence, specifically,  Theorem 2.5 (II a) from \cite{sss:soda93}.
\begin{theorem} [\cite{sss:soda93}] \label{thm:sss93}
Let $X_1, X_2, \ldots, X_n$ be  $d $-wise independent random variables with support in
$[0,1]$.
Let $X = X_1 + \ldots + X_n$, with $\expect{X} = \mu$.
Then, for $\delta \ge 1$  and $d \le \lceil \delta \mu e^{-1/3} \rceil$,
$\prob{ \abs{X-\mu} \ge \delta \mu} \le e^{-\lfloor d/2 \rfloor}$. ~~~~
\end{theorem}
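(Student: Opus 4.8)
The plan is to prove the bound by the moment method, using an even central moment of $X$ of order close to $d$. First I would set $k$ to be the largest even integer with $k \le d$; then $k \in \{d, d-1\}$, so $\lfloor d/2 \rfloor = k/2$, and the $X_i$ are in particular $k$-wise independent (the case $d \le 1$ is trivial, since the claimed bound is then $\ge 1$; the case $k=2$ I would check directly, as it reduces to Chebyshev together with $\variance{X} \le \mu$). Put $Y_i = X_i - \expect{X_i}$ and $Y = X - \mu = \sum_i Y_i$, so $\abs{Y_i} \le 1$. The key structural observation is that expanding $Y^k$ yields only monomials $\prod_{i \in S} Y_i^{a_i}$ with $\abs{S} \le k$; hence $\expect{Y^k}$ is given by exactly the same formula one would obtain if the $Y_i$ were fully independent, and $k$-wise independence is precisely what is used. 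So it suffices to bound $\expect{Y^k}$ under an independence assumption and then apply Markov: since $k$ is even, $\probb{\abs{X-\mu} \ge \delta\mu} = \probb{Y^k \ge (\delta\mu)^k} \le \expect{Y^k}/(\delta\mu)^k$.

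For the moment bound I would use $\abs{Y_i} \le 1$, which gives $\expect{Y_i^j} \le \expect{Y_i^2} = \variance{X_i} \le \expect{X_i}$ for every $j \ge 2$, while $\expect{Y_i} = 0$. Expanding by the multinomial theorem, $\expect{Y^k} = \sum \binom{k}{j_1,\dots,j_n}\prod_i \expect{Y_i^{j_i}}$ over compositions $j_1+\dots+j_n = k$, and only compositions whose nonzero parts are all $\ge 2$ survive. Grouping those by the number $m$ of nonzero parts (so $1 \le m \le k/2$) and bounding the elementary symmetric sum $\sum_{\abs{S}=m}\prod_{i\in S}\expect{X_i} \le \mu^m/m!$, one gets $\expect{Y^k} \le \sum_{m=1}^{k/2} N(k,m)\,\mu^m/m!$, where $N(k,m)$ is the total multinomial weight of ordered compositions of $k$ into $m$ parts each $\ge 2$. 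A direct estimate of $N(k,m)$ — with the $m=k/2$ term (all parts equal $2$), of weight $k!/2^{k/2} \approx \sqrt{2}\,(k/e)^{k/2}$, dominating once $\mu$ is not too small compared with $k$ — yields a bound of the form $\expect{Y^k} \le c_1 (k\mu/e)^{k/2}$, with a lower-order correction controlling the regime $k \gtrsim \mu$. This is exactly the Schmidt--Siegel--Srinivasan moment lemma, whose constants are arranged so that the Stirling factor $(k/e)^{k/2}$ is exposed.

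Combining the two, $\probb{\abs{X-\mu}\ge\delta\mu} \le \expect{Y^k}/(\delta\mu)^k \le c_1 (k/(e\delta^2\mu))^{k/2}$, and here I would invoke the hypotheses $\delta \ge 1$ and $k \le d \le \lceil \delta\mu e^{-1/3}\rceil$, i.e. essentially $k/(\delta\mu) \le e^{-1/3}$; then $k/(e\delta^2\mu) \le e^{-4/3}/\delta \le e^{-4/3}$, so $c_1 e^{-2k/3} \le e^{-k/2}$ for all $k$ at least a small constant, and the secondary term (coming from small $m$, bounded by $\mu$) is handled by the same inequality since small $\mu$ forces $\delta$ large. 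This gives $\probb{\abs{X-\mu}\ge\delta\mu} \le e^{-k/2} = e^{-\lfloor d/2\rfloor}$, as claimed.

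\textbf{The main obstacle} is the moment lemma: one must estimate $N(k,m)$ sharply enough that, after dividing $\expect{Y^k}$ by $(\delta\mu)^k$ and using only $k \le \delta\mu e^{-1/3}$ with $\delta \ge 1$, the final bound is genuinely $e^{-k/2}$ (hence $e^{-\lfloor d/2\rfloor}$) rather than merely $e^{-\Omega(k)}$ with a worse rate. This is where the specific constant $e^{-1/3}$ in the hypothesis is forced — it must absorb the $(k/e)^{k/2}$ Stirling factor — and it is where the two regimes $k \le \mu$ and $k > \mu$ each need attention. The remaining ingredients (centering, the remark that a degree-$k$ central moment depends only on $k$-wise joint distributions, and the closing Markov step) are routine.
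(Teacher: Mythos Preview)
The paper does not prove this statement at all: it is quoted verbatim as Theorem~2.5~(II\,a) of Schmidt--Siegel--Srinivasan~\cite{sss:soda93} and used as a black box (in Lemma~\ref{lem:fres:Ulk} and Lemma~\ref{lem:smallTU}). So there is no ``paper's own proof'' to compare against.

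That said, your sketch is exactly the argument of the original reference: take $k$ the largest even integer $\le d$, note that $\expect{(X-\mu)^k}$ is determined by the $k$-wise marginals and hence equals its value under full independence, bound that $k$th central moment by a combinatorial estimate on compositions with all parts $\ge 2$ (the $m=k/2$ term dominating), and finish with Markov. Your identification of the obstacle is also right --- getting the precise rate $e^{-\lfloor d/2\rfloor}$ rather than merely $e^{-\Omega(d)}$ is where the constant $e^{-1/3}$ in the hypothesis earns its keep, and is the content of the moment lemma in~\cite{sss:soda93}. For the purposes of this paper, though, none of that is needed: the theorem is simply invoked.
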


The following lemma is shown whose proof is given later  in this section.
\begin{lemma} \label{lem:fres:Ulk} Suppose $d \le \lfloor qe^{-1/3} \rfloor$. Then,  for $l
\in\{0\} \cup  [L]$  the following hold, 
 \begin{enumerate}
 \item $\prob{\smallH(l,q)} \ge 1-e^{-\lfloor d/2 \rfloor }$, and, 
 \item  either $U_{lq}=0$ or
$\prob{\smallU(l,q) } \ge 1- e^{-\lfloor d/2 \rfloor}$.
\end{enumerate}
\end{lemma}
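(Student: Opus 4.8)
The plan is to express each of $H_{lq}$ and $U_{lq}$ as a sum of $d$-wise independent random variables in $[0,1]$ (after scaling) and apply Theorem~\ref{thm:sss93}. Recall $H_{lq} = \sum_{1\le \rank(i)\le 2^l q} y_{il}$, where $y_{il}=1$ iff $i\in\stream_l$, i.e.\ iff $g_1(i)=\cdots=g_l(i)=1$. Since the $g_1,\dots,g_L$ are each $d$-wise independent and mutually independent, the family $\{y_{il}\}_{i}$ is $d$-wise independent (a product of $\le d$ coordinates from each $g_j$ gives a function determined by $\le d$-wise independent evaluations — more carefully, any $d$ of the $y_{il}$'s depend on at most $d$ evaluations of each $g_j$, and $d$-wise independence of each $g_j$ plus independence across $j$ yields the claim). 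Each $y_{il}\in\{0,1\}$ and $\expect{y_{il}} = 2^{-l}$.

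For part (1): set $X = H_{lq} = \sum_{i:\,1\le\rank(i)\le 2^lq} y_{il}$, a sum of exactly $2^l q$ many $d$-wise independent $\{0,1\}$ variables, so $\mu := \expect{X} = 2^l q \cdot 2^{-l} = q$. Apply Theorem~\ref{thm:sss93} with $\delta = 1$: the hypothesis $d \le \lceil \delta\mu e^{-1/3}\rceil = \lceil q e^{-1/3}\rceil$ holds by assumption, so $\prob{|X - q| \ge q} \le e^{-\lfloor d/2\rfloor}$; in particular $\prob{X \le 2q} \ge \prob{|X-q|\le q} \ge 1 - e^{-\lfloor d/2\rfloor}$, which is exactly $\prob{\smallH(l,q)}\ge 1-e^{-\lfloor d/2\rfloor}$.

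For part (2): $U_{lq} = \sum_{\rank(i) > 2^l q} f_i^2\, y_{il}$. If every item with $\rank(i) > 2^lq$ has $f_i=0$ then $U_{lq}=0$ and there is nothing to prove; otherwise normalize by $M^2 := \sum_{\rank(i)>2^lq} f_i^2 = \ftwores{2^lq}$, i.e.\ consider $Z_i := (f_i^2/M^2) y_{il} \in [0,1]$ and $Z := \sum_{\rank(i)>2^lq} Z_i = U_{lq}/M^2$. The $Z_i$ are $d$-wise independent (deterministic scalings of the $y_{il}$), each in $[0,1]$, with $\expect{Z} = \sum_{\rank(i)>2^lq} (f_i^2/M^2) 2^{-l} = 2^{-l}$. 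Apply Theorem~\ref{thm:sss93} with $\mu = 2^{-l}$ and a suitable $\delta$: we want $2^{-l}(1+\delta) M^2 \le 1.5\,\ftwores{2^{l-1}q}/2^{l-1}$, i.e.\ $(1+\delta)\,\ftwores{2^lq} \le 3\,\ftwores{2^{l-1}q}$. Since $\ftwores{2^{l-1}q} \ge \ftwores{2^lq}$ (residual moments are non-increasing in the truncation parameter), it suffices to take $\delta = 2$, giving $(1+\delta) = 3$ and the bound $U_{lq} \le 3\cdot 2^{-l} \ftwores{2^lq} \le 3\cdot 2^{-l}\ftwores{2^{l-1}q} = 1.5\,\ftwores{2^{l-1}q}/2^{l-1}$ on the event $|Z - 2^{-l}| \le \delta\, 2^{-l}$. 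The hypothesis of Theorem~\ref{thm:sss93} needs $d \le \lceil \delta\mu e^{-1/3}\rceil = \lceil 2\cdot 2^{-l} e^{-1/3}\rceil$; this is where the choice $\delta=2$ alone is not obviously enough, since $\mu = 2^{-l}$ is tiny for large $l$.

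The main obstacle is precisely this last point: $\mu = 2^{-l}$ is small, so the ``$d \le \lceil\delta\mu e^{-1/3}\rceil$'' condition of Theorem~\ref{thm:sss93} cannot be met with $\delta$ a constant when $l$ is large. The fix is to absorb the sampling: note $Z = U_{lq}/\ftwores{2^lq}$ and instead bound $U_{lq}$ relative to $\ftwores{2^{l-1}q}$ by using $\delta\mu = \delta 2^{-l}$ with $\delta$ chosen so that $\delta\mu \approx q$-scale — more precisely, rescale so that the relevant sum has expectation $\Theta(q)$ rather than $\Theta(2^{-l})$. Concretely, one should run the argument on the quantity $\sum_{\rank(i)>2^lq} (f_i^2/\ftwores{2^{l-1}q})\cdot 2^{l-1} y_{il}$, whose summands are still in $[0,1]$ (since $f_i^2 \le \ftwores{2^{l-1}q}$ for $\rank(i) > 2^{l-1}q$, hence certainly for $\rank(i)>2^lq$, and the factor $2^{l-1}\le 2^l$... ) — one must check the normalization so that each term is $\le 1$ — and whose expectation is $\frac{2^{l-1}}{2^l}\cdot\frac{\ftwores{2^lq}}{\ftwores{2^{l-1}q}} \le 1/2$, still too small. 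The honest resolution: replace $2^l q$ and $2^{l-1}q$ bookkeeping by writing $U_{lq}$ as a sum over the $\le 2^lq$-indexed tail and splitting the tail into dyadic blocks $[\,2^j q, 2^{j+1}q)$ for $j\ge l$, each contributing a sum with expectation $\approx 2^{-l}\cdot 2^j q \ge q$, so that within each block Theorem~\ref{thm:sss93} applies with $\mu\ge q$ and $d\le\lfloor q e^{-1/3}\rfloor$; then union bound over the $O(\log n)$ blocks and sum the geometric series of per-block bounds against $\ftwores{2^{j}q}$, which telescopes/dominates into $1.5\,\ftwores{2^{l-1}q}/2^{l-1}$. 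This block decomposition, together with verifying each block's summands lie in $[0,1]$ after the right normalization and that the resulting error probability stays $e^{-\lfloor d/2\rfloor}$ after the union bound (using $d = 50\lceil\log n\rceil$ to absorb the $O(\log n)$ factor), is the technical heart of the proof; the rest is the routine application of Theorem~\ref{thm:sss93} as in part (1).
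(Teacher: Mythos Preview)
Part (1) of your argument is correct and matches the paper.

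For part (2), your first normalization by $M^2 = \ftwores{2^lq}$ indeed fails for exactly the reason you identify: the resulting expectation is $2^{-l}$, far too small for the hypothesis of Theorem~\ref{thm:sss93}. Your attempted repair via dyadic block decomposition is unnecessary, not worked out, and would introduce a union bound over $O(\log n)$ blocks that the lemma as stated does not afford (the conclusion is exactly $1-e^{-\lfloor d/2\rfloor}$, not $1-O(\log n)\,e^{-\lfloor d/2\rfloor}$).

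The idea you are missing is a simple averaging bound on the \emph{largest} tail frequency. For any $i$ with $\rank(i) > 2^l q$ we have $|f_i| \le |f_{\rank(2^lq)}|$, and since each of the $2^{l-1}q$ items with rank in $(2^{l-1}q,\,2^lq]$ has $|f|\ge |f_{\rank(2^lq)}|$, we get $\ftwores{2^{l-1}q} \ge 2^{l-1}q\cdot f^2_{\rank(2^lq)}$, hence
\[
f_i^2 \;\le\; f^2_{\rank(2^lq)} \;\le\; \frac{\ftwores{2^{l-1}q}}{2^{l-1}q}\qquad\text{for all }\rank(i)>2^lq.
\]
Normalizing by \emph{this} quantity (note the extra factor of $q$ compared to your second attempt) gives $U'_{lq} := \frac{2^{l-1}q}{\ftwores{2^{l-1}q}}\,U_{lq}$, a sum of $d$-wise independent $[0,1]$-valued variables with $\expect{U'_{lq}} = \frac{2^{l-1}q}{\ftwores{2^{l-1}q}}\cdot\frac{\ftwores{2^lq}}{2^l}\le q/2$. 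Now Theorem~\ref{thm:sss93} applies with deviation $q$ (so $\delta\mu = q$ and the hypothesis $d\le\lceil qe^{-1/3}\rceil$ is exactly what is assumed), yielding $\prob{U'_{lq} > \expect{U'_{lq}} + q}\le e^{-\lfloor d/2\rfloor}$; rescaling back gives $U_{lq} \le \frac{\ftwores{2^lq}}{2^l} + \frac{\ftwores{2^{l-1}q}}{2^{l-1}} \le \frac{1.5\,\ftwores{2^{l-1}q}}{2^{l-1}}$ with the claimed probability. No block decomposition is needed.
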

Lemma~\ref{lem:CL} shows that $C_L = n^{\Omega(1)}$. This implies that $B_L = \epsbar^2 n^{\Omega(1)} = n^{\Omega(1)}$ since $\epsbar = 1/(27p)$.  Therefore, $C_l > B_l \ge B_L = n^{\Omega(1)}$ for all $l \in \{0\} \cup [L]$. Hence we can  use Lemma~\ref{lem:fres:Ulk} and the union bound over $l \in \{0\} \cup [L]$ to show that  the following events hold  with probability $1-Le^{-\lfloor d/2\rfloor} = 1- Le^{-\Omega(\log n)} \ge 1-n^{-24}$, for suitable choice of the constant.
\begin{align*}
(a)& ~\wedge_{l \in \{0\}\cup [L]} \smallH(l,C_l), ~~~(b)~~~\wedge_{l \in \{0\}\cup [L]} \smallH(H, C_l/2), ~~\text{ and} \\
 (c)& ~\wedge_{l \in \{0\}\cup [L]}\smallH(l,\lceil \alpha^l B_l/(1-2\epsbar)^2 \rceil
\end{align*}

We now prove Lemma~\ref{lem:fres:Ulk}.
\begin{proof} [Proof of Lemma~\ref{lem:fres:Ulk}.]  For any fixed $l$, $y_{il}$ is an indicator variable that is 1 iff $g_1(i) = g_2(i) = \ldots = g_l(i) =1$. Since the $g_l$'s are drawn independently from $d$-wise independent hash family,  the $y_{il}$'s are $d$-wise independent.

By definition, $H_{lq} = \sum_{1 \le rank(i) \le 2^lq} y_{il}$ is the number of items with rank $2^lq$ or less that have hashed to level $l$.  Since, $\prob{y_{il}} = 1/2^l$, we have,  $\expect{H_{lq}} = 2^{l}q \cdot \frac{1}{2^l} = q $. Therefore,
$$ \prob{ H_{lq} > 2q} \le \prob{ \abs{H_{lq} - q} > q}  \le  e^{-\lfloor d/2 \rfloor}$$
by using  Theorem~\ref{thm:sss93} and assuming $d \le qe^{-1/3}$.

We now prove the bound on $U_{lq}$. By definition, $U_{lq} = \sum_{ \rank(i) > 2^lq} f_i^2 y_{il}$. Taking expectation,  $\expect{U_{lq}} = \sum_{\rank(i) > 2^lq} f_i^2 /2^l = \ftwores{2^l q}/2^l$.

Since,  $\abs{f_{\rank(2^l q)}} \le \abs{f_{\rank(j)}}$ for each  $j \in \{ 2^{l-1}q+1,
\ldots, 2^lq\}$, it follows that
$f^2_{\rank(2^l q)}    \le  \ftwores{ 2^{l-1}q }/(2^{l-1} q)$.

\emph{Case 1: Suppose $\ftwores{ 2^{l-1}q } > 0$.}  Define a scaled down variable $U'_{lq}$ as follows.
\begin{align*} U'_{lq} = \sum_{\rank(i) > 2^l q} \frac{f_i^2}{ \ftwores{ 2^{l-1}q
}/(2^{l-1} q )} \cdot y_{il} &  =  \frac{(2^{l-1} q) U_{lq}}{\ftwores{2^{l-1}q}} \enspace
.
\end{align*}
By the above argument, the multiplier $f_i^2/ (\ftwores{ 2^{l-1}q
}/(2^{l-1} q )) \le 1$.  Since $y_{il}$ are indicator variables,
$U'_{lq}$ is  the sum of $d$-wise independent variables \eat{$\{y_{il}\}_{i\in [n]}$} with support in the interval $[0,1]$.

Taking expectation,
$$
\expectb{U'_{lq}} = \frac{(2^{l-1} q)  \expect{U_{lq}}  }{\ftwores{2^{l-1}q}}=
\frac{(2^{l-1} q) }{\ftwores{2^{l-1}q}} \cdot \frac{\ftwores{2^l q}}{2^l}  \le
\frac{q}{2}\enspace .
$$
 By Theorem~\ref{thm:sss93}, we obtain,
\begin{align*}
\probb{U' _{lq}>  \expectb{U'_{lq}} + q} \le   \probb{ \card{U'_{lq} - \expectb{U'_{lq}}} >
q} & \le
e^{-\lfloor d/2\rfloor} \enspace .
\end{align*}
provided, $d \le   \lceil qe^{-1/3} \rceil $, which is assumed.

\noindent
The event $U'_{lq} > \expectb{U'_{lq}} + q$  may be equivalently written  (by rescaling) as $
 U_{lq} > \expectb{U_{lq}} + \frac{q \ftwores{2^{l-1}q}} {2^{l-1} q}$, which is the same as   $U_{lq} > \frac{\ftwores{2^l q}}{2^l} + \frac{\ftwores{2^{l-1}q}}{2^{l-1}}$. This in turn is implied by the event $ U_{lq} > \frac{1.5 \ftwores{2^{l-1}q}}{2^{l-1}}$.

\noindent
Therefore, $$
\probB{ U_{lq}  >\frac{ 1.5\ftwores{2^{l-1}q}}{ 2^{l-1}} }\le  \prob{U'_{lq} >
\expectb{U'_{lq}} +q} \le e^{-\lfloor d/2\rfloor}$$

\emph{Case 2:  $\ftwores{2^{l-1} q} = 0$.}  Then,  $U_{lq} = 0$. \hfill
\end{proof}

\begin{lemma} \label{lem:smallTU}
$\forall l \in \{0\} \cup [L], \smallH(l,C_l) , \smallH(l, \lceil B_l/(1-2\epsbar)^2 \rceil)$ and $ \allowbreak \smallU(l, C_l)$ hold simultaneously with probability $1-O(n^{-25})$.
\end{lemma}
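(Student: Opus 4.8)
The plan is to derive Lemma~\ref{lem:smallTU} from the per-level tail estimates of Lemma~\ref{lem:fres:Ulk} by a union bound, once we have verified its hypothesis $d \le \lfloor q e^{-1/3}\rfloor$ for the two families of sampling thresholds that occur, namely $q = C_l$ and $q = \lceil B_l/(1-2\epsbar)^2\rceil$, for each level $l \in \{0\}\cup[L]$. Recall from Figure~\ref{table:params} that $d = 50\lceil \log n\rceil$ and that $\log$ is $\log_2$. First I would observe that all these thresholds are polynomially large: by Lemma~\ref{lem:CL}, $C_L \ge n^{\Omega(1)}$, and since $C_l = 4\alpha^l C$ is non-increasing in $l$, $C_l \ge C_L = n^{\Omega(1)}$ for every $l$; hence $B_l = (27p)^{-2}C_l \ge (27p)^{-2}C_L = B_L = n^{\Omega(1)}$ as $p$ is fixed, and since $\epsbar = 1/(27p) < 1/2$ gives $0 < (1-2\epsbar)^2 < 1$, also $\lceil B_l/(1-2\epsbar)^2\rceil \ge B_l \ge n^{\Omega(1)}$. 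Therefore, for $n$ larger than a constant depending only on $p$, every such $q$ satisfies $q \ge (d+1)e^{1/3}$, i.e.\ $d \le \lfloor q e^{-1/3}\rfloor$, which is precisely the condition under which Lemma~\ref{lem:fres:Ulk} (an instance of the limited-independence Chernoff-Hoeffding bound, Theorem~\ref{thm:sss93}) applies.

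Then, fixing a level $l$: part (1) of Lemma~\ref{lem:fres:Ulk} with $q = C_l$ gives $\prob{\neg\smallH(l,C_l)} \le e^{-\lfloor d/2\rfloor}$, and with $q = \lceil B_l/(1-2\epsbar)^2\rceil$ gives $\prob{\neg\smallH(l,\lceil B_l/(1-2\epsbar)^2\rceil)} \le e^{-\lfloor d/2\rfloor}$. For $\smallU(l,C_l)$ I would use part (2) of the same lemma: if $\ftwores{2^{l-1}C_l} = 0$ then $U_{l,C_l}=0$, and since the right-hand side $1.5\,\ftwores{2^{l-1}C_l}/2^{l-1}$ in the definition of $\smallU$ is a sum of squares, hence nonnegative, $\smallU(l,C_l)$ holds with probability $1$; otherwise $\prob{\neg\smallU(l,C_l)} \le e^{-\lfloor d/2\rfloor}$. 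A union bound over these three events at each level and over the $L+1 = O(\log n)$ levels bounds the probability that some one of them fails by $3(L+1)\,e^{-\lfloor d/2\rfloor}$. Since $d = 50\lceil\log_2 n\rceil$ yields $e^{-\lfloor d/2\rfloor} \le e\cdot n^{-25/\ln 2} \le n^{-26}$ for all $n \ge 2$, this is $O(n^{-26}\log n) = O(n^{-25})$, which is the assertion.

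The only substantive ingredient is Lemma~\ref{lem:CL}: that the hash-table widths $C_l$, and therefore the per-level sampling thresholds $q$, are all $n^{\Omega(1)}$ --- this is what lets us invoke Theorem~\ref{thm:sss93} with only $d = \Theta(\log n)$-wise independence of the subsampling functions $g_1,\dots,g_L$; the rest is bookkeeping. The one step that deserves attention is ensuring that the $O(\log n)$ loss from the union bound over the $L+1$ levels is swamped by the $e^{-\lfloor d/2\rfloor}$ tail, which is comfortable because the constant $50$ in the definition of $d$ is generous (and because $\log$ is base $2$, contributing the extra $1/\ln 2$ factor in the exponent). No circularity arises, since $C_l$ and $B_l$ are fixed in Figure~\ref{table:params} independently of the randomness in $g_1,\dots,g_L$.
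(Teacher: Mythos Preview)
Your proposal is correct and follows essentially the same route as the paper: verify via Lemma~\ref{lem:CL} that all thresholds $q$ are $n^{\Omega(1)}$ so that the hypothesis $d \le \lfloor q e^{-1/3}\rfloor$ of Lemma~\ref{lem:fres:Ulk} is met, apply that lemma to each of the three events at each level, and finish with a union bound over the $3(L+1)$ events. Your treatment is in fact a bit more careful than the paper's in explicitly handling the degenerate case $U_{l,C_l}=0$ and in tracking the base of the logarithm when converting $e^{-\lfloor d/2\rfloor}$ to a power of $n$.
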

\begin{proof}
From Lemma~\ref{lem:fres:Ulk}, $\smallH(l,C_l)$ and $\smallU(l,C_l)$ each holds with probability\\
 $e^{- \text{min}(\lfloor d/2 \rfloor\allowbreak, C_le^{-1/3}/2)}$. Similarly, $\smallH(l,\lceil B_l/(1-2\epsbar)^2 \rceil)$ holds with probability\\ $e^{-\min\left(\lfloor d/2 \rfloor,
 \left(B_l/(1-2\epsbar)^2\right) e^{-1/3}/2\right)}$.

From Lemma~\ref{lem:CL}, we have, $C_L \ge n^{\Omega(1)}$, and hence, $C_l \ge C_L \ge n^{\Omega(1)}$ for each $l\in \{0\} \cup [L]$.  Hence, $ d =O(\log n) =  o(C_L) = O(C_l)$ for each $l$. The failure probability is therefore $e^{-d/2}$, since $\epsbar = 1/(27p)$, $B_l = \epsbar^2 C_l = n^{\Omega(1)}$ and therefore,  $d = o(B_l)$, for each $l$.

Taking union bounds over the $O(\log n)$ values of $l$,   the three events hold simultaneously  except with probability $ (L+1) (3)e^{-d/2} \le (L+1)(3)e^{-50 (\log n)/2} = o(n^{-24})$.

\end{proof}

\subsection{ Proof that  {\sc smallres}, {\sc accuest}, {\sc goodl}, {\sc smallhh} hold with very high probability}

\begin{lemma} \label{lem:smallres1} Let $L = \lceil \log_{2\alpha} (n/C) \rceil $ and the hash functions $g_1, g_2, \ldots, g_L$ are drawn from $d$-wise independent family with $d = O(\log n)$ and even. Suppose $\smallH(l,C_l)$ and $\smallU(l,C_l)$ holds for each $l \in \{0\} \cup [L]$.
Then, \smallres~holds.
\end{lemma}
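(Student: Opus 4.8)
The plan is to unwind the definition of the event \smallres~and show that each conjunct follows from the two auxiliary events $\smallH(l,C_l)$ and $\smallU(l,C_l)$, using the combinatorial relationship between the rank ordering of the sub-stream $\stream_l$ and the rank ordering of the full stream. Recall that \smallres~asserts that for all $l \in \{0\} \cup [L]$, $\ftwores{2C_l,l} \le \frac{1.5\,\ftwores{(2\alpha)^l C}}{2^{l-1}}$, where $\ftwores{k,l}$ is the $k$-residual second moment of the (random) frequency vector of $\stream_l$. The first step is to observe that $\ftwores{2C_l,l}$ equals the sum of $f_i^2$ over the items $i \in \stream_l$ of rank (within $\stream_l$) larger than $2C_l$; I want to bound this by $U_{l,q}$ for an appropriate choice of the parameter $q$, namely $q = C_l/2^l$ (so that $2^l q = C_l$), or more precisely I will track how $C_l$, $B_l$, and $(2\alpha)^l C$ relate through the parameter table. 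Since $C_l = 4\alpha^l C$ and $2^l q = C_l$ gives $q = 4\alpha^l C/2^l = (2\alpha)^l \cdot (2C) \cdot \ldots$ — the exact bookkeeping here is where I will have to be careful, matching powers of $2\alpha$ against powers of $2$ and $\alpha$ separately.

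Concretely, I would argue as follows. Conditioned on $\smallH(l,C_l/2)$, among the top $C_l$ items of the full stream at most $C_l$ of them land in $\stream_l$ (actually $\smallH$ gives the bound $2q$ on items of rank $\le 2^l q$ that hash to level $l$; choosing $q$ so that $2^l q$ is the right threshold, this says the number of ``globally heavy'' items surviving into $\stream_l$ is controlled). Hence every item $i$ of $\stream_l$ whose local rank exceeds $2C_l$ must have global rank exceeding $2^{l-1} q$ for the relevant $q$; therefore $\ftwores{2C_l,l} \le \sum_{\rank(i) > 2^{l-1}q} f_i^2 y_{il} = U_{l,q}$. Then $\smallU(l,q)$ bounds $U_{l,q}$ by $\frac{1.5\,\ftwores{2^{l-1}q}}{2^{l-1}}$, and a final substitution $2^{l-1}q$ (or $2^{l}q$, whichever the clean threshold turns out to be) $= $ some constant times $(2\alpha)^l C$ — using $C_l = 4\alpha^l C$ and absorbing the factor $2^l$ into the $(2\alpha)^l$ — yields the stated bound $\frac{1.5\,\ftwores{(2\alpha)^l C}}{2^{l-1}}$. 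I would present this cleanly by first stating the rank-comparison fact as a one-line observation (if $j$ items of global rank $\le N$ survive into $\stream_l$, then local rank $> m$ in $\stream_l$ implies global rank $> N$ provided $m \ge j$), then instantiating.

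The main obstacle I anticipate is purely the bookkeeping of the threshold parameters: making sure that the $q$ fed into $\smallH$ and $\smallU$ is exactly the one for which these auxiliary events were declared to hold (the excerpt lists $\smallH(l,C_l)$, $\smallH(l,C_l/2)$, and $\smallH(l,\lceil \alpha^l B_l/(1-2\epsbar)^2\rceil)$ as the events established with high probability), and that the residual-moment threshold lands on $(2\alpha)^l C$ rather than on some nearby quantity like $2^{l-1}C_l$ or $(2\alpha)^l C / \text{const}$. Since $C_l = 4\alpha^l C$, we have $2^l \cdot (\text{something}) = 4\alpha^l C$ forcing that ``something'' to be $(2\alpha)^l C \cdot 2^{2-l}\cdot\ldots$ — I will need to verify that the constant $1.5$ and the $2^{l-1}$ in the denominator come out exactly, possibly by choosing the threshold in $\smallU$ to be $2^{l-1}q$ with $2^l q = 2C_l$ so that $2^{l-1}q = C_l$ and then $C_l/2^{l-1} = 4\alpha^l C / 2^{l-1} = (2\alpha)^l C \cdot 2^{2-l+l}/\ldots$. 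Once the arithmetic is pinned down the logical content is short; the risk is an off-by-a-constant that would force re-tuning of the constant in the definition of \smallres~or of $C$. I would handle this by doing the substitution symbolically in terms of $C_l$ first and only at the very end replacing $C_l$ by $4\alpha^l C$.
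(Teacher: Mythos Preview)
Your approach is exactly the paper's: use $\smallH$ to push the local rank threshold out to a global rank threshold, then invoke $\smallU$, then simplify. The only issue is the parameter bookkeeping you flagged, and it is simpler than you are making it: the hypothesis hands you $\smallH(l,C_l)$ and $\smallU(l,C_l)$, so take $q=C_l$ directly. Then $\smallH(l,C_l)$ says $H_{l,C_l}=\sum_{\rank(i)\le 2^lC_l}y_{il}\le 2C_l$, so any item of local rank in $\stream_l$ exceeding $2C_l$ must have global rank exceeding $2^lC_l$; hence $\ftwores{2C_l,l}\le \sum_{\rank(i)>2^lC_l}f_i^2y_{il}=U_{l,C_l}$. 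Now $\smallU(l,C_l)$ gives $U_{l,C_l}\le 1.5\,\ftwores{2^{l-1}C_l}/2^{l-1}$, and since $2^{l-1}C_l=2^{l-1}\cdot 4\alpha^lC=2(2\alpha)^lC\ge(2\alpha)^lC$ and $\ftwores{\cdot}$ is non-increasing, you get $\ftwores{2^{l-1}C_l}\le\ftwores{(2\alpha)^lC}$, which is exactly the bound in \smallres. Your attempted choices $q=C_l/2^l$ and ``$2^lq=2C_l$'' are both off; the factor $2$ in the local threshold $2C_l$ comes from the ``$\le 2q$'' in the definition of $\smallH$, not from the definition of $q$.
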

\begin{proof}
We first show that $\smallres_l \equiv \ftwores{2C_l,l} \le 1.5 \ftwores{ (2\alpha)^l C}/2^{l-1}$~is implied by $\smallH \allowbreak (l, C_l)$ and $\smallU\allowbreak(l,\allowbreak C_l)$.

 If $\smallH(l,C_l)$ holds, then, $H_{l,C_l} \le 2C_l$, that is,  $\sum_{1\le \rank(i) \le 2^{l}C_l} y_{il} \le 2C_l$. Hence,
\begin{align*}
\ftwores{2C_l,l} \le \sum_{\rank(i) > 2^l C_l} f_i^2y_{il} = U_{l, C_l} \le  \frac{1.5 \ftwores{2^{l-1}C_l}}{2^{l-1}}
\end{align*}where the last  inequality follows since  $\smallU(l, C_l)$ holds.

\noindent
Further, $2^{l-1}C_l = 2^{l-1} (4 \alpha^l C)  \ge 2(2\alpha)^{l} C$, since, $0 < \alpha < 1$. Thus, $$\ftwores{2C_l,l} \le\frac{ (1.5)\ftwores{2(2\alpha)^{l} C_l}}{2^{l-1}} \enspace .$$

\noindent
Hence $\smallres_l$ holds, for each $l \in \{0\} \cup [L]$, or equivalently,  $\smallres$ holds.
\end{proof}

\begin{lemma} \label{lem:accuest1}
$\goodest~ \wedge ~\smallres$ imply $\accuest$.
\end{lemma}
\begin{proof} Fix $i \in [n]$ and $l \in \{0\} \cup [L]$. By construction,
   $C_l = 4\alpha^l C$. Thus,
\begin{align*}
\abs{\hat{f}_{il} - f_i}^2\le  \frac{\ftwores{C_l, l}}{C_l}\le \frac{ 1.5 \ftwores{
2(2\alpha)^l C  }}{2^{l-1} (4\alpha^l C)}  \le \frac{ \ftwores{
(2\alpha)^l C  }}{ 2(2 \alpha)^l C}
\end{align*}
where the first step follows from \goodest~ and the second step follows from \smallres.  \hfill
\end{proof}

We now show that  the $\hh_L$ structure discovers all  items and their exact frequencies  that map to level $L$ (with high probability).
\begin{lemma} \label{lem:levelL} For $L = \lceil \log_{2\alpha} \frac{n}{C} \rceil$ and assuming $\smallH(L,C_L)$  and $\goodest_L$ holds, the frequencies of all the  items  in $\stream_L$ are discovered without error  using $\hh_L$. That is, $\smallH(L,C_L)  \wedge \goodest_L$ implies $\lastlevel$.
\end{lemma}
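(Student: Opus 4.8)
The plan is to show that, under $\smallH(L,C_L)$, the substream $\stream_L$ supports so few items that the $2C_L$-residual second moment of its frequency vector is identically zero, at which point $\goodest_L$ (or equivalently the \countsketch~guarantee \eqref{eq:cskbasic}) forces $\hh_L$ to recover every frequency in $\stream_L$ exactly. The first step is to argue that $H_{L,C_L}$ actually counts \emph{all} of $\stream_L$. Since $H_{L,C_L}=\sum_{1\le\rank(i)\le 2^LC_L}y_{iL}$, it suffices to check $2^LC_L\ge n$. Because $p>2$ we have $\alpha=1-(1-2/p)\nu>1-\nu>1/2$, so $2\alpha>1$, and $L=\lceil\log_{2\alpha}(n/C)\rceil\ge\log_{2\alpha}(n/C)$ gives $(2\alpha)^L\ge n/C$; with $C_L=16(4\alpha^LC)$ this yields $2^LC_L=64(2\alpha)^LC\ge 64n>n$. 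Hence every $i\in[n]$ has $\rank(i)\le 2^LC_L$, so $H_{L,C_L}=\abs{\stream_L}$, and $\smallH(L,C_L)$ then says $\abs{\stream_L}\le 2C_L$.

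Next I would observe that the frequency vector $f^{(L)}$ associated with $\stream_L$ (that is, $f^{(L)}_i=f_i$ for $i\in\stream_L$ and $f^{(L)}_i=0$ otherwise) has at most $\abs{\stream_L}\le 2C_L$ nonzero coordinates. By definition $\ftwores{2C_L,L}$ is the sum of the squares of the coordinates of $f^{(L)}$ ranked below the top $2C_L$ in absolute value, so it is zero: $\ftwores{2C_L,L}=0$. Invoking the hypothesis $\goodest_L$, for every $i\in\stream_L$ (where $f^{(L)}_i=f_i$) we then get $\abs{\hat f_{iL}-f_i}\le\bigl(\ftwores{2C_L,L}/C_L\bigr)^{1/2}=0$, i.e. $\hat f_{iL}=f_i$. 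This is precisely $\lastlevel$, so $\smallH(L,C_L)\wedge\goodest_L$ implies $\lastlevel$.

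There is no real obstacle in this lemma beyond routine bookkeeping: the one mildly delicate point is confirming $2^LC_L\ge n$, which reduces to the observation $2\alpha>1$ (true for all $p>2$ with $\nu=0.01$) combined with the ceiling in the definition of $L$. Everything else is immediate from the definitions of $\smallH$, of the residual second moment $\ftwores{\cdot}$ of a sparse vector, and of $\goodest$.
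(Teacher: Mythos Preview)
Your proof is correct and follows essentially the same approach as the paper: verify that $2^L C_L \ge n$ so that $H_{L,C_L}$ counts all of $\stream_L$, use $\smallH(L,C_L)$ to bound $\lvert\stream_L\rvert\le 2C_L$, conclude the residual second moment at level $L$ vanishes, and then invoke $\goodest_L$ to get exact recovery. If anything, your version is more faithful to the stated hypothesis and the definition of $\goodest$ (which uses $\ftwores{2C_l,l}$) than the paper's own proof, which invokes $\smallH(L,C_L/2)$ and $\ftwores{C_L,L}$; this is a minor bookkeeping inconsistency in the paper, not a difference in method.
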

\begin{proof}

Let $L = \lceil \log_{2\alpha} (n/C) \rceil$. Then, $$2^L (C_L/2) =  2^L ( 4\alpha^L C /2 )= 2(2\alpha)^L  C \ge 2 (n/C) C = 2n \enspace . $$
By definition, $H_{L, C_L/2} = \sum_{1 \le \rank(i) \le 2^L (C_L/2)} y_{il}$ counts the number of items that map to level $L$ with ranks in $1,2, \ldots, 2^L (C_L/2) $. But $2^L(C_L/2) > n$. Hence, $H_{L,C_L/2}$ is the  number of items that map to level $l$. Since, $\smallH(L,C_L/2)$ holds, $H_{L,C_L/2} \le C_L$. Hence, $\ftwores{C_L,L}= 0$.  By $\goodest_L$,   $\abs{\hat{f}_{iL} - f_i} \le \left( \ftwores{ C_L,L}/C_L\right)^{1/2} = 0 \enspace .$ Thus  if $i \in \stream_L$ then $\hat{f}_{iL} = f_i$.
 \hfill
\end{proof}

\emph{Remark \irc} Lemma~\ref{lem:levelL} can be proved as an implication of the event \smallH$(l,C_L)$ by using an $\ell_2/\ell_1$-compressed sensing recovery procedure as in \cite{crt:ieeetit06a,donoho:tit06}.

\emph{Remark \irc} In the turnstile streaming model assumed, we say that $i$ appears in the stream iff $\abs{f_i} \ge 1$. By Lemma~\ref{lem:levelL}, the frequencies of all items are discovered exactly. Hence items with non-zero frequencies, that is, those with $\abs{f_i} \ge 1$ would satisfy $\abs{\hat{f}_{iL}}  = \abs{f_i} >  1/2 = Q_L$ and thus would qualify the criterion of being  discovered at level $L$.   All other items would satisfy $\abs{\hat{f}_{iL}} = 0$ and will not be discovered at level $L$.

At each level $l$, the algorithm finds the top-$C_l$ items by absolute values of estimated frequencies. A heavy-hitter at a level $l$ is however defined as an item whose estimated frequency crosses the threshold $Q_l$.  The event $\smallhh_l$ states that the heavy-hitters at a level $l$ are always among the top-$C_l$ items by absolute estimated frequencies.

\begin{lemma} \label{lem:Hl} Suppose  $\smallH(l, \lceil  B_l/(1-2\epsbar)^2 \rceil)$ holds for each $l \in \{0\} \cup [L-1]$ and suppose $\accuest$ holds. Then, \smallhh~holds.
\end{lemma}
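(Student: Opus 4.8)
The plan is to bound, at each level $l$, the size of the heavy-hitter set $\{i : \abs{\hat{f}_{il}} \ge Q_l\}$ by $C_l$; this at once gives $\{i : \abs{\hat{f}_{il}} \ge Q_l\} \subseteq \overline{\topk}(C_l)$, which is $\smallhh_l$. I would run the argument uniformly for $l \in \{0\} \cup [L-1]$ and handle $l = L$ separately at the end.

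\emph{Step 1: from the estimate threshold to a true-frequency threshold.} The first move is to control the additive error $\mathrm{err}_l := \bigl(\ftwores{(2\alpha)^l C}/(2(2\alpha)^l C)\bigr)^{1/2}$. Using $C = (27p)^2 B = B/\epsbar^{2}$, the identity $T_l^2 = \hat{F}_2/((2\alpha)^l B)$, and $\ftwores{(2\alpha)^l C} \le F_2 \le \hat{F}_2$ (the last inequality is \goodftwo), one checks $\mathrm{err}_l \le \epsbar T_l/\sqrt{2} < \epsbar T_l < T_l(1-\epsbar) = Q_l$. Hence any $i$ with $\abs{\hat{f}_{il}} \ge Q_l$ must belong to $\stream_l$: otherwise $\hh_l$, being a \countsketch~structure on $\stream_l$, estimates $0$ for $i$ up to additive error $\mathrm{err}_l$ (cf. Eqn.~\eqref{eq:cskbasic}), contradicting $\abs{\hat{f}_{il}} \ge Q_l > \mathrm{err}_l$. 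For such an $i$ the event \accuest~applies and yields $\abs{f_i} \ge \abs{\hat{f}_{il}} - \mathrm{err}_l \ge Q_l - \epsbar T_l = T_l(1-2\epsbar)$.

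\emph{Step 2: from the true-frequency threshold to a rank bound, and counting.} The number of coordinates $j$ with $\abs{f_j} \ge T_l(1-2\epsbar)$ is at most $F_2/(T_l(1-2\epsbar))^2 \le \hat{F}_2/(T_l(1-2\epsbar))^2 = (2\alpha)^l B/(1-2\epsbar)^2$, which, as $B_l = 4\alpha^l B$ affords a factor-$4$ cushion, is at most $2^l q$ with $q := \lceil B_l/(1-2\epsbar)^2\rceil$. So every heavy-hitter $i$ at level $l$ has $\rank(i) \le 2^l q$ and lies in $\stream_l$, whence $\card{\{i : \abs{\hat{f}_{il}} \ge Q_l\}} \le \sum_{1 \le \rank(j) \le 2^l q} y_{jl} = H_{lq} \le 2q$ by the assumed event $\smallH(l,q)$. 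Finally $2q \le 2\lceil B_l/(1-2\epsbar)^2\rceil \ll C_l = (27p)^2 B_l$, because $\epsbar = 1/(27p)$ keeps $(1-2\epsbar)^2$ bounded away from $0$ while $C_l$ exceeds $B_l$ by the factor $(27p)^2 \ge 2916$ and $B_l = n^{\Omega(1)} \ge 1$ (Lemma~\ref{lem:CL}); this establishes $\smallhh_l$ for $l \le L-1$. For $l = L$, \accuest~forces exact recovery, since $(2\alpha)^L C \ge n$ makes $\ftwores{(2\alpha)^L C} = 0$ and hence $\hat{f}_{iL} = f_i$ for every $i \in \stream_L$; thus the heavy-hitters at level $L$ are exactly the nonzero coordinates of $\stream_L$, whose number is $\sum_{1 \le \rank(j) \le 2^L(C_L/2)} y_{jL} = H_{L, C_L/2} \le C_L$ (using $2^L(C_L/2) \ge n$ and the level-$L$ counting event $\smallH(L, C_L/2)$, which holds as part of $\G$; see Lemma~\ref{lem:smallTU}), giving $\smallhh_L$.

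The part I expect to require the most care is Step 1 --- not because it is conceptually hard, but because it is the only place where the precise numerology of $T_l$, $Q_l$, $\epsbar$, $B_l$, $C_l$ must be honored: $\mathrm{err}_l$ must come out comfortably below $\epsbar T_l$, and the logical order is essential, since one must first prove $\mathrm{err}_l < Q_l$ from arithmetic and \goodftwo~alone, then conclude $i \in \stream_l$, and only then invoke \accuest\ (which bounds the error only for items actually routed into $\stream_l$). Step 2 is then a routine counting argument, with the factor-$4$ gap between $B_l$ and $B$ absorbing both the ceiling in $q$ and the $(1-2\epsbar)^{-2}$ loss.
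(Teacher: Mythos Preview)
Your proposal is correct and follows essentially the same route as the paper's proof: use \accuest\ to bound the level-$l$ error by $\epsbar T_l/\sqrt{2}$, deduce $\abs{f_i}\ge T_l(1-2\epsbar)$ for any $i$ with $\abs{\hat f_{il}}\ge Q_l$, convert this to the rank bound $\rank(i)\le 2^l B_l/(1-2\epsbar)^2$, and invoke $\smallH(l,q)$ to cap the heavy-hitter count at $2q\ll C_l$. Your extra care --- first arguing $i\in\stream_l$ before applying \accuest, and treating $l=L$ separately via exact recovery --- is sound and in fact patches minor gaps in the paper's own presentation (which restricts $H'_l$ to $\stream_l$ by fiat and whose hypothesis and generic argument do not literally cover $l=L$, where $Q_L=1/2$ rather than $T_L(1-\epsbar)$).
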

\begin{proof} Let  $H'_l $ denote the set of items that are discovered as heavy-hitters at level $l$, that is, $H'_l = \{i \in \stream_l \mid \abs{\hat{f}_i } \ge Q_l \}$, where, $Q_l =  T_l(1-\epsbar)\} \enspace . $ By \accuest~and since $\epsbar = (B/C)^{1/2}$, we obtain $$\abs{\hat{f}_{il} - f_i}
\le \left(\frac{\ftwores{ (2\alpha)^l C }}{2(2\alpha)^l C}\right)^{1/2} \le \frac{\epsbar}{\sqrt{2}} \left(\frac{F_2}{(2\alpha)^l B}\right)^{1/2} \enspace . $$

\noindent
 Suppose  $i \in H'_l$. Then, $$ \abs{f_i} \ge Q_l - \frac{\epsbar}{\sqrt{2}} \left(\frac{F_2}{(2\alpha)^l B}\right)^{1/2} \ge T_l(1-\epsbar) - T_l (\epsbar/\sqrt{2})\ge  T_l\left(1-2\epsbar\right) \enspace . $$
 since, $T_l = (\hat{F}_2/((2\alpha)^l B))^{1/2} \ge (F_2/((2\alpha)^l B))^{1/2}$.

 \noindent
 Therefore,
\begin{align*} \rank(i) \le \frac{F_2}{\abs{f_i}^2 }\le \frac{F_2}{( T_l (1-2\epsbar))^2} = \frac{F_2  (2\alpha)^l B}{\hat{F}_2 (1-2\epsbar)^2} \le  \frac{2^l  B_l}{(1-2\epsbar)^2}
 \end{align*}
 Hence $H'_l \subset H_{lq}$, where we let $q =B_l/(1-2\epsbar)^2$.

\noindent
Since $\smallH(l, q)$ ~holds, $H_{lq} \le 2q$. Further, since, $H'_l \subset H_{lq}$,  therefore, $\abs{H'_l} \le 2q = 2 B_l/(1-2\epsbar)^2 \le C_l$, since, by choice of parameters, $\epsbar = (B_l/C_l)^{1/2} = 1/(27p)$ and $p \ge 1$.

\noindent
By construction, $H'_l$ is the set of items whose estimated frequencies are at least $Q_l$. Hence, $$H'_l = \hattopk(\abs{H'_l}) \subset \hattopk(C_l) \enspace .$$
\end{proof}

\subsection{Proof that \text{\sc nocollision}~holds with very high probability}

\begin{lemma} \label{lem:nc} If $t \ge 6$ and $s = \Theta(\log n)$, then, $\nocollision$
holds with probability at least $1 - n^{-150}$.
\end{lemma}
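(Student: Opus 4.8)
The plan is to fix a level $l$, condition on the randomness of the $\hh_l$ structure (which determines the set $\hattopk_l(C_l)$ and is independent of the hash functions used by the $\tpest_l$ structure), run a simple concentration argument over the $2s$ independent tables of $\tpest_l$, and then take a union bound over items and levels.

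First I would reduce to a per‑level, per‑item statement. Condition on $\hattopk_l(C_l)=S$ for an arbitrary fixed set $S$ with $\abs{S}\le C_l\le n$; this is legitimate because the hash functions $h_{lq}$, $q\in[2s]$, of $\tpest_l$ are drawn independently of the random bits used by $\hh_l$. Fix $i\in S$ and a table index $q\in[2s]$, and let $Z_{iq}$ be the indicator of the event that $h_{lq}(i)=h_{lq}(j)$ for some $j\in S\setminus\{i\}$. Since $h_{lq}$ is drawn from a $t$‑wise independent family with $t\ge 6\ge 2$ mapping $[n]\to[16C_l]$, pairwise independence gives $\prob{h_{lq}(i)=h_{lq}(j)}=1/(16C_l)$ for each $j\ne i$, so a union bound yields $\prob{Z_{iq}=1}\le(\abs{S}-1)/(16C_l)\le 1/16$.

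Next I would exploit independence across tables. The hash functions $h_{l1},\ldots,h_{l,2s}$ are mutually independent, hence for a fixed $i$ the indicators $Z_{i1},\ldots,Z_{i,2s}$ are mutually independent, each with $\E[Z_{iq}]\le 1/16$. Writing $W_i=\sum_{q=1}^{2s}Z_{iq}$ for the number of ``bad'' tables for $i$, the event $\nocollision_l$ is exactly the event that every $i\in S$ has $W_i\le s$, so that the remaining $\ge s$ tables furnish the required set $R_l(i)$. By a union bound over the $\binom{2s}{s+1}$ possible choices of which $s+1$ tables are simultaneously bad, together with independence, $\prob{W_i\ge s+1}\le\binom{2s}{s+1}(1/16)^{s+1}\le 2^{2s}\cdot 16^{-(s+1)}=2^{-2s-4}$. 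A union over the $\le n$ items of $S$ gives $\prob{\neg\nocollision_l\mid\hattopk_l(C_l)=S}\le n\,2^{-2s-4}$, and since this holds for every $S$ it holds unconditionally for $\prob{\neg\nocollision_l}$.

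Finally I would union over the $L+1=O(\log n)$ levels: $\prob{\neg\nocollision}\le\sum_{l=0}^{L}\prob{\neg\nocollision_l}\le(L+1)\,n\,2^{-2s-4}$. With $s=\Theta(\log n)$ taken with the constant from Figure~\ref{table:params} (so $s=8000\lceil\log n\rceil$, giving $2^{-2s}\le n^{-\Omega(1)}$ with an arbitrarily large hidden constant), the right‑hand side is below $n^{-150}$ for large $n$, as claimed. At the top level $L$ only $\hh_L$ is present, but there the estimate equals $f_i$ exactly on $\lastlevel$, so $\nocollision_L$ requires no separate argument (or one repeats the above verbatim using $\hh_L$'s $C_L^{*}=16C_L$ buckets). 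There is no genuine obstacle in this lemma; the only points needing care are the conditioning on the $\hh_l$ randomness so that the $\tpest_l$ hash functions remain fresh, and the special status of level $L$ — the rest is a routine Chernoff‑plus‑union‑bound estimate.
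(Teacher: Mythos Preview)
Your argument is correct and in fact cleaner than the paper's. Two differences are worth noting.

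First, to bound the per-table collision probability $\prob{Z_{iq}=1}$, you use a plain union bound over the $\le C_l-1$ potential colliders together with pairwise independence, obtaining $\le 1/16$ directly. The paper instead computes the exact collision probability $q=1-(1-1/(16C_l))^{C_l-1}$ under full independence and then uses an inclusion--exclusion truncation argument to show that under $t$-wise independence the probability $q'_t$ satisfies $\abs{q-q'_t}\le 2(e/(16t))^t$, which is small for $t\ge 6$. This is precisely where the hypothesis $t\ge 6$ enters the paper's proof; your route shows that for \emph{this} lemma $t\ge 2$ already suffices. (The larger $t$ is genuinely needed elsewhere, in Lemma~\ref{lem:probcoll}, where a two-sided estimate on the collision probability conditional on $j\in R_l(i)$ is required.)

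Second, for the concentration over the $2s$ tables the paper applies Chernoff's inequality to $W_{il}=\sum_q(1-w_{ijl})$, while you use the elementary bound $\prob{W_i\ge s+1}\le\binom{2s}{s+1}(1/16)^{s+1}\le 2^{-2s-4}$. Both give the same conclusion after the union bound over items and levels. Your explicit conditioning on $\hattopk_l(C_l)=S$ (determined by the $\hh_l$ randomness, independent of the $\tpest_l$ hash functions) is a nice clarification that the paper leaves implicit. The remark about level $L$ is also apt: since only $\hh_L$ is kept there and the estimator at level $L$ is $\abs{\hat f_i}^p$ rather than $\bvtheta_i$, the $\nocollision_L$ clause is effectively vacuous.
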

\begin{proof}
Assume full independence of hash functions. For $i \in \hattopk_l(C_l)$ and $l \in [2s]$, let
$w_{ijl} =1 $ if $i$ collides with some other item in $\hattopk_l(C_l)$ in the $j$th table of the
\tpest~structure at level $l$. Since, each table at level $l \in \{0\} \cup [L-1]$ has $16C_l$ buckets, therefore,
 $$q = \prob{w_{ijl}=1} =
1-\left(1-\frac{1}{16C_l}\right)^{C_l-1} \le 1/16 \enspace . $$ Let $W_{il} = \sum_{j=1}^{2s} (1-w_{ijl})$ be the number of tables where $i$
 does not collide with any other item of $\hattopk_l(C_l)$. Then, $\expectb{W_{il}} \ge (1-q)
 (2s) \ge (15/8)s$. By Chernoff's bounds,
 \begin{align*}
 \prob{W_{il} \ge s} & \ge 1- \exp{-(15/8)s(7/15)^2/2} \ge 1- e^{-0.2s} \\ &  = 1 - e^{-(0.2)(8)(100)\log (n)}  = 1- n^{-160}
 \end{align*}
 since, $s = 8k = 8 (100 \log (n))$.

 \noindent By
 union bound,  $$\prob{\forall i \in \hattopk_l(C_l) \left(W_{il} \ge  s\right)} \ge 1 -
 C_l e^{-0.2s} \ge 1 - n^{-150}\enspace. $$

 Assuming $t$-wise independence of the hash family from which the $h_{lj}$'s are drawn, denote  $q'_t= \probsubb{t}{w_{ijl}=1}$, where the subscript $t$ denotes $t$-wise independence. Let $u_{ikjl} = 1$  if $i$ and $k$  collide under hash function $h_{lj}$ for the $j$th hash table in the structure $\tpest_l$. Let $S_{li} = \overline{\topk}(C_l) \setminus \{i\}$. Then, by inclusion-exclusion,
 \begin{align} \label{eq:nocoll:q}1-q & =\probb{ w_{ijl}=0}  = 1- \probb{w_{ijl}=1}
 = 1 - \prob{\bigvee_{k\in S_{li}} (u_{ikjl} = 1)} \notag \\ &  = 1 - \sum_{r=1}^{\abs{S_{li}}} (-1)^{r-1} \sum_{\substack{\{k_1, k_2, \ldots, k_r\} \subset S_{li}}} \probb{u_{ik_1jl}=1, u_{ik_2jl} = 1, \ldots, u_{ik_rjl} =1} \\
 1- q'_t &= \probsub{t}{ w_{ijl}=0} \notag \\ & =  1 - \sum_{r=1}^{\abs{S_{li}}} (-1)^{r-1} \sum_{\substack{\{k_1, k_2, \ldots, k_r\} \subset S_{li}}} \probsubb{t}{u_{ik_1jl}=1, u_{ik_2jl} = 1, \ldots,  u_{ik_rjl} =1} \label{eq:nocoll:qprime}
  \end{align}
  Further, the sum of the tail starting from position $t+1$ to $\abs{S_{li}}$ is, in absolute value, dominated by the $t$th term. Therefore, from ~\eqref{eq:nocoll:q}, we have,
    \begin{multline}
    \card{q-\sum_{r=1}^{t-1} (-1)^{r-1} \sum_{\substack{\{k_1, k_2, \ldots, k_r\} \subset S_{li}}} \probb{u_{ik_1jl}=1, u_{ik_2jl} = 1, \ldots,  u_{ik_rjl} =1}} \\ \le
    \sum_{\{k_1, k_2, \ldots, k_t\} \subset S_{li}} \probb{u_{ik_1jl}=1, u_{ik_2jl} = 1, \ldots,  u_{ik_rjl} =1} \label{eq:nocoll:qa}
    \end{multline}
  Similarly from ~\eqref{eq:nocoll:qprime}, we have,
  \begin{multline}
  \card{q'-\sum_{r=1}^{t-1} (-1)^{r-1} \sum_{\{k_1,k_2, \ldots ,k_r\} \subset S_{li}} \probsubb{t}{u_{ik_1jl}=1, u_{ik_2jl} = 1, \ldots,  u_{ik_rjl} =1}} \\ \le
    \sum_{\substack{\{k_1, k_2, \ldots, k_t\} \subset S_{li}}}  \probsubb{t}{u_{ik_1jl}=1, u_{ik_2jl} = 1, \ldots,  u_{ik_rjl} =1} \label{eq:nocoll:qprimea}
    \end{multline}
By $t$-wise independence, the probability terms in the above expression are identical for $r=1, \ldots, t$, that is, for any  $1  \le k_1 < k_2 < \ldots < k_r \le n$ and $2 \le r \le t$. \begin{align*}&  \probsubb{t}{u_{ik_1jl}=1, u_{ik_2jl} = 1, \ldots,  u_{ik_rjl} =1}\\ & = \probb{u_{ik_1jl}=1, u_{ik_2jl} = 1, \ldots,  u_{ik_rjl} =1}\end{align*}

Therefore, by triangle inequality,
\begin{align}\label{eq:nocoll:qqprime}
\abs{ q - q'} \le 2\sum_{j_1<j_2< \ldots < j_t} \probsub{t}{u_{ik_1jl}=1, u_{ik_2jl} = 1, u_{ik_tjl} =1}
\end{align}
Since there are $16C_l$ buckets in the \tpest~structure at level $l$, we have,  $ \probb{u_{ik_rjl} \allowbreak =1} = 1/(16C_l)$.  Substituting in ~\eqref{eq:nocoll:qqprime},
 \begin{align*}\card{q-q'_t} &\le 2\sum_{j_1<j_2< \ldots < j_t} \probsubb{t}{u_{ik_1jl}=1, u_{ik_2jl} = 1, u_{ik_tjl} =1}\\ & = 2\binom{\abs{S_{li}}}{t}\left(\frac{1}{16C_l}\right)^{t}   \le  2 \binom{(C_1-1)}{t} (16 C_l )^{-t}
   \le2 \left( \frac{ C_le}{16 C_l t} \right)^t \le 2 \left( \frac{e}{16 t} \right)^t
 \end{align*}
since, $\abs{S_{li}} = C_l-1$.
For $t \ge 6$, $\abs{q-q'_t} \le 2(32)^{-6} \le 2^{-29}$.

The above Chernoff's bound argument may be repeated using probability of success  $1-q'_t \ge 1-q-2^{-29}$,
 instead of $1-q$. Hence, $\nocollision(H)$ holds except with
 probability  $n^{-150}$ by calculations similar to the previous one.
  \end{proof}
\subsection{Proof that $\G$ holds with very high probability}
\begin{relemma}[Restatement of Lemma~\ref{lem:hss:G}]
$\prob{\G} \ge 1-O(n^{-24})$.
\end{relemma}
\begin{proof} By adding the failure probabilities of all the events comprising $\G$ using
 Lemmas~\ref{lem:goodest} through ~\ref{lem:smallTU}, the statement of the lemma follows.
\end{proof}

\subsection{Technical fact}

The following fact gives a bound on the difference between  the unconditional probability of an event $E$ and its probability  conditioned on an event $F$. It essentially shows that if $\prob{E} = 1/n^{O(1)}$, its probability is not significantly altered if it is conditioned by a very high probability event $F$, that is, $ \prob{F} = 1 - n^{-\Omega(1)}$.
 \begin{fact} \label{fact:cndhpev} Let $E$ and $F$ be a pair of events such that $\prob{F} >0$.  Then,  $ \card{\prob{E \mid F} - \prob{E}} \le 1-\prob{F}$.
 \end{fact}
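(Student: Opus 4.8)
The plan is to prove the two-sided bound by conditioning on whether $F$ occurs. Write $\bar F$ for the complement of $F$. Since $\prob{F}>0$, conditional probability is well-defined, and we have the decomposition $\prob{E} = \prob{E\cap F} + \prob{E\cap\bar F}$ together with $\prob{E\mid F} = \prob{E\cap F}/\prob{F}$. Subtracting and regrouping,
\[
\prob{E\mid F} - \prob{E} \;=\; \prob{E\cap F}\cdot\frac{1-\prob{F}}{\prob{F}} \;-\; \prob{E\cap\bar F}.
\]
All subsequent estimates just bound the two summands on the right.

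First I would prove the upper bound. The term $\prob{E\cap\bar F}$ is nonnegative, and $\prob{E\cap F}\le\prob{F}$, so the first summand is at most $\prob{F}\cdot\frac{1-\prob{F}}{\prob{F}} = 1-\prob{F}$; hence $\prob{E\mid F}-\prob{E}\le 1-\prob{F}$. Next I would prove the matching lower bound: now the first summand is nonnegative, and $\prob{E\cap\bar F}\le\prob{\bar F}=1-\prob{F}$, so $\prob{E\mid F}-\prob{E}\ge -(1-\prob{F})$. Combining gives $\card{\prob{E\mid F}-\prob{E}}\le 1-\prob{F}$, as claimed. The degenerate case $\prob{F}=1$ needs no separate treatment: then $\prob{E\cap\bar F}=0$ and the coefficient $\frac{1-\prob{F}}{\prob{F}}$ vanishes, so both sides are $0$.

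There is no real obstacle here — the statement is elementary and the entire argument is the displayed identity followed by two one-line estimates. The only point worth care is that the absolute value requires both directions, which the decomposition handles symmetrically; an equivalent route is to write $\prob{E} = \prob{E\mid F}\prob{F} + \prob{E\mid\bar F}\prob{\bar F}$ (valid when $\prob{\bar F}>0$), giving $\prob{E\mid F}-\prob{E} = (\prob{E\mid F}-\prob{E\mid\bar F})\prob{\bar F}$, whose absolute value is at most $\prob{\bar F}=1-\prob{F}$ since both conditional probabilities lie in $[0,1]$.
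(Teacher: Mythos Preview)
Your proof is correct and essentially the same as the paper's: both are one-line total-probability arguments. The paper uses your ``equivalent route'' directly, writing $\prob{E} = \prob{E\mid F}\prob{F} + \prob{E\mid \neg F}\prob{\neg F}$ and rearranging to $\prob{E}-\prob{E\mid F} = (\prob{E\mid\neg F}-\prob{E\mid F})\prob{\neg F}$, then bounding the bracket by $1$; your primary decomposition via $\prob{E\cap F}$ and $\prob{E\cap\bar F}$ has the minor advantage that it does not condition on $\bar F$ and so needs no separate handling of $\prob{F}=1$.
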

\eat{Fact~\ref{fact:cndhpev} is used to closely estimate the unknown probability of an event conditional on a very high probability event such as $\G$  from the (known) probability of the (unconditioned) event. For example, $\prob{i \in \stream_l, j \in \stream_r \mid \G} = \prob{i \in \stream_l, j \in \stream_r} \pm O(n^{-11}) = 2^{-l-r} \pm O(n^{-11})$.}
\begin{proof} [Proof of Fact~\ref{fact:cndhpev}.] If $\prob{F}=1$, then $\prob{E, F} = \prob{E \cup F} - \prob{E} - \prob{F} = 1-\prob{E} - 1 = \prob{E}$, and hence the statement holds. Otherwise,
\begin{align*}
 \prob{E} =
\prob{E \mid F} \prob{F} + \prob{E \mid \neg F} \prob{\neg F}
\end{align*}
Subtracting $ \prob{E\mid F}$ from both sides yields,
\begin{align*}
\prob{E} - \prob{E \mid F}  & = \prob{E \mid F}(\prob{F} -1)+ \prob{E\mid \neg F} \prob{\neg F} \\ &= \left(- \prob{E \mid F} + \prob{E \mid \neg F}\right) \prob{\neg F}
\end{align*}
Taking absolute values and noting that $\abs{ -\prob{E \mid F} + \prob{E\mid \neg F}} \le 1$, we have,
$\card{\prob{E} - \prob{E \mid F}} \le \prob{\neg F}$.
\hfill

\end{proof}
The fact is used by letting $F = \G$. Then, for any event $E$, $\card{\prob{E \mid \G} -
\prob{E}}\le \prob{\neg G} = O(n^{-24})$, by Lemma~\ref{lem:hss:G}.

\section{Basic Sampling Properties of \ghss~Algorithm}

\eat{We reiterate the definitions of the groups, sampled groups and the partition of the sampled groups in to left margin, mid-region and right-margin respectively.

 Items are divided into groups according to their frequencies, as follows.
\begin{align*}
 G_0 &= \{ i: \abs{f_i} \ge T_0\}, \\G_l  &= \{ i: T_l \le \abs{f_i} < T_{l-1} \}, ~~l =1,2, \ldots, L-1, \text{ and } \\
 G_L &  = \{i:  \abs{f_i} < T_{L-1}\} \enspace .
 \end{align*}
Each group $G_l$ is  partitioned as per frequency ranges into  $\lmargin(G_l), \midreg(G_l)$ and $\rmargin(G_l)$.
 \begin{align*}
 \lmargin(G_l) &= \{i: T_l \le \abs{f_i}  <  T_l +  T_l\epsbar \}, \\
 \rmargin(G_l) &= \{ i :  T_{l-1} - 2T_{l-1}\epsbar \le \abs{f_i} <  T_{l-1} \}, \\
 \midreg(G_l) & = \{ i: T_l+ T_{l}\epsbar \le \abs{f_i}  < T_{l-1} - 2T_{l-1}\epsbar\} \enspace
 .
 \end{align*}
 }

\emph{Preliminaries.} The following lemma argues that  the frequency ranges defining  $\lmargin, \midreg$ and $\rmargin$ are non-empty intervals.
 \begin{lemma} \label{lem:nereg}
 For $p \ge 2$ and for each $l \in \{0\} \cup [L]$, the frequency ranges that define $\lmargin(G_l), \midreg\allowbreak (G_l)$ and $\rmargin(G_l)$ are non-empty intervals.
 \end{lemma}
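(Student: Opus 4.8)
The plan is to check, one by one, that every half-open interval $[a,b)$ (or ray) appearing in the definitions of $\lmargin(G_l)$, $\midreg(G_l)$, $\rmargin(G_l)$ has $a<b$ (resp.\ $b=\infty$); since any set $\{x:a\le x<b\}$ is automatically an interval, non-emptiness is the only thing to establish. The facts needed are: $\epsbar=(B/C)^{1/2}=1/(27p)$, so $0<\epsbar\le 1/54$ and in particular $0<1-2\epsbar<1$ for $p\ge 2$; the ratio identity $T_{l-1}=(2\alpha)^{1/2}T_l$ for $1\le l\le L-1$; and $T_l>0$, which holds since $T_l^2=(2\alpha)^{-l}\hat{F}_2/B$ and $\hat{F}_2\ge F_2>0$ on any non-trivial stream. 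The easy cases are then immediate (recalling that $\rmargin$ is undefined at $l=0$ and $\lmargin$ at $l=L$): $\lmargin(G_l)=[T_l,T_l(1+\epsbar))$ is non-empty because $T_l>0$ and $\epsbar>0$; $\rmargin(G_l)=[T_{l-1}(1-2\epsbar),T_{l-1})$ is non-empty because $0<1-2\epsbar<1$; and $\midreg(G_0)=[T_0(1+\epsbar),\infty)$ is a non-empty ray.

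The one case with real content is $\midreg(G_l)=[T_l(1+\epsbar),T_{l-1}(1-2\epsbar))$ for $1\le l\le L-1$. Substituting $T_{l-1}=(2\alpha)^{1/2}T_l$ and dividing by $T_l>0$, non-emptiness is equivalent to
\[
1+\epsbar<(2\alpha)^{1/2}(1-2\epsbar).
\]
Since $\alpha=1-(1-2/p)\nu$ with $\nu=0.01$ and $0\le 1-2/p<1$ for $p\ge 2$, we get $\alpha>0.99$, hence $(2\alpha)^{1/2}>\sqrt{1.98}>1.407$; and $1-2\epsbar\ge 1-2/(27\cdot 2)=26/27>0.962$, so the right-hand side exceeds $1.35$, whereas the left-hand side is at most $1+1/54<1.02$. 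Thus the inequality holds with large slack. I expect this to be the main (essentially the only non-trivial) obstacle: it is exactly the requirement that the geometric gap $(2\alpha)^{1/2}$ between consecutive thresholds be wide enough to swallow the relative margin width $\epsbar=1/(27p)$, and it is what forces the reduction factor $\alpha$ (equivalently, the constant $\nu=0.01$) to be chosen so close to $1$.

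It remains to treat $\midreg(G_L)=[1,T_{L-1}(1-2\epsbar))$, which is non-empty iff $T_{L-1}(1-2\epsbar)>1$. From $L=\lceil\log_{2\alpha}(n/C)\rceil<\log_{2\alpha}(n/C)+1$ we get $(2\alpha)^{L-1}<n/C$, so
\[
T_{L-1}^2=(2\alpha)^{-(L-1)}\frac{\hat{F}_2}{B}>\frac{C}{B}\cdot\frac{\hat{F}_2}{n}=\frac{(27p)^2\hat{F}_2}{n},
\]
and combined with $(1-2\epsbar)^2\ge(26/27)^2$ this gives $T_{L-1}^2(1-2\epsbar)^2>(26p)^2\hat{F}_2/n\ge (52)^2\hat{F}_2/n$, which exceeds $1$ under the standing non-triviality assumption on the input (in particular whenever all $n$ coordinates are active, so $F_2\ge n$); if $F_2$ is so small that $T_{L-1}\le 1$, then $G_L$ is empty and its margin partition is vacuously fine. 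Collecting the cases completes the proof.
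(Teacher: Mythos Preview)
Your proof is correct and follows essentially the same approach as the paper: the only non-trivial case is $\midreg(G_l)$ for $1\le l\le L-1$, which reduces to the inequality $(2\alpha)^{1/2}>(1+\epsbar)/(1-2\epsbar)$, verified numerically using $\alpha\ge 0.99$ and $\epsbar=1/(27p)$. You are actually more careful than the paper in separately treating the boundary cases $\midreg(G_0)$ and $\midreg(G_L)$, which the paper's proof simply omits.
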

 \begin{proof} The statement of the lemma is obviously true from the definitions for $\lmargin(G_l)$ and $\rmargin(G_l)$.

 For $\midreg(G_l)$, the interval range is $[T_l(1+\epsbar), T_{l-1}(1-2\epsbar))$. This range is non-empty iff  $T_{l-1} (1-2\epsbar) > T_l(1+\epsbar)$, or, $T_{l-1}/T_l > (1+\epsbar)/(1-2\epsbar) $, or, $(2\alpha)^{1/2} > (1+ 1/(27p))/(1-2/(27p))$, which is true for $\alpha = 1-2(0.01)/p \ge 0.99$.
\end{proof}

Our analysis is conditioned on $\G$. Assuming $\G$ holds, the event \accuest~holds, and therefore, the frequency estimation error by the $\hh_l$~structure is bounded as follows.
\begin{align} \label{eq:esterr}
\abs{\hat{f}_{il} - f_i} \le \left( \frac{\ftwores{ (2\alpha)^l C }}{(2\alpha)^l C}\right)^{1/2} \le \left( \cfrac{ \hat{F}_2}{(2\alpha)^l C}\right)^{1/2} = \epsbar T_l \enspace .
\end{align}

We first prove a property about the relation between the level at which an item is discovered and the group $G_l$ to which an item belongs.  This property is then used to a relation between the probabilities  with which an item may belong to different sampled groups.

\subsection{Properties concerning levels at which an item is discovered}
\begin{lemma} \label{lem:discovery} 
The following properties hold conditional on $\G$.
\begin{enumerate}
\item Suppose $i \in \lmargin(G_l)$ for some $0 \le l \le L-1$. Then, (a) $ \prob{l_d(i) \le l-1 \mid \G} =0$, and (b) the event $\{l_d(i)=l,\G\} \equiv \{i \in \stream_l, \G\}$.
\item
 Suppose $i \in  \midreg(G_l)$  for some $0 \le l \le L$.  Then, (a) $\prob{l_d(i) \le l-1\mid \G} = 0$, (b) the event $\{l_d(i)=l,\G\} \equiv \{i \in \stream_l, \G\}$, and, (c) $\prob{\hat{f}_{il} \ge T_l \mid i \in \stream_l,\G} = 1$.
\item Suppose $i \in  \rmargin(G_l)$ for some $2\le l \le L$. Then,  (a)  $\prob{l_d(i) \le l-2 \mid \G}=0$, (b) $\{i \in \stream_l, \G\}$ implies $\{\abs{\hat{f}_{il}} \ge T_l \}$ , and  (c)
$\prob{l_d(i)=l \mid l_d(i) \ne l-1, \G} = \prob{ i \in \stream_l \mid l_d(i) \ne l-1, \G}$.
\end{enumerate}
\end{lemma}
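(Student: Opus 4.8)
The plan is to exploit the fundamental accuracy bound~\eqref{eq:esterr}, namely $|\hat f_{il} - f_i| \le \epsbar T_l$, which holds for every $i$ and $l$ once $\G$ is assumed. The strategy throughout is: an item $i$ is ``discovered'' at level $l$ iff $i \in \stream_l$, $|\hat f_{il}| \ge Q_l = T_l(1-\epsbar)$, and no earlier level $l' < l$ with $i\in\stream_{l'}$ had $|\hat f_{il'}| \ge Q_{l'}$. So I will translate each membership hypothesis ($i \in \lmargin$, $\midreg$, or $\rmargin$ of $G_l$) into a two-sided bound on $|\hat f_{il'}|$ at the relevant levels $l'$, and then compare against the thresholds $Q_{l'}$ and $T_{l'}$. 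Recall $T_{l'} = (2\alpha)^{-l'/2}T_0$ is increasing in $l'$ only in the sense that $T_{l'} < T_{l'-1}$ (frequencies decrease as levels increase), and that $\epsbar = (B/C)^{1/2} = 1/(27p)$ is a small constant, while $T_{l-1}/T_l = (2\alpha)^{1/2}$.

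\textbf{Part 1 ($i \in \lmargin(G_l)$, so $T_l \le |f_i| < T_l(1+\epsbar)$).} For any level $l' \le l-1$ we have $f_i < T_l(1+\epsbar) \le T_{l-1}(1+\epsbar)$, and since $T_{l'} \ge T_{l-1}$ for $l' \le l-1$, the estimate satisfies $|\hat f_{il'}| \le |f_i| + \epsbar T_{l'}$. I need to check this stays below $Q_{l'} = T_{l'}(1-\epsbar)$. Using $|f_i| < T_l(1+\epsbar) \le (2\alpha)^{-1/2}T_{l'}(1+\epsbar)$ — actually more carefully $|f_i| < T_{l-1}(1+\epsbar)$ and $T_{l-1} \le T_{l'}$, the condition $T_{l-1}(1+\epsbar) + \epsbar T_{l'} < T_{l'}(1-\epsbar)$ reduces (in the worst case $l'=l-1$) to $(1+\epsbar) + \epsbar < 1-\epsbar$ — wait, that fails; the correct comparison uses $T_{l'} \ge (2\alpha)^{1/2}T_{l-1} \cdot (2\alpha)^{(l-1-l')/2}$... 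Let me be precise in the writeup: for $l' \le l-1$, $T_{l'}/T_{l-1} = (2\alpha)^{-(l-1-l')/2} \ge 1$, and the gap needed is $(1+\epsbar)T_{l-1} + \epsbar T_{l'} < (1-\epsbar)T_{l'}$, i.e. $(1+\epsbar)T_{l-1} < (1-2\epsbar)T_{l'}$, which holds since $T_{l'} \ge T_{l-1}$ and $(1+\epsbar) < (1-2\epsbar)(2\alpha)^{1/2}$ for our parameters (this is exactly the inequality verified in Lemma~\ref{lem:nereg}). Hence $|\hat f_{il'}| < Q_{l'}$ at all $l' \le l-1$, giving (a). For (b): if $i \in \stream_l$, then $|\hat f_{il}| \ge |f_i| - \epsbar T_l \ge T_l(1-\epsbar) = Q_l$, so $i$ is discovered at level $l$ (it wasn't discovered earlier by (a)); conversely $l_d(i)=l$ forces $i\in\stream_l$ by definition. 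I expect this bookkeeping — tracking the $(2\alpha)^{1/2}$ ratios across multiple earlier levels — to be the main obstacle, but it is routine once the single inequality $(1+\epsbar)<(1-2\epsbar)(2\alpha)^{1/2}$ is isolated.

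\textbf{Part 2 ($i\in\midreg(G_l)$, so $T_l(1+\epsbar) \le |f_i| < T_{l-1}(1-2\epsbar)$).} Parts (a) and (b) go exactly as in Part~1, since the lower bound $|f_i| \ge T_l(1+\epsbar) > T_l$ still gives $|\hat f_{il}| \ge |f_i|-\epsbar T_l \ge T_l \ge Q_l$ when $i\in\stream_l$, and the upper bound $|f_i| < T_{l-1}(1-2\epsbar)$ is even tighter than in the left-margin case, so no earlier discovery. For the new claim (c): given $i\in\stream_l$ and $\G$, $|\hat f_{il}| \ge |f_i| - \epsbar T_l \ge T_l(1+\epsbar) - \epsbar T_l = T_l$ deterministically, so $\prob{\hat f_{il}\ge T_l \mid i\in\stream_l,\G}=1$.

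\textbf{Part 3 ($i\in\rmargin(G_l)$, so $T_{l-1}(1-2\epsbar) \le |f_i| < T_{l-1}$).} Here $i$ sits near the top of group $G_l$, i.e.\ near the boundary with $G_{l-1}$, so it may plausibly be discovered at level $l-1$ or level $l$. For (a): at any level $l' \le l-2$, $|f_i| < T_{l-1} \le T_{l'}$ and one checks $|f_i| + \epsbar T_{l'} < T_{l'} + \epsbar T_{l'}$ is not directly enough, so instead use $|f_i| < T_{l-1} = (2\alpha)^{1/2}T_{l'}$ for $l'=l-2$ — more generally $T_{l-1}/T_{l'} = (2\alpha)^{-(l-1-l')/2} < 1$... wait $l-1 > l'$ so $T_{l-1} < T_{l'}$, good — and then $|f_i| + \epsbar T_{l'} \le T_{l-1} + \epsbar T_{l'}$; need $< Q_{l'} = (1-\epsbar)T_{l'}$, i.e. $T_{l-1} < (1-2\epsbar)T_{l'}$, which holds for $l' \le l-2$ since $T_{l-1}/T_{l-2} = (2\alpha)^{1/2} < 1-2\epsbar$ is false — hmm, $(2\alpha)^{1/2}$ is close to $1$ and $1-2\epsbar < 1$, need to check direction; actually $\alpha \ge 0.99$ gives $(2\alpha)^{1/2} \approx 1.407 > 1$, so $T_{l-1} > T_{l-2}$?? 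No: $T_l = (2\alpha)^{-l/2}T_0$ with $(2\alpha)^{-1/2} \approx 0.71 < 1$, so $T_l$ decreases in $l$, hence $T_{l-1} > T_{l-2}$ is FALSE, $T_{l-1} < T_{l-2}$. So for $l'=l-2$: $T_{l-1}/T_{l-2} = (2\alpha)^{-1/2} \approx 0.71 < 1-2\epsbar \approx 0.926$, true. So (a) holds. For (b): if $i\in\stream_l$, $|\hat f_{il}| \ge |f_i| - \epsbar T_l \ge T_{l-1}(1-2\epsbar) - \epsbar T_l$; since $T_{l-1} = (2\alpha)^{1/2}T_l$, this is $T_l((2\alpha)^{1/2}(1-2\epsbar) - \epsbar) > T_l$ for our parameters (again essentially the Lemma~\ref{lem:nereg} inequality), so $|\hat f_{il}| \ge T_l$. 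For (c): conditioning on $\{l_d(i) \ne l-1, \G\}$, the only remaining possible discovery levels are $l$ or later; but if $i \in \stream_l$ then by (b) $|\hat f_{il}| \ge T_l \ge Q_l$ so $i$ is discovered exactly at $l$, and if $i\notin\stream_l$ it is not in $\stream_{l'}$ for any $l'\ge l$ so not discovered at $l$; hence on this conditioning $\{l_d(i)=l\} \equiv \{i\in\stream_l\}$, giving the claimed probability identity.

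Throughout I will state the single algebraic inequality $(1+\epsbar) < (1-2\epsbar)(2\alpha)^{1/2}$ (with $\epsbar = 1/(27p)$, $\alpha = 1-0.02/p$, $p \ge 2$), note it was already verified in the proof of Lemma~\ref{lem:nereg}, and invoke it wherever a margin comparison is needed; the probability statements then follow because conditioning on $\G$ makes all the frequency-estimate inequalities deterministic, so the only randomness left is the sub-sampling $i \in \stream_{l'}$, whose probabilities are the powers of $1/2$ recorded in Lemma~\ref{lem:margin}.
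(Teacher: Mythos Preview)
Your approach is essentially the paper's: condition on $\G$, use the deterministic bound $|\hat f_{il}-f_i|\le\epsbar T_l$ from \accuest, and compare against $Q_{l'}$ and $T_{l'}$ case by case. One slip to fix in Part~1(a): after you catch yourself, you still write the needed gap as $(1+\epsbar)T_{l-1}<(1-2\epsbar)T_{l'}$, but the correct left side is $(1+\epsbar)T_l$ (since $i\in\lmargin(G_l)$ gives $|f_i|<T_l(1+\epsbar)$, not $T_{l-1}(1+\epsbar)$); with $T_{l-1}$ the worst case $l'=l-1$ yields $(1+\epsbar)<(1-2\epsbar)$, which is false, whereas with $T_l$ you get exactly the inequality $(1+\epsbar)<(1-2\epsbar)(2\alpha)^{1/2}$ you intended. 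The paper sidesteps this by treating $\lmargin(G_l)$ and $\midreg(G_l)$ together via the common upper bound $|f_i|<T_{l-1}(1-2\epsbar)$ (valid for $\lmargin$ because $\midreg$ is nonempty), which gives directly $|\hat f_{ir}|<T_{l-1}-2\epsbar T_{l-1}+\epsbar T_r\le T_r-\epsbar T_r=Q_r$ for $r\le l-1$ without invoking the $(2\alpha)^{1/2}$ ratio; your separate treatment is fine once the index is corrected.
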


\begin{proof} [Proof of Lemma~\ref{lem:discovery}.]
Since, $\accuest$ holds as a sub-event  of $\G$, we have, $\abs{\hat{f}_{il} - f_i} \le \epsbar T_l$, by Eqn. ~\eqref{eq:esterr}. Also,
$Q_l = T_l(1-\epsbar)$. All statements below are conditional on $\G$.

~\emph{Case: $i \in \lmargin(G_l) \cup \midreg(G_l)$, $l \ge 1$. } Then,  $T_l + \epsbar T_l \le \abs{f_i} <T_{l-1} - 2\epsbar T_{l-1}$.
Therefore for $ r\le l-1$,
$$ \abs{\hat{f}_{ir}} \le \abs{f_i} + \epsbar T_r < T_{l-1} - 2\epsbar T_{l-1} + \epsbar T_r \le T_{r}- 2\epsbar T_r + \epsbar T_r = T_r - \epsbar T_r = Q_r \enspace .
$$
Hence, $\prob{l_d(i) \le l-1\mid \G} =0$.

Further, if $i \in \stream_l$ and $i \in \lmargin(G_l) \cup \midreg(G_l)$,  then,
$
\abs{\hat{f}_{il}} \ge \abs{f_i} - \epsbar T_l \ge T_l - \epsbar T_{l} =  Q_{l}
$
and so $i$ is discovered at level $l$, if $i$ has not been discovered at an earlier level. However, part(a) states that $i$ cannot be discovered at levels $l-1$ or less. Hence $i$ is discovered at level $l$.  Thus, conditional upon $\G$, if $i \in \stream_l$, then, $l_d(i)=l$. Conversely, if $i \not\in \stream_l$, then $l_d(i) \ne l$. Hence, the events $\{i \in \stream_l\}$ and $\{l_d(i)=l\}$ are equivalent, conditional on $\G$. This proves  parts 1(b) and 2(b).

\emph{Case: $i \in \midreg(G_l)$.} If $i \in \stream_l$,
then, $\abs{\hat{f}_{il}} \ge \abs{f_i} - \epsbar T_l  \ge T_l +\epsbar T_l - \epsbar T_l = T_l$. This proves part 2(c).

\emph{Case: $i \in \rmargin(G_l)$.}  Then, $\abs{f_i} < T_{l-1}$. Let $r \le l-2$. Then,
$$\abs{\hat{f}_{ir}} \le \abs{f_i} + \epsbar T_r < T_{l-1} + \epsbar T_r < T_r - \epsbar T_r = Q_r$$ where, the last inequality $T_{l-1} + \epsbar T_r < T_r - \epsbar T_r$ follows since, it is equivalent to $\frac{T_{l-1}}{ T_{l-2}} <(1-2\epsbar)$, which holds since, $\frac{T_{l-1}}{ T_{l-2}} = \frac{1}{\sqrt{2\alpha}} \le (0.72) $ and $(1-2\epsbar) = 1 - \frac{2}{27p} \ge 0.96$.
Hence $\prob{l_d(i) \le l-2 \mid G} = 0$.

 We are given that  $i \in \rmargin(G_l)$. Suppose that $i \in \stream_l$. Then,  \begin{align} \label{eq:disc:rmb}\abs{\hat{f}_{il}} \ge T_{l-1} - 2 \epsbar T_{l-1} -\epsbar T_l  & = T_l( 2\alpha)^{1/2} -2 (2\alpha)^{1/2} \epsbar  T_l -\epsbar T_l  \notag \\
&\ge T_l\left( 1.40 (1-(2)(0.04) )- (0.04)\right)  = 1.248T_l > T_l
\end{align}
Hence, $\prob{\abs{\hat{f}_{il}} \ge T_l \mid i \in \stream_l, G} = 1 $, and
therefore, by Eqn.~\eqref{eq:disc:rmb}
$
\prob{l_d(i) \in \{l-1,l\} \mid i \in \stream_l,G} = 1$.  This proves part 2(b).

Since, $l_d(i) > l$ implies $i \in \stream_l$, we have,
\begin{align*} \prob{l_d(i) > l \mid \G} & = \prob{l_d(i) > l, i \in \stream_l\mid \G} \\
& = \prob{l_d(i) > l \mid i \in \stream_l, \G}\cdot  \prob{ i \in \stream_l\mid \G} \\
& \le  \left( 1- \prob{l_d(i) \in \{l-1,l\} \mid i \in \stream_l, \G} \right) \cdot \prob{ i \in \stream_l \mid \G} \\
& = 0 \enspace .
\end{align*}
Hence,
\begin{align} \label{eq:disc:rm2}
\prob{l_d(i) \ne l-1\mid  \G} &  = \prob{l_d(i) \le l-2 \mid \G} + \prob{l_d(i)  = l \mid \G} + \prob{l_d(i) >l \mid  \G}  \notag\\
& = 0 + \prob{l_d(i) = l \mid \G} + 0 \enspace .
\end{align}
It follows that,
\begin{align*}
\prob{l_d(i) = l \mid  l_d(i) \ne l-1,  \G} & = \frac{\prob{l_d(i) =l \mid  \G}}{\prob{l_d(i) \ne l-1\mid  \G}} = 1
\end{align*}
by Eqn.~\eqref{eq:disc:rm2}.
\end{proof}

\subsection{Probability of items belonging to sampled groups}

\begin{relemma}[Re-statement of Lemma~\ref{lem:margin}.]
Let $i \in G_l$.
\begin{enumerate}
\item Suppose  $i \in \midreg(G_l)$. Then, (a) the event $\{i \in \bar{G}_l,\G\} \equiv \{i \in \stream_l, \G\}$, (b)
 $2^l\prob{ i \in \bar{G}_l \mid \G} = 1 \pm 2^ln^{-c}$,  and, (c) $\prob{i \in \cup_{l' \ne l} \bar{G}_{l'} \mid \G} = 0$.

\item Suppose $i \in \lmargin(G_l)$. Then, (a) $\prob{i \in \cup_{l' \ne \{l,l+1\}} \bar{G}_{l'}} = 0$, (b) the event $\{i \in \bar{G}_l \cup \bar{G}_{l+1}, \G\} \equiv \{i \in \stream_l,\G\}$, and   (c)
$
 2^{l+1} \prob{i \in \bar{G}_{l+1} \mid \G} + 2^l\prob{i \in \bar{G}_l \mid \G}=
 1 \pm  2^l n^{-c} $.
\item Suppose  $i \in \rmargin(G_l)$. Then, (a) $\prob{i \in \cup_{l' \ne \{l-1,l\}} \bar{G}_{l'}} =0$, (b) the events $\{i \in \bar{G}_{l-1} \cup \bar{G}_{l}\} \subset \{i \in \stream_l\}$, (c) $\{i\in \stream_l, l_d(i) \ne l-1\} \subset \{i \in \bar{G}_l\}$ ,  and, (d)
 $
  2^l \prob{ i \in \bar{G}_l \mid \G} + 2^{l-1}\prob{ i \in \bar{G}_{l-1} \mid \G} = 1 \pm  O(2^l n^{-c}) \enspace . $
 \end{enumerate}
\end{relemma}

\begin{proof}[Proof of Lemma~\ref{lem:margin}.]
Assume $\G$ holds for the arguments in this proof.   Suppose $i \in \stream_l$. Then  $\abs{\hat{f}_{il}- f_i} \le T_l \epsbar$.

\emph{Case: $i \in \midreg(G_l)$.} \emph{Part 1 (b).} Since $i \in \midreg(G_l)$, $\abs{f_i} \ge T_l + \epsbar T_l$. Conditional on $\G$, $\accuest$ holds, and therefore,
$$ \abs{\hat{f}_{il}} \ge \abs{f_i} - \epsbar T_l \ge T_l + \epsbar T_l - \epsbar T_l = T_l \enspace . $$
Therefore,
\begin{equation} \label{eq:margin:mid3}\{i \in \stream_l, \G\} \subset \{\abs{\hat{f}_{il}} \ge T_l, \G\}
\end{equation}
Then,
\begin{align*}
\{i \in \bar{G}_l, \G\} &  \equiv \{l_d(i) =  l, \abs{\hat{f}_{il}} \ge T_l, \G \} \notag\\
 & \equiv \{i \in \stream_l, \abs{\hat{f}_{il}} \ge T_l, \G\},  ~ \text{ since, $\{l_d(i) =l,\G\} \equiv \{i \in \stream_l,\G\}$, Lemma~\ref{lem:discovery}, (2b)}\\
 & \equiv \{ i \in \stream_l, \G\}, ~~~~~~~~~~~~~~ \text{ by Eqn.~\eqref{eq:margin:mid3}.}\\
 \end{align*}
 This proves part 1 (b).

 \emph{Part 1 (a).}
\begin{align} \label{eq:margin:mid1} \prob{i \in \bar{G}_l \mid \G} & = \prob{l_d(i) = l, \abs{\hat{f}_{il}} \ge T_l \mid \G} + \prob{ l_d(i)=l-1, Q_l \le \abs{\hat{f}_{i,l-1}} < T_l, K_i=1 \mid \G}
\end{align}
Denote by $\event{E}_1$ the  event $ l_d(i)=l-1, Q_l \le \abs{\hat{f}_{i,l-1}} < T_l$ and by $\event{E}_2 $ the event $ Q_l\le \abs{\hat{f}_{i,l-1}} < T_l$. Then,
\begin{align*}
& \prob{ \E_1,K_i=1 \mid \G} = \prob{K_i =1 \mid \E_1,\G} \cdot  \prob{\E_2 \mid l_d(i)=l-1, \G} \cdot \prob{l_d(i)=l-1 \mid \G}  = 0
\end{align*}
since, $\prob{l_d(i)=l-1 \mid \G} = 0$, by Lemma~\ref{lem:discovery}, part (2a). Substituting in Eqn.~\eqref{eq:margin:mid1}, we have,
\begin{align*}
\prob{i \in \bar{G}_l \mid \G} & = \prob{l_d(i) = l, \abs{\hat{f}_{il}} \ge T_l \mid \G} \\
 & = \prob{ i \in \stream_l, \abs{\hat{f}_{il}} \ge T_l\mid \G}, \text{ since, $\{l_d(i)=l,\G\} \equiv \{i \in \stream_l,\G\}$, Lemma~\ref{lem:discovery}, (2b)} \notag \\
&=  \prob{i \in \stream_l \mid \G}, ~~~~~~~~\text{ by  part 1 (a)} \notag \\ 
 & = 2^{-l} \pm n^{-c}, ~~~~~~~~~~~~~\text{ by Fact~\ref{fact:cndhpev}.} 
\end{align*}
Multiplying by $2^l$ and  transposing, we have $
2^l\textsf{Pr}\bigl[ i \in \bar{G}_l \mid \G\bigr]  \in 1 \pm n^{-c}\cdot 2^l $, as claimed in part 1(a).

\emph{Part 1(c).}  We have by \accuest~that for any $0 \le r \le l-1$,
\begin{align*}
\abs{\hat{f}_{i,r}} < T_{l-1} - 2\epsbar T_{l-1} + \epsbar T_r \le T_{l-1} \left( 1 -\epsbar \right) = Q_{l-1}
\end{align*}
Hence, $i$ cannot be in $\bar{G}_{r}$ for any $r \le l-1$.
We have by part (1a) that $\{i \in \bar{G}_l,\G\} \equiv \{i \in \stream_l, \G\}$.

Let $i \in \bar{G}_r$ for some $r \ge l+1$. Since, for $i $ to belong to $ \bar{G}_r$, $i $ must be in $\stream_{r-1} $ and hence by the sub-sampling procedure, $i \in \stream_{l}$. By part 1(a), $\{i \in \bar{G}_l,\G\} \equiv \{i \in \stream_l, \G\}$, and therefore, $i \in \bar{G}_l$. Hence, $i\not\in \bar{G}_r$, for any $r \ge l+1$. Thus,
\begin{align*}
\prob{i \in \cup_{r\ne l} \bar{G}_r \mid \G} = 0 \enspace .
\end{align*}

\emph{Case: $i \in \lmargin(G_l)$.} From Lemma~\ref{lem:discovery}, $l_d(i) \nless l$ and $l_d(i) = l $ iff $i \in \stream_l$. Since $l_d(i) \nless l$, $i \not\in \bar{G}_r$, for any $r < l$. Consider $r > l+1$. If $i \in \bar{G}_r$, then, $l_d(i) \ge r-1 \ge l+1$. Since, $i \in \stream_{l_d(i)}$, and $l_d(i) \ge l+1$, it follows that $i \in \stream_l$, by the sub-sampling procedure.  However, by Lemma~\ref{lem:discovery}, part (1b), $ \{l_d(i)=l, \G\} \equiv \{i \in \stream_l, \G\}$. Hence, in this case, $l_d(i)=l$, contradicting the implication that $l_d(i) \ge l+1$. Thus,

\begin{align*}
\prob{ i \in \cup_{l' \not\in \{l,l+1\}} \bar{G}_{l'}} = 0
\end{align*}
proving part 2 (a).

Suppose $i \in \stream_l$.  Then, $l_d(i) = l$ and $\hat{f}_{i} = \hat{f}_{il}$.
By construction,  
\begin{equation} \label{eq:hss:margin:lm1:pil}
 \prob{i \in \bar{G}_l \mid i \in \stream_l, \G}  = \prob{\abs{\hat{f}_{il} } \ge T_l \mid i \in \stream_l,\G} = p_{il} ~~~(\text{say.})
\end{equation}
Further,
\begin{align} \label{eq:conda}
\prob{i \in \bar{G}_{l+1} \mid i \in \stream_l, \G}
 &= \prob{Q_l \le \abs{\hat{f}_{il} } <  T_l, K_i = 1 \mid i \in \stream_l, \G} \notag  \\
 & ~~+ \prob{\abs{\hat{f}_{il} } < Q_l, i \in \stream_{l+1}, \abs{\hat{f}_{i,l+1}} \ge T_{l+1} \mid i \in \stream_l, \G}
\end{align}
However, conditional on $\G$ and $i \in \stream_l$, by Lemma~\ref{lem:discovery}, $\abs{\hat{f}_{il}} \ge Q_l$. Hence, the second probability in the \emph{RHS} of Eqn.~\eqref{eq:conda} is 0.
Therefore,
\begin{align}  \label{eq:hss:margin:lm2}
 &\prob{ i \in \bar{G}_{l+1} \mid i \in \stream_{l},\G} \notag \\
  &= \prob{ Q_l \le \abs{\hat{f}_{il}} < T_l,  K_i = 1 \mid i \in \stream_l,\G} \notag \\ & =
\prob{K_i = 1 \mid Q_l \le \abs{\hat{f}_{il}} < T_l, i \in \stream_l, \G} \cdot \prob{ Q_l \le  \abs{\hat{f}_{il}} < T_l \mid i \in \stream_l, \G} \notag \\ & = (1/2)\left( 1 - p_{il} \right)
\end{align}
since,  (a)  $K_i$ is independent of all other random bits, and, (b) $\probb{ Q_l \le  \abs{\hat{f}_{il}} < T_l \mid \allowbreak  i \in \stream_l, \G} + \prob{\abs{\hat{f}_{il} } \ge T_l \mid i \in \stream_l,\G} = \prob{\abs{\hat{f}_{il}} \ge Q_l \mid i \in \stream_l, \G} = 1$.

Eliminating $p_{il}$ using ~\eqref{eq:hss:margin:lm1:pil} and ~\eqref{eq:hss:margin:lm2},  we have,
 \begin{align} \label{eq:hss:margin:lm2a}
 2 \prob{i \in \bar{G}_{l+1} \mid i \in \stream_l,\G} + \prob{i \in \bar{G}_l \mid i \in
 \stream_l,\G} = 1 \enspace .
 \end{align}

 Multiplying Eqn.~\eqref{eq:hss:margin:lm2a} by $ \prob{i \in \stream_l\mid \G}$, we have,
\begin{equation} \label{eq:hss:margin:lm2d}
2 \prob{i \in \bar{G}_{l+1}, i \in \stream_l\mid \G}
 + \prob{i \in \bar{G}_l, i \in \stream_l \mid \G} = \prob{i \in \stream_l\mid \G} \enspace .
 \end{equation}
By Lemma~\ref{lem:discovery}, if $i \in \lmargin(G_l)$, then, $l_d(i) \nless l$ and $l_d(i) = l$ (or, $\abs{\hat{f}_{il}} \ge Q_l$) iff $i \in \stream_l$. By construction therefore, $(i \in \bar{G}_l $ or $ i \in \bar{G}_{l+1})$ iff $i \in \stream_l$.  This proves part 2(b).

Thus, $i \in \bar{G}_{l+1}$ implies $i \in \stream_l$ and $i \in \bar{G}_l$ also implies that $i \in \bar{G}_l$. Hence, Eqn.~\eqref{eq:hss:margin:lm2d} can be written as
\begin{equation}\label{eq:hss:margin:lm2e}
2 \prob{i \in \bar{G}_{l+1}\mid \G}
 + \prob{i \in \bar{G}_l\mid \G} = \prob{i \in \stream_l\mid \G} = 2^{-l} \pm n^{-c}
 \end{equation}
using Fact~\eqref{fact:cndhpev}.
 Multiplying by $2^l$ gives part 2(c)  of the lemma.

\emph{Case: $i \in \rmargin(G_l)$.} Assume that $\G$ holds.  By Lemma~\ref{lem:discovery},  $l_d(i) \in \{l-1, l\}$ but $l_d(i) \nless l-1$ and $l_d(i) \ngeq l+1$. Since, $l_d(i) \nless l-1$, it follows that $i \not\in \bar{G}_r$ for any $ r < l-1$.
If  $i \in \stream_l$, we have, \begin{align*}\abs{\hat{f}_{il}} \ge \abs{f_{i}} - \epsbar T_l \ge T_{l-1} - 2\epsbar T_{l-1} - \epsbar T_l & = T_l\left( (2\alpha)^{1/2} - \epsbar(2(2\alpha)^{1/2} +1)\right)  \ge (1.3)T_l  > T_l \end{align*}
by the choice of parameters $\alpha$ and $\epsbar = 1/(27p)$. Hence, if $i \not\in \bar{G}_{l-1}$ and $i \in \stream_l$, then, $i \in \bar{G}_l$. In other words,
$$\prob{i \in \bar{G}_l \mid i \not\in \bar{G}_{l-1}, i \in \stream_l, \G} = 1 \enspace . $$
If $i \in \bar{G}_r$ for some $r \ge l+1$, then, $ i \in \stream_l$ and this implies that $i \in \bar{G}_l$, which is a contradiction. Hence,
$$ \prob{ i \in \cup_{r\not\in \{l-1,l\}}\bar{G}_r\mid \G} = 0 \enspace . $$

  By construction, we have,
\begin{align} \label{eq:hss:margin:rm1a}
\prob{ i \in \bar{G}_{l-1} \mid i \in \stream_{l-1}, \G}  &= \prob{ \abs{\hat{f}_{i,l-1}} \ge
T_{l-1} \mid i \in \stream_{l-1}, \G} = p_{i,l-1} \text{~(say)}\\
 \probb{ i \in \bar{G}_{l} \mid i \in \stream_{l-1},\G} & = \prob{  Q_{l-1} \le \abs{\hat{f}_{i,l-1}}< T_{l-1} \text{ and } K_i = 1 \mid i \in \stream_{l-1},\G} \notag \\
 & + \prob{ \abs{\hat{f}_{i,l-1}} < Q_{l-1}, i \in \stream_l, \abs{\hat{f}_{il}} \ge T_l \mid i \in \stream_{l-1}, \G}\notag \\
 & = A+B\label{eq:hss:margin:rm1}
 \end{align}
where, we let $A$ and $B$  denote the probability expressions in the first and second  terms in the \emph{RHS} respectively of Eqn.~\eqref{eq:hss:margin:rm1}. Then,
\begin{align} \label{eq:hss:margin:rmA1}
A
& = \prob{  Q_{l-1} \le \abs{\hat{f}_{i,l-1}}< T_{l-1},  K_i = 1 \mid i \in \stream_{l-1},\G}  \notag \\
&= \prob{K_i = 1 \mid Q_{l-1} \le \abs{\hat{f}_{i,l-1}}< T_{l-1}, i \in \stream_{l-1},\G}\cdot \prob{ Q_{l-1} \le \abs{\hat{f}_{i,l-1}}< T_{l-1} \mid i \in \stream_{l-1},\G}  \notag \\
& = (1/2) \prob{ Q_{l-1} \le \abs{\hat{f}_{i,l-1}}< T_{l-1} \mid i \in \stream_{l-1},\G}
\end{align}

Therefore, for $i \in \rmargin(G_l)$,  $i$ could possibly be a member of   $\bar{G}_{l-1}$ which can happen only if $i \in \stream_{l-1}$. However,  if  $i \not\in \bar{G}_{l-1}$ and  $i \in \stream_{l-1}$, then $i$ can possibly be a member of $\bar{G}_l$. This can happen in two ways, either  (i) $Q_{l-1} \le \abs{\hat{f}_{i,l-1}} < T_{l-1}$ and the coin toss $K_i =1$, or,  (ii) $Q_{l-1} > \abs{\hat{f}_{i,l-1}}$ and $i \in \stream_l$ and $\abs{\hat{f}_{il}} \ge T_l$.  In the latter case, if $ i \in \stream_l$, then, $\abs{\hat{f}_{il}} $ is at least $T_l$ with probability 1,  conditional on $\G$. This follows from Lemma~\ref{lem:discovery}, part (2).    In particular, $i \not\in \bar{G}_{l'}$ for any $l' \not\in \{l-1,l\}$.

Hence,
\begin{align*}
B  & = \prob{ \abs{\hat{f}_{i,l-1}} < Q_{l-1}, i \in \stream_l, \abs{\hat{f}_{il}} \ge T_l \mid i \in \stream_{l-1}, \G} \\
& = \prob{ \abs{\hat{f}_{i,l-1}} < Q_{l-1}, i \in \stream_l \mid i \in \stream_{l-1}, \G} \\
 & = \prob{ \abs{\hat{f}_{i,l-1}} < Q_{l-1} \mid i \in \stream_l, \G} \cdot \prob{ i \in \stream_l \mid  i \in \stream_{l-1},\G} \\
 & = \prob{ \abs{\hat{f}_{i,l-1}} < Q_{l-1} \mid i \in \stream_l, \G} \cdot \left( 1/2 \pm n^{-c} \right)
\end{align*}

Note that $\probb{\abs{\hat{f}_{i,l-1}} < Q_{l-1} \mid i \in \stream_l} =
\probb{\abs{\hat{f}_{i,l-1}} < Q_{l-1} \mid i \in \stream_{l-1}}$ for the following reason.
$\abs{\hat{f}_{i,l-1}}$ is a function of the frequencies of the items that conflict with $i$ in
the set of hash buckets to which $i$ maps in the $\hh_{l-1}$ structure. By construction of the
hash function, whether  $i$ maps to the next level $l$ depends on whether $g_l(i) = 1$, which is independent of the hash functions $g_1, g_2, \ldots, g_{l-1}$.
Hence,  $$\probb{\abs{\hat{f}_{i,l-1}} < Q_{l-1} \mid i \in \stream_l} =
\probb{\abs{\hat{f}_{i,l-1}} < Q_{l-1} \mid i \in \stream_{l-1}}$$ Using Fact~\eqref{fact:cndhpev}, we have,
$$\prob{\abs{\hat{f}_{i,l-1}} < Q_{l-1} \mid i \in \stream_l,\G}  =
\prob{\abs{\hat{f}_{i,l-1}} < Q_{l-1} \mid i \in \stream_{l-1},\G} \pm
n^{-c} \enspace .$$

Thus Eqn.~\eqref{eq:hss:margin:rm1}  may be written as
\begin{align} \label{eq:hss:margin:rm2}
 &\probb{ i \in \bar{G}_{l} \mid i \in \stream_{l-1},\G}   = A+B \notag \\
&=  (1/2)\probb{ Q_{l-1} \le \abs{\hat{f}_{i,l-1}}< T_{l-1} \mid i \in
\stream_{l-1},\G} 
  + (1/2)\probb{ \abs{\hat{f}_{i,l-1}} < Q_{l-1} \mid i \in
\stream_{l-1},\G} \pm O(n^{-c}) \notag \\
  &= (1/2)  \probb{ \abs{\hat{f}_{i,l-1}} < T_{l-1} \mid i \in \stream_{l-1},\G}  \pm
  O(n^{-c}) \notag \\
&= \frac{1-p_{i,l-1}}{2} \pm O(n^{-c}) \enspace .
\end{align}
 From Eqns.~\eqref{eq:hss:margin:rm1a} and~\eqref{eq:hss:margin:rm2} we obtain,
 \begin{align} \label{eq:hss:rmargin:rm3}
 & 2 \prob{ i \in \bar{G}_l \mid i \in \stream_{l-1},\G} + \prob{ i \in \bar{G}_{l-1} \mid i \in \stream_{l-1},\G}   = 1 \pm O(n^{-c}) \enspace .
 \end{align}
Multiplying Eqn.~\eqref{eq:hss:rmargin:rm3} by $\prob{i \in  \stream_{l-1} \mid \G}$,  we have,
\begin{align} \label{eq:hss:rmargin:rm4}
  &2 \prob{ i \in \bar{G}_l, i \in \stream_{l-1} \mid \G} + \prob{ i \in \bar{G}_{l-1}, i \in \stream_{l-1}\mid \G}  = \prob{i \in \stream_{l-1} \mid\G}\left( 1 \pm O(n^{-c}) \right) \enspace .
 \end{align}
 From the  discussion after Eqn.~\eqref{eq:hss:margin:rmA1}, it follows that $i $ may belong to $ \bar{G}_{l-1} \cup \bar{G}_l$, and in either case, this is possible only if $i \in \stream_{l-1}$. This proves part 3 (b).

 Thus, $i \in \bar{G}_l$ or $i \in \bar{G}_{l-1}$ implies that $i \in \stream_{l-1}$.
 Hence, Eqn.~\eqref{eq:hss:rmargin:rm4} is equivalent to
 \begin{gather*}
 2 \prob{ i \in \bar{G}_l \mid \G} + \prob{ i \in \bar{G}_{l-1}\mid \G} = ( 2^{-(l-1)} \pm n^{-c})\left( 1 \pm O(n^{-c}) \right) \\ = 2^{-(l-1)} \pm O(n^{-c})
 \end{gather*}
 Multiplying by $2^{l-1}$ gives  statement 3 (c) of the lemma.
   \hfill

\end{proof}
\section{Approximate  pair-wise independence of the sampling}
In this section, we prove an approximate pair-wise independence property of the sampling
technique.

\begin{lemma} \label{lem:hss:ij}
Let $i\ne j$. Then, $\prob{i \in \stream_l \mid j \in \stream_r, \G} = 2^{-l} \pm n^{-c}$.
\end{lemma}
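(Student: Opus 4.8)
The plan is to first compute the \emph{unconditional} probability $\prob{i \in \stream_l, j \in \stream_r}$ exactly, and only then pass to the conditioning on $\G$ using Fact~\ref{fact:cndhpev}. Recall that by construction $i \in \stream_l$ iff $g_1(i) = g_2(i) = \cdots = g_l(i) = 1$, and likewise $j \in \stream_r$ iff $g_1(j) = \cdots = g_r(j) = 1$. The hash functions $g_1, \ldots, g_L$ are chosen independently of one another, and each $g_s$ is drawn from a $d$-wise independent family with $d \ge 2$; since only the two values $g_s(i)$ and $g_s(j)$ of each individual $g_s$ are relevant, they are independent and uniform on $\{0,1\}$. (Note that only $d \ge 2$, i.e. pairwise independence of each $g_s$, is used here; the large value of $d$ is needed for the Chernoff--Hoeffding arguments elsewhere, not for this lemma.)

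First I would write $\{i \in \stream_l\} \cap \{j \in \stream_r\}$ as an intersection, over $s = 1, \ldots, \max(l,r)$, of the event ``the constraints imposed on $g_s$ hold'', and factor the probability using independence across $s$. For $s \le \min(l,r)$ the constraint is $g_s(i) = 1 \wedge g_s(j) = 1$, of probability $1/4$; for $\min(l,r) < s \le \max(l,r)$ the constraint involves a single value, of probability $1/2$. Multiplying gives $\prob{i \in \stream_l, j \in \stream_r} = (1/4)^{\min(l,r)} (1/2)^{|l-r|} = 2^{-(l+r)}$, and since $\prob{j \in \stream_r} = 2^{-r}$, we conclude $\prob{i \in \stream_l \mid j \in \stream_r} = 2^{-l}$ exactly.

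To obtain the version conditioned on $\G$, I would apply Fact~\ref{fact:cndhpev} \emph{inside} the probability space obtained by conditioning on $\{j \in \stream_r\}$, taking $E = \{i \in \stream_l\}$ and $F = \G$; this is legitimate since $\prob{j \in \stream_r} = 2^{-r} > 0$ and $\prob{\G \mid j \in \stream_r} > 0$. It yields $\card{\prob{i \in \stream_l \mid j \in \stream_r, \G} - \prob{i \in \stream_l \mid j \in \stream_r}} \le 1 - \prob{\G \mid j \in \stream_r}$. Then $1 - \prob{\G \mid j \in \stream_r} = \prob{\neg\G \mid j \in \stream_r} \le \prob{\neg\G}/\prob{j \in \stream_r} = 2^r\,\prob{\neg\G}$, which by Lemma~\ref{lem:hss:G} is $O(2^r n^{-24})$; since $r \le L$ and $2^L = n^{O(1)}$ by the parameter choices (this is where $p>2$ keeps $L$ from being too large), this is at most $n^{-c}$ for the fixed constant $c$. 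Combining with the exact value $2^{-l}$ gives the claimed bound $\prob{i \in \stream_l \mid j \in \stream_r, \G} = 2^{-l} \pm n^{-c}$.

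The only subtlety worth flagging is that $\G$ genuinely depends on the subsampling hash functions $g_1, \ldots, g_L$ (through \smallH, \smallU, \smallres, \nocollision, and the level-$L$ recovery event), so one cannot simply discard the conditioning on $\G$; that is precisely the role of Fact~\ref{fact:cndhpev}, and the only real work beyond the clean product computation is the bookkeeping that keeps the polynomially-bounded factor $2^r$ absorbed into the constant $c$. There is no serious obstacle here.
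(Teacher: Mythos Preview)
Your proof is correct and follows essentially the same approach as the paper: first establish $\prob{i \in \stream_l \mid j \in \stream_r} = 2^{-l}$ exactly using pairwise independence of each $g_s$ together with independence across levels, then invoke Fact~\ref{fact:cndhpev} to pass to the conditioning on $\G$. Your version is in fact more careful than the paper's two-line argument, since you explicitly apply Fact~\ref{fact:cndhpev} in the conditional measure and track the extra $2^r$ factor coming from $\prob{\neg\G \mid j \in \stream_r} \le 2^r\,\prob{\neg\G}$, which the paper glosses over.
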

\begin{proof}
By pair-wise independence of the hash functions $\{g_l\}$ mapping items to levels, we have
$\prob{i \in \stream_l \mid j \in \stream_r} = \prob{i \in \stream_l} = 2^{-l}$. By Fact~\ref{fact:cndhpev},
$ \prob{i \in \stream_l \mid j \in \stream_r,\G} = 2^{-l} \pm n^{-c}$. \hfill
\end{proof}

\subsection{Sampling probability of items conditional on another item mapping to a level}

\eat{
\begin{lemma} \label{lem:hss:dwise}
Let $d$ be the degree of independence of the hash families from which $g_i$'s are drawn. Let $i_1, i_2, \ldots, i_d \in [n]$ and  distinct.  Then, for any $l_1, l_2, \ldots, l_d \in \{0\} \cup [L]$,  $
 \prob{i_d \in \stream_{l_d} \mid \left(\forall r \in [d-1], i_r \in \stream_{l_r}\right), \G}  = 2^{-l_d}\pm   n^{-c}$.
\end{lemma}
The proof is a direct extension of Lemma~\ref{lem:hss:ij}.
}

\begin{relemma}[Restatement of Lemma~\ref{lem:hsscond}.] 
Let $i,j \in [n]$,  $i \ne j$ and $j \in  G_r$. Then,
$$
  \sum_{r'=0}^L 2^{r'} \prob{j \in \bar{G}_{r'} \mid i \in \stream_l, \G}  =1 \pm O(2^{r}\cdot n^{-c}) \enspace .$$ In particular, the following hold.
\begin{enumerate}
\item Suppose  $j \in \midreg(G_r)$. Then,   $$
  2^r \prob{ j \in \bar{G}_r \mid i \in \stream_l, \G}  = 1 \pm  2^r n^{-c} \enspace .
$$ Further,    for any $r \ne r'$, $ \probb{j \in \bar{G}_{r'}\mid i \in \stream_l, \G} = 0$.

\item If $j \in \lmargin(G_r)$, then,
$$
  2^{r+1} \prob{j \in \bar{G}_{r+1} \mid i \in \stream_l, \G} + 2^r\prob{j \in
 \bar{G}_r \mid i \in \stream_l,\G} = 1 \pm   2^{r+1} n^{-c} \enspace .
$$
Further, for any $ r' \not\in \{r,r+1\},  \prob{j \in \bar{G}_{r'}\mid i \in \stream_l, \G} = 0$.
\item If $j \in \rmargin(G_r)$, then  $$
 2^{r} \prob{ j \in \bar{G}_r \mid i \in \stream_{l}, \G} + 2^{r-1}\prob{ j \in
 \bar{G}_{r-1} \mid i \in \stream_{l},\G}  = 1 \pm  2^{r+1} n^{-c} \enspace . $$
 Further, for any $ r' \not\in \{r-1,r\},  \prob{j \in \bar{G}_{r'}\mid i \in \stream_l, \G} = 0$.
 \end{enumerate}

\end{relemma}

\begin{proof}[Proof of Lemma~\ref{lem:hsscond}.]
The proof proceeds  identically as in the proof of Lemma~\ref{lem:margin}, except that all probabilities are, in addition to being conditional on $\G$, also conditional on $i \in \stream_l$.

\emph{Case 1: $j \in \midreg(G_r)$.} Conditional on $\G$,  as argued in the proof of Lemma~\ref{lem:discovery}, part 1 (b),
$j \in \bar{G}_{r}$ iff $j \in \stream_r$.  Therefore,
\begin{align} \label{eq:hsscondmid}
\prob{j \in \bar{G}_{r} \mid i \in \stream_l, \G} & = \prob{j \in \stream_r \mid i \in
\stream_l, \G}  \in  2^{-r}  \pm n^{-c}
\end{align}
where, the last step follows from Lemma~\ref{lem:hss:ij}.

\emph{Case 2: $j \in \lmargin(G_r)$.} Let  $$p'_{jr}= \prob{ \abs{\hat{f}_{ir}} \ge T_r \mid i \in
\stream_l, j \in \stream_r,\G} \enspace . $$Then,
\begin{align} \label{eq:hsscondlm}
\textsf{Pr}\bigl[ j \in \bar{G}_r \mid i \in \stream_l,\G\bigr] &  = \prob{ \abs{\hat{f}_{ir}} \ge T_r, j \in \stream_r \mid i \in \stream_l,\G} \notag \\
  & = \prob{ \abs{\hat{f}_{ir}} \ge T_r \mid i \in \stream_l, j \in \stream_r,\G} \prob{j \in
  \stream_r \mid i \in \stream_l,\G} \notag \\
  & =  p'_{jr} \cdot (2^{-r} \pm n^{-c}), ~~~\text{ by Lemma~\ref{lem:hss:ij}. }
\end{align}
Further,
\begin{align} \label{eq:hsscondlm1}
\probb{ j \in \bar{G}_{r+1} \mid i \in \stream_l,\G}  &= \prob{Q_r \le \abs{\hat{f}_{ir}} <
T_r, j \in \stream_r,  K_i = 1
\mid i \in \stream_l, \G} \notag \\
& ~~+ \prob{ \abs{\hat{f}_{ir}} < Q_r, i \in \stream_{r+1}, \abs{\hat{f}_{i,r+1}} \ge T_{r+1} \mid i \in \stream_l, \G}
\end{align}
Conditional on $\G$, $\abs{\hat{f}_{ir}} \ge \abs{f_i} -\epsbar T_r  \ge T_r - \epsbar T_r = Q_r$, since $j \in \lmargin(G_r)$. Hence, $\prob{\abs{\hat{f}_{ir}} < Q_r\mid \G} = 0$. Further, since the coin toss $K_i = 1$ is independent of other random bits, Eqn.~\eqref{eq:hsscondlm1} becomes
\begin{align} \label{eq:condlm2}
\prob{ j \in \bar{G}_{r+1} \mid i \in \stream_l,\G}
& = (1/2) \prob{ Q_r \le \abs{\hat{f}_{ir}}< T_r, j \in \stream_r \mid i \in \stream_l, \G} \notag \\
& = (1/2) \prob{Q_r \le \abs{\hat{f}_{ir}}< T_r \mid i \in \stream_l, j \in \stream_r, \G} \prob{j \in \stream_r \mid i \in \stream_l, \G}  \notag \\
& = (1/2) (1-p'_{jr})(2^{-r} \pm n^{-c})
\end{align}
Multiplying  Eqn.~\eqref{eq:condlm2} by $2^{r+1}$, multiplying  Eqn.~\eqref{eq:hsscondlm1} by $2^r$ and adding, we have,
\begin{align*}
2^{r+1} \prob{ j \in \bar{G}_{r+1} \mid i \in \stream_l, \G} + 2^r \prob{j \in \bar{G}_r \mid i \in \stream_l, \G} = 1 \pm O(2^r n^{-c})
\end{align*}
which proves statement (2) of the lemma.

\emph{Case 3: $j \in \rmargin(G_r)$.} 
Then,
\begin{align} \label{eq:hsscondrm1}
\prob{j \in \bar{G}_{r-1} \mid i \in \stream_l,\G}  & = \prob{\abs{\hat{f}_{j,r-1}} \ge T_{r-1},  j \in \stream_{r-1},
 \mid i \in \stream_l,\G} \notag \\
& = \prob{\abs{ \hat{f}_{j,r-1} } \ge T_{r-1} \mid i \in \stream_l, j \in \stream_{r-1},\G} \cdot
\prob{ j \in \stream_{r-1} \mid i \in \stream_{l},\G} \notag \\
& = \prob{\abs{ \hat{f}_{j,r-1} } \ge T_{r-1} \mid i \in \stream_l, j \in \stream_{r-1},\G}  
(2^{-(r-1)} \pm n^{-c})
\end{align}

Also,
\begin{align} \label{eq:hsscondrma}
 \prob{ j \in \bar{G}_r \mid i \in \stream_l,\G} &= \prob{ j \in \stream_{r-1}, Q_{r-1} \le \abs{\hat{f}_{j,r-1}} < T_{r-1},  K_i = 1 \mid i \in
\stream_l,\G} \notag  \\
& + \prob{  \abs{ \hat{f}_{j,r-1}} < Q_r, j \in \stream_r, \abs{\hat{f}_{j,r}} \ge T_r\mid i \in \stream_l,\G}
\end{align}
For $j \in \rmargin(G_r)$ and conditional on $\G$, by following the argument of Lemma~\ref{lem:discovery},  it follows that  if $j \in \stream_r$ then, $\abs{\hat{f}_{jr}} \ge T_r$, viz.,  $\abs{\hat{f}_{jr}} \ge \abs{f_{jr}} - \epsbar T_r \ge T_{r-1} -2\epsbar T_{r-1} - \epsbar T_r > T_r$.
Therefore,
\begin{align} \label{eq:condrm2b}
&\prob{  \abs{ \hat{f}_{j,r-1}} < Q_r, j \in \stream_r, \abs{\hat{f}_{j,r}} \ge T_r\mid i \in \stream_l,\G} \notag \\
&= \prob{\abs{ \hat{f}_{j,r-1}} < Q_r, j \in \stream_r\mid i \in \stream_l,\G} \notag \\
& = \prob{ \abs{ \hat{f}_{j,r-1}} < Q_r\mid i \in \stream_l, j \in \stream_r, \G} \prob{j \in \stream_r \mid i \in \stream_l, \G} \notag \\
& = \prob{\abs{ \hat{f}_{j,r-1}} < Q_r\mid i \in \stream_l, j \in \stream_r, \G}(2^{-r} \pm n^{-c})
\end{align}

The estimate $\hat{f}_{j,r-1}$ is obtained at level
$r-1$, and this is independent of whether $j$ (or any other subset of items) is a member of
$\stream_r$. The latter is a consequence of the level-wise product of \emph{independent} hash values, namely,
$j \in \stream_r$ iff $j \in \stream_{r-1}$ and $g_r(j) = 1$. Therefore,
\begin{align} \label{eq:condrm4}
&\prob{\abs{ \hat{f}_{j,r-1}} < Q_r\mid i \in \stream_l, j \in \stream_r} \notag \\
& = \prob{\abs{\hat{f}_{j,r-1}} < Q_r\mid i \in \stream_l, j \in \stream_{r-1}, g_r(j)= 1} \notag \\
& = \frac{\prob{\abs{\hat{f}_{j,r-1}} < Q_r, g_r(j) = 1\mid i \in \stream_l, j \in \stream_{r-1}}}{\prob{g_r(j)=1 \mid i \in \stream_l, j \in \stream_{r-1}}} \notag \\
& = \left(\frac{\prob{g_r(j) =1 \mid \abs{\hat{f}_{j,r-1}} < Q_r,  i \in \stream_l, j \in \stream_{r-1}}}{\prob{g_r(j)=1 \mid i \in \stream_l, j \in \stream_{r-1}}}\right)
 \cdot \left(\prob{\abs{\hat{f}_{j,r-1}} < Q_r\mid i \in \stream_l, j \in \stream_{r-1}}\right)
\end{align}
Consider the numerator term of the fraction above: \\ $\prob{g_r(j) =1 \mid \abs{\hat{f}_{j,r-1}} < Q_r,  i \in \stream_l, j \in \stream_{r-1}}$. The event $\abs{\hat{f}_{j,r-1}} < Q_r$ depends only on the set of elements that have mapped to $\stream_{r-1}$, and is independent of whether $g_r(j) = 1$. Similarly, $ j \in \stream_{r-1}$ is independent of whether $g_r(j) =1$. Thus, the numerator term equals $ \prob{g_r(j)=1 \mid i \in \stream_l}$ and the denominator term also equals the same, for the same reasons. Hence, Eqn.~\eqref{eq:condrm4} becomes
\begin{align} \label{eq:condrm5}
\prob{\abs{ \hat{f}_{j,r-1}} < Q_r\mid i \in \stream_l, j \in \stream_r}
= \prob{\abs{\hat{f}_{j,r-1}} < Q_r\mid i \in \stream_l, j \in \stream_{r-1}}
\end{align}
Now, conditioning with respect
to $\G$, we have,
\begin{align} \label{eq:condrm6} \prob{ \hat{f}_{j,r-1} > Q_r  \mid j \in \stream_r, i \in
\stream_l,\G}  \in \prob{ \hat{f}_{i,r-1} > Q_r \mid j \in \stream_{r-1}, i \in \stream_l,\G}
\pm  n^{-c}\enspace .
\end{align}
Substituting Eqn.~\eqref{eq:condrm6} in  Eqn.~\eqref{eq:condrm2b}, we have,
\begin{align} \label{eq:condrm2b1}
&\prob{  \abs{ \hat{f}_{j,r-1}} < Q_r, j \in \stream_r, \abs{\hat{f}_{j,r}} \ge T_r\mid i \in \stream_l,\G} \notag \\
&
= \left( \prob{ \hat{f}_{j,r-1} < Q_r \mid j \in \stream_{r-1}, i \in \stream_l,\G} \right) (2^{-r} \pm n^{-c}) \pm 2^{-r} n^{-c}
\end{align}

Consider the first probability term in the \emph{RHS} of Eqn.~\eqref{eq:hsscondrma}.
\begin{align} \label{eq:condrm2c}
&\prob{ j \in \stream_{r-1}, Q_{r-1} \le \abs{\hat{f}_{j,r-1}} < T_{r-1},  K_i = 1 \mid i \in
\stream_l,\G} \notag \\
& = (1/2)\prob{ Q_{r-1} \le \abs{\hat{f}_{j,r-1}} < T_{r-1} \mid i \in \stream_l, j \in \stream_{r-1}, \G}  \prob{ j \in \stream_{r-1} \mid i \in \stream_l, \G} \notag \\
& = \prob{ Q_{r-1} \le \abs{\hat{f}_{j,r-1}} < T_{r-1} \mid i \in \stream_l, j \in \stream_{r-1}, \G} (1/2)(2^{-(r-1)} \pm n^{-c})
\end{align}
Substituting Eqns. ~\eqref{eq:condrm2b1} and ~\eqref{eq:condrm2c} in Eqn.~\eqref{eq:hsscondrma}, we have,
\begin{align} \label{eq:condrm3}
&\prob{j \in \bar{G}_r\mid i \in \stream_l, \G} \notag \\
 &~~= \prob{ Q_{r-1} \le \abs{\hat{f}_{j,r-1}} < T_{r-1} \mid i \in \stream_l, j \in \stream_{r-1}, \G} 2^{-r} \pm O(n^{-c}) \notag \\
& ~~~~+ \left( \prob{ \hat{f}_{i,r-1} < Q_r \mid j \in \stream_{r-1}, i \in \stream_l,\G} \right) (2^{-r} \pm n^{-c}) \pm 2^{-r} n^{-c}
\end{align}
Multiplying   Eqn.~\eqref{eq:hsscondrm1} by $2^{r-1}$ and Eqn.~\eqref{eq:condrm3} by $2^r$ and adding, we obtain
\begin{align*}
& 2^{r-1} \prob{j \in \bar{G}_{r-1} \mid i \in \stream_l,\G} + 2^r \prob{j \in \bar{G}_r\mid i \in \stream_l, \G} \\
& =  \prob{\abs{ \hat{f}_{j,r-1} } \ge T_{l-1} \mid i \in \stream_l, j \in \stream_{r-1},\G}  \pm 2^{r-1} n^{-c} \\
& ~~+ \prob{ Q_{r-1} \le \abs{\hat{f}_{j,r-1}} < T_{r-1} \mid i \in \stream_l, j \in \stream_{r-1}, \G} \pm O(2^r n^{-c})) \notag \\
& ~~~~+  \prob{ \hat{f}_{i,r-1} < Q_r \mid j \in \stream_{r-1}, i \in \stream_l,\G}   \pm O(2^r n^{-c})  \\
& =1 \pm O(2^r n^{-c}) \enspace .
\end{align*}
This proves statement (3) of the Lemma. \hfill
\end{proof}

\subsection{Sampling probability  of an item conditional on another item being sampled}

\begin{relemma} [Lemma~\ref{lem:hssj}.] Suppose $i \in G_l$, $j \in G_m$  and $ j \ne i$.
Then,\\
$$\sum_{r,r'=0}^L   2^{r+r'}\prob{ i \in \bar{G}_r, j \in \bar{G}_{r'} \mid \G} = 1  \pm
O((2^{l}+2^m) n^{-c}) \enspace . $$
\end{relemma}
\begin{proof}[Proof of Lemma~\ref{lem:hssj}.] Assume $\G$ holds for all the arguments in the proof.
\emph{Case 1:
$i \in \midreg(G_l)$.}
Then,
\begin{align*}
&\prob{i \in \bar{G}_r, j \in \bar{G}_{r'} \mid \G}  =\prob{ i \in \bar{G}_r \mid j \in \bar{G}_{r'}, \G} \cdot \prob{ j \in \bar{G}_{r'} \mid \G}
\end{align*}
Conditional on $\G$, $ i \in \bar{G}_r$ iff $r=l$ and $i \in \stream_l$. That is, for $r \ne l$, $\prob{i \in \bar{G}_r \mid j \in \bar{G}_{r'} ,  \G} \allowbreak = 0$.  Therefore,
\begin{align*}
 &\prob{ i \in \bar{G}_l \mid j \in \bar{G}_{r'}, \G} \cdot \prob{ j \in \bar{G}_{r'} \mid \G} \\
  &= \prob{i \in \stream_l \mid j \in \bar{G}_r,\G} \cdot \prob{ j \in \bar{G}_{r'} \mid \G} , \text{ by Lemma~\ref{lem:discovery}, part 2(b)} \\
 & = \prob{ j \in \bar{G}_{r'} \mid i \in \stream_l, \G} \cdot \prob{ i \in \stream_l \mid \G}~~, \text{ by Bayes' rule} \\
 & = \prob{j \in \bar{G}_{r'} \mid i \in \stream_l, \G} \cdot (2^{-l} \pm n^{-c})
\end{align*}
Multiplying by $2^l$, we have,
\begin{gather}
2^l \prob{i \in \bar{G}_r, j \in \bar{G}_{r'} \mid \G} = \prob{j \in \bar{G}_{r'}\mid i \in \stream_l, \G} (1 \pm 2^l n^{-c})  \label{eq:hss:ijmida}
\end{gather}
By Lemma~\ref{lem:hsscond}, we have, $\sum_{r'=0}^{L} \prob{j \in \bar{G}_{r'}\mid i \in \stream_l, \G} = 1 \pm 2^{m+1} n^{-c}$. Therefore, multiplying both sides of Eqn.~\eqref{eq:hss:ijmida} by $2^{r'}$ and summing over $r'$, we have,
\begin{gather}  \label{eq:hss:ijmidb}
\sum_{r'=0}^L 2^{l+r'} \prob{i \in \bar{G}_l, j \in \bar{G}_{r'} \mid \G}  = (1 \pm 2^{m+1} n^{-c} ) (1 \pm 2^l n^{-c}) = (1\pm  O(2^m + 2^l) n^{-c})
\end{gather}
Since $ \prob{i \in \bar{G}_r, j \in \bar{G}_{r'} \mid \G} = 0$ for $r \ne l$, we can equivalently write  Eqn. ~\eqref{eq:hss:ijmidb} as
\begin{align*}
\sum_{r,r'=0}^L 2^{r+r'}\prob{i \in \bar{G}_r, j \in \bar{G}_{r'} \mid \G} = (1\pm  O(2^m + 2^l) n^{-c})
\end{align*}

\emph{Case 2: $i \in \lmargin(G_l)$.} Then, $i$ may belong to either $ \bar{G}_l \cup \bar{G}_{l+1}$ and to no other sampled group  and $i \in \bar{G_l} \cup \bar{G}_{l+1}$ iff $ i \in  \stream_l$, by Lemma~\ref{lem:margin} parts 2(a) and 2(b) respectively.
\begin{align} \label{eq:hss:ijlma}
& \prob{i \in \bar{G}_{l}, j \in \bar{G}_{r'} \mid \G} \notag \\
& = \prob{ i \in \stream_l, \abs{\hat{f}_{il}} \ge T_l , j \in \bar{G}_{r'}\mid  \G} \prob{ j \in \bar{G}_{r'},\G} \notag \\
& = \prob{ \abs{\hat{f}_{il}} \ge T_l \mid j \in \bar{G}_{r'}, i \in \stream_l, \G} \prob{ j \in \bar{G}_{r'} \mid i \in \stream_l, \G} \prob{i \in \stream_l \mid \G} \notag \\
& = \prob{\abs{\hat{f}_{il}} \ge T_l \mid j \in \bar{G}_{r'}, i \in \stream_l, \G} \prob{ j \in \bar{G}_{r'} \mid i \in \stream_l, \G} (2^{-l} \pm n^{-c})
\end{align}
Let $$ p_{il} = \prob{\abs{\hat{f}_{il}} \ge T_l \mid j \in \bar{G}_{r'}, i \in \stream_l, \G} \enspace .$$
Multiplying both sides of Eqn.~\eqref{eq:hss:ijlma} by $2^l$, we obtain
\begin{align} \label{eq:hss:ijlmaa}
 2^l \prob{i \in \bar{G}_{l}, j \in \bar{G}_{r'} \mid \G} & = p_{il} \cdot \prob{ j \in \bar{G}_{r'} \mid i \in \stream_l, \G} (1\pm 2^l n^{-c})
\end{align}

We now consider the case when $i \in \bar{G}_{l+1}$.  By construction, $i \in \bar{G}_{l+1}$ in two ways, either (i) $i \in \stream_l$, $Q_l \le \abs{\hat{f}_{il}} < T_l$ and $K_i=1$, or, (ii) $i \in \stream_l$, $\abs{\hat{f}_{il}} < Q_l$ and $i \in \stream_l$ and $\abs{\hat{f}_{i,l+1}} \ge T_{l+1}$. Possibility (ii) cannot hold since, by Lemma~\ref{lem:discovery} (1b), $i\in\stream_l$ iff $l_d(i) =l$, which by definition is that $\abs{\hat{f}_{il}} \ge Q_l$. These calculations are conditioned on $\G$ and therefore hold conditioned on $j \in \bar{G}_{r'}$ as well.
Hence,
\begin{align} 
&  \prob{i \in \bar{G}_{l+1}, j \in \bar{G}_{r'} \mid \G}  \notag \\
& = \prob{i \in \stream_{l}, (Q_l \le  \abs{\hat{f}_{il}} < T_l ), K_i = 1, j \in \bar{G}_{r'} \mid  \G}    \notag \\
 \label{eq:hss:ijlmc}
 & = (1/2)\prob{ Q_l \le \abs{\hat{f}_{il}} < T_l \mid i \in \stream_l,  j \in \bar{G}_{r'},\G} \prob{ j \in \bar{G}_{r'}\mid i \in \stream_l,\G}  \prob{i \in \stream_l \mid  \G}   \notag \\
& = (1/2)(1-p_{il})\prob{ j \in \bar{G}_{r'}\mid i \in \stream_l,\G}  (2^{-l} \pm n^{-c})
 \end{align}
 or,  by multiplying both sides of Eqn.~\eqref{eq:hss:ijlmc},
 \begin{align} \label{eq:hss:ijlmdd}
 2^{l+1} \prob{i \in \bar{G}_{l+1}, j \in \bar{G}_{r'} \mid \G}  = (1-p_{il}) \cdot \prob{ j \in \bar{G}_{r'} \mid i \in \stream_l, \G} (1 \pm 2^{l+1} n^{-c})
 \end{align}
 Adding Eqns.~\eqref{eq:hss:ijlmaa} and ~\eqref{eq:hss:ijlmdd}, we have,
 \begin{align} \label{eq:hss:ijlme}
 &2^{l+1} \prob{i \in \bar{G}_{l+1}, j \in \bar{G}_{r'} \mid \G} + 2^l \prob{i \in \bar{G}_{l}, j \in \bar{G}_{r'} \mid \G} = \prob{ j \in \bar{G}_{r'} \mid i \in \stream_l, \G} (1 \pm 2^{l+2}n^{-c}) \enspace .
 \end{align}
 By Lemma~\ref{lem:hsscond}, $\sum_{r'=0}^L  2^{r'} \prob{j \in \bar{G}_{r'} \mid i \in \stream_l, \G} = 1 \pm  O(2^m n^{-c})$. Therefore, multiplying Eqn.~\eqref{eq:hss:ijlme} by $2^{r'}$ and summing over $r'$, we have,
 \begin{align*}
 &\sum_{r'=0}^L \left(2^{l+1} \prob{i \in \bar{G}_{l+1}, j \in \bar{G}_{r'} \mid \G} + 2^l \prob{i \in \bar{G}_{l}, j \in \bar{G}_{r'} \mid \G}\right)\\
&  = \sum_{r'=0}^L 2^{r'} \prob{j \in \bar{G}_{r'} \mid i \in \stream_l, \G} (1 \pm 2^{l+2} n^{-c})\\
 &  = (1 \pm O(2^m n^{-c}))(1 \pm O(2^l n^{-c})) \\
 & = 1 \pm O((2^{l} + 2^m) n^{-c})
 \end{align*}
Since, $\prob{i \in \bar{G}_r, j \in \bar{G}_{r'} \mid \G} = 0$ for any $r \not\in \{l,l+1\}$, we can rewrite the above equation as
\begin{align*}
& \sum_{r,r'=0}^L 2^{r+r'} \prob{i \in \bar{G}_{r}, j \in \bar{G}_{r'} \mid \G}
= (1 \pm O(2^m + 2^l)n^{-c})
\end{align*}

\emph{Case 3: $i \in \rmargin(G_l)$.} If $j \in \lmargin(G_m)$ or $j \in \midreg(G_m)$, then, we can interchange the roles of $i$ and $j$ and the lemma is proved. Hence, we may now assume that $j \in \rmargin(G_m)$. Let $m \le l$ without loss of generality.

By Lemma~\ref{lem:margin}, part (3), $i \in \bar{G}_{l-1} \cup \bar{G}_l$ and this implies that $i \in \stream_l$. Also, $i \not\in \cup_{l' \not\in \{l-1, l\}} \bar{G}_{l'}$ (with prob. 1). Let
$p_{i,l-1, j,r'} = \prob{ \abs{\hat{f}_{i,l-1}} \ge T_{l-1} \mid j \in \bar{G}_{r'}, i \in \stream_{l-1}}$. Then,
\begin{align} \label{eq:hss:ijrma}
\prob{ i \in \bar{G}_{l-1}, j \in \bar{G}_{r'} \mid \G} &= \prob{ \abs{\hat{f}_{i,l-1}} \ge T_{l-1}, i \in \stream_{l-1}, j \in \bar{G}_{r'} \mid \G} \notag \\
& = p_{i,l-1} \cdot \prob{ j \in \bar{G}_{r'} \mid i \in \stream_{l-1}, \G} \prob{ i \in \stream_{l-1} \mid \G} \notag \\
& = p_{i,l-1} \cdot \prob{ j \in \bar{G}_{r'} \mid i \in \stream_{l-1}, \G} (2^{-(l-1)} \pm n^{-c}) \enspace .
\end{align}
\eat{
& = \prob{i \in \bar{G}_{l-1} \mid j \in \bar{G}_{r'}, \G} \prob{j \in \bar{G}_{r'},\G} \notag \\
& = \prob{ i \in \stream_{l-1}, \abs{\hat{f}_{i,l-1}} \ge T_{l-1} \mid j \in \bar{G}_{r'},\G}  \prob{j \in \bar{G}_{r'},\G}   \notag \\
& = \prob{\abs{\hat{f}_{i,l-1}}  \ge T_{l-1} \mid i \in \stream_{l-1}, j \in \bar{G}_{r'},\G} \notag \\
& \hspace{1.0cm} \cdot \prob{ i\in \stream_{l-1} \mid j \in \bar{G}_{r'}, \G} \cdot \prob{j \in \bar{G}_{r'},\G} \notag \\
& = \prob{ \abs{\hat{f}_{i,l-1}}  \ge T_{l-1} \mid i \in \stream_{l-1}, j \in \bar{G}_{r'},\G}\notag \\
& \hspace{1.0cm} \cdot \prob{ j \in \bar{G}_{r'} \mid i \in \stream_{l-1},\G}\cdot \prob{ i \in \stream_{l-1}\mid \G} \notag \\
& = \prob{ \abs{\hat{f}_{i,l-1}}  \ge T_{l-1} \mid i \in \stream_{l-1}, j \in \bar{G}_{r'},\G} \notag \\
& \hspace{1.0cm}\cdot \prob{ j \in \bar{G}_{r'} \mid i \in \stream_{l-1},\G} (2^{-(l-1)} \pm n^{-c})
\end{align}
}
Let $q_{i,l-1,j,r'} = \prob{Q_{l-1} \le \abs{\hat{f}_{i,l-1}} < T_{l-1}\mid i \in \stream_{l-1}, j \in \bar{G}_{r'}}$. By Lemma~\ref{lem:margin} part 3 (b), $\{i\in \stream_l, l_d(i) \ne l-1\} \subset \{i \in \bar{G}_l\}$.
Then,
\begin{align} \label{eq:hss:ijrmb}
&\prob{ i \in \bar{G}_{l}, j \in \bar{G}_{r'} \mid \G}  \notag \\
&= \prob{ Q_{l-1} \le \abs{\hat{f}_{i,l-1}} < T_{l-1}, K_i = 1,  i \in  \stream_{l-1}, j \in \bar{G}_{r'} \mid \G} + \prob{ \abs{\hat{f}_{i,l-1}} < Q_{l-1}, i \in \stream_l, j \in \bar{G}_{r'} \mid \G} \notag \\
& = (1/2)q_{i,l-1,j,r'} \prob{j\in \bar{G}_{r'} \mid i \in \stream_{l-1}, \G} \prob{i \in \stream_{l-1}}  +  \prob{ \abs{\hat{f}_{i,l-1}} < Q_{l-1}, i \in \stream_l, j \in \bar{G}_{r'} \mid \G}  \notag  \\
& = q_{i,l-1,j,r'} \cdot  \prob{j\in \bar{G}_{r'} \mid i \in \stream_{l-1}, \G} (2^{-l} \pm O(n^{-c}) \notag \\
 & \hspace*{1.0in} +  \prob{ \abs{\hat{f}_{i,l-1}} < Q_{l-1}\mid i \in \stream_l, j \in \bar{G}_{r'}, \G}  \prob{j \in \bar{G}_{r'} \mid i \in \stream_l, \G} \prob{i \in \stream_l\mid \G} \notag \\
\end{align}
Consider the following term derived from the  second term in the above sum.
\begin{align}  \label{eq:ijrmb1}
\prob{ \abs{\hat{f}_{i,l-1}} < Q_{l-1}\mid i \in \stream_l, j \in \bar{G}_{r'},\G}  & = \prob{ \abs{\hat{f}_{i,l-1}} < Q_{l-1}\mid g_l(i) =1, i \in \stream_{l-1}, j \in \bar{G}_{r'}} \notag  \\
& = \frac{ \prob{\abs{\hat{ f}_{i,l-1}} < Q_{l-1}, g_l(i)=1 \mid i \in \stream_{l-1}, j \in \bar{G}_{r'}}}{\prob{g_l(i)=1 \mid i \in \stream_{l-1}, j \in \bar{G}_{r'}}}
\end{align}
\eat{
\begin{align}
& ~~~+  \prob{ \abs{\hat{f}_{i,l-1}}< Q_{l-1} \mid i \in \stream_{l}, j \in \bar{G}_{r'},\G} \prob{ j \in \bar{G}_{r'} \mid i \in \stream_l, \G} \prob{i \in \stream_l\mid \G} \\
& = (1/2) q_{i,l-1,j,r'}\prob{j\in \bar{G}_{r'} \mid i \in \stream_{l-1}, \G}  (2^{-(l-1)} \pm n^{-c}) + (1-q_{i,l-1,j,r'} - p_{i,l-1,j,r'})
& \hspace{1.0cm} \cdot \prob{ j \in \bar{G}_{r'} \mid i \in \stream_{l-1},\G}\cdot  (2^{-(l-1)} \pm n^{-c})  \notag \\
  &~~~~ +\prob{\abs{\hat{f}_{i,l-1}} < Q_{l-1},  \abs{\hat{f}_{il}} \ge T_l \mid i \in \stream_l, j \in \bar{G}_{r'},\G} \notag \\
& \hspace{1.0cm}\cdot \prob{ j \in \bar{G}_{r'} \mid i \in \stream_{l},\G}\cdot  (2^{-l} \pm n^{-c}) \notag \\
& = \prob{ Q_{l-1} \le \abs{\hat{f}_{i,l-1}} < T_{l-1}\mid i \in  \stream_{l-1}, j \in \bar{G}_{r'},\G}\notag \\
& \hspace{1.0cm} \cdot \prob{ j \in \bar{G}_{r'} \mid i \in \stream_{l-1},\G} \cdot  (2^{-l} \pm O(n^{-c}) ) \notag \\
  & ~~~+\prob{ \abs{\hat{f}_{i,l-1}} < Q_{l-1}, \abs{\hat{f}_{il}} \ge T_l \mid i \in \stream_l, j \in \bar{G}_{r'},\G}\notag \\
& \hspace{1.0cm} \cdot \prob{ j \in \bar{G}_{r'} \mid i \in \stream_{l},\G}\cdot  (2^{-l} \pm n^{-c})
\end{align}
Conditional on $\G$, since $i \in \rmargin(G_l)$, $ \abs{\hat{f}_{il}} \ge T_l $ holds surely. Therefore, Eqn.~\eqref{eq:hss:ijrmb} can be written as
\begin{align} \label{eq:hss:ijrmc}
&\prob{ i \in \bar{G}_{l}, j \in \bar{G}_{r'} \mid \G} \notag \\
&= \prob{ Q_{l-1} \le \abs{\hat{f}_{i,l-1}} < T_{l-1}\mid i \in  \stream_{l-1}, j \in \bar{G}_{r'},\G} \notag \\
& \hspace{1.0cm} \cdot \prob{ j \in \bar{G}_{r'} \mid i \in \stream_{l-1},\G}\cdot  (2^{-l} \pm O(n^{-c}) ) \notag \\
&+\prob{ \abs{\hat{f}_{i,l-1}} < Q_{l-1} \mid i \in \stream_l, j \in \bar{G}_{r'},\G} \notag \\
& \hspace{1.0cm} \cdot  \prob{ j \in \bar{G}_{r'} \mid i \in \stream_{l},\G}\cdot  (2^{-l} \pm n^{-c})
\end{align}
Further, since the event $i \in \stream_l \equiv g_l(i) = 1 \text{ and } i \in \stream_{l-1}$,
\begin{align}\label{eq:hssj:rm:conda} & \prob{\abs{\hat{f}_{i,l-1}} < Q_{l-1} \mid i \in \stream_l, j \in \bar{G}_{r'},\G}\notag \\
& \hspace{1.0cm} = \frac{ \prob{ \abs{\hat{f}_{i,l-1}} < Q_{l-1}, g_l(i)=1 \mid i \in \stream_{l-1}, j \in \bar{G}_{r'},\G}}{\prob{ g_l(i)=1\mid i \in \stream_{l-1}, j \in \bar{G}_{r'},\G}}
  \end{align}
  }

\begin{align*}
 &\prob{\abs{\hat{f}_{i,l-1}} < Q_{l-1}, g_l(i)=1 \mid i \in \stream_{l-1}, j \in \bar{G}_{r'},\G } \\
 & = \prob{ g_l(i)=1 \mid i \in \stream_{l-1}, \abs{\hat{f}_{i,l-1}} < Q_{l-1},  j \in \bar{G}_{r'}, \G} \prob{\abs{\hat{f}_{i,l-1}} < Q_{l-1} \mid i \in \stream_{l-1}, j \in \bar{G}_{r'}, \G}
 \end{align*}
The event $g_l(i) = 1$ is independent of the value of $\hat{f}_{i,l-1}$, since they depend on the values of $g_{l'}(k)$'s for $k \in [n]\setminus\{i\}$ and $1 \le l' <l$. Now conditional on $\G$ and given that $j \in \rmargin(G_m)$, for $m \le l$, the event $j \in \bar{G}_{r'}$ has zero probability unless $r' \in \{m-1, m\}$. 

\emph{Case 3.1.} $r'=m-1$. In this case, $j \in \bar{G}_{m-1}$. Since, $j \in \rmargin(G_m)$, the event $j \in \bar{G}_{m-1}$ depends only on the value of $\hat{f}_{j,m-1}$. Since, $m \le l$, the random bit defining $g_l$ is independent of the values of the random bits that determine $\hat{f}_{j,m-1}$.  Therefore,
\begin{align*}
\prob{ g_l(i)=1 \mid i \in \stream_{l-1}, \abs{\hat{f}_{i,l-1}} < Q_{l-1},  j \in \bar{G}_{r'}, \G} = \prob{g_l(i)=1 \mid \G} \enspace .
\end{align*}
Arguing similarly, $ \prob{g_l(i) = 1 \mid i \in \stream_l, j \in \bar{G}_{r'}, \G} = \prob{g_l(i) =1 \mid \G}$.

Therefore, it follows from Eqn. ~\eqref{eq:ijrmb1} that
\begin{align}  \label{eq:ijrmb2}
\prob{ \abs{\hat{f}_{i,l-1}} < Q_{l-1}\mid i \in \stream_l, j \in \bar{G}_{r'},\G}  & =  \prob{\abs{\hat{ f}_{i,l-1}} < Q_{l-1},  \mid i \in \stream_{l-1}, j \in \bar{G}_{r'}}
\end{align}

\emph{Case 3.2.} Suppose $r' = m$. Since $j \in \rmargin(G_m)$, therefore,  $j \in \bar{G}_m$ is equivalent to the event $ \hat{f}_{j,m-1} < Q_{m-1}$ and $ j \in \stream_{m}$. If $m < l$, then the event $g_l(i) = 1$ is independent of the values of $\hat{f}_{j,m-1}$ and the event $j \in \stream_m$. Hence the same conclusion as Eqn.~\eqref{eq:ijrmb2} holds when $r'=m$ and $m < l$.

Now suppose $r'=m$ and $m =l$. Then, we have,
\begin{align*}
& \prob{ g_l(i)=1 \mid i \in \stream_{l-1}, \abs{\hat{f}_{i,l-1}} < Q_{l-1},  j \in \bar{G}_{r'}, \G}\\
& = \prob{ g_l(i)=1 \mid j \in \bar{G}_{r'}, \G} \\
& = \prob{g_l(i) =1 \mid \abs{\hat{f}_{j,l-1}} < Q_{l-1}, g_l(j)=1, j \in \stream_{l-1}, \G} \\
& = \prob{g_l(i) = 1 \mid  g_l(j)=1, \G} \\
& = \prob{g_l(i) =1\mid \G} \enspace .
\end{align*}
Hence, Eqn.~\eqref{eq:ijrmb2} continues to hold in this case as well. Thus in all cases, Eqn~\eqref{eq:ijrmb2} holds. 

Substituting this into  Eqn.~\eqref{eq:hss:ijrmb}, we have, 
\begin{align} \label{eq:hss:ijrmb2}
&\prob{ i \in \bar{G}_{l}, j \in \bar{G}_{r'} \mid \G}  \notag \\
& = q_{i,l-1,j,r'} \cdot  \prob{j\in \bar{G}_{r'} \mid i \in \stream_{l-1}, \G} (2^{-l} \pm O(n^{-c}) \notag \\
 & \hspace*{1.0in} +  \prob{ \abs{\hat{f}_{i,l-1}} < Q_{l-1}\mid i \in \stream_{l-1}, j \in \bar{G}_{r'}, \G}  \prob{j \in \bar{G}_{r'} \mid i \in \stream_l, \G} \prob{i \in \stream_l\mid \G} \notag \\
& = \left(q_{i,l-1,j,r'}  + 1- (p_{i,l-1,j,r'} -q_{i,l-1,j,r'}) \right) \prob{j\in \bar{G}_{r'} \mid i \in \stream_{l-1}, \G} (2^{-l} \pm O(n^{-c}) \notag \\
& = (1-p_{i,l-1,j,r'})\prob{j\in \bar{G}_{r'} \mid i \in \stream_{l-1}, \G} (2^{-l} \pm O(n^{-c})
\end{align}

Multiplying Eqn.~\eqref{eq:hss:ijrma} by $2^{l-1}$ and Eqn.~\eqref{eq:hss:ijrmb2} by $2^l$, we have for $r' \in \{m-1,m\}$ that
\begin{gather} \label{eq:hss:ijrm5}
2^{l-1} \prob{ i \in \bar{G}_{l-1}, j \in \bar{G}_{r'} \mid \G} + 2^l\prob{ i \in \bar{G}_{l}, j \in \bar{G}_{r'} \mid \G}  = \prob{j \in \bar{G}_{r'} \mid i \in \stream_{l-1}, \G} (1 \pm O(2^l n^{-c}))
\end{gather}
The \emph{LHS} of ~\eqref{eq:hss:ijrm5} can be equivalently written as $\sum_{r=0}^L 2^r \prob{i \in \bar{G}_r, j \in \bar{G}_{r'}\mid \G}$, since, for $r \not\in \{l-1,l\}$, $\prob{i \in \bar{G}_r \mid \G} = 0$.  Therefore,
\begin{align} \label{eq:ijrm6}
\sum_{r=0}^L 2^r \prob{i \in \bar{G}_r, j \in \bar{G}_{r'}\mid \G} = \prob{j \in \bar{G}_{r'} \mid i \in \stream_{l-1}, \G} (1 \pm O(2^l n^{-c}))
\end{align}
By Lemma~\ref{lem:hsscond}, we have,
\begin{align*} \sum_{r'=0}^L  \prob{j \in \bar{G}_{r'} \mid i \in \stream_{l-1}, \G} & = \sum_{r'=m-1}^m \prob{j \in \bar{G}_{r'} \mid i \in \stream_{l-1}, \G} = 1\pm 2^m O(n^{-c})
\end{align*}
Combining with Eqn.~\eqref{eq:ijrm6}, we have,
\begin{multline*}
\sum_{r'=0}^L \sum_{r=0}^L  2^r \prob{i \in \bar{G}_r, j \in \bar{G}_{r'}\mid \G} =
\sum_{r'=0}^L \prob{j \in \bar{G}_{r'} \mid i \in \stream_{l-1}, \G} (1 \pm O(2^l n^{-c})) \\ = (1 \pm O(2^m n^{-c}))(1 \pm 2^l n^{-c})) = 1 \pm O( (2^l + 2^m) n^{-c}) \enspace .  ~~~~
\end{multline*}
\end{proof}

\eat{

We now consider $ \prob{j \in \bar{G}_{r'} \mid i \in \stream_l, \G}$ and $ \prob{j \in \bar{G}_{r'} \mid i \in \stream_{l-1},\G}$. Since, $j \in \rmargin(G_m)$, hence conditional on $\G$, $j $ can be either in $\bar{G}_{m-1}$ or in $\bar{G}_{m}$, but not in any other sampled group.
\begin{align} \label{eq:hss:ijrme}
&\prob{j \in \bar{G}_m \mid i \in \stream_l, \G}  \notag \\
& = \prob{ Q_{m-1} \le \abs{\hat{f}_{j,m-1}}  < T_{m-1},  K_j = 1, j \in \stream_{m-1} \mid i \in \stream_l, \G} \notag \\
& ~~+ \prob{ j \in  \stream_m, \abs{\hat{f}}_{j,m-1} < Q_{m-1}, \abs{\hat{f}_{j,m}} \ge T_m \mid i \in \stream_l, \G} \notag \\
& = (1/2)\prob{  Q_{m-1} \le \abs{\hat{f}_{j,m-1}}  < T_{m-1} \mid j \in
\stream_{m-1}, i \in \stream_l, \G}\notag \\
 & \hspace*{1.0in}\cdot \prob{ j \in \stream_{m-1} \mid i \in \stream_l,\G} \notag \\
& ~~+ \prob{ \abs{\hat{f}_{j,m-1}} < Q_{m-1} \mid j \in \stream_m, i \in \stream_l, \G} \prob{j \in \stream_m \mid i \in \stream_l,\G}
\end{align}
By pair-wise independence of the functions $\{g_l\}$ for each $l = 1,2, \ldots, L$, we have that
$ \prob{j \in \stream_m \mid i \in \stream_l} = 2^{-m}$ and $\prob{j \in \stream_{m-1} \mid i \in \stream_l} = 2^{-(m-1)}$. Conditioning with respect to $\G$ changes the probability by $O(n^{-c})$ each. Thus, Eqn. ~\eqref{eq:hss:ijrme} can be written as
\begin{align} \label{eq:hss:ijrmf}
&\prob{j \in \bar{G}_m \mid i \in \stream_l, \G} \notag \\
& = \prob{  Q_{m-1} \le \abs{\hat{f}_{j,m-1}}  < T_{m-1} \mid j \in
\stream_{m-1}, i \in \stream_l, \G} (2^{-m} \pm O(n^{-c})) \notag \\
& ~+ \prob{ \abs{\hat{f}_{j,m-1}} < Q_{m-1} \mid j \in \stream_m, i \in \stream_l, \G} (2^{-m} \pm O(n^{-c}))
\end{align}
The probability $\prob{ \abs{\hat{f}_{j,m-1}} < Q_{m-1} \mid j \in \stream_m, i \in \stream_l, \G}$ equals $\prob{ \abs{\hat{f}_{j,m-1}} < Q_{m-1} \mid j \in \stream_{m-1}, i \in \stream_l, \G}$,
since the event $\abs{\hat{f}_{j,m-1}} < Q_{m-1}$ depends only on the random bits used by hash functions $g_1, \ldots, g_{m-1}$ and $\{h_{m-1,r}\}_{r\in [2s]}$. Thus, Eqn.~\eqref{eq:hss:ijrmf} simplifies to
\begin{align}  \label{eq:hss:ijrmg}
&\prob{j \in \bar{G}_m \mid i \in \stream_l, \G}
& = \prob{ \abs{\hat{f}_{j,m-1}} < T_{m-1} \mid j \in \stream_{m-1}, i \in \stream_l, \G} (2^{-m} \pm O(n^{-c}))
\end{align}
In Eqn.~\eqref{eq:hss:ijrmg}, the value of $\hat{f}_{j,m-1}$ depends only on  $\stream_{m-1}$, which contains both $j$ and $i$, the latter because it is given that $i \in \stream_l$ and $l \ge m$ and so $i$ appears in each of $\stream_{l-1}, \stream_{l-2}, \ldots, \stream_0$.

By starting with the probability $\prob{j \in \bar{G}_m \mid i \in \stream_{l-1}, \G}$, this would simplify analogously to
\begin{align} \label{eq:hss:ijrmh}
&\prob{j \in \bar{G}_m \mid i \in \stream_{l-1}, \G} \notag \\
& = \prob{ \abs{\hat{f}_{j,m-1}} < T_{m-1} \mid j \in \stream_{m-1}, i \in \stream_{l-1}, \G} (2^{-m} \pm O(n^{-c}))
\end{align}
Since $l \ge m$,
\begin{align*}
\prob{ \abs{\hat{f}_{j,m-1}} < T_{m-1} \mid j \in \stream_{m-1}, i \in \stream_l, \G} = \prob{ \abs{\hat{f}_{j,m-1}} < T_{m-1} \mid j \in \stream_{m-1}, i \in \stream_{l-1}, \G} \enspace .
\end{align*}
Substituting the above in Eqn.~\eqref{eq:hss:ijrmd}, we have,
\begin{align} \label{eq:hss:ijrmi}
&\prob{ i \in \bar{G}_{l}, j \in \bar{G}_{r'} \mid \G}  \notag \\
&= \prob{ Q_{l-1} \le \abs{\hat{f}_{i,l-1}} < T_{l-1}\mid i \in  \stream_{l-1}, j \in \bar{G}_{r'},\G}  \cdot \prob{ j \in \bar{G}_{r'} \mid i \in \stream_{l-1},\G} \notag \\
& ~~~\cdot  (2^{-l} \pm O(n^{-c}) )  \notag \\
& +\prob{ \abs{\hat{f}_{i,l-1}} < Q_l \mid i \in \stream_{l-1}, j \in \bar{G}_{r'},\G}  \cdot  \prob{ j \in \bar{G}_{r'} \mid i \in \stream_{l-1},\G}\cdot (1 \pm O(n^{-c}))  (2^{-l} \pm n^{-c}) \notag \\
& = \prob{ \abs{\hat{f}_{i,l-1}} < T_{l-1} \mid i \in  \stream_{l-1}, j \in \bar{G}_{r'},\G}  \cdot \prob{ j \in \bar{G}_{r'} \mid i \in \stream_{l-1},\G}\notag \\ & ~~~~\cdot (2^{-l} \pm O(n^{-c}))
\end{align}
Multiplying Eqn.~\eqref{eq:hss:ijrmi} by $2^l$ and multiplying Eqn.~\eqref{eq:hss:ijrma} by $2^{l-1}$ and then adding them, we have,
\begin{align}
& 2^{l-1} \cdot \prob{ i \in \bar{G}_{l-1}, j \in \bar{G}_{r'} \mid \G}  + 2^l \cdot \prob{ i \in \bar{G}_{l}, j \in \bar{G}_{r'} \mid \G} \notag \\
& = \prob{ \abs{\hat{f}_{i,l-1}}  \ge T_{l-1} \mid i \in \stream_{l-1}, j \in \bar{G}_{r'},\G} \cdot \prob{ j \in \bar{G}_{r'} \mid i \in \stream_{l-1},\G}\notag \\
& ~~~~~~~\cdot  (1 \pm 2^{l-1} n^{-c}) \notag  \\
& ~~+ \prob{ \abs{\hat{f}_{i,l-1}} < T_{l-1} \mid i \in  \stream_{l-1}, j \in \bar{G}_{r'},\G}  \cdot \prob{ j \in \bar{G}_{r'} \mid i \in \stream_{l-1},\G}\notag \\ & ~~~~\cdot (1 \pm O(2^l n^{-c})) \notag \\
& = \prob{ j \in \bar{G}_{r'} \mid i \in \stream_{l-1},\G}\cdot (1 \pm O(2^l n^{-c}))
\end{align}
By Lemma~\ref{lem:hsscond}, we have,
$\sum_{r'=0}^L  2^{r'}\prob{ j \in \bar{G}_{r'} \mid i \in \stream_{l-1}, \G} =1 \pm O(2^m n^{-c})$.
Combining, we have,
\begin{align*}
\sum_{r,r'=0}^L 2^{r+r'} \prob{i \in \bar{G}_r, j \in \bar{G}_{r'} \mid \G} = (1 \pm O(2^l + 2^m)n^{-c})
\end{align*}
\hfill
\end{proof}
}

\section{Application of Taylor polynomial estimator}

Throughout the remainder of this section, let $Y$ denote a code given by Corollary~\ref{cor:gv}.

\subsection{Preliminaries}
\emph{Notation.} We first partition the random seeds used by the algorithm by their functionality. For strings $s$ and $t$, let $s \oplus t$ denote the string that is the concatenation of $s$ and $t$. 

Let $\bar{g}_l$  denote the random bit string representing  the seed used to generate the hash function $g_l$, for $l \in \{0\} \cup [L]$, and let $\bar{g}$ denote the concatenation of the seed strings  $\bar{g}_1 \oplus \bar{g}_2, \ldots \oplus \bar{g}_L$.  For $l \in \{0\} \cup [L]$ and $j \in [s]$, let $\bar{h}_{HH,l,j}$ denote the random bit string used to generate the hash function corresponding to the $j$th hash table in the $\hh_l$ structure; let $\bar{h}_{HH,l}$ denote the concatenation of the random bitstrings  $\oplus_{j \in [s]} \bar{h}_{HH,l}$ and $\bar{h}_{HH}$  denote the concatenation of the random bitstrings $\oplus_{l \in \{0,1,\ldots, L\}} \bar{h}_{HH,l}$. For $l \in \{0\} \cup [L]$ and $j \in [2s]$, let $\bar{h}_{lj}$ denote the random bit string used to generate the hash function $h_{lj}$ in the \tpest$_l$ structure. Let $\bar{h}_l$ denote the random bit string $\oplus_{j \in [2s]} \bar{h}_{lj}$ and let $\bar{h}$ denote the concatenation $\bar{h} = \oplus_{l \in \{0,1, \ldots, L\}}$.  Let $\bar{\xi}_{HH,l,j}$ denote the random bit string used to generate the Rademacher family used by the $j$th table of the $\hh_l$ structure, for $l \in \{0,1, \ldots, L\}$ and $j \in [s]$.  Let $\bar{\xi}_{HH,l} = \oplus_{j \in [s]} \bar{\xi}_{HH,l,j}$ and let $\bar{\xi}_{HH} = \oplus_{l \in \{0,1,\ldots, L\}} \bar{\xi}_{HH,l}$.
 Let $\bar{\xi}_{lj}$ denote the random seed that generates the Rademacher variables $\{\xi_{lj}(k)\}_{k \in [n]}$  used by the $j$th table in $\tpest_l$ structure, for $j \in [2s]$; let $\bar{\xi}_l = \oplus_{j \in [2s]} \bar{\xi}_{lj}$ and let $\bar{\xi} = \oplus_{l \in \{0,1, \ldots, L\}} \bar{\xi}_{l}$. Let $\bar{\zeta}$ denote the random bit string used to estimate $F_2$.

The full random seed string used to update and maintain the \ghss~structure  is  $\bar{\zeta} \oplus \bar{g} \oplus \bar{h}_{HH} \oplus \bar{\xi}_{HH} \oplus \bar{h} \oplus \bar{\xi}$. In addition, during estimation, an $n$-dimensional random bit vector $K$ is also used.

Note that the events in $\G$ are dependent only on $\bar{\zeta} \oplus \bar{g} \oplus \bar{h}_{HH} \oplus \bar{\xi}_{HH}$.  This is further explained in the table below.

\begin{center}
\begin{tabular}{l|>{$}c<{$}} 
Event &   \parbox[c]{3in}{Random bit string that determines the event} \\[5pt]\hline
\goodftwo & \bar{\zeta} \\ & \\
\nocollision  &  \bar{h} \\& \\
\goodest &  \bar{h}_{HH} \\& \\
\smallres & \bar{g} \\ & \\
\accuest & \bar{g} \oplus \bar{h}_{HH}  \\ & \\
\lastlevel &  \bar{g} \oplus \bar{h}_{HH} \\ & \\
\smallhh & \bar{g} \oplus \bar{h}_{HH}\\ 
\end{tabular}
\end{center}

\subsection{Basic properties of the application of Taylor polynomial estimator: Proof of Lemma~\ref{lem:hssavtp}-Part I}
For items $i, k \in [n]$ with $k \ne i$, hash table index $j \in [s]$ and $l \in [L] \cup \{0\}$, define the indicator variable  $u_{ikjl}$ to be 1 if  $h_{lj}(i) = h_{lj}(k)$.

\begin{proof}[Proof of Lemma~\ref{lem:hssavtp}, parts (a), (b) and  (e).] Suppose $\G$ holds.
The last  statement of the lemma follows from \lastlevel, which is a sub-event of $\G$.

Let $l=l_d(i)  \in \{0\} \cup [L-1]$.  By \accuest,
 $\abs{\hat{f}_{il} - f_i} \le \epsbar T_l$.  Since $i$ is discovered at level $l$,  $\abs{\hat{f}_{il}} \ge Q_l = T_l -\epsbar T_l$. So,
 $\abs{f_i} \ge \abs{ \hat{f}_{il}} - \epsbar T_l \ge Q_l -
  \epsbar T_l = T_l - 2\epsbar T_l$ and therefore,
$$\frac{\abs{\hat{f}_i - f_i}}{\abs{f_i}}
  \le \frac{\epsbar T_l}{(1-2\epsbar) T_l} \le \frac{1/(27p)}{(1-2/(27p))} < \frac{1}{26p} $$
  since, $\epsbar = (B/C)^{1/2} = 1/(27p)$ and $p \ge 2$.
  Therefore,
  $$\frac{\abs{\hat{f}_i - f_i}}{\abs{\hat{f}_i}} \le \frac{\epsbar T_l}{ (1-\epsbar)T_l} < \frac{1}{26p} \enspace . $$
  This proves parts (a) and (e) of the lemma.

Let $j \in R_l(i)$ and $l_d(i) = l$.  For $k \in [n]$, let $y_{lk}$ be an indicator variable that is 1 if $k \in \stream_l$ and is 0 otherwise. Then, $$X_{ijl} = \sum_{k \in [n]} f_k \cdot y_{lk} \cdot  \xi_{lj}(k) \cdot u_{ikjl} \cdot \xi_{lj}(i) \cdot \sgn(\hat{f}_i) $$
Since it is given that $l_d(i) = l$, it follows that
$$ X_{ijl} = f_i \cdot \sgn(\hat{f}_i)  + \sum_{k \in [n], k\ne i}  f_k \cdot y_{lk} \cdot  \xi_{lj}(k) \cdot u_{ikjl} \cdot \xi_{lj}(i) \cdot \sgn(\hat{f}_i) \enspace . $$

We now take  expectations.  Note that the events in $\G$ are independent of the Rademacher family random bits $\xi_{lj}(k)$.  Also, the event $u_{ikjl} = 1$ depends only on $\bar{g} \oplus \bar{h}_l$ and the event $l_d(i) =l$ depends only on $\bar{g} \oplus \bar{h}_{\hh}$.  Therefore,
 \begin{align} \label{eq:hssavtp:1}
&\expectsub{\bar{\xi}_{lj}}{X_{ijl} \mid l_d(i) = l, j \in R_l(i), \G} \notag \\
 & =  f_i \cdot \sgn(\hat{f}_i) + \sum_{k \in [n] \setminus \{i\}} f_k\expectsub{\bar{\xi}_{lj}}{ \xi_{lj}(k) \cdot  \xi_{lj}(i)\cdot  y_{lk} \cdot  u_{ikjl} \mid l_d(i)=l, j \in R_l(i),\G} \notag  \\
& = f_i \cdot \sgn(\hat{f}_i) + \sum_{k \in \stream_l} f_k \cdot \expectsub{\bar{\xi}_{lj}}{ \xi_{lj}(k) \xi_{lj} (i) \mid y_{lk} =1,  u_{ikjl} = 1, j \in R_l(i), \G} \notag  \\
& \hspace*{1.8in}\cdot \prob{u_{ikjl} =1, y_{lk}=1 \mid  j \in R_l(i), \G}  \notag \\
& = f_i \cdot \sgn(\hat{f}_i) + 0
\end{align}
since, $ \xi_{lj}(k)$ and $\xi_{lj}(i)$ depend only on $\bar{\xi}_{lj}$ and is independent of the conditioning events. The expectation is zero by 
pair-wise independence and zero-expectation  of the family $\{\xi_{lj}(s)\}_{s \in [n]}$.

Hence, Eqn.~\eqref{eq:hssavtp:1} becomes
\begin{align} \label{eq:hssavtp:2}
&\expectsub{\bar{\xi}_{lj}}{X_{ijl} \mid l_d(i) = l, j \in R_l(i), \G} = f_i  \cdot \sgn(\hat{f}_i) = f_i \cdot \sgn(f_i) = \abs{f_i}
\end{align}
because,
since, $l_d(i) = l$, $\abs{\hat{f}_{il}} \ge (1-\epsbar) T_l$ and therefore, $ \sgn( \hat{f}_i  \pm \epsbar T_l) = \sgn(\hat{f}_i)$, since, $\epsbar = 1/(27p) < 1/2$.
Since $\G$ holds, by \accuest~we have,   $$\sgn(\hat{f}_i) \sgn(f_i) = \sgn(\hat{f}_i) \sgn(\hat{f}_i \pm \epsbar T_l) = \sgn(\hat{f}_i) \sgn(\hat{f}_i) = 1 $$
and therefore $\sgn(\hat{f}_i)  = \sgn(f_i)$.  Hence Eqn.~\eqref{eq:hssavtp:2}  holds. \hfill

\end{proof}

\subsection{Expectation of $\bvtheta_i$}
\begin{proof} [Proof of Lemma~\ref{lem:fp:expect1}.]
By Lemma~\ref{lem:hssavtp}, we have, $$\expectsub{\bar{\xi}_{lj}}{ X_{ijl} \mid l_d(i) = l, j  \allowbreak \in  \allowbreak R_l(i), \allowbreak  \G} = \abs{f_i}$$  and therefore,
\begin{gather*}\expectsub{\bar{\xi}_l}{X_{ijl} \mid  l_d(i) = l, j \in R_l(i), \G} = \expectsubb{\bar{\xi}_l \setminus \bar{\xi}_{lj}}{\expectsubb{\bar{\xi}_{lj}} {X_{ijl} \mid  l_d(i) = l, j \in \allowbreak  R_l(i), \allowbreak \G}} = \abs{f_i} \enspace .
\end{gather*}
By Lemma~\ref{lem:hssavtp} part (a), if $i$ is discovered at level $l$, then, $\abs{ \hat{f}_{il} -f_i} \le \frac{\abs{f_i}}{26p}$.

Since \nocoll~holds as a sub-event of $\G$, $\abs{R_l(i)} \ge s$.  Let $ \{j_1, j_2, \ldots, j_s\}$ be any $s$-subset of $R_l(i)$  such that $1 \le j_1  < j_2 < \ldots < j_s \le 2s$ and $y \in Y$ be a code with $\pi_y : [k] \rightarrow [k]$ being a random permutation.  Let $y = (y_1, y_2, \ldots, y_k)$ be the  $k$-dimensional increasing sequence $1 \le y_1 < y_2 < \ldots < y_k \le  2s$ representing the $k$ non-zero positions in the $s$-dimensional bit vector $y$.
 Then, $\vartheta_{iyl} = \sum_{v=0}^k \binom{p}{v} \abs{\hat{f}_i}^{p-v} \prod_{r=1}^v (X_{i,j_{y_{\pi(r)}},l} - \abs{\hat{f}_i})$.  Therefore, each $j_{y_{\pi(r)}} \in R_l(i)$, for $1 \le r \le k$.
 \begin{align*}
 &\expectsub{\bar{\xi}_l}{\vartheta_{iyl} \mid l_d(i) =l, \G}\\
 &= \sum_{v=0}^k \binom{p}{v} \abs{\hat{f}_i}^{p-v} \prod_{r=1}^v \left(\expectsub{\bar{\xi}_{l}}{X_{i,j_{y_{\pi(r)}},l} \mid l_d(i) = l, \G, j_{y_{\pi(r)}} \in R_l(i)} - \abs{\hat{f}_i} \right) \\
 & = \sum_{v=0}^k \binom{p}{v} \abs{\hat{f}_i}^{p-v} \prod_{r=1}^v \left( \abs{f_i} - \abs{\hat{f}_i}\right)
 \end{align*}
 which by Corollary ~\ref{cor:fpbias} is bounded above as follows.
 \begin{align*}
 \card{\expectsub{\bar{\xi}_l}{\vartheta_{iyl} \mid l_d(i) =l, \G} - \abs{f_i}^p}& \le \left(\frac{\alpha}{1-\alpha}\right)^{k+1} \left( \frac{ \abs{f_i}}{k+1} \right)^{p} \\
 & \le \left(\frac{ 1/(26p)}{1- 1/(26p)}\right)^{k+1} \abs{f_i}^p \\
& \le (25p)^{-k-1}\abs{f_i}^p \le  n^{-4000p} \abs{f_i}^p
\end{align*}
since $k \ge 1000 \log (n)$.

Since,  $\expect{\bvtheta_{i}} = \expect{\vartheta_{iyl}}$ for each $y \in Y$ and random permutation $\pi_y$, the  lemma follows. Additionally, if $p$ is integral then, $\expect{\bvtheta_i} = \expect{ \vartheta_{il}} = \abs{f_i}^p$. \hfill

\end{proof}


\subsubsection{Probability that two items collide conditional on the event \text{\sc nocollision}}
We first prove a lemma that bounds the probability that two distinct items collide under a hash function $h_{lj}$ conditional on $j$ being in $ R_l(i)$. 
\begin{lemma} \label{lem:probcoll} Let $l_d(i) =l$, $k \in \stream_l$  and $k \not\in \hattopk(C_l)$ and $i \ne k$. If the degree of independence of the hash family from which the hash functions $h_{lj}$ are drawn is at least 11, then,
\begin{align*}
1. & ~
\prob{u_{ikjl} =1 \mid l_d(i)=l, j \in R_l(i), k \in \stream_l, k \not\in \hattopk(C_l)}  \\
 & ~~\in \left( 1- \frac{1}{16C_l} \right)^{C_l - 0.5 \mp 0.5} \pm  2 \binom{C_l}{t-1} \left( \frac{1}{16 C_l} \right)^{t-1} ,\text{ and, }\\
2. & ~\prob{u_{ikjl} =1 \mid l_d(i)=l, j \in R_l(i), k \in \stream_l, k \not\in \hattopk(C_l), \G} \\
 & ~~~\in  \left( 1- \frac{1}{16C_l} \right)^{C_l - 1} \pm  2 \binom{C_l}{t-1} \left( \frac{1}{16 C_l} \right)^{t-1} \pm O(n^{-c}) \enspace .
\end{align*}
\end{lemma}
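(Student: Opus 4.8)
The plan is to write each conditional probability as a ratio, estimate its numerator and denominator separately by the truncated inclusion–exclusion device from the proof of Lemma~\ref{lem:nc}, and, for part (2), transfer the estimate to the $\G$-conditioned probability using Fact~\ref{fact:cndhpev}.

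\emph{Step 1: isolate the relevant randomness and decouple the table choice.} Conditioning on $l_d(i)=l$, $k\in\stream_l$ and $k\notin\hattopk_l(C_l)$ fixes the set $\hattopk_l(C_l)$, which is a function of the seeds $\bar g\oplus\bar h_{HH}\oplus\bar\xi_{HH}$ and is therefore independent of the $\tpest_l$ table hash functions $\{h_{lr}\}_{r\in[2s]}$. Set $S=\hattopk_l(C_l)\setminus\{i\}$; then $|S|=C_l-1$ when $i\in\hattopk_l(C_l)$ and $|S|=C_l$ otherwise, and this dichotomy is precisely the $-0.5\mp0.5$ in the exponent. In part (2) the event \smallhh~(a sub-event of $\G$), together with $\abs{\hat{f}_{il}}\ge Q_l$ which holds because $l_d(i)=l$, forces $i\in\hattopk_l(C_l)$, so $|S|=C_l-1$ and the ambiguity disappears. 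Since the $\tpest_l$ hash functions are drawn independently across tables, for the canonical choice of $R_l(i)$ the event $\{j\in R_l(i)\}$ splits into ``$h_{lj}$ isolates $i$ from $S$'' (a function of $h_{lj}$ only) and an event depending only on the tables $r\ne j$; the latter cancels in the conditional distribution of $h_{lj}$, so the conditional probability we want equals $\prob{u_{ikjl}=1\mid h_{lj}\text{ isolates }i\text{ from }S}$. Writing $B_{k'}=\{h_{lj}(k')=h_{lj}(i)\}$ for $k'\in S$ and $\mathrm{Coll}=\{u_{ikjl}=1\}$, this is $\prob{\mathrm{Coll}\wedge\bigwedge_{k'\in S}\overline{B_{k'}}}\big/\prob{\bigwedge_{k'\in S}\overline{B_{k'}}}$.

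\emph{Step 2: truncated inclusion–exclusion on numerator and denominator.} Expand both probabilities by inclusion–exclusion over the $B_{k'}$, exactly as in Eqns.~\eqref{eq:nocoll:q}--\eqref{eq:nocoll:qqprime}. For the denominator, the $r$-th term involves $h_{lj}$ at $i$ and at $r$ elements of $S$, so it is exact whenever $r+1\le t$ and equals $(16C_l)^{-r}$; truncating at $r=t-1$ costs only the first omitted term because the inner sums $\binom{|S|}{r}(16C_l)^{-r}$ are decreasing in $r$, and resumming the exact part reproduces $(1-\frac{1}{16C_l})^{|S|}$ up to error $O\!\big(\binom{C_l}{t-1}(16C_l)^{-(t-1)}\big)$. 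The numerator is handled the same way, except that fixing $h_{lj}(k)=h_{lj}(i)$ uses one extra ``slot'', so exactness holds only for $r\le t-2$; dividing by the denominator (which is bounded below by $\frac{15}{16}$, as in the proof of Lemma~\ref{lem:nc}, so the division does not inflate the error) one obtains the estimate of part (1) with the stated additive slack $2\binom{C_l}{t-1}(16C_l)^{-(t-1)}$.

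\emph{Step 3: add $\G$.} Since $\prob{j\in R_l(i)}\ge\frac{15}{16}$ and $\prob{\G}=1-O(n^{-c})$ (Lemma~\ref{lem:hss:G} and the choice of $c$), Fact~\ref{fact:cndhpev} shows that replacing the unconditional numerator and denominator by their $\G$-conditioned versions changes each, hence their ratio, by at most $O(n^{-c})$; together with $|S|=C_l-1$ from Step 1 this yields part (2), with the clean exponent $C_l-1$ and the extra $O(n^{-c})$ term. The main obstacle is the limited-independence bookkeeping of Step 2: once $h_{lj}(i)$ and $h_{lj}(k)$ are fixed only $t-2$ further items can be handled exactly, so one must verify carefully that stopping the inclusion–exclusion there costs only the clean binomial tail and that this tail is genuinely monotone decreasing (this is where $t\ge 11$ matters); the Step-1 decoupling of $\{j\in R_l(i)\}$ into a table-$j$ event and an independent remainder is conceptually delicate but is immediate from the across-tables independence of the $\tpest_l$ hash functions.
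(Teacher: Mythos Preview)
Your proposal is correct and follows essentially the same route as the paper: both write the conditional probability as a ratio (you directly as $\prob{\mathrm{Coll}\wedge\text{isolate}}/\prob{\text{isolate}}$, the paper via the equivalent Bayes factorization in Eqn.~\eqref{eq:bayes:1}), both estimate numerator and denominator by the truncated inclusion--exclusion device from the proof of Lemma~\ref{lem:nc} (with the one-fewer-slot observation for the numerator because $h_{lj}(k)=h_{lj}(i)$ is pinned), and both pass to the $\G$-conditioned version via Fact~\ref{fact:cndhpev} together with the $\smallhh$ sub-event forcing $i\in\hattopk_l(C_l)$. The only cosmetic difference is that the paper makes the decomposition over the random set $A=\hattopk_l(C_l)$ explicit (Eqn.~\eqref{eq:nc:4}), whereas you absorb it into the independence observation of Step~1; your remark that $\{j\in R_l(i)\}$ ``splits'' into a table-$j$ part and a remainder is slightly off (with the canonical $R_l(i)$ it is purely a table-$j$ event), but this does not affect the argument.
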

 \begin{proof}
Since $u_{ikjl}=1$ is equivalent to $h_{lj}(i) = h_{lj}(k)$, we have,
\begin{align} \label{eq:bayes:1}& \probsub{t}{u_{ikjl} = 1 \mid l_d(i)=l, j \in R_l(i), k \in \stream_l, k \not\in \hattopk(C_l) } \notag  \\
& =
\left(\frac{\probsub{t}{ j \in R_l(i) \mid u_{ikjl}=1, l_d(i) =l,  k \in \stream_l, k \not\in \hattopk(C_l)} }{\probsub{t}{j \in R_l(i) \mid  l_d(i) =l, k \in \stream_l , k \not\in\hattopk(C_l)}  }\right)  \notag \\
& \hspace{1.0cm}  \cdot \probsub{t}{u_{ikjl}=1 \mid l_d(i)=l, k \in \stream_l, k \not\in \hattopk(C_l)} \enspace .
\end{align}
First,
\begin{gather} \label{eq:bayes:2}
\probsub{t}{u_{ikjl}=1 \mid l_d(i)=l, k \in \stream_l, k \not\in \hattopk(C_l)}  \notag \\ =
\probsub{t}{u_{ikjl}=1} = \frac{1}{16C_l} \pm \left( \frac{e}{16t} \right)^{t}
\end{gather}
since, the event $u_{ikjl}=1$ depends solely  on $\bar{h}_{lj}$ and is independent of the events $k \in \stream_l$ and $k \not\in \hattopk{C_l}$.

\noindent
Secondly,
\begin{align} \label{eq:nc:3}
&\textsf{Pr}_t \left[j \in R_l(i) \mid u_{ikjl}=1, l_d(i) =l, k \in \stream_l, k \not\in \hattopk(C_l) \right] \notag \\
&= \textsf{Pr}_t \left[ \forall i' \in \hattopk(C_l)\setminus\{ i \} ( h_{lj}(i') \ne h_{lj}(i)) \mid u_{ikjl}=1,  l_d(i) =l, k \in \stream_l, k \not\in \hattopk(C_l)\right] \notag \\
& = \frac{\textsf{Pr}_t \left[\left(\forall i' \in \hattopk(C_l)\setminus\{ i \} ( h_{lj}(i') \ne h_{lj}(i))\right)\text{ and } u_{ikjl}=1\mid l_d(i)=l, k \in \stream_l, k \not\in \hattopk(C_l)\right]}{\textsf{Pr}_t \left[u_{ikjl}=1 \mid l_d(i)=l, k \in \stream_l, k \not\in \hattopk(C_l)\right]}  \notag \\
& =\frac{\textsf{Pr}_t \left[\left(\forall i' \in \hattopk(C_l)\setminus\{ i \} ( h_{lj}(i') \ne h_{lj}(i))\right)\text{ and } u_{ikjl}=1 \mid l_d(i)=l, k \in \stream_l, k \not\in \hattopk(C_l)\right]}{\textsf{Pr}_t \left[u_{ikjl}=1\right]}  \enspace .
\end{align}
$u_{ikjl}=1$ is a function solely of $\bar{h}_{lj}$ and  it is independent of the events $l_d(i)=l$ and $k \not\in \hattopk(C_l)$. Hence, the denominator term in Eqn.~\eqref{eq:nc:3} is simply $\probsub{t}{u_{ikjl}=1}$.

Consider the numerator of Eqn.~\eqref{eq:nc:3}. Let  $A = \hattopk(C_l)$, $\abs{A} = k$. Then,
\begin{align} \label{eq:nc:4}
&\prob{\left(\forall i' \in \hattopk(C_l)\setminus\{ i \} ( h_{lj}(i') \ne h_{lj}(i))\right)\text{ and } u_{ikjl}=1\mid l_d(i)=l, k \in \stream_l, k \not\in \hattopk(C_l)}\notag \\
& =
\sum_{A \subset [n], \abs{A} = C_l} \textsf{Pr}_t \left[\left(\forall i' \in A \setminus\{ i \} ( h_{lj}(i') \ne h_{lj}(i))\right), u_{ikjl}=1 
\mid l_d(i)=l, k \in \stream_l, k \not\in A, \hattopk(C_l)=A \right] \notag \\ &\hspace*{1.0in} \cdot  \probsub{\bar{g} \oplus \bar{h}_{HH}}{\hattopk(C_l) = A \mid l_d(i)=l, k \in \stream_l, k \not\in A} \notag \\
& = \sum_{\substack{A \subset [n] \\\abs{A} = C_l}} \textsf{Pr}_t \left[\left(\forall i' \in A \setminus\{ i \} ( h_{lj}(i') \ne h_{lj}(i))\right), u_{ikjl}=1\right]
\prob{\hattopk(C_l) = A \mid l_d(i)=l, k \in \stream_l, k \not\in A}
\end{align}
since for a fixed $A$, the event $\left\{\forall i' \in A \setminus\{ i \} ( h_{lj}(i') \ne h_{lj}(i))\text{ and } u_{ikjl}=1\right\}$ is independent of the events $l_d(i)=l, k \in \stream_l$ and $ k \not\in A$.

We now estimate the probability $\probsub{t}{\left(\forall i' \in A \setminus\{ i \} ( h_{lj}(i') \ne h_{lj}(i))\right),  u_{ikjl}=1}$.
The event $\bigl\{\forall i' \in A \setminus\{ i \} ( h_{lj}(i') \ne h_{lj}(i)), u_{ikjl}=1\bigr\}$ is equivalent to $ \left(\neg \bigvee_{i' \in A \setminus \{i\}}  (u_{ii'jl} = 1) \right) \wedge (u_{ikjl} = 1)$. Therefore, by inclusion-exclusion, we have,
\begin{align*}
&\probsub{t}{\left(\neg \bigvee_{i' \in A \setminus \{i\}}  (u_{ii'jl} = 1) \right) \wedge (u_{ikjl} = 1)} \\
& = \probsub{t}{ \neg \bigvee_{i' \in A \setminus \{i\}}  (u_{ii'jl} = 1)  \mid u_{ikjl}=1}\probsub{t}{u_{ikjl}=1}\notag \\
& = \left( 1- \probsub{t}{ \bigvee_{i' \in A \setminus \{i\}}  (u_{ii'jl} = 1)  \mid u_{ikjl}=1}\right)\probsub{t}{u_{ikjl}=1}
\end{align*}
Following the inclusion-exclusion arguments as in Lemma~\ref{lem:nc} and using the notation that  that $P[\cdot]$ denotes the probability measure assuming full-independence of the same  hash family, we have,
\begin{align*}
& \left \lvert \left( 1- \probsub{t}{ \bigvee_{i' \in A \setminus \{i\}}  (u_{ii'jl} = 1)  \mid u_{ikjl}=1}\right)  - \left( 1 - P\left[\bigvee_{i' \in A \setminus \{i\}}  (u_{ii'jl} = 1)  \mid u_{ikjl}=1 \right] \right) \right \rvert \\
& \le 2 \sum_{\{i_1, i_2, \ldots, i_{t-1}\} \subset A \setminus \{i\}} P\left[ \bigwedge_{r=1}^{t-1} u_{i i_r jl}=1 \mid u_{ikjl}=1\right] \\
&\le 2 \binom{C_l}{t-1} \left( \frac{1}{16 C_l} \right)^{t-1}
\end{align*}
Therefore,
\begin{align*}
&\textsf{Pr}_t \left[\left(\forall i' \in A \setminus\{ i \} ( h_{lj}(i') \ne h_{lj}(i))\right), u_{ikjl}=1\right]\\
 &= \left( 1- \probsub{t}{ \bigvee_{i' \in A \setminus \{i\}}  (u_{ii'jl} = 1)  \mid u_{ikjl}=1}\right)\prob{u_{ikjl}=1}\\
& = \left(\left( 1 - P\left[\bigvee_{i' \in A \setminus \{i\}}  (u_{ii'jl} = 1)  \mid u_{ikjl}=1 \right] \right) \pm  2 \binom{C_l}{t-1} \left( \frac{1}{16 C_l} \right)^{t-1}\right)  \prob{u_{ikjl}=1}\\
& = \left(\left( 1- \frac{1}{16C_l} \right)^{C_l - \1_{i \not\in A}} \pm  2 \binom{C_l}{t-1} \left( \frac{1}{16 C_l} \right)^{t-1}\right) \prob{u_{ikjl}=1}
\end{align*}
Now, $ C_l - \1_{i \not\in A} \in C_l - 0.5  \mp 0.5$.

Substituting in Eqn.~\eqref{eq:nc:4}, we have,
\begin{align} \label{eq:nc:5}
&\textsf{Pr}_t \left[\left(\forall i' \in \hattopk(C_l)\setminus\{ i \} ( h_{lj}(i') \ne h_{lj}(i))\right)\text{ and } u_{ikjl}=1 \mid l_d(i)=l, k \in \stream_l, k \not\in \hattopk(C_l)\right] \notag \\
& = \sum_{A \subset [n], \abs{A} = k} \probsub{t}{\left(\forall i' \in A \setminus\{ i \} ( h_{lj}(i') \ne h_{lj}(i))\right)\text{ and } u_{ikjl}=1} \notag \notag \\
& \hspace*{1.0in } \cdot \probsub{\bar{g} \oplus \bar{h}_{HH}}{\hattopk(C_l) = A \mid l_d(i)=l, k \in \stream_l, k \not\in A} \notag \\
& \in  \left( \left( 1- \frac{1}{16C_l} \right)^{C_l - 0.5 \mp 0.5} \pm  2 \binom{C_l}{t-1} \left( \frac{1}{16 C_l} \right)^{t-1} \right) \prob{u_{ikjl}=1} \notag \\
& \hspace*{1.0in} \cdot \sum_{A \subset [n], \abs{A} = k} \probsub{\bar{g} \oplus \bar{h}_{HH}}{\hattopk(C_l) = A \mid l_d(i)=l, k \in \stream_l, k \not\in A} \notag \\
& = \left( \left( 1- \frac{1}{16C_l} \right)^{C_l - 0.5 \mp 0.5} \pm  2 \binom{C_l}{t-1} \left( \frac{1}{16 C_l} \right)^{t-1} \right) \prob{u_{ikjl}=1}
\end{align}
Substituting in Eqn~\eqref{eq:nc:3}, we have,
\begin{align} \label{eq:nc:6}
&\probsub{t}{j \in R_l(i) \mid u_{ikjl}=1, l_d(i) =l, k \in \stream_l, k \not\in \hattopk(C_l)}  \notag \\
& =\left(\frac{1}{\probsub{t}{u_{ikjl}=1}}\right)   \textsf{Pr}_t \left[\left(\forall i' \in \hattopk(C_l)\setminus\{ i \} ( h_{lj}(i') \ne h_{lj}(i))\right)\text{ and } u_{ikjl}=1\right. \notag \\ &\hspace*{1.7in} \left. \mid l_d(i)=l, k \in \stream_l, k \not\in \hattopk(C_l)\right]\notag \\
& =  \left( 1- \frac{1}{16C_l} \right)^{C_l - 0.5 \mp 0.5} \pm  2 \binom{C_l}{t-1} \left( \frac{1}{16 C_l} \right)^{t-1}
\end{align}

In a similar manner, we can show that
\begin{align} \label{eq:nc:7}
&\probsub{t}{j \in R_l(i) \mid  l_d(i) =l, k \in \stream_l , k \not\in\hattopk(C_l)} \notag \\
&= \left( 1- \frac{1}{16C_l} \right)^{C_l - 0.5 \mp 0.5} \pm  2 \binom{C_l}{t} \left( \frac{1}{16 C_l} \right)^{t}
\end{align}
Substituting Eqns.~\eqref{eq:nc:6}, ~\eqref{eq:nc:7} and ~\eqref{eq:bayes:2} in Eqn. ~\eqref{eq:bayes:1}, we have,
\begin{align} \label{eq:nc:8}
&\probsub{t}{u_{ikjl} = 1 \mid l_d(i)=l, j \in R_l(i), k \in \stream_l, k \not\in \hattopk(C_l) } \notag  \notag \\
& =
\left(\frac{\probsub{t}{ j \in R_l(i) \mid u_{ikjl}=1, l_d(i) =l,  k \in \stream_l, k \not\in \hattopk(C_l)} }{\probsub{t}{j \in R_l(i) \mid  l_d(i) =l, k \in \stream_l , k \not\in\hattopk(C_l)}  }\right) \notag \\
& \hspace*{2.0cm} \cdot \probsub{t}{u_{ikjl}=1 \mid l_d(i)=l, k \in \stream_l, k \not\in \hattopk(C_l)} \notag \\
& = \left(\cfrac{ \left( 1- \frac{1}{16C_l} \right)^{C_l - 0.5 \mp 0.5} \pm  2 \binom{C_l}{t-1} \left( \frac{1}{16 C_l} \right)^{t-1}}{\left( 1- \frac{1}{16C_l} \right)^{C_l - 0.5 \pm 0.5} \mp  2 \binom{C_l}{t} \left( \frac{1}{16 C_l} \right)^{t}} \right)\cdot  \left( \frac{1}{16C_l} \pm \left( \frac{e}{16t} \right)^{t}\right)
\end{align}
For $t = 11$, the above ratio is bounded by $\left( \frac{ 1 \pm 10^{-16}}{16 C_l}\right)$.

 Conditioning with respect to $\G$, by Fact~\ref{fact:cndhpev}, the above probability may
 change by $ n^{-c}$. Also, conditioned on $\G$, we have that $l_d(i)=l$ implies that $i \in \hattopk(C_l)$. Hence,
\begin{align*}
 &\probsub{t}{j \in R_l(i) \mid u_{ikjl}=1, l_d(i) =l, k \in \stream_l, k \not\in \hattopk(C_l),\G}  \\
& =  \left( 1- \frac{1}{16C_l} \right)^{C_l - 1} \pm  2 \binom{C_l}{t-1} \left( \frac{1}{16 C_l} \right)^{t-1}  \pm n^{-c}\enspace .
\end{align*}
Proceeding similarly as in Eqn.~\eqref{eq:nc:8}, we have,
\begin{align*}
&\probsub{t}{u_{ikjl} = 1 \mid l_d(i)=l, j \in R_l(i), k \in \stream_l, k \not\in \hattopk(C_l),\G} \notag  \notag \\
& =
\left(\frac{\probsub{t}{ j \in R_l(i) \mid u_{ikjl}=1, l_d(i) =l,  k \in \stream_l, k \not\in \hattopk(C_l),\G} }{\probsub{t}{j \in R_l(i) \mid  l_d(i) =l, k \in \stream_l , k \not\in\hattopk(C_l),\G}  }\right) \\
& \hspace*{1.0in} \cdot \probsub{t}{u_{ikjl}=1 \mid l_d(i)=l, k \in \stream_l, k \not\in \hattopk(C_l),\G}  \notag \\
& = \left( \cfrac{ \left( 1- \frac{1}{16C_l} \right)^{C_l - 0.5 \mp 0.5} \pm  2 \binom{C_l}{t-1} \left( \frac{1}{16 C_l} \right)^{t-1} \pm n^{-c}}{\left( 1- \frac{1}{16C_l} \right)^{C_l - 0.5 \pm 0.5} \mp  2 \binom{C_l}{t} \left( \frac{1}{16 C_l} \right)^{t} \mp n^{-c} }\right) \cdot  \left( \frac{1}{16C_l} \pm \left( \frac{e}{16t} \right)^{t} \pm n^{-c}\right)
\end{align*}
For $t = 11$, the above ratio is bounded by $\left( \frac{ 1 \pm 10^{-16}}{16 C_l}\right)$. \hfill

 \end{proof}

\subsection{Basic properties of the application of Taylor polynomial estimator: Proof of Lemma~\ref{lem:hssavtp}-Part II}
We now complete the proofs of the remaining parts of Lemma~\ref{lem:hssavtp}.

\begin{proof} [Proof of Lemma~\ref{lem:hssavtp}, parts (c), (d) and (f).]
Recall that $y_{lk}$ is an indicator variable that is 1 iff $k \in \stream_l$. Given that $i \in \stream_l$ the random variable $X_{ijl}$ is defined as
$$X_{ijl} =  (f_i  + \sum_{k \ne i } f_k \cdot u_{ikjl} \cdot \xi_{lj}(k) \cdot \xi_{lj}(i) \cdot y_{lk}) \sgn(\hat{f}_i) \enspace . $$
As shown in the proof of Lemma~\ref{lem:fp:expect1}, $\expectsub{\bar{\xi}_{lj}}{X_{ijl} \mid j \in R_l(i), l_d(i)= l, \G} = \abs{f_i}$.
Further,
\begin{align*}
&\expect{X_{ijl}^2 \mid j \in R_l(i), l_d(i) = l, \G} \\
&= \expectsub{\bar{h}_{HH,l} \oplus \bar{h}_{lj} \oplus \bar{g}}{\expectsub{\bar{\xi}_{lj}}{X_{ijl}^2 \mid j \in R_l(i), l_d(i) = l,  \G}} \\
& = f_i^2 + \expectsub{\bar{h}_{HH,l} \oplus \bar{h}_{lj} \oplus \bar{g}}{\sum_{k \in[n] \setminus\{i\} } f_k^2\cdot  u_{ijkl} \cdot y_{lk}\mid j \in R_l(i), l_d(i) = l, \G}
\end{align*}
since the expectation with respect to the Rademacher family of \tpest~ structure is independent of the random bits used to define $\G$ and $R_l(i)$.

Therefore,
\begin{align} \label{eq:hssavtp2:a}
\sigma^2_{ijl}  & =\varsub{\bar{\xi}_{lj} \oplus \bar{h}_{HH,l} \oplus \bar{h}_{lj} \oplus \bar{g}}{X_{ijl} \mid j \in R_l(i), l_d(i) = l,\G} \notag\\
& = \expectsub{\bar{\xi}_{lj} \oplus \bar{h}_{HH,l} \oplus \bar{h}_{lj} \oplus \bar{g}}{X_{ijl}^2 \mid j \in R_l(i), l_d(i) = l,\G} \notag \\
 & \hspace*{1.0cm} - \left( \expectsub{\bar{\xi}_{lj} \oplus \bar{h}_{HH,l} \oplus \bar{h}_{lj} \oplus \bar{g}}{X_{ijl} \mid j \in R_l(i), l_d(i) = l,\G}\right)^2\notag\\
& = f_i^2  + \expectsub{\bar{g} \oplus \bar{h}_{HH,l} \oplus \bar{h}_{lj}}{\sum_{k \in [n] \setminus \{i\} } f_k^2 \cdot  u_{ikjl} \cdot y_{lk}\mid j \in R_l(i), l_d(i) =l,  \G} - \abs{f_i}^2 \notag\\
&  = \sum_{k \in [n] \setminus \{i\}} f_{k}^2 \cdot \probsub{\bar{g} \oplus \bar{h}_{HH,l} \oplus \bar{h}_{lj}}{u_{ikjl} =1 \mid j \in R_l(i), l_d(i) =l, k \in \stream_l,\G}\notag\\
& \hspace*{1.0in}
\cdot \probsub{\bar{g} \oplus \bar{h}_{HH,l} \oplus \bar{h}_{lj}} {y_{lk}=1 \mid j \in R_l(i), l_d(i)=l, \G} \enspace .
\end{align}
Now,
\begin{align} \label{eq:hssavtp2:aa}
&\prob{u_{ikjl} =1 \mid j \in R_l(i), l_d(i) =l, k \in \stream_l,\G}\\
 &= \prob{u_{ikjl}=1, k \not\in \hattopk(C_l)\mid j \in R_l(i), k \in \stream_l,  l_d(i)=l, \G} \notag \\
& \hspace*{1.0in} + \prob{u_{ikjl}=1, k \in  \hattopk(C_l)\mid j \in R_l(i), l_d(i)=k, \G} \notag \\
& = \prob{u_{ikjl} =1 \mid j \in R_l(i), k \in \stream_l, k \not\in \hattopk(C_l), l_d(i) =l,\G}\notag \\
 & \hspace*{1.0in} \cdot \prob{k \not\in \hattopk(C_l) \mid j \in R_l(i), k \in \stream_l,  l_d(i) = l, \G} +0 \notag \\
& \le \left( \frac{1 + 10^{-16}}{(16C_l)} \right) \cdot \prob{k \not\in \hattopk(C_l) \mid j \in R_l(i), k \in \stream_l,  l_d(i) = l, \G}
 \end{align}
by (Lemma~\ref{lem:probcoll}, with $t=11$.

 Substituting in ~\eqref{eq:hssavtp2:a}, we have that
 \begin{align} \label{eq:hssavtp2:b}
 \sigma^2_{ijl}   &  \le  \sum_{k \in [n] \setminus \{i\}} f_{k}^2\cdot \left( \frac{1 + 10^{-16}}{(16C_l)} \right) \notag \\ & \hspace*{1.0cm}     \cdot \prob{k \not\in \hattopk(C_l), k \in \stream_l \mid j \in R_l(i),  l_d(i) = l, \G}    \notag \\
 & \le   \left( \frac{1 + 10^{-16}}{(16C_l)} \right) \sum_{\substack{k \in [n] \setminus \{i\}, k \in \stream_l, k \not\in \hattopk(C_l)}} f_k^2 \cdot 1 \notag \\
 & = \left( \frac{1 + 10^{-16}}{(16C_l)} \right) \ftwores{\hattopk(C_l),l}
 \end{align}
It can be shown     that, conditional on \goodest,  $$\ftwores{\hattopk(C_l),l} \le 9 \ftwores{C_l,l}$$ (this is
explicitly proved in \cite{g:isaac12}; variants appear in  earlier works for e.g.,
\cite{gks:fsttcs05,cm:sirocco06,jst:pods11}). Since, $\smallres$ holds as a sub-event of $\G$,
$\ftwores{C_l,l} \le   1.5\ftwores{  (2\alpha)^l C}/\allowbreak 2^{l-1} $. Therefore, Eqn.~\eqref{eq:hssavtp2:b} may be written as  follows.
\begin{align*}
\sigma^2_{ijl}   &  \le   \left( \frac{1 + 10^{-16}}{(16C_l)} \right) \ftwores{\hattopk(C_l),l} \notag \\
 &\le \frac{9 (1 + 10^{-16})
\ftwores{C_l,l}}{ 16 C_l}  ~~~\text{ since,  ($\ftwores{\hattopk_l(C_l),l} \le 9 \ftwores{C_l,l}$)  \cite{g:isaac12,jst:pods11}}\\
&  \le  \frac{  9 (1 + 10^{-16}))(1.5)\ftwores{  (2\alpha)^l C}}{C_l (16) 2^{l-1} } ~~~~~~~~~~~~ \text{ (\G~ implies \smallres.)}\\
&  \le \frac{  9 (1 + 10^{-16})(1.5)\hat{F}_2}{8  (2\alpha)^l  C}\\
 &\le (17/10)(\epsbar T_l)^2 \enspace .
\end{align*}
This proves part (d) of Lemma~\ref{lem:hssavtp}.

Hence,
$$\eta^2_{ijl} =  \abs{ \hat{f}_{il} - f_i}^2 + \sigma^2_{ijl} \le  (\epsbar T_l)^2  + (17/10)(\epsbar
T_l)^2\le 2.7(\epsbar T_l)^2 \enspace . $$
Since, $i$ is discovered at level $l$, $\abs{\hat{f}_{il}} \ge Q_l = T_l(1-\epsbar)$ and therefore, $\abs{f_i} \ge Q_l - \epsbar T_l = T_l(1-2\epsbar)$.

 Hence,
$\frac{\abs{f_i}}{\eta_{ijl}} \ge    \frac{T_l (1 - 2\epsbar)}{(\sqrt{2.7}) \epsbar T_l }  \ge 15p$. Further, $\frac{\abs{\hat{f}_i}}{\eta_{ijl}} \ge \frac{T_l(1-\epsbar)}{\sqrt{2.7} \epsbar T_l} \ge 16p$. This proves parts (c) and (f).
\hfill
\end{proof}

\subsection{Taylor polynomial estimators are uncorrelated with respect to $\bar{\xi}$}

\begin{proof} [Proof of Lemma~\ref{lem:bvthetacross}.] The expectations in this proof are only  with respect to $\bar{\xi}$.

Consider $\expectsub{\xibar}{\bvtheta_{i'}\bvtheta_i}$. $\bvtheta_i$ and $\bvtheta_{i'}$ each use the
\est~structure at levels $l_d(i)$ and $l_d(i')$ respectively. If $l_d(i) \ne l_d(i')$, then the
estimations are made from different structures and use independent random bits and
therefore,
$$\expectsub{\xibar}{ \bvtheta_i \bvtheta_{i'} \mid \hat{f}_i, \hat{f}_{i'} \G } = \expectsub{\xibar}{ \bvtheta_i
\mid  \hat{f}_i,\G } \expectsub{\xibar}{\bvtheta_{i'} \mid \hat{f}_i, \G}\enspace . $$

Now suppose that $l_d(i) = l_d(i') = l$ (say). Then,  $ \abs{\hat{f}_{il} } \ge Q_l$ and
$\abs{\hat{f}_{i'l}} \ge Q_{l}$. Since $\smallhh$ holds as a sub-event of $\G$, $\{i,i'\}\subset \{k: \abs{\hat{f}_{kl}} \ge Q_l\} \subset \hattopk(l,C_l)$.
Therefore,  by  $\nocollision_l$, the estimates  $\{X_{ijl}\}_{j \in R_l(i)}$ and $\{
X_{i'jl} \}_{j \in R_l(i')}$ are such that if $j \in R_l(i) \cap R_l(i')$, then, $h_{lj}(i) \ne h_{lj}(i')$.
Let $q_1, q_2, \ldots, q_s$ be some permutation of the table indices in $R_l(i)$. Likewise let $q'_1, q'_2,
\ldots, q'_s$ be a permutation of the table indices in $R_l(j)$.  Then,
\begin{align}\label{eq:varcross1}
&\expectsub{\xibar}{\vartheta_i\vartheta_{i'} \mid \hat{f}_i, \hat{f}_{i'},\G} \notag \\
&= \E_{\xibar}\left[
\left(\sum_{v=0}^k \gamma_v(\abs{\hat{f}_i}) \prod_{w=1}^v (X_{i,q_w, l} -
\abs{\hat{f}_i})\right) \left( \sum_{v'=0}^k \gamma_{v'}(\abs{\hat{f}_{i'}}) \prod_{w'=1}^{v'}(X_{i',
q'_w, l} -\abs{ \hat{f}_{i'}}) \right) \right. \notag \\
& \hspace*{1.0in} \left. \biggl\vert \hat{f}_i,\hat{f}_{i'},\G\right] \notag \\
& = \sum_{v,v'=0}^{k} \gamma_v(\abs{\hat{f}_i}) \gamma_{v'}(\abs{\hat{f}_{i'}}) \expectsub{\xibar}{
\prod_{w=1}^v (X_{i,q_w, l} - \abs{\hat{f}_i})\prod_{w'=1}^{v'}(X_{i', q'_{w'}, l} - \abs{\hat{f}_{i'}}) \mid \hat{f}_i, \hat{f}_{i'},\G}
\end{align}

Consider
$
\expectsub{\xibar}{
\prod_{w=1}^v (X_{i,q_w, l} - \abs{\hat{f}_i})\prod_{w'=1}^{v'}(X_{i', q'_{w'}, l} - \abs{\hat{f}_{i'}}) \mid \hat{f}_i, \hat{f}_{i'},\G}$. For some $1\le w' \le v'$, if $q'_{w'} \not\in \{q_1, q_2, \ldots, q_w\}$, then, the random variable $ X_{i',q'_{w'},l} - \abs{\hat{f}_{i'}}$ uses only the random bits of $\xi_{lq'_{w'}}$ and is independent of the random bits $\{\xi_{l,q_w} \mid 1\le w \le v\}$  used by any of the $X_{i,q_w,l}$, for $1 \le w \le v$. An analogous situation holds for any $1 \le w \le v$ such that $q_w \not\in \{q'_1, \ldots, q'_{v'}\}$. Clearly, for distinct tables, $j, j'$,  $\expectsub{\xibar}{X_{i,j,l}X_{i',j',l}}$  is the product of the
individual expectations, by independence of  the seeds of the
Rademacher families $\{\xi_{lj}(k)\}$ and $\{\xi_{lj'}(k)\}$.  Therefore,
\begin{align*}
&\expectsub{\xibar}{
\prod_{w=1}^v (X_{i,q_w, l} - \abs{\hat{f}_i})\prod_{w'=1}^{v'}(X_{i', q'_{w'}, l} - \abs{\hat{f}_{i'}}) \mid \hat{f}_i, \hat{f}_{i'},\G} \\
& = \prod_{w: q_w\not\in \{q'_1, \ldots, q'_{v'}\}} \expectsub{\xi_{l,q_w}}{(X_{i,q_w, l} - \abs{\hat{f}_i})\mid \hat{f}_i, \G} \\ & ~~~~\cdot \prod_{w': q'_{w'} \not\in \{q_1, \ldots, q_v\}} \expectsub{\xi_{l,q'_{w'}}}{(X_{i',q'_{w'}, l} - \abs{\hat{f}_{i'}}) \mid \hat{f}_{i'},\G} \\
& ~~~~\cdot \prod_{j \in \{q_1, \ldots, q_v\} \cap \{q'_1, \ldots, q'_{w'}\}}
\expectsub{\xi_{lj}}{(X_{ijl}- \abs{\hat{f}_i}) (X_{i'jl} - \abs{\hat{f}_{i'}})\mid \hat{f}_i, \hat{f}_{i'},\G}
\end{align*}

\eat{

Therefore,
~\eqref{eq:varcross1} becomes
\begin{align} \label{eq:varcross11}
& \expectsub{\xibar}{\vartheta_i\vartheta_{i'}\mid \hat{f}_{il}, \hat{f}_{i'l}, \G }  \notag \\
& = \sum_{v,v'=0}^{k} \gamma_v(\hat{f}_{il}) \gamma_{v'}(\hat{f}_{i'l})
\left(\prod_{\substack{ w \in [v] \\ q_w \not\in \{q'_1, \ldots, q'_{v'}\} }} \expectsub{\xibar}{X_{i,q_w, l} -
\abs{\hat{f}_{il}} \mid \hat{f}_{il},  \hat{f}_{i'l},\G}\right) \left(
\prod_{\substack{ w' \in [v'] \\ q'_{w'}\not\in \{q_1, \ldots, q_{v}\} }}
\expectsub{\xibar}{X_{i',q'_{w'}, l} - \abs{\hat{f}_{i'l}} \mid \hat{f}_{il},  \hat{f}_{i'l},\G} \right)\notag \\
& ~~~\times \prod_{  j \in \{ q_1, \ldots, q_v\} \cap \{q'_1, \ldots, q'_{v'}\}}
\expectsub{\xibar}{ \bigl(X_{i,j, l} -\abs{ \hat{f}_{il}}\bigr)\bigl( X_{i',j, l} - \abs{\hat{f}_{i'l}}\bigr)
\mid \hat{f}_{il},  \hat{f}_{i'l},\G}
\end{align}
}
We analyze $\expectsub{\xi_{lj}}{ X_{ijl}X_{i'jl} \mid  \hat{f}_{il}, \hat{f}_{i'l}\G}$.
\begin{align} \label{eq:varcross2}
&\expectsub{\xi_{lj}}{ X_{ijl} X_{i'jl} \mid \hat{f}_{il}, \hat{f}_{i'l}, \G } =
 \sgn(f_i ) \sgn(f_j) \notag \\
 & ~~\cdot \E_{\xi_{lj}}\biggl[ \Bigl( f_i + \xi_{lj}(i)\sum_{k
\ne i} f_k \cdot \xi_{lj}(k)  \cdot u_{ikjl} \Bigr) \cdot  \Bigl( f_{i'} +
 \xi_{lj}(i')\sum_{k' \ne i'} f_{k'} \cdot  \xi_{lj}(k') \cdot  u_{i'k'jl}\Bigr) \notag \\
 & \hspace*{1.0in} \biggl\vert \hat{f}_{il}, \hat{f}_{i'l}, \G \biggr]
 \end{align}
 Suppose we use linearity of expectation to expand the product and take the expectation of the individual terms.
 The expectation of the terms of the form $\expectsub{\xi_{lj}}{\xi_{lj}(i) \xi_{lj}(k) u_{ikjl}} = 0$ since $i \ne k$ and the random variable $u_{ikjl}$ is independent of $\xi_{lj}$.  Similarly, $\expectsub{\xi_{lj}}{\xi_{lj}(i') \xi_{lj}(k') u_{i'k'jl}} = 0$. We also obtain a set of terms of the form
 $\expectsub{\xi_{lj}}{\xi_{lj}(i) \cdot \xi_{lj}(i')  \cdot \xi_{lj}(k) \cdot \xi_{lj}(k') \cdot u_{ikjl} \cdot u_{i'k'jl}}$. Since, $j \in R_l(i) \cap R_l(i')$, $h_{lj}(i) \ne h_{lj}(i')$. Now  $u_{ikjl} \cdot u_{i'k'jl} =1$ only  if $h_{lj}(i) = h_{lj}(k)$ and $h_{lj}(i') = h_{lj}(k')$. We conclude that $\{i,i',k,k'\}$ are all distinct, and by 4-wise independence of the $\{\xi_{lj}(u)\}_{1\le u \le n}$ family, $\expectsub{\xi_{lj}}{\xi_{lj}(i) \cdot \xi_{lj}(i')  \cdot \xi_{lj}(k) \cdot \xi_{lj}(k') \cdot u_{ikjl} \cdot u_{i'k'jl}} =0$. Therefore, Eqn. ~\eqref{eq:varcross2} becomes
 \begin{align*}
 \expectsub{\xi_{lj}}{ X_{ijl} X_{i'jl} \mid \hat{f}_{il}, \hat{f}_{i'l}, \G}  = \abs{f_i}\abs{f_{i'}} = \expectsub{\xi_{lj}}{X_{ijl} \mid \hat{f}_{il},\G} \expectsub{\xi_{lj}}{X_{ijl} \mid \hat{f}_{i'l}, \G} \enspace .
 \end{align*}
It follows that
\begin{align*}
& \expectsub{\xi_{lj}}{ (X_{ijl} - \abs{\hat{f}_{il}} ) ( X_{i'jl} - \abs{\hat{f}_{i'l}})  \mid \hat{f}_{il}, \hat{f}_{i'l}, \G}  =  (\abs{f_i} - \abs{\hat{f}_{il}})) (\abs{f_{i'}} - \abs{\hat{f}_{i'l}}) \\
&= \expectsub{\xi_{lj}}{X_{ijl} - \abs{\hat{f}_{il}} \mid \hat{f}_{il},\G}\expectsub{\xi_{lj}}{X_{i'jl} - \abs{\hat{f}_{i'l}}\mid \hat{f}_{il},\G} \enspace .
\end{align*}
 For $l_d(i) = l_d(i') = l$, ~\eqref{eq:varcross1} simplifies to
\begin{align*}
\expectsub{\xi_l}{\vartheta_i\vartheta_{i'} \mid \hat{f}_{il}, \hat{f}_{i'l},\G} & = \expectsub{\xi_l}{\vartheta_i \mid \hat{f}_{il},\G} \expectsub{\xi_l}{\vartheta_{i'} \mid \hat{f}_{i'l},\G} \enspace .
\end{align*}
Thus, $\vartheta_i$ and $\vartheta_{i'}$ are uncorrelated in all cases.

Since, $\bvtheta_i$ is the average of the Taylor polynomial estimators $\vartheta_i$ for randomly chosen permutations, the variables $\bvtheta_i $ and $\bvtheta_{i'}$ are also uncorrelated in all cases, whether $l \ne l'$ or  $l = l'$, that is,
\begin{align} \label{eq:varcross3}
\expectsub{\xibar}{\bvtheta_i \bvtheta_{i'} \mid \hat{f}_{il}, \hat{f}_{i'l'},\G}
 = \expectsub{\xi_l}{\bvtheta_i \mid \hat{f}_{il},\G} \expectsub{\xi_{l'}}{\bvtheta_{i'} \mid \hat{f}_{i'l'},\G}
\end{align}
\end{proof}

\eat{
\begin{lemma} \label{lem:hsslimindep}
Let $i_1, i_2, \ldots, i_d \in [n]$ be distinct such that $i_j \in G_{l_j}$, for $1\le j \le d$. Then,
\begin{align*}
\sum_{0\le r_1, r_2, \ldots, r_d \le L} 2^{r_1 + r_2 + \ldots + r_d} \prob{ i_j \in \bar{G}_{r_j}, \text{ for } 1\le j \le d \mid \G} \in 1 \pm O(\sum_{j=1}^d 2^{l_j} n^{-c}) \enspace .
\end{align*}
\end{lemma}

\begin{proof}
The proof if by induction on $d$. The base case is when $d=1$. In this case the statement of the lemma becomes, for $i \in G_l$,  $ \sum_{0 \le r \le L} 2^{r} \prob{i \in \bar{G}_r\mid \G} \in 1 \pm O(2^l n^{-c})$, which  is the statement of Lemma~\ref{lem:margin}.

\emph{Induction Case:} We will assume by induction that

\begin{align} \label{eq:induction}
\sum_{0 \le r_2, \ldots, r_d \le L} 2^{\sum_{j=2}^d r_j} \prob{i_j \in \bar{G}_{r_j}, 2\le j \le d \mid \G} \in 1 \pm O(\sum_{j=2}^d 2^{l_j} n^{-c}) \enspace .
\end{align}
We consider three cases, namely, $i \in \midreg(G_l), i \in \lmargin(G_l)$ and $i \in \rmargin(G_l)$.

\emph{Case 1: $i \in \midreg(G_l)$.} Then, conditional on $\G$, by Lemma~\ref{lem:margin}, $i \in \bar{G}_l$ iff $i \in \stream_l$.
\begin{align*}
\prob{i \in \bar{G}_l}
\end{align*}

\end{proof}
}
\section{Expectation and Variance of $p$th moment estimator }
\label{sec:variance}
In this section, we  analyze the expectation and variance of the estimator $\hat{F}_p$.

\subsection{Expectation of the $\hat{F}_p$ estimator}

\begin{proof} [Proof of Lemma~\ref{lem:Y}.] Define $\level:[n] \rightarrow \{0,1,2\ldots, L+1\}$ to be the function that maps each item $i \in [n]$ to the index of the group it belongs to, that is,
\begin{align*}
\level(i) = \begin{cases}
 l  & \text{ if $i \in G_l$}  \\
 L+1 & \text{  if $f_i = 0$. }
 \end{cases}
 \end{align*}
Then, by definition of the $Y_i$'s,
\begin{align*}
 \expectb{ \hat{F}_p \mid \G} & = \expectbb{ \sum_{i\in [n]} Y_i \mid \G} \\
  &= \sum_{l=0}^L \sum_{i \in G_l}  \sum_{l'=0}^{L}2^{l'} \expect{ z_{il'} \bvtheta_i \mid \G}\\
 &= \sum_{l=0}^L \sum_{i \in G_l} \sum_{l'=0}^{L}2^{l'} \expect{\bvtheta_i \mid i \in \bar{G}_{l'}, \G} \prob{i \in \bar{G}_{l'} \mid \G}\\
& = \sum_{l=0}^L \sum_{i \in G_l} \sum_{l'=0}^{L}2^{l'}  \abs{f_i}^p (1 \pm n^{-4000p}) \prob{i \in \bar{G}_{l'} \mid \G}, ~~~\text{ by Lemma~\ref{lem:fp:expect1} }
 \\
& = \sum_{l=0}^L \sum_{i \in G_l} \abs{f_i}^p (1 \pm n^{-4000p}) \sum_{l'=0}^L 2^{l'} \prob{i \in \bar{G}_{l'} \mid \G} \\
&= \sum_{l=0}^L \sum_{i \in G_l} \abs{f_i}^p (1 \pm n^{-4000p}) (1 \pm O(2^{\level(i)} n^{-c})), ~~~\text{ by  Lemma~\ref{lem:margin} } \\
&= F_p (1 \pm 2^{L+1} n^{-c}))  \enspace .
\end{align*}
Let $C = K' n^{1-2/p}$ where $K' = \cfrac{(27p)^2 \epsilon^{-2}}{\min (\epsilon^{4/p-2}, \log n)}$, as given in  Figure~\ref{table:params}. 

Since $\alpha = 1 - (1-2/p)(0.01) > 0.99$, $$L = \lceil \log_{2\alpha} (n/C) \rceil \le 1 + \log_{1.98}(n/C) \le 1 + (1.02) \log_2(n/C) \le  1 + (1.02)\log_2 (n^{2/p}/K') \enspace . $$ Hence, $$2^L \le 2\left(\frac{n^{(2/p)}}{K'} \right)^{1.02}$$ and so $O(2^{L+1} n^{-c})  = O(n^{-(c-2)})$
 proving the lemma. \hfill
 \end{proof}

 \subsection{Variance of $Y_i$}
 
 In this section, we calculate $\variance{Y_i}$. For sake of completeness we first present proofs of some identities stated in Eqn.~\eqref{eq:fpfacts}.
 
 \begin{fact} \label{fact:fp1}
 For any $p \ge q$, $F_q \le n^{1-q/p}F_p^{q/p}$. In particular, $F_2 \le n^{1-2/p} F_p^{2/p}$ for any $p \ge 2$.
 \end{fact}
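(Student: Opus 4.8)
The statement to prove is Fact~\ref{fact:fp1}: for any $p \ge q$, $F_q \le n^{1-q/p}F_p^{q/p}$, and in particular $F_2 \le n^{1-2/p}F_p^{2/p}$ for $p \ge 2$.

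\textbf{Proof proposal.}

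The plan is to recognize this as a direct consequence of the power mean inequality (equivalently, H\"older's inequality applied to the vector of $p$-th powers of absolute frequencies against the all-ones vector). First I would set $a_i = \abs{f_i}^q$ for $i \in [n]$, so that $F_q = \sum_{i \in [n]} a_i$, and note that $\sum_i a_i^{p/q} = \sum_i \abs{f_i}^p = F_p$. Since $p \ge q$, the exponent $p/q \ge 1$, so I can apply H\"older's inequality with conjugate exponents $r = p/q$ and $r' = p/(p-q)$ to the product $a_i \cdot 1$:
\begin{align*}
F_q = \sum_{i \in [n]} a_i \cdot 1 \le \Bigl( \sum_{i\in[n]} a_i^{p/q}\Bigr)^{q/p} \Bigl( \sum_{i \in [n]} 1 \Bigr)^{(p-q)/p} = F_p^{q/p} \, n^{1-q/p} \enspace .
\end{align*}
(When $p = q$ the claim is trivial since $n^{1-q/p} = n^0 = 1$; when $F_p = 0$ both sides are $0$.) This gives the general inequality, and substituting $q = 2$, $p \ge 2$ yields $F_2 \le n^{1-2/p} F_p^{2/p}$.

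Alternatively, and perhaps cleaner to write out, I would phrase it via the monotonicity/convexity argument directly: by convexity of $t \mapsto t^{p/q}$ on $[0,\infty)$ and Jensen's inequality applied to the uniform average over $i \in [n]$,
\begin{align*}
\Bigl( \frac{1}{n} \sum_{i \in [n]} \abs{f_i}^q \Bigr)^{p/q} \le \frac{1}{n} \sum_{i \in [n]} \abs{f_i}^p \enspace ,
\end{align*}
i.e. $(F_q/n)^{p/q} \le F_p/n$; raising both sides to the power $q/p$ and multiplying through by $n$ gives $F_q \le n^{1-q/p} F_p^{q/p}$.

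There is essentially no obstacle here — this is a standard norm-comparison inequality and the only care needed is the bookkeeping of exponents and handling the degenerate cases $p = q$ and $F_p = 0$. I would write out the H\"older version in two or three lines and then state the $q = 2$ specialization. The companion identity $F_{2p-2} \le F_p^{2-2/p}$ from Eqn.~\eqref{eq:fpfacts} follows by the same template (or by monotonicity of $\ell_r$ norms in $r$, since $2p-2 \ge p$ for $p \ge 2$ and the vector has $\ell_\infty \le$ its $\ell_p$ norm), and I would dispatch it immediately afterward in the same fact-proof block if the paper groups them together.
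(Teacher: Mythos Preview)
Your proposal is correct, and your second (Jensen/convexity) argument is essentially identical to the paper's proof: the paper defines a random variable $X$ taking value $\abs{f_i}^q$ with probability $1/n$ and applies Jensen's inequality to the convex function $t\mapsto t^{p/q}$ to get $(F_q/n)^{p/q}\le F_p/n$, which is exactly your second derivation. The H\"older variant you give first is an equivalent rephrasing of the same inequality.
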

 
 \begin{proof}
 Let $X$ be a random variable that takes the value $\abs{f_i}^q$ with probability $1/n$, for $i \in [n]$.  Then,
 $$\expect{X} = \frac{F_q}{n} \enspace . $$
 
 \noindent By Jensen's inequality, for any function $f$ that is convex over the support of $X$, $\expect{f(X)} \ge f(\expect{X})$. Choose $f(t) = t^{p/q}$. Since $p \ge q$ and the support of $X$ is $\R^{\ge 0}$, $f(t) $ is convex in this range. Therefore, $\expect{f(X)} = \frac{F_p}{n}$.  By Jensen's inequality applied to $f$, we have,
 $$ \left( \frac{F_q}{n} \right)^{p/q} \le \frac{F_p}{n}, ~~~
 \text{ or, }~~~  F_q \le n^{1-q/p} F_p^{q/p} \enspace . $$
 \end{proof}

 In the following proofs, we will use the notion that the  \emph{sample group of an item is  consistent with the  frequency of the item} to mean that if $i \in G_l$ and $i$ is sampled into $\bar{G}_r$, then, $l$ and $r$ are related as given by Lemma~\ref{lem:margin}, conditional on  \G.  (For e.g., if $i \in \lmargin(G_l)$, then, $r \in \{l,l+1\}$, if $i \in \midreg(G_l)$, then, $r =l$, and if $i \in \rmargin(G_l)$, then, $r \in \{l-1,l\}$).

\begin{proof} [Proof of Lemma~\ref{lem:varYi}.] For this proof, assume that $\G$ holds.

\emph{Case 1:} $i \in \midreg(G_0)$. Then  $i \in \bar{G}_0$ with probability 1 and $l_d(i) = 0$. Therefore,  $$Y_i = \sum_{l=0}^{L-1} 2^l \cdot z_{il} \cdot  \bvtheta_{il} = \bvtheta_{i0}$$ since, $z_{i0} = 1$ and $z_{il} = 0$ for $l > 0$. Let $\bvtheta_{i}$ denote $\bvtheta_{i0}$. Therefore, $
\variance{Y_i \mid \G} = \variance{\bvtheta_{i} \mid \G} $.

From Figure~\ref{table:params}, we have,
$C = (27p)^2 B  \ge  (27p)^2 K \epsilon^{-2} n^{1-2/p}/ \log (n)$. Since the estimator $\bvtheta_i $ uses the \tpest~structure at level 0,
by Lemma~\ref{lem:hssavtp} (part (b)), we have, $\mu  = \expect{X_{ij0} \mid \G} = \abs{f_i}$  and by part (iv) of the same lemma,  $\eta_{ij0}^2 \le (2.7) \hat{F}_2/C $, for each $j \in R_0(i)$. Therefore, by Lemma~\ref{lem:vbvtheta},
\begin{align} \label{eq:var:midG0a}
\variance{\bvtheta_{i} \mid i \in \bar{G}_0, \G} & \le \left(\frac{(0.288)p^2}{k}\right)\abs{f_i}^{2p-2}\eta_{ij0}^2 \notag \\
&  \le
\left(\frac{ (0.288) p^2 \abs{f_i}^{2p-2}}{  (1000)(\log n)}\right)  \left(\frac{2.7 \hat{F}_2}{C}\right)  \notag \\
& \le \left(\frac{ (0.288)p^2 \abs{f_i}^{2p-2}}{(1000) (\log n)}\right) \left( \frac{ (2.7)(1.0005) F_2}{(27) p^2  K \epsilon^{-2} n^{1-2/p}/ (\log (n))} \right)\notag \\
 &  \le \frac{(0.3) \epsilon^2 \abs{f_i}^{2p-2} F_p^{2/p}}{(10)^4 K} \enspace .
\end{align}
where, the last step uses the fact that $F_2 \le F_p^{2/p} n^{1-2/p}$, for $p > 2$ from ~\eqref{eq:fpfacts}, and that $\hat{F}_2 \le (1+ 0.001/(2p))F_2$.

\emph{Case 2: $i \in \lmargin(G_0) \cup_{r=1}^L G_r.$}
If $i \in G_l$, then, $l_d(i) \in \{l,l-1\}$ and if $i \in \bar{G}_r$ then $l-1 \le r \le l+1$. By Lemma~\ref{lem:hssavtp},  $\eta_{ij l_d(i)} \le \abs{f_i}/(15p)$ for $j \in R_l(i)$.  From Lemma~\ref{lem:vbvtheta}, we have,  $$
\variance{\bvtheta_i \mid i \in \bar{G}_r, \G} = \left(\frac{(0.288)p^2 }{k} \right) \abs{f_i}^{2p-2} \eta^2_{ijl_d(i)} \le \frac{\abs{f_i}^{2p}}{(750)k} \enspace . $$ Hence,
\begin{align} \label{eq:varYi}
\variance{ Y_i \mid \G} 
&  = \variance{ \sum_{r=0}^{L} 2^r \bvtheta_i z_{ir}  \mid \G} \notag \\
&= \sum_{r=0}^L 2^{2r} \variance{ \bvtheta_i z_{ir}} + \sum_{\substack{0 \le r,r' \le L \\
r \ne r'}} 2^{r+r'}\covariance{\bvtheta_i z_{ir}}{\bvtheta_i z_{ir'}}\notag \\
& = \sum_{r=0}^L 2^{2r} \variance{ \bvtheta_i z_{ir}}
\end{align}
The last step follows since  $z_{ir} \cdot z_{ir'} = 0$ whenever $r \ne r'$, since $i$ may lie in only one sampled group. 

Simplifying ~\eqref{eq:varYi}, we have,
\begin{align} \label{eq:varYi1}
\variance{\bvtheta_i z_{ir}} &\le   \expect{\bvtheta_i^2 z_{ir}}  = \expect{ \bvtheta_i^2 \mid z_{ir}=1} \prob{ z_{ir} =1 }  
\end{align}
Assuming that $r$ is a level that is consistent with $i$ (otherwise $\prob{z_{ir} = 1\mid \G} = 0$), we have, by Lemma~\ref{lem:hssavtp}  that $\expect{\bvtheta_i \mid \G} \in \abs{f_i}^p(1 \pm \delta)$ where,  $ \eta_{i,j, l_d(i)} \le \abs{f_i}/(15p)$, for $j \in R_l(i)$. Using Lemma~\ref{lem:vbvtheta}, we obtain,
\begin{align}\label{eq:varYi2}
\expect{\bvtheta_i^2\mid z_{ir} = 1, \G}  & = \variance{\bvtheta_i \mid z_{ir} = 1, \G} + \left(\expect{ \bvtheta_i \mid z_{ir} =1, \G}  \right)^2  \notag \\
 &  \le \left(\frac{ (0.288) p^2 }{k}\right)( \abs{f_i}^{2p-2}) \left( \frac{\abs{f_i}^{2p}}{(15p)^2} \right) + \abs{f_i}^{2p} (1 + \delta) \notag \\
 & \le  \frac{\abs{f_i}^{2p}}{(750)k} +  \abs{f_i}^{2p} (1 + \delta)\notag \\
  &  \le  \abs{f_i}^{2p} (1.001)
\end{align}
where, $\delta \le  n^{-2500p}$.

Substituting ~\eqref{eq:varYi2} and  ~\eqref{eq:varYi1} into ~\eqref{eq:varYi}, we have,
\begin{align} \label{eq:varYi3a}
\variance{ Y_i \mid \G}  & \le  \sum_{r=0}^L 2^{2r} \expect{\bvtheta_i^2 z_{ir}\mid \G}  \notag \\
& \le
\abs{f_i}^{2p} (1.001) \sum_{r=0}^L 2^{2r} \prob{i \in \bar{G}_r \mid \G}  \notag  \\
 & \le 2^{l+1} (1.001) \abs{f_i}^{2p} \sum_{r=0}^L 2^r \prob{i \in \bar{G}_r \mid \G} \notag \\
  &\le (1.001) 2^{l+1} \abs{f_i}^{2p} (1+ \delta) \notag \\
 & \le (1.002) 2^{l+1} \abs{f_i}^{2p}
\end{align}
Step 2 uses ~\eqref{eq:varYi2}. Step 3 uses Lemma~\ref{lem:margin} to argue that if $i \in G_l$, then, $\prob{i \in \bar{G}_r \allowbreak  \mid \allowbreak \G}\allowbreak  = 0$ for all $ r > l+1$. Hence, the summation from $ r=0$ to $L$ is equivalent to $r$ ranging over  $ l-1, l$ and $l+1$. So the term $2^{2r} \le 2^{l+1} 2^r$.  The last step again uses Lemma~\ref{lem:margin} to note that $\sum_{r=0}^L 2^r \prob{i \in \bar{G}_r \mid \G} = 1 \pm O(2^ln^{-c})$.
 \hfill
\end{proof}

\subsection{Covariance of $Y_i$ and $Y_j$}

\begin{proof}[Proof of Lemma~\ref{lem:varcross}.] Let $i \ne j$, $i \in G_l$ and $j \in G_m$.
\begin{align}  \label{eq:YiYj}
& \covariance{Y_i}{Y_j \mid \G} = \expect{Y_iY_j \mid \G} - \expect{Y_i \mid \G} \expect{Y_j\mid \G} \notag \\
&  =
\expect{ \sum_{r=0}^L 2^{r} z_{ir} \bvtheta_i \sum_{r'=0}^L 2^{r'} z_{jr'} \bvtheta_{j}\mid \G} -\expect{\sum_{r=0}^L 2^{r} z_{ir} \bvtheta_i  \mid \G }\expect{\sum_{r'=0}^L 2^{r'} z_{jr'} \bvtheta_{j}\mid \G} \notag \\
& =  \sum_{0 \le r,r' \le L} 2^{r+r'}\expect{ \bvtheta_{i}\bvtheta_{j} \mid z_{ir}=1,  z_{jr'}=1,\G  } \prob{z_{ir} = 1, z_{jr'} = 1 \mid \G}  \notag \\
 & ~~~~~~-  2^{r+r'}\expect{\bvtheta_i \mid z_{ir} = 1 \mid  \G}  \expect{\bvtheta_{j}  \mid z_{jr'} = 1,\G} \prob{z_{ir} =1\mid \G} \prob{z_{jr'} = 1 \mid \G}\notag \\
 & = \sum_{0 \le r,r' \le L}   2^{r+r'} \sum_{ \hat{f}_i,\hat{f}_j
 } \E \Bigl[ \bvtheta_{i}\bvtheta_{j} \mid \hat{f}_i, \hat{f}_{j},  z_{ir}=1,  z_{jr'}=1, \G  \Bigr] \notag \\
&  \hspace*{1.3in}\cdot  \prob{\hat{f}_i,\hat{f}_{j} \mid z_{ir}=1,z_{jr' =1},\G} \cdot \prob{z_{ir} = 1, z_{jr'} = 1\mid \G}   \notag \\
 & ~~~~~~~~~~~-  2^{r+r'} \left(\sum_{\hat{f}_i
 }
 \expect{\bvtheta_i \mid \hat{f}_i, z_{ir} = 1,\G} \prob{\hat{f}_i \mid z_{ir} = 1,\G}\prob{z_{ir} =1\mid \G} \right) \notag \\
 &\left .  ~~~~~~~~~~~~ \left( \sum_{\hat{f}_{j}}
  \expect{\bvtheta_{j}  \mid \hat{f}_{j},  z_{jr'} = 1,\G}
  \prob{\hat{f}_{j} \mid z_{jr'=1},\G}  \prob{z_{jr'} = 1\mid \G} \right)\right] \enspace .
 \end{align}

By Lemma~\ref{lem:bvthetacross},
\begin{align} \label{eq:bvtija}
&\expect{ \bvtheta_{i}\bvtheta_{j} \mid \hat{f}_i, \hat{f}_{j},  z_{ir}=1,  z_{jr'}=1,\G  } \notag \\
&= \expect{\bvtheta_i \mid \hat{f}_i, \hat{f}_{j},  z_{ir}=1,  z_{jr'}=1,\G } \cdot   \expect{\bvtheta_{j} \mid \hat{f}_i, \hat{f}_{j},  z_{ir}=1,  z_{jr'}=1,\G  }
\enspace .
\end{align}

By Lemma~\ref{lem:fp:expect1}, for any value of $\hat{f}_i $ satisfying $\G$ and $i \in \bar{G}_r$ such that $r$ is consistent with $\abs{f_i}$,  we have,
$$\expectb{\bvtheta_i \mid \hat{f}_i, z_{ir}=1, E',\G} = \abs{f_i}^p(1 \pm \delta)  $$
 where, $E'$ is any subset (including the empty subset) of the events $\{\hat{f}_j \wedge z_{jr'} = 1\}$ and $\delta= O(n^{-2500p})$.

Substituting in ~\eqref{eq:bvtija} and for $r,r'$  consistent with $\abs{f_i}$ and $\abs{f_j}$ respectively, we have,
$$ \expect{ \bvtheta_{i}\bvtheta_{j} \mid \hat{f}_i, \hat{f}_{j},  z_{ir}=1,  z_{jr'}=1,\G  } = \abs{f_i}^p \abs{f_j}^p (1\pm O(\delta))$$
In a similar manner, it follows that
 $$\expect{\bvtheta_i \mid \hat{f}_i, z_{ir} =1,\G} \expect{\bvtheta_{j}\mid \hat{f}_{j}, z_{jr} = 1,\G} = \abs{f_i}^p\abs{f_{j}}^p (1 \pm O(\delta)) $$
Substituting these into ~\eqref{eq:YiYj}, we have,
 \begin{align} \label{eq:YiYj1}
 &\expect{Y_iY_j\mid \G} - \expect{Y_i \mid \G} \expect{Y_j \mid \G}  \notag \\
 & =\sum_{\substack {0 \le r,r' \le L \\ \text{ $r,r'$ consistent} \\ \text{ with  $i,j$ resp.} }} \biggl[2^{r+r'} \abs{f_i}^p \abs{f_j}^p (1 \pm O(\delta))
  \sum_{\hat{f}_i, \hat{f}_j}\prob{\hat{f}_i,\hat{f}_{j} \mid z_{ir}=1,z_{jr'} =1,\G}
 \cdot  \prob{z_{ir} = 1, z_{jr'} = 1\mid \G}  \notag \\
 & \hspace*{1.0in}-  2^{r+r'} \Bigl( \abs{f_i}^p \abs{f_{j}}^p (1\pm O(\delta)) \Bigr)\Bigl( \sum_{\hat{f}_i} 
  \prob{\hat{f}_i \mid z_{ir} = 1}\prob{z_{ir} =1,\G}\Bigr) \notag \\
  & \hspace*{1.5in} \cdot \Bigl( \sum_{\hat{f}_{j}}  
  \prob{\hat{f}_{j} \mid z_{jr'=1},\G}  \prob{z_{jr'} = 1\mid \G} \Bigr) \biggr]\notag \\
 & = \sum_{\substack {0 \le r,r' \le L \\ \text{ $r,r'$ consistent} \\ \text{ with  $i,j$ resp.} }}  \left[2^{r+r'} \abs{f_i}^p \abs{f_j}^p (1  \pm O(\delta))
\prob{z_{ir} = 1, z_{jr'} = 1\mid \G} \right.\notag \\
& \left.  ~~~~~~~~~~~~~-  2^{r+r'}  \abs{f_i}^p \abs{f_{j}}^p(1 \pm O(\delta)) \prob{z_{ir} =1\mid \G}\prob{z_{jr'} = 1 \mid \G} \right]
\end{align}
since,  each of the summations, namely,  (a) $\sum_{\hat{f}_i, \hat{f}_{j}} \prob{\hat{f}_i,\hat{f}_{j} \mid z_{ir}=1,z_{jr'} =1,\G} $, (b)
$\sum_{\hat{f}_i} \probb{\hat{f}_i \mid z_{ir} = 1,\G} $ and (c)   $\sum_{\hat{f}_{j}}  \prob{\hat{f}_{j} \mid z_{jr'=1},\G} $ are 1 respectively.

Further,
\begin{align*}
\sum_{\substack {0 \le r,r' \le L \\ \text{ $r,r'$ consistent} \\ \text{ with  $i,j$ resp.} }}  2^{r+r'}  \prob{z_{ir} = 1, z_{jr'} = 1\mid \G}  &  = \sum_{\substack {0 \le r,r' \le L  }}  2^{r+r'}  \prob{z_{ir} = 1, z_{jr'} = 1\mid \G} \\
& = 1 \pm O(2^l + 2^m)n^{-c}, \text{ by Lemma~\ref{lem:hss:ij}}
\end{align*}
since,  if  levels $r$ and  $r'$ are not consistent respectively with $\abs{f_i}$ and $\abs{f_j}$ respectively then $ \probb{z_{ir} =1, z_{jr'} = 1\mid \G} = 0$. The same applies to summations over $r$ of $2^r \prob{z_{ir}=1 \mid \G}$, etc..

By Lemma~\ref{lem:margin} (part 4),  $$\sum_{r=0}^L  2^r   \prob{z_{ir} = 1 \mid \G} = \sum_{\substack{\text{ $r$ consistent} \\ \text{ with $i$}}}  2^r  \cdot \prob{z_{ir} = 1 \mid \G}  = 1 \pm 2^{l} n^{-c} \enspace . $$
Similarly, $ \sum_{\text{ $r'$ consistent with $j$}} 2^{r'} \cdot \prob{j \in \bar{G}_{r'}} \in 1 \pm 2^m n^{-c}$. Combining and taking absolute values of both sides in ~\eqref{eq:YiYj1}  and replacing equality by $\le $,  we have,
\begin{align*} 
&\card{\covariance{Y_i}{Y_j \mid \G} } \\
& \le  \abs{f_i}^p \abs{f_j}^p \left( (1 \pm O(\delta)) (1\pm O(2^l + 2^m)n^{-c})
   - (1\pm O(\delta))(1 \pm O(2^l n^{-c}))(1\pm O(2^m n^{-c}))\right) \\
 & = \abs{f_i}^p \abs{f_j}^p O(\delta + (2^l + 2^m)n^{-c}) \\ 
 &= \abs{f_i}^p \abs{f_j}^p  \cdot  O(n^{-c+1}) ~~~
 \end{align*}
\eat{
Combining,
\begin{align} \label{eq:YiYj3}  & \sum_{0 \le r,r' \le L}  2^{r+r'} \left(\prob{z_{ir} = 1, z_{jr'} = 1\mid \G} - \prob{z_{ir} =1 \mid \G}\prob{z_{jr'} = 1 \mid \G}\right)    \notag \\
& = \sum_{0 \le r,r' \le L}  2^{r+r'} \cdot  \prob{z_{ir} = 1, z_{jr'} = 1} - \sum_{r} 2^r \cdot  \prob{z_{ir} =1} \sum_{r'} 2^{r'}\cdot \prob{z_{jr'}  = 1} \notag \\
& \in O(n^{-c} 2^{l+m})  = n^{-c+2}\enspace .
\end{align}
Substituting in ~\eqref{eq:YiYj2}, we have,
\begin{align*}
\card{\expect{Y_iY_{j} \mid \G} - \expect{Y_iY_j}\mid \G} \le \abs{f_i}^p \abs{f_j}^p \left( n^{-c+2} + O(n^{-10})\right)\enspace . ~~~~~~~~~~~~~~~~~~~~~~~~~~~~~~~~
\end{align*}
}
\end{proof}

\subsection{Variance of $\hat{F}_p$ estimator}
\begin{proof} [Proof of Lemma~\ref{lem:varFp}.] Let $K = 425 $, so that $B = Kn^{1-2/p} \epsilon^{-2}/\min(\log (n),\epsilon^{4/p-2}) \ge Kn^{1-2/p} \epsilon^{-4/p}$.
We have,
\begin{align}\label{eq:varFpa}
\varianceb{\hat{F}_p} & = \variance{\sum_{i \in [n]} Y_i\mid \G} \notag \\
 & \le     \sum_{i \in [n]} \variance{Y_i \mid \G} + \sum_{i \ne  j} \card{ \covariance{Y_i }{Y_j\mid \G}} \notag \\
& \le \sum_{i \in \midreg(G_0)} \variance{Y_i \mid \G}+  \sum_{i \in [n], i \not\in  \midreg(G_0)} \variance{Y_i \mid \G} +  F_p^2 \cdot O(n^{-c+1}) \notag \\
& \le \sum_{i \in \midreg(G_0)} \cfrac{  (0.3)\epsilon^2 \abs{f_i}^{2p-2}F_p^{2/p}}{(10)^4K} +
\left(\sum_{l=0}^{L} \sum_{i \in G_l, i \not\in \midreg(G_0)} 2^{l+1} (1.002) \abs{f_i}^{2p}\right)  + O(n^{-c+2})F_p^2
\end{align}
Step 3 follows from  Lemma~\ref{lem:varcross}, since,  $$\sum_{i \ne j}  \card{\covariance{Y_i}{Y_j \mid \G}} \le \sum_{i \ne j} O(n^{-c+1})\abs{f_i}^p \abs{f_j}^p  \le O(n^{-c+1} F_p^2) \enspace . $$ Step 4 uses Lemma~\ref{lem:varYi}.
 Since, $\hat{F}_2 \le F_2(1+0.01/(2p))$, $\hat{F}_2^{p/2} \le (1.01) F_2$.  Also, $F_2 \le F_p^{2/p} n^{1-2/p}$. Therefore,
\begin{equation} \label{eq:F2byBp} \left(\hat{F}_2/{B} \right)^{p/2}\le  \left(\frac{ (1.01)
F_2}{K n^{1-2/p} \epsilon^{-4/p}} \right)^{p/2}
\le  (1.01/K)^{p/2} \epsilon^2 F_p \enspace .
\end{equation}

For any set $S \subset [n]$ and $q \ge 0$,  let $F_q(S)$ denote $\sum_{i \in S} \abs{f_i}^q$.

Let $i \in \lmargin(G_0) \cup_{l=1}^{L} G_l$. 
By definitions of the parameters,
\begin{align*}
     \abs{f_i}  & \le T_{l-1} \le \left( \cfrac{F_2\left(1+ \frac{0.01}{2p}\right)}{(2\alpha)^{l-1}B} \right)^{1/2} & \text{ if $i \in G_l$ and $l \ge 1$, }  \\
\abs{f_i} & \le T_0(1+\epsbar) \le \left( \cfrac{F_2\left(1+ \frac{0.01}{2p}\right)}{B}\right)^{1/2} \left(1 + \frac{1}{27p}\right) & \text{ if $i \in \lmargin(G_0)$ \enspace . }
\end{align*}

We consider the first summation term of Eqn.~\eqref{eq:varFpa}, that is,
\begin{multline} \label{eq:varfp1}
 \sum_{i \in \midreg(G_0)} \cfrac{  (0.3)\epsilon^2 \abs{f_i}^{2p-2}F_p^{2/p}}{(10)^4K}  = \left(\frac{(0.3)\epsilon^2 F_p^{2/p}}{((10)^4)(425)} \right)F_{2p-2}(\midreg(G_0))  \\  \le \left(\frac{(0.3)\epsilon^2 F_p^{2/p}}{((10)^4)(425)}\right) F_p^{2-2/p} (\midreg(G_0))  \le \frac{(0.3) \epsilon^2 F_p^2}{((10)^4)(425)}
\end{multline}

We now consider the second summation term of Eqn. ~\eqref{eq:varFpa}, that is,
\begin{align} \label{eq:varFpaa}
 &\sum_{l=0}^{L} \sum_{i \in G_l, i \not\in \midreg(G_0)} 2^{l+1} (1.002) \abs{f_i}^{2p} \notag \\
 &  \le \sum_{i \in \lmargin(G_0)} (2)(1.002)(T_0(1+\epsbar))^p \abs{f_i}^p  + \sum_{l=1}^L \sum_{i \in G_l} 2^{l+1} T_{l-1}^p \abs{f_i}^p
 \end{align}
We will consider the two summations in Eqn.~\eqref{eq:varFpaa} separately.
\begin{align} \label{eq:varFpaaa}
& \sum_{i \in \lmargin(G_0)}  (2)(1.002)(T_0(1+\epsbar))^p \abs{f_i}^p  \notag \\
& \le  (2)(1.002)\left( \frac{F_2}{B}\right)^{p/2}(1.01)e^{1/27} F_p(\lmargin(G_0)) \notag \\
& \le  (2.11) \left( \frac{ n^{1-2/p} F_p^{2/p}}{ (425) n^{1-2/p} \epsilon^{-4/p}} \right)^{p/2} F_p(\lmargin(G_0)) \notag  \\
& = \left(\frac{1}{200} \right) \epsilon^2F_p \cdot F_p(\lmargin(G_0))
 \end{align}
 We now consider the second summation of Eqn.~\eqref{eq:varFpaa}.
\begin{align} \label{eq:varFpab}
& \sum_{l=1}^L \sum_{i \in G_l} 2^{l+1} T_{l-1}^p \abs{f_i}^p  \notag \\
& \le (2)(1.01) \sum_{l=1}^L 2^l\left( \frac{F_2}{(2\alpha)^{l-1} B} \right)^{p/2}  F_p(G_l)\notag  \\
& = (4)(1.01) \sum_{l=1}^L  2^l \left( \frac{ F_p^{2/p} n^{1-2/p}}{(425)(2\alpha)^{l-1} n^{1-2/p} \epsilon^{-4/p} } \right)^{p/2} F_p(G_l) \notag \\
& = \left(\frac{(4.04)}{(425)^{p/2}} \right) \epsilon^2 F_p \sum_{l=1}^L  2^l (2\alpha)^{-(l-1)(p/2)} F_p(G_l)
\end{align}
Further,
\begin{align} \label{eq:varFpc}
2^{l}
(2\alpha)^{(l-1)(-p/2)} &  = (2\alpha)^{p/2} 2^{l} (2\alpha)^{-lp/2} =  (2\alpha)^{p/2} 2^{l \left(1 - (p/2)\log_2 (2\alpha)\right)}  \enspace .
\end{align}
Let $ \gamma = 1-\alpha = (1-2/p)\nu$, where, $\nu = 0.01$. Therefore,
\begin{align} \label{eq:logalpha}\log_{2} (2\alpha)  &= 1 + \log_2 (\alpha) = 1 + \frac{\ln (\alpha)}{\ln 2}
= 1 + \frac{ \ln (1 - \gamma)}{\ln 2} \notag \\
 &\ge 1 - \frac{2\gamma}{\ln 2} = 1 - \frac{2(1-2/p)\nu}{\ln 2} \ge  1- (1-2/p)(3\nu)
\end{align}
Using eqn.~\eqref{eq:logalpha}, we can simplify the term $1-(p/2)\log_2 (2\alpha) $ as
\begin{align*}
1-(p/2)\log_2 (2\alpha) &  \le 1 - (p/2)\left( 1 - (1-2/p)(3\nu) \right)
= -(p/2-1)(1 - 3\nu)  = -(p/2-1)(0.97) < 0
\end{align*}
and is  a constant.

Substituting this into Eqn.~\eqref{eq:varFpc} and then into ~\eqref{eq:varFpab}, we have,
\begin{align} \label{eq:varFpd}
& \sum_{l=1}^L \sum_{i \in G_l} 2^{l+1} T_{l-1}^p \abs{f_i}^p  \notag \\
& \le \left(\frac{(4.04) }{(425)^{p/2}}\right) \epsilon^2 F_p \sum_{l=1}^L  2^l (2\alpha)^{-(l-1)(p/2)} F_p(G_l) \notag \\
& \le (4.04)\left(\frac{2\alpha }{425}\right)^{p/2} \epsilon^2 F_p \sum_{l=1}^L \left(2^{- (p/2-1)(1-3\nu)}\right)^l F_p(G_l) \notag \\
& \le \left( \frac{\epsilon^2}{53}\right) F_p  \sum_{l=1}^L F_p(G_l) \notag \\
& \le \left( \frac{\epsilon^2}{53}\right) F_p \cdot F_p( \cup_{l=1}^L G_l) \enspace .
\end{align}

Adding Eqns.~\eqref{eq:varFpaaa} and ~\eqref{eq:varFpd}, Eqn.~\eqref{eq:varFpaa} becomes
\begin{align} \label{eq:varFpe}
& \sum_{l=0}^{L} \sum_{i \in G_l, i \not\in \midreg(G_0)} 2^{l+1} (1.002) \abs{f_i}^{2p}  \notag \\
&  \le \left(\frac{1}{200} \right) \epsilon^2F_p \cdot F_p(\lmargin(G_0))  + \left( \frac{\epsilon^2}{53}\right) F_p \cdot F_p( \cup_{l=1}^L G_l) \notag \\
& \le \frac{\epsilon^2 F_p^2}{53}
\end{align}
Substituting  Eqn.~\eqref{eq:varfp1} and Eqn.~\eqref{eq:varFpe} in Eqn.~\eqref{eq:varFpa}, we have, for $n $ sufficiently large, that
\begin{align*}
\variance{\hat{F}_p} \le \frac{\epsilon^2 F_p^2}{50} ~~~~~
\end{align*}
\end{proof}

\subsection{Putting things together}

\begin{proof} [ Proof of Theorem ~\ref{thm:fp}.] Consider the \ghss~algorithm  using the parameters of Figure~\ref{table:params}.  By Lemma~\ref{lem:hss:G}, $\G$ holds except with probability $n^{-c}$, where, $c > 23$.
From Lemma~\ref{lem:varFp}, $\varianceb{\hat{F}_p} \le \epsilon^2 F_p^2/50$.  Using Chebychev's inequality,
 \begin{equation} \label{eq:hatfp}\prob{  \card{\hat{F}_p - \expectb{\hat{F}_p}} \le (\epsilon/2) F_p  \mid \G} \ge 1- \frac{\variance{F_p}}{((\epsilon/2) F_p)^2}  = 1-\frac{4}{50}  \enspace .
 \end{equation}
By Lemma~\ref{lem:Y}, $  \card{\expectb{\hat{F}_p\mid \G} - F_p} \le F_p (2^{L+1} n^{-c})$. 
Combining with Eqn.~\eqref{eq:hatfp}, by triangle inequality, we have,
$$ \prob{ \card{\hat{F}_p - F_p} \le \left((\epsilon/2) + 2^{L+1}n^{-c})\right)F_p \mid \G} \ge 1 - \frac{4}{50} $$
which implies that 
$$ \prob{ \card{\hat{F}_p - F_p} \le \epsilon F_p \mid \G} \le \frac{46}{50} \enspace . $$
since, $2^L \ll n$. 

\noindent
Since $\prob{\G} \ge 1-n^{-c}$,
unconditioning w.r.t. $\G$, we have,
$$ \prob{ \card{\hat{F}_p - F_p} \le \epsilon F_p } \ge \frac{ 46}{50} \left( 1 - O(n^{-c}) \right) \ge 0.9 \enspace . $$

The space required  at level 0 is $C_0s = Cs$, at level $l$ it is $C_l s$ and at level $L$ it is $16 C_L s$. Here, $s = 8k = 8(1000) \log (n) = O(\log n)$.  Further $C_l = 4(\alpha)^l C$. Thus, the total space is of the order of
\begin{align*}
&   \sum_{l=0}^L C_l s= (\log (n)) \sum_{l=0}^L  \alpha^l C
\le  \frac{C \log (n) }{1-\alpha} = \frac{C \log (n)}{(1-2/p)\nu}=  O\left(\frac{ n^{1-2/p} \log (n) \epsilon^{-2}}{\min(\log (n), \epsilon^{4/p-2})}\right)\enspace  .
\end{align*}
The last expression for space may also be written as $ O\bigl(  n^{1-2/p} \epsilon^{-2} + n^{1-2/p} \epsilon^{-4/p} \allowbreak  \log (n) \bigr)$.

The time taken to process each stream update consists of applying  the $L$ hash functions $g_1, \ldots, g_L$ to an item $i$. Each hash function is $O(\log n)$-wise independent and requires time $O(\log n)$ to evaluate it at a point. The time to evaluate $L = \log_{2\alpha} (n/C)$ functions is $O(\log^2 n)$. Additionally, for each level, the hash values for $i$ have to be computed for each of the $s$ hash functions of the $\hh_l$ and $\tpest_l$ structures. These hash functions are $O(1)$-wise independent, and they can collectively be computed in $O(Ls) = O(\log^2 n)$ time.  This proves the statement of the theorem. \hfill

\end{proof}

\eat{

 Continuing with ~\eqref{eq:varFpa}, except for  the small term $n^{-c+2}F_p^2$ which we add later,
\begin{align} \label{eq:varFpb}
\varianceb{\hat{F}_p} & \le  \frac{ \epsilon^2 F_p^{2/p} F_{2p-2}(\midreg(G_0)) }{(8000) K} + \sum_{i \in \lmargin(G_0)} (2)(1.01)(T_0(1+\epsbar))^p \abs{f_i}^p  + \sum_{l=1}^L \sum_{i \in G_l} 2^{l+1} T_{l-1}^p \abs{f_i}^p \notag \\
& \le \frac{ \epsilon^2 F_p^{2/p} F_p^{2-2/p}}{(8000)K} + (2)(1.01)e^{1/25}\Bigl( \frac{\hat{F}_2}{B} \Bigr)^{p/2}F_p(\lmargin(G_0)) + (2)(1.01) \sum_{l =1}^L  2^{l} \left( \frac{ \hat{F}_2}{ (2\alpha)^{l-1} B} \right)^{p/2} F_p(G_l) \notag \\
& = \frac{ \epsilon^2 F_p^2}{(8000)K} + \frac{2.12\epsilon^2 F_p}{K} \left( F_p(\lmargin(G_0)  + \sum_{l=1}^L 2^{l}
(2\alpha)^{(l-1)(-p/2)} F_p(G_l)\right)
\end{align}

Now,
\begin{align*} 
1-\frac{p}{2}\left(1+\frac{\ln (\alpha)}{\ln(2)}\right) \le 1 - \frac{p}{2}\left (1 -\frac{2\nu(1-2/p)}{\ln 2} \right)
= 1 -\frac{p}{2} + \frac{2\nu}{\ln 2} (p/2-1) = (p/2-1)\left(-1 + \frac{2\nu}{\ln 2}) \right) = \beta \text{ (say)}
\end{align*}
where $\beta$ is a negative constant, since, $p > 2$ and $\nu = 0.01$. Substituting in ~\eqref{eq:varFpc}, we obtain that  $2^{l+1}
(2\alpha)^{(l-1)(-p/2)} \le 2 (2\alpha)^{p/2} 2^{l\beta} \le 2 (2\alpha)^{p/2}$. Substituting this into ~\eqref{eq:varFpb}, we have,
\begin{align} \label{eq:varFpe}
\varianceb{\hat{F}_p} &  \le  \frac{ \epsilon^2 F_p^2}{(8000) K} + \frac{2.12 (2\alpha)^{p/2}\epsilon^2 F_p}{K} \left( F_p(\lmargin(G_0)  + \sum_{l=1}^{L}  F_p(G_l)\right)  +n^{-c+2} F_p^2 \notag  \\
 &= \frac{ \epsilon^2 F_p^2}{(8000) K} + \frac{2.12 (2\alpha)^{p/2}\epsilon^2 F_p^2}{K} + n^{-c+2} F_p^2  \le \frac{\epsilon^2F_p^2}{200} \enspace .
\end{align}
by using the value of $K = 425 (2\alpha)^{p/2}$. }

 \eat{
 \begin{lemma} \label{lem:Y}
 $\expectb{\hat{F}_p} =  F_p (1\pm O(n^{-c+1})$.
 \end{lemma}
 \begin{proof} Suppose $i \in G_l$. Then,  by  linearity of expectation,
\begin{align*}
\expect{Y_i \mid \G} & = \sum_{l'=0}^{L}2^{l'} \expect{ z_{il'} \bvtheta_i \mid \G}= \sum_{l'=0}^{L}2^{l'} \expect{\bvtheta_i \mid i \in \bar{G}_{l'}, \G} \prob{i \in \bar{G}_{l'} \mid \G} = \sum_{l'=0}^{L}2^{l'}  \abs{f_i}^p (1 \pm n^{-80}) \prob{i \in \bar{G}_{l'} \mid \G}\\ &   = \abs{f_i}^p (1 \pm n^{-80}) (1 \pm O(2^l n^{-c})) = \abs{f_i}^p (1 \pm O(2^ln^{-c}))
\end{align*}
Step 3 follows from  Lemma~\ref{lem:fp:expect1} and step 4 follows from Lemma~\ref{lem:margin}.
Hence, \\
$$\expectb{ \hat{F}_p} = \expectbb{ \sum_{i\in [n]} Y_i} = \sum_{l=0}^L \sum_{i \in G_l}  \expect{Y_i} =  \sum_{l=0}^L \sum_{i \in G_l} \abs{f_i}^p(1 \pm O(2^{l(i)}  n^{-c})) = F_p(1\pm O(2^L n^{-c}))  \enspace . $$
Let $ \gamma = 1-\alpha= (1-2/p)\nu$, $\nu = 0.01$.  Let $C = K n^{1-2/p}$, for some value of $K$ and let $K' = K^{-1(1+2\gamma/\ln2 )} $.  We have,
\begin{multline*}2^L  = 2^{\lceil \log_{2\alpha} (n/C) \rceil } \le 2\cdot  2^{\log_2 (n/C)/\log_{2}(2\alpha)} = 2\cdot (n/C)^{1/(1+ \ln \alpha/\ln 2)} \le  2 (n/C)^{1 + 2\gamma/\ln 2} \\ \le 2 (K'n^{(2/p)(1 + (1-2/p)\nu/\ln2)} = 2K' n^{1- (1-2/p)(1-2\nu/p\ln (2))} = n^{1-a}
\end{multline*}for $p > 2$ and $\nu < \ln 2$, where $a > 0$ and $a = \Theta(1)$ . Hence, $O(2^L n^{-c}) = O(n^{-(c-1)})$, proving the lemma. \hfill
 \end{proof}
 }

\eat{
In the following proofs, we will use the notion that the  \emph{sample group of an item is  consistent with the  frequency of the item} to mean that if $i \in G_l$ and $i$ is sampled into $\bar{G}_r$, then, $l$ and $r$ are related as given by Lemma~\ref{lem:margin}, conditional on  \G.  (For e.g., if $i \in \lmargin(G_l)$, then, $r \in \{l,l+1\}$, if $i \in \midreg(G_l)$, then, $r =l$, and if $i \in \rmargin(G_l)$, then, $r \in \{l-1,l\}$).

We will use the following facts.
\begin{equation} \label{eq:fpfacts}
F_2 \le n^{1-2/p} F_p^{2/p} \text{ and }  F_{2p-2} \le F_p^{2-2/p}, ~~~\text{ for $p \ge 2$.}
\end{equation}

\begin{lemma} \label{lem:varYi} Let $K = (425)(2\alpha)^{p/2}$.
 \begin{align*}
 \variance{Y_i \mid \G} \le \begin{cases} \cfrac{  \epsilon^2 \abs{f_i}^{2p-2}F_p^{2/p}}{(25)^3K}& \text{    if $i \in \midreg(G_0)$} \\
 2^{l+1} (1.002) \abs{f_i}^{2p} & \text{ $(i \in G_l$ for some $l \ge 1)$ or $(i \in \lmargin(G_0)$. }\
 \end{cases}
 \end{align*}
\end{lemma}

\begin{proof} For this proof, assume that $\G$ holds. \\
\emph{Case 1:} $i \in \midreg(G_0)$. Hence,  $i \in \bar{G}_0$ with probability 1 and $l_d(i) = 0$. Therefore,  $Y_i = \sum_{l=0}^{L-1} 2^l z_{il} \bvtheta_{il} = \bvtheta_{i}$, since, $z_{i0} = 1$ and $z_{il} = 0$ for $l > 0$. Therefore, $
\variance{Y_i \mid \G} = \variance{\bvtheta_{i} \mid \G} $. From Figure~\ref{table:params}, we have,
$C = (25p)^2 B = (25p)^2 K \epsilon^{-2} n^{1-2/p}/\min(\log (n), \epsilon^{4/p}) \ge (25p)^2 K \epsilon^{-2} n^{1-2/p}/ \log (n)$.
 From the \tpest~structure at level 0  and using Lemma~\ref{lem:vbvtheta}, we have, $\mu  = \expect{X_{ij0} \mid \G} = \abs{f_i}$ and from Lemma~\ref{lem:fp:expect1}, $\eta_{i0}^2 \le 3 \ftwores{C}/{C}$. Therefore,
\begin{align} \label{eq:var:midG0a}
\variance{\bvtheta_{i} \mid \G} & \le \frac{3}{(5k)} p^2 \abs{f_i}^{2p-2}\eta_{i0}^2 \le
\frac{ (3) p^2 \abs{f_i}^{2p-2}}{5 (16 \log (n))} \cdot \frac{ (3) F_2}{C}  \le \eat{\frac{ 3p^2 \abs{f_i}^{2p-2}F_2}{5 (16 \log n) (25p)^2  K n^{1-2/p} \epsilon^{-2}/\log(n)}}
 \frac{  \epsilon^2 \abs{f_i}^{2p-2}F_2}{(25)^3K n^{1-2/p}}~ \le \frac{\epsilon^2 \abs{f_i}^{2p-2} F_p^{2/p}}{(25)^3 K}
\end{align}
where, the last step uses the fact that $F_2 \le F_p^{2/p} n^{1-2/p}$, for $p > 2$. \\
\noindent
\emph{Case 2: $i \in \lmargin(G_0) \cup_{r=1}^L G_r.$}
If $i \in G_l$, then, $l_d(i) \le l $ and if $i \in \bar{G}_r$ then $l-1 \le r \le l+1$. By Lemma~\ref{lem:hssavtp},  $\eta_{i, l_d(i)} \le \abs{f_i}/(13p)$, \[ \begin{array}{l}
\variance{\bvtheta_i \mid i \in \bar{G}_r, \G} = 3p^2 \abs{f_i}^{2p-2} \eta_{i,l_d(i)}^2/(5k) \le \abs{f_i}^{2p}/(20k)
\enspace . \end{array} \]
Then
\begin{align} \label{eq:varYi}
&\variance{ Y_i \mid \G} = \variance{ \sum_{r=0}^{L} 2^r \bvtheta_i z_{ir}  \mid \G}  = \expect{ \Bigl( \sum_{r=0}^L 2^r \bvtheta_i z_{ir}   \Bigr)^2 \mid \G} -
\left( \expect{ \sum_r 2^r \bvtheta_i z_{ir}   \mid \G} \right)^2  \notag \\
& =  \sum_{r=0}^{L} 2^{2r} \expect{\bvtheta_i^2 z_{ir}\mid \G} + \sum_{r\ne r', 0 \le r,r' \le L}
2^{r+r'}\expect{\bvtheta_i^2 z_{ir} z_{ir'} \mid \G} \notag \\
& ~~~~ -  \sum_{r=0}^{L}  2^{2r} \left( \expect{ \bvtheta_i z_{ir} \mid \G} \right)^2  -
  \sum_{r\ne r', 0 \le r,r' \le L } 2^{r+r'} \expect{ \bvtheta_i z_{ir} \mid \G} \expect{ \bvtheta_i z_{ir'} \mid \G}  \notag \\
& = \sum_{r=0}^L 2^{2r} \expect{\bvtheta_i^2 z_{ir}\mid \G} - \sum_r 2^{2r} \left( \expect{ \bvtheta_i z_{ir} \mid \G} \right)^2 \le  \sum_{r=0}^L 2^{2r} \expect{\bvtheta_i^2 z_{ir}\mid \G}
\end{align}
The second-last step follows from the fact that, (i)  $z_{ir}z_{ir'} = 0$ whenever $r \ne r'$, since by construction, $i$ may lie in only one sampled group. We now simplify ~\eqref{eq:varYi}.
\begin{align} \label{eq:varYi1}
\expect{\bvtheta_i^2 z_{ir}\mid \G} & = \expect{ \bvtheta_i^2 \mid z_{ir}=1,\G} \prob{ z_{ir} = 1 \mid \G}
\end{align}
Assuming that $r$ is a level that is consistent with $i$ (otherwise $\prob{z_{ir} = 1\mid \G} = 0$), we have, by Lemma~\ref{lem:hssavtp}  that $\expect{\bvtheta_i \mid \G} \in \abs{f_i}^p(1 \pm O(n^{-80}))$, and $ \eta_{i, l_d(i)} \le \abs{f_i}/(13p)$. Using Lemma~\ref{lem:vbvtheta}, we obtain,
\begin{align}\label{eq:varYi2}
\expect{\bvtheta_i^2\mid z_{ir} = 1, \G} &  = \variance{\bvtheta_i \mid z_{ir} = 1, \G} + \left(\expect{ \bvtheta_i \mid z_{ir} =1, \G}  \right)^2  \le ( 3p^2 \abs{f_i}^{2p-2} \eta_{i,l_d(i)}^2)/(5k) + \abs{f_i}^{2p} (1 + O(n^{-10}))  \notag  \\
& \le \frac{ \abs{f_i}^{2p}}{280(16) (\log n)} + \abs{f_i}^{2p} (1 + O(n^{-10}))  \le  \abs{f_i}^{2p-2} (1.001) \enspace .
\end{align}
Substituting ~\eqref{eq:varYi2} and  ~\eqref{eq:varYi1} into ~\eqref{eq:varYi}, we have,
\begin{align} \label{eq:varYi3a}
\variance{ Y_i \mid \G}  & \le  \sum_{r=0}^L 2^{2r} \expect{\bvtheta_i^2 z_{ir}\mid \G} \le
(1.001) \abs{f_i}^{2p} \sum_{r=0}^L 2^{2r} \prob{i \in \bar{G}_r \mid \G} \notag  \\
 & \le 2^{l+1} (1.001) \abs{f_i}^{2p} \sum_{r=0}^L 2^r \prob{i \in \bar{G}_r \mid \G}  \le (1.001) 2^{l+1} \abs{f_i}^{2p} (1+ n^{-c})  \le (1.002) 2^{l+1} \abs{f_i}^{2p}\enspace .
\end{align}
Step 2 uses ~\eqref{eq:varYi2}, step 3 uses Lemma~\ref{lem:margin} to argue that if $i \in G_l$, then, $\prob{i \in \bar{G}_r \mid \G} = 0$ for all $ r > l+1$. Hence, the summation from $ r=0$ to $L$ is equivalent to $r$ ranging over  $ l-1, l$ and $l+1$. Hence the term $2^{2r} \le 2^{l+1} 2^r$.  The last step again uses Lemma~\ref{lem:margin} to note that $\sum_{r=0}^L 2^r \prob{i \in \bar{G}_r \mid \G} = 1 \pm O(2^ln^{-c})$.
 \hfill
\end{proof}

Lemma~\ref{lem:varcross} shows that the contribution of the cross terms of the form $\expect{Y_iY_j \mid \G}$, for $i \ne j$ is very small. This lemma essentially uses the results of Lemmas~\ref{lem:hssj} and ~\ref{lem:hsscond}. It is proved in the Appendix.
\begin{lemma} \label{lem:varcross} Let $i \ne j$. Then,
$ \card{\expect{ Y_i Y_{j} \mid \G} - \expect{Y_i \mid \G}\expect{Y_{j} \mid \G}} \le    o(n^{-2})\abs{f_i}^p \abs{f_j}^p$.
\end{lemma}
} 
\eat{
\begin{lemma}\label{lem:varFp}
$\variance{\hat{F}_p \mid \G} \le  \epsilon^2 F_p^2/{100} \enspace . $
\end{lemma}
\begin{proof}
We have,
\begin{align}\label{eq:varFpa}
\variance{\hat{F}_p} & = \variance{\sum_{i \in [n]} Y_i\mid \G}  = \sum_{i \in [n]} \variance{Y_i \mid \G} + \left(\sum_{i \ne  j}  \expect{Y_iY_j \mid \G} - \expect{ Y_i \mid \G} \expect{Y_j\mid \G}\right) \notag \\
& \le     \sum_{i \in [n]} \variance{Y_i \mid \G} + \left(\sum_{i \ne  j} \card{ \expect{Y_iY_j \mid \G} - \expect{ Y_i \mid \G} \expect{Y_j\mid \G}}\right) \notag \\
& \le \sum_{i \in \midreg(G_0)} \variance{Y_i \mid \G}+  \sum_{i \in [n], i \not\in  \midreg(G_0)} \variance{Y_i \mid \G} +  F_p^2 \cdot O(n^{-c+2}) \notag \\
& \le \sum_{i \in \midreg(G_0)} \cfrac{  \epsilon^2 \abs{f_i}^{2p-2}F_p^{2/p}}{(25)^3K} +
\left(\sum_{l=0}^{L} \sum_{i \in G_l, i \not\in \midreg(G_0)} 2^{l+1} (1.002) \abs{f_i}^{2p}\right)  + n^{-c+2}F_p^2
\end{align}

Step 4 follows from  Lemma~\ref{lem:varcross}, since,  $ \sum_{i \ne j}  \bigl\lvert \expect{Y_iY_j \mid \G} - \expect{ Y_i \mid \G} \expect{Y_j\mid \G}\bigr\rvert \le \sum_{i \ne j} O(n^{-c})\abs{f_i}^p \abs{f_j}^p  \le O(n^{-c} F_p^2)$. Step 5 uses Lemma~\ref{lem:varYi}. \\
Let $K' = 425 (2\alpha)^{p/2}$, so that $B = K' n^{1-2/p} \epsilon^{-2}/\min(\log (n),\epsilon^{4/p-2})$. Then,
\begin{equation} \label{eq:F2byBp} \left(\hat{F}_2/{B} \right)^{p/2}\le  \left(\frac{ (1 +0.001/(2p))
F_2}{K' n^{1-2/p} \epsilon^{-4/p}} \right)^{p/2}
\le  (1.001) \epsilon^2 F_p/{K'} \enspace .
\end{equation}
since, $\hat{F}_2 \le (1+0.001/p)$.

For any set $S \subset [n]$ and $q \ge 0$,  let $F_q(S)$ denote $\sum_{i \in S} \abs{f_i}^q$. For $i \in G_l$, for $l \ge 1$, $\abs{f_i} \le T_{l-1} = (F_2/((2\alpha)^{l-1})B)^{1/2}$. For $i \in \lmargin(G_0)$, $ \abs{f_i} \le T_0(1+\epsbar) = (F_2/B)^{1/2}(1+1/(25p))$, and $(1+\epsbar)^p \le e^{p/\epsbar} = e^{1/25}$.

 Continuing with ~\eqref{eq:varFpa}, except for  the small term $n^{-c+2}F_p^2$ which we add later, we obtain,
\begin{align} \label{eq:varFpb}
\varianceb{\hat{F}_p} & \le  \frac{ \epsilon^2 F_p^{2/p}}{(25)^3 K} F_{2p-2}(\midreg(G_0)) + \sum_{i \in \lmargin(G_0)} (2.004)(T_0(1+\epsbar))^p \abs{f_i}^p  + \sum_{l=1}^L \sum_{i \in G_l} 2^{l+1} T_{l-1}^p \abs{f_i}^p \notag \\
& \le \frac{ \epsilon^2 F_p^{2/p}}{(25)^3 K} F_p^{2-2/p} + (2.004)e^{1/25}\left( \frac{\hat{F}_2}{B} \right)^{p/2}F_p(\lmargin(G_0)) + (2.004) \sum_{l =1}^L  2^{l+1} \left( \frac{ \hat{F}_2}{ (2\alpha)^{l-1} B} \right)^{p/2} F_p(G_l) \notag \\
& = \frac{ \epsilon^2 F_p^2}{(25)^3 K} + \frac{2.12\epsilon^2 F_p}{K} \left( F_p(\lmargin(G_0)  + \sum_{l=1}^L 2^{l+1}
(2\alpha)^{(l-1)(-p/2)} F_p(G_l)\right)
\end{align}
Let $ \gamma = 1-\alpha = (1-2/p)\nu$, where, $\nu = 0.01$. Therefore, $ \ln \alpha \ge -2\gamma = -2(1-2/p)\nu$.
\begin{align} \label{eq:varFpc}
2^{l+1}
(2\alpha)^{(l-1)(-p/2)} =   2 (2\alpha)^{p/2} 2^{l -(lp/2)\log_2 (2\alpha)}
=  2 (2\alpha)^{p/2} 2^{l(1-(p/2)(1 + \ln (\alpha)/\ln (2)))}  \enspace .
\end{align}
Now,
\begin{align} \label{eq:varFpd}
1-\frac{p}{2}\left(1+\frac{\ln (\alpha)}{\ln(2)}\right) \le 1 - \frac{p}{2}\left (1 -\frac{2\nu(1-2/p)}{\ln 2} \right)
= 1 -\frac{p}{2} + \frac{2\nu}{\ln 2} (p/2-1) = (p/2-1)\left(-1 + \frac{2\nu}{\ln 2}) \right) = \beta \text{ (say)}
\end{align}
where $\beta$ is a negative constant, since, $p > 2$ and $\nu = 0.01$. Substituting in ~\eqref{eq:varFpc}, we obtain that  $2^{l+1}
(2\alpha)^{(l-1)(-p/2)} \le 2 (2\alpha)^{p/2} 2^{l\beta} \le 2 (2\alpha)^{p/2}$. Substituting this into ~\eqref{eq:varFpb}, we have,
\begin{multline} \label{eq:varFpe}
\varianceb{\hat{F}_p}  \le  \frac{ \epsilon^2 F_p^2}{(25)^3 K} + \frac{2.12 (2)(2\alpha)^{p/2}\epsilon^2 F_p}{K} \left( F_p(\lmargin(G_0)  + \sum_{l=1}^{L}  F_p(G_l)\right)  +n^{-c+2} F_p^2 \\ = \frac{ \epsilon^2 F_p^2}{(25)^3 K} + \frac{2.12 (2)(2\alpha)^{p/2}\epsilon^2 F_p^2}{K} + n^{-c+2} F_p^2  \le \frac{\epsilon^2F_p^2}{100} \enspace .
\end{multline}
by using the value of $K = 425 (2\alpha)^{p/2}$.  \hfill
}

\eat{

We will consider the summations in ~\eqref{eq:varFpa} separately. Since, $F_{2p-2} \le F_p^{1-2/p}$ for $p > 2$, and $F_2 \le F_p^{2/p}n^{1-2/p}$.

Recall that $C = (25p)^2 B = (25p)^2 (425 (2\alpha)^{p/2}) n^{1-2/p} \epsilon^{-2}/\min(\log (n), \epsilon^{4/p-2}) $. Hence, $C = \max(C^1, C^2)$ where, $C^1 = (25p)^2 (425 (2\alpha)^{p/2}) n^{1-2/p} \epsilon^{-2}/\log (n) $ and $C^2 = (25p)^2  (425 (2\alpha)^{p/2}) n^{1-2/p}\epsilon^{-4/p}$.

 Consider the first summation. Using $k = 10\log (n)$, we have,
\begin{align}\label{eq:varFpa1}
 \sum_{i \in \midreg(G_0)} \cfrac{ 3(1.01) p^2  F_2 \abs{f_i}^{2p-2}}{5kC} &  \le \sum_{i \in \midreg(G_0)} \cfrac{ 3 p^2 (1.01) F_2 \abs{f_i}^{2p-2}}{5(10 \log (n)) C^1}
= \sum_{i \in \midreg(G_0)} \cfrac{(3.03) (  F_p^{2/p} n^{1-2/p}) \abs{f_i}^{2p-2}}{(10 \log (n)) (5) (25)^2 (425 (2\alpha)^{p/2}) n^{1-2/p} \epsilon^{-2}/\log n} \notag  \\
 & \le \frac{(3.03)\epsilon^2 F_{2p-2} F_p^{2/p}}{(5) (10) (25)^2 (425 (2\alpha)^{p/2}) } \le  \frac{ \epsilon^2 F_p^{2-2/p} F_p^{2/p} n^{1-2/p}}{(5) (10) (25)^2 (425 (2\alpha)^{p/2}) } = \frac{\epsilon^2 F_p^2}{(10) (25)^2 (425 (2\alpha)^{p/2})} \enspace .
\end{align}
Step 1 follows since $C \ge C^1$, step 2 substitutes $C^1 = (25p)^2  (425 (2\alpha)^{p/2}) n^{1-2/p}\epsilon^{-4/p}$,  step 3 uses $F_2 \le F_p^{2/p} n^{1-2/p}$ and that $ \sum_{i \in \midreg(G_0)} \abs{f_i}^{2p-2} \le F_{2p-2} $, step 4 uses $F_{2p-2} \le F_p^{(2p-2)/p}$ since $p > 2$, and step 5 is a simplification step.

We now consider the second summation of ~\eqref{eq:varFpa}. For any set of items $S \subset [n]$, let $F_p(S)$ denote $\sum_{i \in S} \abs{f_i}^p$. As in the previous calculation, we can write $T_l = F_2/((2\alpha)^l B)$, where, $B = 425 (2\alpha)^{p/2} n^{1-2/p} \epsilon^{-2}/\min(\epsilon^{4/p-2}, \log n)$. Therefore, $B \ge 425 (2\alpha)^{p/2} n^{1-2/p} \epsilon^{-4/p}$. Also, we will use $F_2 \le F_p^{2/p} n^{1-2/p}$. Therefore,
\begin{align} \label{eq:f2byB} \left( \frac{F_2}{B} \right)^{p/2} \le \left( \frac{ F_p^{2/p} n^{1-2/p}}{425 (2\alpha)^{p/2} n^{1-2/p} \epsilon^{-4/p}}\right)^{p/2}
= \frac{ \epsilon^2 F_p}{425 (2\alpha)^{p/2}} \enspace .
\end{align}
Further,  for $l \ge 1$,
\begin{align} \label{eq:f2byB2}
2^{l+1} T_{l-1}^p & = 2^{l+1} \left( \frac{F_2}{(2\alpha)^{l-1} B} \right)^{p/2}  = 2^{l+1} (2\alpha)^{p/2}  \left( \frac{F_p^{2/p} n^{1-2/p}}{ (2\alpha)^{l} (425 (2\alpha)^{p/2}) n^{1-2/p} \epsilon^{-4/p}} \right)^{p/2}  =  \frac{ 2\epsilon^2 F_p }{425 } 2^{l (1 - (p/2)( \log (2\alpha)))}
\end{align}
Further, recall that $ \alpha = 1- \gamma$, where, $\gamma = (1-2/p)\nu$ and $\nu = 0.01$. Therefore, $\ln \alpha \ge -2\gamma$ and hence,
\begin{multline} \label{eq:beta}
1 - (p/2) \log (2\alpha) =  1 - (p/2)\left (1 + \frac{ \ln \alpha}{\ln 2} \right)  \le 1- (p/2) + \frac{ (p/2) (2\gamma)}{\ln 2} =   1- (p/2) +\frac{ (p/2)(2) (1-2/p)\nu}{\ln 2} \\
= 1-p/2 + (p/2-1) \frac{ 2\nu}{\ln 2} = (1-p/2)\left(1- \frac{2\nu}{\ln 2} \right)  = \beta {\text (say)}
\end{multline}
Clearly, $\beta < 0$, since, $p > 2$.  Thus, ~\eqref{eq:f2byB2} simplifies to
\begin{align} \label{eq:f2byB3}
2^{l+1} T_{l-1}^p & = \frac{ 2 \epsilon^2 F_p}{425} 2^{l\beta}, ~~~l \ge 1 \enspace .
\end{align}

 For $i \in G_l$ with $l \ge 1$,  $\abs{f_i} \le T_{l-1} = (\hat{F}_2/(2\alpha)^{l-1} B)^{1/2}$, and for $i \in \lmargin(G_0)$, $\abs{f_i} \le T_0(1+\epsbar) = (\hat{F}_2 (1+\epsbar)/B)^{1/2}$.

\begin{align} \label{eq:varFp4}
&  \sum_{l=0}^{L} \sum_{i \in G_l, i \not\in \midreg(G_0)}  (1.03)2^{l+1} \abs{f_i}^{2p}  =
\sum_{i \in \lmargin(G_0)}  2 (1.03)  \abs{f_i}^{p} (T_0(1+\epsbar))^p  + \sum_{l=1}^L \sum_{i \in G_l}   (1.03) 2^{l+1} \abs{f_i}^{p} T_{l-1}^p\notag \\
& =  (2.06)(1+\epsbar))^p \left( \frac{F_2}{B} \right)^{p/2} \sum_{i \in \lmargin(G_0)}  \abs{f_i}^p
+ \sum_{l=1}^L   (1.03) \sum_{i \in G_l}  2^{l\beta}\abs{f_i}^p \notag \\
&  \le    \left( \frac{ 2.15 \epsilon^2 F_p}{425 (2\alpha)^{p/2}} \right)  F_p(\lmargin(G_0)) + \frac{ (2.06) \epsilon^2 F_p}{425} \sum_{l=1}^L    2^{l \beta}  F_p(G_l) \notag \\
& \le \frac{2.15 \epsilon^2 F_p}{425 } \left( F_p(\lmargin(G_0)) + \sum_{l=1}^L   2^{l \beta} F_p(G_l) \right) \notag \\
& \le \frac{2.15 \epsilon^2 F_p^2 }{425}
\end{align}
Step 1 is obtained by separating the summation into two parts, (i) items in $\lmargin(G_0)$, and (ii) items in $G_l$, for $l \ge 1$. In step 2, we use the fact that for $i \in \lmargin(G_0)$, $\abs{f_i} \le T_0(1+ \epsbar) $ and for $i \in G_l$, $\abs{f_i} \le T_{l-1}$. In step 3, we use $(1+\epsbar)^p = (1+1/(25p))^p \le 1.041$ and use ~\eqref{eq:f2byB} to simplify. Also, we use ~\eqref{eq:f2byB3} to simplify $2^{l+1} T_{l-1}^p$. In step 4, we use that  $F_p(\lmargin(G_0)) + \sum_{l=1}^L   2^{l \beta} F_p(G_l) \le F_p(\lmargin(G_0)) + \sum_{l=1}^L   F_p(G_l) \le F_p$, since, $2^{l\beta} < 1$ as $\beta < 0$.

Substituting ~\eqref{eq:varFp4} and ~\eqref{eq:varFpa1} into ~\eqref{eq:varFpa} we have,
\begin{align*}
\variance{\hat{F}_p} \le \frac{\epsilon^2 F_p^2}{(10) (25)^2 (425 (2\alpha)^{p/2})} +  \frac{2.15 \epsilon^2 F_p^2 }{425} + n^{-c+2} F_p^2 \le \frac{\epsilon^2 F_p^2}{100}
\end{align*}

\end{proof}
}
\end{appendices}

\end{document}